\newcommand\bigcheck[1]{#1 \raise1ex\hbox{$\hspace{-1ex}{}^\vee$}}
\newcommand\sucheck[1]{#1 \raise0.5ex\hbox{$\hspace{-1ex}{}^\vee$}}
\newcommand{\wcheck}{\!\!\!\!\!\!\widecheck{}\,\,\,\,\,\,}
\newcommand{\wwcheck}{\!\!\!\!\!\!\!\!\!\!\!\!\!\!\!\!\!\!\widecheck{}\,\,\,\,\,\,\,\,\,\,\,\,\,\,\,\,\,\,}
\newtheorem{theorem}{Theorem}[section]
\newtheorem{lemma}[theorem]{Lemma}
\newtheorem{corollary}[theorem]{Corollary}
\newtheorem{proposition}[theorem]{Proposition}
\newtheorem*{lemma*}{Lemma}
\theoremstyle{definition}
\newtheorem{definition}[theorem]{Definition}
\theoremstyle{remark}
\newtheorem{remark}[theorem]{Remark}
\newtheorem{example}[theorem]{Example}
\newcommand{\mc}[1]{{\mathcal #1}}
\newcommand{\mf}[1]{{\mathfrak #1}}
\newcommand{\mb}[1]{{\mathbb #1}}
\newcommand\tint{{\textstyle\int}}
\newcommand{\id}{{1 \mskip -5mu {\rm I}}}
\renewcommand{\tilde}{\widetilde}
\newcommand{\ad}{\mathop{\rm ad }}
\newcommand{\Mat}{\mathop{\rm Mat }}
\renewcommand{\ker}{\mathop{\rm Ker }}
\newcommand{\im}{\mathop{\rm Im }}
\newcommand{\Span}{\mathop{\rm Span }}
\newcommand{\ass}[1]{\stackrel{#1}{\longleftrightarrow}}
\definecolor{light}{gray}{.9}
\begin{document}

%%%%%%%%%%%%%%%%%%%%%%%%%%%%%%%%%%%%%%%%%%%%%%%%%%%%%%%%%%%%%%%%%%%%%%%%%%%%%%%%%%%%%%%%%%%%%%%%%%%%%%%%%%%%%%
%%%%%%%%%%%%%%% TITLE %%%%%%%%%%%%%%%%%%%%%%%%%%%%%%%%%%%%%%%%%%%%%%%%%%%%%%%%%%%%%%%%%%%%%%%%%%%%%%%%%%%%%%%
%%%%%%%%%%%%%%%%%%%%%%%%%%%%%%%%%%%%%%%%%%%%%%%%%%%%%%%%%%%%%%%%%%%%%%%%%%%%%%%%%%%%%%%%%%%%%%%%%%%%%%%%%%%%%%

\title{Non-local Hamiltonian structures and applications to the theory 
of integrable systems I}

\author{
Alberto De Sole
\thanks{Dipartimento di Matematica, Universit\`a di Roma ``La Sapienza'',
00185 Roma, Italy ~~
desole@mat.uniroma1.it ~~~~
Supported in part by Department of Mathematics, M.I.T.},~~
Victor G. Kac
\thanks{Department of Mathematics, M.I.T.,
Cambridge, MA 02139, USA.~~
kac@math.mit.edu~~~~
Supported in part by an NSF grant, and the Center for Mathematics and Theoretical Physics (CMTP)
in Rome.~~ 
}~~
%and Minoru Wakimoto\thanks{~~wakimoto@r6.dion.ne.jp~~~~Supported in part by 
%Department of Mathematics, M.I.T.}
}

\maketitle

\vspace{4pt}

\begin{center}
\emph{Dedicated to Minoru Wakimoto on his 70-th birthday.}
\end{center}

\vspace{2pt}

%%%%%%%%%%%%%%%%%%%%%%%%%%%%%%%%%%%%%%%%%%%%%%%%%%%%%%%%%%%%%%%%%%%%%%%%%%%%%%%%%%%%%%%%%%%%%%%%%%%%%%%%%%%%%%
%%%%%%%%%%%%%%% Abstract %%%%%%%%%%%%%%%%%%%%%%%%%%%%%%%%%%%%%%%%%%%%%%%%%%%%%%%%%%%%%%%%%%%%%%%%%%%%%%%%%%%%%%%
%%%%%%%%%%%%%%%%%%%%%%%%%%%%%%%%%%%%%%%%%%%%%%%%%%%%%%%%%%%%%%%%%%%%%%%%%%%%%%%%%%%%%%%%%%%%%%%%%%%%%%%%%%%%%%

\begin{abstract}
\noindent 
We develop a rigorous theory of non-local Hamiltonian structures,
built on the notion of a non-local Poisson vertex algebra.
As an application, we find conditions that guarantee applicability of the Lenard-Magri scheme
of integrability to a pair of compatible non-local Hamiltonian structures.
\end{abstract}

%%%%%%%%%%%%%%%%%%%%%%%%%%%%%%%%%%%%%%%%%%%%%%%%%%%%%%%%%%%%%%%%%%%%%%%%%%%%%%%%%%%%%%%%%%%%%%%%%%%%%%%%%%%%%%
%%%%%%%%%%%%%%% Sect 1 %%%%%%%%%%%%%%%%%%%%%%%%%%%%%%%%%%%%%%%%%%%%%%%%%%%%%%%%%%%%%%%%%%%%%%%%%%%%%%%%%%%%%%%
%%%%%%%%%%%%%%%%%%%%%%%%%%%%%%%%%%%%%%%%%%%%%%%%%%%%%%%%%%%%%%%%%%%%%%%%%%%%%%%%%%%%%%%%%%%%%%%%%%%%%%%%%%%%%%

\section{Introduction}
\label{sec:intro}

Local Poisson brackets play a fundamental role in the theory of integrable systems.
Recall that a local Poisson bracket is defined by (see e.g. \cite{TF86}):
\begin{equation}\label{intro:eq1.1}
\{u_i(x),u_j(y)\}=H_{ij}\big(u(y),u'(y),\dots;\partial_y\big)\delta(x-y)\,,
\end{equation}
where $u=(u_1,\dots,u_\ell)$ is a vector valued function on a 1-dimensional manifold $M$,
$\delta(x-y)$ is the $\delta$-function: $\tint_Mf(y)\delta(x-y)dy=f(x)$,
and $H(\partial)=\big(H_{ij}(\partial)\big)_{i,j=1}^\ell$
is an $\ell\times\ell$ matrix differential operator,
whose coefficients are functions in $u,u',\dots,u^{(k)}$.
One requires, in addition, that \eqref{intro:eq1.1} ``satisfies the Lie algebra axioms''.

One of the ways to formulate this condition is as follows.
Let $\mc V$ be an algebra of differential polynomials in $u_1,\dots,u_\ell$,
i.e. the algebra of polynomials in $u_i^{(n)},\,i\in I=\{1,\dots,\ell\},\,n\in\mb Z_+$,
with $u_i^{(0)}=u_i$ and the derivation $\partial$, defined by $\partial u_i^{(n)}=u_i^{(n+1)}$,
or its algebra of differential functions extension.
The bracket \eqref{intro:eq1.1} extends, by the Leibniz rule and bilinearity,
to arbitrary $f,g\in\mc V$:
\begin{equation}\label{intro:eq1.2}
\{f(x),g(y)\}=
\sum_{i,j\in I}\sum_{m,n\in\mb Z_+}
\frac{\partial f(x)}{\partial u_i^{(m)}} \frac{\partial g(y)}{\partial u_j^{(n)}}
\partial_x^m \partial_y^n\{u_i(x),u_j(y)\}\,.
\end{equation}
Applying integration by parts, we get the following bracket on $\mc V/\partial\mc V$:
\begin{equation}\label{intro:eq1.3}
\{\tint f,\tint g\}=
\int \frac{\delta g}{\delta u}\cdot H(\partial)\frac{\delta f}{\delta u}\,,
\end{equation}
where $\tint$ is the canonical quotient map $\mc V\to\mc V/\partial\mc V$
and $\frac{\delta f}{\delta u}$ is the vector of variational derivatives
$\frac{\delta f}{\delta u_i}=\sum_{n\in\mb Z_+}(-\partial)^n\frac{\partial f}{\partial u_i^{(n)}}$.
Then one requires that the bracket \eqref{intro:eq1.3}
satisfies the Lie algebra axioms.
(The skewsymmetry of this bracket is equivalent to the skewadjointness of $H(\partial)$,
but the Jacobi identity is a complicated system of non-linear PDE on its coefficients.)
In this case the matrix differential operator $H(\partial)$ is called a \emph{Hamiltonian structure}.

Given an element $\tint h\in\mc V/\partial\mc V$, called a \emph{Hamiltonian functional},
the \emph{Hamiltonian equation} associated to $H(\partial)$ is the following evolution equation:
\begin{equation}\label{intro:eq1.4}
\frac{du}{dt}=H(\partial)\frac{\delta h}{\delta u}\,.
\end{equation}
For example, taking $H(\partial)=\partial$ and $h=\frac12(u^3+cuu'')$,
we obtain the KdV equation: $\frac{du}{dt}=3uu'+cu'''$.

Equation \eqref{intro:eq1.4} is called \emph{integrable} if $\tint h$
is contained in an infinite dimensional abelian subalgebra $A$ of the Lie algebra $\mc V/\partial\mc V$
with bracket \eqref{intro:eq1.3}.
Picking a basis $\{\tint h_n\}_{n\in\mb Z_+}$ of $A$,
we obtain a hierarchy of compatible integrable Hamiltonian equations:
$$
\frac{du}{dt_n}=H(\partial)\frac{\delta h_n}{\delta u}\,\,,\,\,\,\,n\in\mb Z_+\,.
$$

An alternative approach, proposed in \cite{BDSK09},
is to apply the Fourier transform $F(x,y)\mapsto\tint_Mdxe^{\lambda(x-y)}F(x,y)$
to both sides of \eqref{intro:eq1.2}
to obtain the following ``Master formula'' \cite{DSK06}:
\begin{equation}\label{intro:eq1.5}
\{f_\lambda g\}=
\sum_{i,j\in I}\sum_{m,n\in\mb Z_+}
\frac{\partial g}{\partial u_j^{(n)}}
(\lambda+\partial)^n
H_{ji}(\lambda+\partial)
(-\lambda-\partial)^m
\frac{\partial f}{\partial u_i^{(m)}}\,. 
\end{equation}
For an arbitrary $\ell\times\ell$ matrix differential operator $H(\partial)$ 
this $\lambda$-\emph{bracket}
is polynomial in $\lambda$, i.e. it takes values in $\mc V[\partial]$,
satisfies the \emph{left} and \emph{right Leibniz rules}:
\begin{equation}\label{intro:eq1.6}
\{f_\lambda gh\}=
g\{f_\lambda h\}+h\{f_\lambda g\}
\,\,,\,\,\,\,
\{fg_\lambda h\}=
\{f_{\lambda+\partial} g\}_\to h+\{f_{\lambda+\partial} h\}_\to g\,,
\end{equation}
where the arrow means that $\lambda+\partial$ should be moved to the right,
and the  \emph{sesquilinerity} axioms:
\begin{equation}\label{intro:eq1.7}
\{\partial f_\lambda g\}=-\lambda\{f_\lambda g\}
\,\,,\,\,\,\,
\{f_\lambda\partial g\}=(\lambda+\partial)\{f_\lambda g\}\,.
\end{equation}
It is proved in \cite{BDSK09} that the requirement that \eqref{intro:eq1.3} satisfies the Lie algebra axioms
is equivalent to the following two properties of \eqref{intro:eq1.5}:
\begin{equation}\label{intro:eq1.8}
\{g_\lambda f\}=-\{f_{-\lambda-\partial} g\}\,,
\end{equation}
\begin{equation}\label{intro:eq1.9}
\{f_\lambda \{g_\mu h\}\}=\{g_\mu\{f_\lambda h\}\}
+\{\{f_\lambda g\}_{\lambda+\mu}h\}\,.
\end{equation}
A differential algebra $\mc V$,
endowed with a polynomial $\lambda$-bracket, satisfying axioms \eqref{intro:eq1.6}--\eqref{intro:eq1.9},
is called a \emph{Poisson vertex algebra} (PVA).

It was demonstrated in \cite{BDSK09} that the PVA approach greatly simplifies the theory
of integrable Hamiltonian PDE, based on local Poisson brackets.
For example, equation \eqref{intro:eq1.4} becomes, in terms of the $\lambda$-bracket associated to $H$:
$$
\frac{du}{dt}=\{h_\lambda u\}\big|_{\lambda=0}\,,
$$
and the Lie bracket \eqref{intro:eq1.3} becomes
$$
\{\tint f,\tint g\}=\tint \{f_\lambda g\}\big|_{\lambda=0}\,.
$$

It is the purpose of the present paper to demonstrate that the adequate (and in fact indispensable)
tool for understanding non-local Poisson brackets is the theory of ``non-local'' PVA.

We define a \emph{non-local} $\lambda$-\emph{bracket} on the differential algebra $\mc V$
to take its values in $\mc V((\lambda^{-1}))$,
formal Laurent series in $\lambda^{-1}$ with coefficients in $\mc V$,
and to satisfy properties \eqref{intro:eq1.6} and \eqref{intro:eq1.7}.
The main example is the $\lambda$-bracket given by the Master Formula \eqref{intro:eq1.5},
where $H(\partial)$ is a matrix pseudodifferential operator.
The only problem with this definition is the interpretation of the operator $\frac1{\lambda+\partial}$;
this is defined by the geometric progression
$$
\frac1{\lambda+\partial}=\sum_{n\in\mb Z_+}(-1)^n\lambda^{-n-1}\partial^n\,.
$$

Property \eqref{intro:eq1.8} of the $\lambda$-bracket is interpreted in the same way,
but the interpretation of property \eqref{intro:eq1.9} is more subtle.
Indeed, in general, we have $\{f_\lambda\{g_\mu h\}\}\in\mc V((\lambda^{-1}))((\mu^{-1}))$,
but $\{g_\mu\{f_\lambda h\}\}\in\mc V((\mu^{-1}))((\lambda^{-1}))$,
and $\{\{f_\lambda g\}_{\lambda+\mu} h\}\in\mc V(((\lambda+\mu)^{-1}))((\lambda^{-1}))$,
so that all three terms of \eqref{intro:eq1.9} lie in different spaces.
Our key idea is to consider the space
$$
\mc V_{\lambda,\mu}=\mc V[[\lambda^{-1},\mu^{-1},(\lambda+\mu)^{-1}]][\lambda,\mu]\,,
$$
which is canonically embedded in all three of the above spaces.
We say that a $\lambda$-bracket is \emph{admissible} if 
$$
\{f_\lambda\{g_\mu h\}\}\in\mc V_{\lambda,\mu}
\,\,,\,\,\,\,
\text{ for all } f,g,h\in\mc V\,.
$$
It is immediate to see that then the other two terms of \eqref{intro:eq1.9} 
lie in $\mc V_{\lambda,\mu}$ as well,
hence \eqref{intro:eq1.9} is an identity in $\mc V_{\lambda,\mu}$.

We call the differential algebra $\mc V$, endowed with a non-local $\lambda$-bracket,
a \emph{non-local PVA}, if it satisfies \eqref{intro:eq1.8},
is admissible, and satisfies \eqref{intro:eq1.9}.

For an arbitrary pseudodifferential operator $H(\partial)$ the $\lambda$-bracket \eqref{intro:eq1.5}
is not admissible, but it is admissible for any \emph{rational} pseudodifferential operator,
i.e. contained in the subalgebra generated by differential operators and their inverses.
We show that, as in the local case (see \cite{BDSK09}),
this $\lambda$-bracket satisfies conditions \eqref{intro:eq1.8} and \eqref{intro:eq1.9}
if and only if \eqref{intro:eq1.8} holds for any pair $u_i,u_j$,
and \eqref{intro:eq1.9} holds for any triple $u_i,u_j,u_k$.
Also, \eqref{intro:eq1.8} is equivalent to skewadjointness of $H(\partial)$.

The simplest example of a non-local PVA corresponds 
to the skewadjoint operator $H(\partial)=\partial^{-1}$.
Then 
$$
\{u_\lambda u\}=\lambda^{-1}\,,
$$
and equation \eqref{intro:eq1.9} trivially holds for the triple $u,u,u$.
Note that \eqref{intro:eq1.1} in this case reads:
$\{u(x),u(y)\}=\partial_y^{-1}\delta(x-y)$,
which is quite difficult to work with.

The next example corresponds to Sokolov's skewadjoint operator \cite{Sok84},
$H(\partial)=u'\partial^{-1}\circ u'$.
The corresponding $\lambda$-bracket is
$$
\{u_\lambda u\}=u'\frac1{\lambda+\partial}u'\,.
$$
The verification of \eqref{intro:eq1.9} for the triple $u,u,u$ is straightforward.

A rational pseudodifferential operator $H(\partial)$ is a \emph{Hamiltonian structure} on $\mc V$
if the $\lambda$-bracket \eqref{intro:eq1.5} endows $\mc V$ with a structure of a non-local PVA
(in other words $H(\partial)$ should be skewadjoint and \eqref{intro:eq1.9} should hold
for any triple $u_i,u_j,u_k$).

Fix a ``minimal fractional decomposition'' $H=AB^{-1}$.
This means that $A,B$ are differential operators over $\mc V$,
such that $\ker A\cap\ker B=0$
in any differential domain extension of $\mc V$.
It is shown in \cite{CDSK12b} that such a decomposition always exists and that the above property
is equivalent to the property that any common right factor of $A$ and $B$
is invertible over the field of fractions of $\mc V$.
Then the basic notions of the theory of integrable systems are defined as follows.
A \emph{Hamiltonian functional} (for $H=AB^{-1}$)
is an element $\tint h\in\mc V/\partial\mc V$ such that $\frac{\delta\tint h}{\delta u}=B(\partial)F$
for some $F\in\mc V^\ell$.
Then the element $P=A(\partial)F$ is called an associated \emph{Hamiltonian vector field}.
Denote by $\mc F(H)\subset\mc V/\partial\mc V$ the subspace of all Hamiltonian functionals,
and by $\mc H(H)\subset\mc V^\ell$ the subspace of all Hamiltonian vector fields
(they are independent of the choice of the minimal fractional decomposition for $H$):
$$
\mc F(H)=\Big(\frac{\delta}{\delta u}\Big)^{-1}\Big(\im B\Big)\subset\mc V/\partial\mc V
\,\,,\,\,\,\,
\mc H(H)=A\Big(B^{-1}\Big(\im\frac{\delta}{\delta u}\Big)\Big)\subset\mc V^\ell\,.
$$
Then it is easy to show that $\mc F(H)$ is a Lie algebra with respect to the well-defined
bracket \eqref{intro:eq1.3}, and $\mc H(H)$ is a subalgebra of the Lie algebra $\mc V^\ell$
with bracket $[P,Q]=D_Q(\partial)P-D_P(\partial)Q$, where $D_P(\partial)$ is the Frechet derivative.

A \emph{Hamiltonian equation}, associated to the Hamiltonian structure $H$
and a Hamiltonian functional $\tint h\in\mc F(H)$,
with an associated Hamiltonian vector field $P\in\mc H(H)$,
is the following evolution equation:
\begin{equation}\label{intro:eq1.10}
\frac{du}{dt}=P\,.
\end{equation}
Note that \eqref{intro:eq1.10} coincides with \eqref{intro:eq1.4} in the local case.
The Hamiltonian equation \eqref{intro:eq1.10} is called \emph{integrable} if there exist
linearly independent infinite sequences
$\tint h_n\in\mc F(H)$ and $P_n\in\mc H(H)$, $n\in\mb Z_+$,
such that $\tint h_0=\tint h$, $P_0=P$,
$P_n$ is associated to $\tint h_n$,
and $\{\tint h_m,\tint h_n\}=0,\,[P_m,P_n]=0$
for all $m,n\in\mb Z_+$.
In this case we have a hierarchy of compatible integrable equations
$$
\frac{du}{dt_n}=P_n
\,\,,\,\,\,\, n\in\mb Z_+\,.
$$

Having given rigorous definitions of the basic notions of the theory of Hamiltonian equations
with non-local Hamiltonian structures,
we proceed to establish some basic results of the theory.

The first result is Theorem \ref{20111021:thm},
which states that if $H$ and $K$ are compatible non-local Hamiltonian structures and $K$ is invertible
(as a pseudodifferential operator), then the sequence
of rational pseudodifferential operators
$H^{[0]}=K, H^{[n]}=(H\circ K^{-1})^{n-1}\circ H,\,n\geq1$,
is a compatible family of non-local Hamiltonian structures.
(As usual \cite{Mag78,Mag80} a collection of non-local Hamiltonian structures
is called compatible if any their finite linear combination is again a non-local Hamiltonian structure.)
This result was first stated in \cite{Mag80}
and its partial proof was given in \cite{FF81}
(of course, without having rigorous definitions).

Next, we give a rigorous definition of a non-local symplectic structure and prove 
(the ``well-known'' fact) that, if $S$ is invertible as pseudodifferential operator,
then it is a non-local symplectic structure if and only if $S^{-1}$ is a non-local Hamiltonian structure
(Theorem \ref{20111012:thm}).
Since we completely described (local) symplectic structures in \cite{BDSK09},
this result provides a large collection of non-local Hamiltonian structures.
We also establish a connection between Dirac structures (see \cite{Dor93} and \cite{BDSK09})
with non-local Hamiltonian structures
(Theorems \ref{20111020:thm} and \ref{20120126:prop2}).

After that we discuss the Lenard-Margi scheme of integrability
for a pair of compatible non-local Hamiltonian structures,
similar to that discussed in \cite{Mag78,Mag80,Dor93,BDSK09} in the local case,
and give sufficient conditions when this scheme works
(Theorem \ref{20120127:thm} and Remark \ref{20120201:rem2}).

Finally, we demonstrate on the example of the NLS hierarchy how this scheme works.
The corresponding pair of compatible non-local Hamiltonian structures has been written down 
already in \cite{Mag80}:
$$
H=\left(\begin{array}{cc} 
v\partial^{-1}\circ v & -v\partial^{-1}\circ u \\
-u\partial^{-1}\circ v & u\partial^{-1}\circ u
\end{array}\right)+c\partial\id
\,\,,\,\,\,\,
K=\left(\begin{array}{cc} 0 & -1 \\ 1 & 0 \end{array}\right)\,.
$$
In our next paper \cite{DSK12} we go systematically over well known 
non-local bi-Hamiltonian structures
and construct the corresponding integrable hierarchies of Hamiltonian PDE's.

Throughout the paper, all vector spaces are considered over a field $\mb F$ of characteristic zero.

We wish to thank Pavel Etingof and Andrea Maffei for useful discussions.

%%%%%%%%%%%%%%%%%%%%%%%%%%%%%%%%%%%%%%%%%%%%%%%%%%%%%%%%%%%%%%%%%%%%%%%%%%%%%%%%%%%%%%%%%%%%%%%%%%%%%%%%%%%%%%
%%%%%%%%%%%%%%% Sect 2 %%%%%%%%%%%%%%%%%%%%%%%%%%%%%%%%%%%%%%%%%%%%%%%%%%%%%%%%%%%%%%%%%%%%%%%%%%%%%%%%%%%%%%%
%%%%%%%%%%%%%%%%%%%%%%%%%%%%%%%%%%%%%%%%%%%%%%%%%%%%%%%%%%%%%%%%%%%%%%%%%%%%%%%%%%%%%%%%%%%%%%%%%%%%%%%%%%%%%%

\section{Rational pseudodifferential operators}
\label{sec:2}

\subsection{The space $V_{\lambda,\mu}$}
\label{sec:2.1}

Throughout the paper we shall use the following standard notation.
Given a vector space $V$, we denote by $V[\lambda]$ the space of polynomials in $\lambda$ with coefficients in $V$,
by $V[[\lambda^{-1}]]$ the space of formal power series in $\lambda^{-1}$ with coefficients in $V$,
and by $V((\lambda^{-1}))=V[[\lambda^{-1}]][\lambda]$ the space of formal Laurent series in $\lambda^{-1}$ 
with coefficients in $V$.

We have the obvious identifications $V[\lambda,\mu]=V[\lambda][\mu]=V[\mu][\lambda]$
and $V[[\lambda^{-1},\mu^{-1}]]=V[[\lambda^{-1}]][[\mu^{-1}]]=V[[\mu^{-1}]][[\lambda^{-1}]]$.
However the space $V((\lambda^{-1}))((\mu^{-1})$ does not coincide 
with $V((\mu^{-1}))((\lambda^{-1}))$.
Both spaces contain naturally the subspace $V[[\lambda^{-1},\mu^{-1}]][\lambda,\mu]$.
In fact, this subspace is their intersection in the ambient space $V[[\lambda^{\pm1},\mu^{\pm1}]]$
consisting of all series of the form $\sum_{m,n\in\mb Z}a_{m,n}\lambda^m\mu^n$.

The most important for this paper will be the space
$$
V_{\lambda,\mu}:=V[[\lambda^{-1},\mu^{-1},(\lambda+\mu)^{-1}]][\lambda,\mu]\,,
$$
namely, the quotient of the $\mb F[\lambda,\mu,\nu]$-module
$V[[\lambda^{-1},\mu^{-1},\nu^{-1}]][\lambda,\mu,\nu]$
by the submodule 
$(\nu-\lambda-\mu)V[[\lambda^{-1},\mu^{-1},\nu^{-1}]][\lambda,\mu,\nu]$.
By definition, the space $V_{\lambda,\mu}$ consists of elements which can be written (NOT uniquely) in the form
\begin{equation}\label{20111006:eq1}
A=\sum_{m=-\infty}^M\sum_{n=-\infty}^N\sum_{p=-\infty}^P a_{m,n,p}\lambda^m\mu^n(\lambda+\mu)^p\,,
\end{equation}
for some $M,N,P\in\mb Z$ (in fact, we can always choose $P\leq 0$), and $a_{m,n,p}\in V$.

In the space $V[[\lambda^{-1},\mu^{-1},\nu^{-1}]][\lambda,\mu,\nu]$
we have a natural notion of degree, by letting $\deg(\lambda)=\deg(\mu)=\deg(\nu)=1$.
Every element $A\in V[[\lambda^{-1},\mu^{-1},\nu^{-1}]][\lambda,\mu,\nu]$
decomposes as a sum
$A=\sum_{d=-\infty}^NA^{(d)}$
(possibly infinite), where $A^{(d)}$ is a finite linear combination of monomials of degree $d$.
Since $\nu-\lambda-\mu$ is homogenous (of degree 1),
this induces a well-defined notion of degree on the quotient space $V_{\lambda,\mu}$,
and we denote by $V_{\lambda,\mu}^d$, for $d\in\mb Z$,
the span of elements of degree $d$ in $V_{\lambda,\mu}$.
If $A\in\mc V_{\lambda,\mu}$ has the form \eqref{20111006:eq1}, then it decomposes as $A=\sum_{d=-\infty}^{M+N+P}A^{(d)}$,
where $A^{(d)}\in V_{\lambda,\mu}^d$ is given by
$$
A^{(d)}=\sum_{\substack{m\leq M,n\leq N,p\leq P \\ (m+n+p=d)}} a_{m,n,p}\lambda^m\mu^n(\lambda+\mu)^p\,.
$$
The coefficients $a_{m,n,p}\in V$ are still not uniquely defined,
but now the sum in $A^{(d)}$ is finite (since $d-2K\leq m,n,p\leq K:=\max(M,N,P)$).
Hence, we have the following equality
$$
V^d_{\lambda,\mu}=V[\lambda^{\pm1},\mu^{\pm1},(\lambda+\mu)^{-1}]^d\,,
$$
where, as before, the superscript $d$ denotes the subspace consisting of polynomials
in $\lambda^{\pm1},\mu^{\pm1},(\lambda+\mu)^{-1}$, of degree $d$.
\begin{lemma}\label{20110919:lem1}
The following is a basis of the space $V^d_{\lambda,\mu}$ over $V$:
$$
\lambda^{d-i}\mu^i,\,i\in\mb Z
\,\,\,\,;\,\,\,\,\,\,\,\,
\lambda^{d+i}(\lambda+\mu)^{-i},\,i\in\mb Z_{>0}=\{1,2,\dots\}\,,
$$
in the sense that any element of the space $V^d_{\lambda,\mu}$
can be written uniquely as a finite linear combination 
of the above elements with coefficients in $V$.
\end{lemma}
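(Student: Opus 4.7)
The plan is to verify the two claims separately: that the listed elements span $V^d_{\lambda,\mu}$ over $V$ and that they are linearly independent. Throughout, I work with the equality $V^d_{\lambda,\mu}=V[\lambda^{\pm1},\mu^{\pm1},(\lambda+\mu)^{-1}]^d$ established in the paragraph preceding the lemma, so an arbitrary element of $V^d_{\lambda,\mu}$ is a finite $V$-linear combination of monomials $\lambda^a\mu^b(\lambda+\mu)^c$ with $a+b+c=d$.

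For spanning, I reduce an arbitrary such monomial to the listed basis using three sub-cases. If $c\geq0$, the binomial expansion $(\lambda+\mu)^c=\sum_k\binom{c}{k}\lambda^{c-k}\mu^k$ turns the monomial into a combination of $\lambda^{d-i}\mu^i$'s, i.e.\ terms of the first family. If $c<0$ and $b\geq0$, I write $\mu^b=\bigl((\lambda+\mu)-\lambda\bigr)^b$ and expand; each resulting term is either polynomial in $(\lambda+\mu)$ (reduce to the previous case) or has the pure form $\lambda^{d+j}(\lambda+\mu)^{-j}$ with $j>0$ (a second-family basis element). The remaining case $c<0$ and $b<0$ is handled by iterating the partial-fraction identity $\frac{1}{\mu}=\frac{\lambda+\mu}{\lambda\mu}-\frac{1}{\lambda}$, equivalently $\frac{1}{\mu(\lambda+\mu)^n}=\frac{1}{\lambda\mu(\lambda+\mu)^{n-1}}-\frac{1}{\lambda(\lambda+\mu)^n}$, which strictly decreases the power of $(\lambda+\mu)$ at the cost of adding a factor of $\lambda^{-1}$ in the denominator; after finitely many steps one is reduced to the previously handled sub-cases.

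For linear independence, suppose a finite combination
\[
\sum_{i\in\mathbb Z}\alpha_i\lambda^{d-i}\mu^i+\sum_{i>0}\beta_i\lambda^{d+i}(\lambda+\mu)^{-i}=0
\]
in $V^d_{\lambda,\mu}$, and assume some $\beta_i\neq0$, choosing $N$ maximal with $\beta_N\neq0$. Multiplying by $(\lambda+\mu)^N\lambda^M\mu^M$ with $M$ large enough to clear all negative powers of $\lambda$ and $\mu$, I land in $V[\lambda,\mu]$, where monomial independence is clear and $(\lambda+\mu)$ is not a zero-divisor. Substituting $\mu=-\lambda$: every first-family term picks up the factor $(\lambda+\mu)^N|_{\mu=-\lambda}=0$ since $N\geq1$; in the second sum every $i<N$ term acquires $(\lambda+\mu)^{N-i}|_{\mu=-\lambda}=0$; only the $i=N$ term survives, yielding $\beta_N(-1)^M\lambda^{d+N+2M}=0$, hence $\beta_N=0$, a contradiction. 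Therefore all $\beta_i=0$, and the residual equation $\sum_i\alpha_i\lambda^{d-i}\mu^i=0$ in $V[\lambda^{\pm1},\mu^{\pm1}]$ forces all $\alpha_i=0$.

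The main obstacle I anticipate is the spanning sub-case with both $b<0$ and $c<0$: one needs to be careful that the reduction terminates and produces only elements of the listed form. Writing the identity recursively as above, or equivalently producing the full partial fraction decomposition of $\mu^{-m}(\lambda+\mu)^{-n}$ with respect to $\mu$ (whose poles lie only at $\mu=0$ and $\mu=-\lambda$), makes the termination transparent; the linear-independence step, by contrast, is a one-shot substitution argument.
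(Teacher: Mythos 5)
Your proof is correct, and it reaches the same partial-fractions fact that the paper relies on, but by a noticeably different route. The paper exploits homogeneity once and for all: it reduces to $d=0$, dehomogenizes via $t=\mu/\lambda$, identifies $V^0_{\lambda,\mu}$ with $V$-valued rational functions of $t$ with poles only at $t=0$ and $t=-1$, and then simply cites existence and uniqueness of the partial fraction decomposition in the single variable $t$. You instead stay homogeneous in two variables throughout and prove both halves by hand: spanning via an explicit reduction of monomials $\lambda^a\mu^b(\lambda+\mu)^c$ (binomial expansion when $c\geq0$, the substitution $\mu=(\lambda+\mu)-\lambda$ when $b\geq0>c$, and the recursion $\mu^{-1}(\lambda+\mu)^{-n}=\lambda^{-1}\mu^{-1}(\lambda+\mu)^{-n+1}-\lambda^{-1}(\lambda+\mu)^{-n}$, which terminates by induction on $|b|+|c|$); and independence by clearing denominators into $V[\lambda,\mu]$ and evaluating at $\mu=-\lambda$ to isolate the top coefficient $\beta_N$. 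What the paper's argument buys is brevity, since the one-variable statement is classical; what yours buys is that it is self-contained (you never have to invoke the one-variable theory as a black box) and it makes the uniqueness mechanism completely explicit, which is the part a reader is most likely to want spelled out. Both arguments share the same implicit premise, namely that $V^d_{\lambda,\mu}$ coincides with the degree-$d$ part of the honest localization $V[\lambda^{\pm1},\mu^{\pm1},(\lambda+\mu)^{-1}]$ with no extra collapsing, which the paper also takes for granted in the paragraph preceding the lemma, so I do not count that as a gap on your side.
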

\begin{proof}
First, it suffices to prove the claim for $d=0$.
In this case, letting $t=\mu/\lambda$, the elements of $V^0_{\lambda,\mu}$
are rational functions in $t$ with poles at 0 and -1.
But any such rational functions can be uniquely written,
by partial fractions decomposition, as a linear combination of $t^i$, with $i\in\mb Z$,
and of $(1+t)^{-i}$, with $i\in\mb Z_{>0}$.
\end{proof}
\begin{remark}
One has natural embeddings of $V_{\lambda,\mu}$
in all the spaces 
$V((\lambda^{-1}))((\mu^{-1}))$, $V((\mu^{-1}))((\lambda^{-1}))$,
$V((\lambda^{-1}))(((\lambda+\mu)^{-1}))$, $V((\mu^{-1}))(((\lambda+\mu)^{-1}))$,
$V(((\lambda+\mu)^{-1}))((\lambda^{-1}))$, $V(((\lambda+\mu)^{-1}))((\mu^{-1}))$,
defined by expanding one of the variables $\lambda,\mu$ or $\nu=\lambda+\mu$
in terms of the other two.
For example, we have the embedding
\begin{equation}\label{20110919:eq1}
\iota_{\mu,\lambda}:\,V_{\lambda,\mu}\hookrightarrow V((\lambda^{-1}))((\mu^{-1}))\,,
\end{equation}
obtained by expanding all negative powers of $\lambda+\mu$ in the region $|\mu|>|\lambda|$:
\begin{equation}\label{20110919:eq1b}
\iota_{\mu,\lambda}(\lambda+\mu)^{-n-1}
=
\sum_{k=0}^\infty\binom{-n-1}k \lambda^k\mu^{-n-k-1}\,.
\end{equation}
Similarly in all other cases.
Note that, even though $V_{\lambda,\mu}$ is naturally embedded in both spaces
$V((\lambda^{-1}))((\mu^{-1}))$ and $V((\mu^{-1}))((\lambda^{-1}))$,
it is not contained in their intersection $V[[\lambda^{-1},\mu^{-1}]][\lambda,\mu]$.
\end{remark}

\begin{lemma}\label{20111006:lem}
If $V$ is an algebra, then $V_{\lambda,\mu}$ is also an algebra, with the obvious product.
Namely, if $A(\lambda,\mu),B(\lambda,\mu)\in V_{\lambda,\mu}$, then $A(\lambda,\mu)B(\lambda,\mu)\in V_{\lambda,\mu}$.
More generally, 
if $S,T:\,V\to V$ are endomorphisms of $V$ (viewed as a vector space), then
$$
A(\lambda+S,\mu+T)B(\lambda,\mu)\in V_{\lambda,\mu}\,,
$$
where we expand the negative powers of $\lambda+S$ and $\mu+T$ in non-negative powers of $S$ and $T$, 
acting on the coefficients of $B$.
\end{lemma}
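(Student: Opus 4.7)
The plan is to treat both assertions uniformly by lifting to the three-variable ring $V[[\lambda^{-1},\mu^{-1},\nu^{-1}]][\lambda,\mu,\nu]$ and then descending modulo $\nu-\lambda-\mu$.

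For the first assertion, I would first observe that $V[[\lambda^{-1},\mu^{-1},\nu^{-1}]][\lambda,\mu,\nu]$ is an algebra: writing two elements as $\sum_{m\leq M,\,n\leq N,\,p\leq P}a_{m,n,p}\lambda^m\mu^n\nu^p$ and similarly with bounds $M'',N'',P''$, the coefficient of $\lambda^{m'}\mu^{n'}\nu^{p'}$ in the product is indexed by pairs with $m+m''=m'$, and the bounds $m\leq M$, $m''\leq M''$ force $m'-M''\leq m\leq M$, a finite range; likewise for the other two exponents. Since $\nu-\lambda-\mu$ is central, the submodule it generates is a two-sided ideal, so the algebra structure descends to the quotient $V_{\lambda,\mu}$. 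This gives the first assertion.

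For the general statement I would work term by term. Starting from the representations $A=\sum a_{m,n,p}\lambda^m\mu^n(\lambda+\mu)^p$ (with $m\leq M,\,n\leq N,\,p\leq P$) and $B=\sum b_{r,s,t}\lambda^r\mu^s(\lambda+\mu)^t$ (with $r\leq R,\,s\leq R',\,t\leq R''$) as in \eqref{20111006:eq1}, the prescribed expansion of each factor in non-negative powers of $S,T$ is
\begin{gather*}
(\lambda+S)^m=\sum_{k_1\geq 0}\binom{m}{k_1}S^{k_1}\lambda^{m-k_1},\qquad
(\mu+T)^n=\sum_{k_2\geq 0}\binom{n}{k_2}T^{k_2}\mu^{n-k_2},\\
(\lambda+\mu+S+T)^p=\sum_{k_3\geq 0}\binom{p}{k_3}(S+T)^{k_3}(\lambda+\mu)^{p-k_3}.
\end{gather*}
Multiplying by $B$ formally yields a sum of terms
$$
a_{m,n,p}\binom{m}{k_1}\binom{n}{k_2}\binom{p}{k_3}\bigl(S^{k_1}T^{k_2}(S+T)^{k_3}b_{r,s,t}\bigr)\lambda^{m+r-k_1}\mu^{n+s-k_2}(\lambda+\mu)^{p+t-k_3},
$$
and the task is to show that the coefficient of each fixed monomial $\lambda^{m'}\mu^{n'}(\lambda+\mu)^{p'}$ is a finite $V$-linear combination, and that the exponents $(m',n',p')$ are bounded above.

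The main (and essentially only) obstacle is this bookkeeping: the $k_i$-expansion introduces infinite sums, and both $A$ and $B$ may have unbounded negative exponents, so one must rule out apparent infinite contributions. The point is that $k_1=m+r-m'$ together with $m\leq M$, $r\leq R$ forces $k_1\leq M+R-m'$, confining $k_1$ to a finite range; within it, the same inequalities bound $m\in[\,m'+k_1-R,\,M\,]$ (and then $r$ is determined). The identical analysis for $(n,k_2,s)$ and $(p,k_3,t)$ then gives both the required finiteness and the upper bounds $m'\leq M+R$, $n'\leq N+R'$, $p'\leq P+R''$, so the resulting expression has the shape \eqref{20111006:eq1} required for membership in $V_{\lambda,\mu}$.
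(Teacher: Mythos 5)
Your proof is correct and follows essentially the same route as the paper: expand both factors in the form \eqref{20111006:eq1}, apply the binomial expansion to $(\lambda+S)^m$, $(\mu+T)^n$, $(\lambda+\mu+S+T)^p$, and check that each coefficient of a fixed monomial $\lambda^{m'}\mu^{n'}(\lambda+\mu)^{p'}$ is a finite sum because the constraints $k_1=m+r-m'$, $m\leq M$, $r\leq R$ confine all indices to finite ranges. Your separate treatment of the plain product via the three-variable ring $V[[\lambda^{-1},\mu^{-1},\nu^{-1}]][\lambda,\mu,\nu]$ and the ideal $(\nu-\lambda-\mu)$ is a harmless organizational variant of the same argument, which the paper subsumes as the case $S=T=0$.
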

\begin{proof}
We expand $A$ and $B$ as in \eqref{20111006:eq1}:
$$
\begin{array}{l}
\displaystyle{
A(\lambda,\mu)=\sum_{m=-\infty}^M\sum_{n=-\infty}^N\sum_{p=-\infty}^P a_{m,n,p}\lambda^m\mu^n(\lambda+\mu)^p\,,
} \\
\displaystyle{
B(\lambda,\mu)=\sum_{m'=-\infty}^{M'}\sum_{n'=-\infty}^{N'}\sum_{p'=-\infty}^{P'} b_{m',n',p'}\lambda^{m'}\mu^{n'}(\lambda+\mu)^{p'}\,.
}
\end{array}
$$
Using binomial expansion, we then get
$$
A(\lambda+S,\mu+T)B(\lambda,\mu)
=
\sum_{\bar{m}=-\infty}^{M+M'}\sum_{\bar{n}=-\infty}^{N+N'}\sum_{\bar{p}=-\infty}^{P+P'} c_{\bar{m},\bar{n},\bar{p}}
\lambda^{\bar{m}}\mu^{\bar{n}}(\lambda+\mu)^{\bar{p}}\,,
$$
where
$$
\begin{array}{r}
\displaystyle{
c_{\bar{m},\bar{n},\bar{p}}=
\sum_{\substack{m\leq M,m'\leq M',i\geq0 \\ (m+m'-i=\bar{m})}}
\,
\sum_{\substack{n\leq N,n'\leq N',j\geq0 \\ (n+n'-j=\bar{n})}}
\sum_{\substack{p\leq P,p'\leq P',k\geq0 \\ (p+p'-k=\bar{p})}}
} \\
\displaystyle{
\binom{m}{i} \binom{n}{j} \binom{p}{k}
a_{m,n,p} \big(S^{i} T^{j} (S+T)^{k} b_{m',n',p'}\big)\,.
}
\end{array}
$$
To conclude, we just observe that each sum in the RHS is finite,
since, for example, in the first sum we have $i=m+m'-\bar{m}$, $\bar{m}-M'\leq m\leq M$ and $\bar{m}-M\leq m'\leq M'$.
\end{proof}
\begin{lemma}\label{20120131:lem1}
Let $V$ be a vector space and let $U\subset V$ be a subspace.
Then we have:
$$
\begin{array}{l}
\big\{A\in V_{\lambda,\mu}\,\big|\,\iota_{\mu,\lambda}A\in U((\lambda^{-1}))((\mu^{-1}))\big\}
\\
=
\big\{A\in V_{\lambda,\mu}\,\big|\,\iota_{\lambda,\mu}A\in U((\mu^{-1}))((\lambda^{-1}))\big\}
\\
=
\big\{A\in V_{\lambda,\mu}\,\big|\,\iota_{\lambda+\mu,\lambda}A
\in U((\lambda^{-1}))(((\lambda+\mu)^{-1}))\big\}
\\
=
\big\{A\in V_{\lambda,\mu}\,\big|\,\iota_{\lambda+\mu,\mu}A
\in U((\mu^{-1}))(((\lambda+\mu)^{-1}))\big\}
\\
=
\big\{A\in V_{\lambda,\mu}\,\big|\,\iota_{\lambda,\lambda+\mu}A
\in U(((\lambda+\mu)^{-1}))((\lambda^{-1}))\big\}
\\
=
\big\{A\in V_{\lambda,\mu}\,\big|\,\iota_{\mu,\lambda+\mu}A
\in U(((\lambda+\mu)^{-1}))((\mu^{-1}))\big\}
= U_{\lambda,\mu}\,.
\end{array}
$$
\end{lemma}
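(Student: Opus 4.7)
My plan is to reduce all six equalities to the injectivity (embedding) property of the maps $\iota_{?,?}$ asserted in the preceding remark, using the canonical quotient map $p: V \to V/U$.

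First, I would observe that $p$ induces $\mb F$-linear maps (which I will still denote by $p$), acting coefficient-wise, on each of the seven spaces involved: on $V_{\lambda,\mu}$ via the degreewise basis of Lemma \ref{20110919:lem1}, and on each of the six Laurent series spaces of the statement via the Laurent monomials. The binomial expansions that define the six embeddings $\iota_{?,?}$ are visibly coefficient-wise, so $p$ commutes with each of them. Furthermore, applying Lemma \ref{20110919:lem1} simultaneously to $V$ and to $U$ identifies the kernel of $p:V_{\lambda,\mu} \to (V/U)_{\lambda,\mu}$ as precisely $U_{\lambda,\mu}$; analogously, the kernel of $p$ on each of the six Laurent targets is the corresponding subspace $U((\lambda^{-1}))((\mu^{-1}))$, etc.

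Second, with these observations in hand, each of the six conditions unwinds uniformly. For instance, the condition $\iota_{\mu,\lambda} A \in U((\lambda^{-1}))((\mu^{-1}))$ is equivalent to $p(\iota_{\mu,\lambda} A) = 0$, hence by commutativity to $\iota_{\mu,\lambda}(pA) = 0$, and finally to $pA = 0$, that is $A \in U_{\lambda,\mu}$, by injectivity of $\iota_{\mu,\lambda}: (V/U)_{\lambda,\mu} \hookrightarrow (V/U)((\lambda^{-1}))((\mu^{-1}))$. The same four-step argument applies verbatim to each of the other five embeddings, yielding the remaining equalities.

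The one substantive point needed is the injectivity of the six $\iota_{?,?}$ maps asserted in the remark, and this is the main obstacle in the scheme since everything else is formal. I would verify it degreewise via Lemma \ref{20110919:lem1}: on each $V^d_{\lambda,\mu}$ one checks that the Laurent expansions of the basis elements $\lambda^{d-i}\mu^i$ ($i\in\mb Z$) and $\lambda^{d+i}(\lambda+\mu)^{-i}$ ($i\in\mb Z_{>0}$) remain $V$-linearly independent in the target, which in turn reduces to the uniqueness of the Laurent expansion of a rational function of $\lambda,\mu$ in the prescribed region. Once this is in place, the rest of the argument is entirely formal and applies to all six embeddings at once.
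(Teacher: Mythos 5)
Your proposal is correct, and it takes a genuinely different, more structural route than the paper. The paper proves the one nontrivial inclusion directly: it writes a homogeneous $A\in V^d_{\lambda,\mu}$ in the basis of Lemma \ref{20110919:lem1}, expands via $\iota_{\mu,\lambda}$, reads off which combinations of the coefficients $v_i,w_j$ land in $U$, and reduces everything to showing that an explicit infinite binomial matrix has rank $N$ (non-degeneracy of the Pascal-triangle matrices $T_h$). You instead pass to the quotient $V/U$ and observe that the whole lemma is equivalent to the injectivity of the six expansion maps for $(V/U)_{\lambda,\mu}$ --- i.e.\ to the $U=0$ case of the lemma applied to $V/U$ --- which is cleaner and explains at once why all six equalities follow from a single argument. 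Note, though, that you have not bypassed the real content: the linear independence of the expanded basis elements is exactly what the paper's binomial-rank computation establishes, and your substitute --- uniqueness of the Laurent expansion at $t=\infty$ of a rational function of $t=\mu/\lambda$ with poles only at $0$ and $-1$ --- is a legitimate and arguably slicker way to get it ($V$-linear independence then follows from $\mb F$-linear independence since only finitely many basis elements occur in any given $A^{(d)}$). Since the injectivity of the $\iota$'s is asserted in the Remark but never proved in the paper, in your scheme it genuinely must be supplied, so this step should be written out rather than left as ``one checks''.

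One point to tighten: your identification of $\ker\big(p:V_{\lambda,\mu}\to(V/U)_{\lambda,\mu}\big)$ with $U_{\lambda,\mu}$ ``via the degreewise basis'' is slightly delicate, because $V_{\lambda,\mu}$ is not the direct product of its homogeneous pieces --- an element is an infinite sum $\sum_{d\le N}A^{(d)}$ subject to a \emph{uniform} upper bound on the exponents in a presentation \eqref{20111006:eq1}, and knowing $A^{(d)}\in U^d_{\lambda,\mu}$ for each $d$ separately does not immediately produce such a presentation with coefficients in $U$. A clean fix is to choose an $\mb F$-complement $W$ of $U$ in $V$: then $V[[\lambda^{-1},\mu^{-1},\nu^{-1}]][\lambda,\mu,\nu]$ splits as the direct sum of the corresponding spaces for $U$ and $W$, the splitting is preserved by multiplication by $\nu-\lambda-\mu$, hence $V_{\lambda,\mu}=U_{\lambda,\mu}\oplus W_{\lambda,\mu}$ and $p$ is the projection killing the first summand. (The paper also reduces to homogeneous $A$ without comment, so this is a gap you share with, rather than add to, the original.)
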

\begin{proof}
We only need to prove that 
$\big\{A\in V_{\lambda,\mu}\,\big|\,\iota_{\mu,\lambda}A
\in U((\lambda^{-1}))((\mu^{-1}))\big\}\subset U_{\lambda,\mu}$.
Indeed, the opposite inclusion is obvious,
and the argument for the other equalities is the same.

Let $A\in V^d_{\lambda,\mu}$ be such that its expansion 
$\iota_{\mu,\lambda}A\in V((\lambda^{-1}))((\mu^{-1}))$ has coefficients in $U$.
We want to prove that $A$ lies in $U_{\lambda,\mu}$.
By Lemma \ref{20110919:lem1} $A$ can be written uniquely as
$$
A=\sum_{i=-M}^Nv_i\lambda^{d+i}\mu^{-i}+\sum_{j=1}^N w_j\lambda^{d+j}(\lambda+\mu)^{-j}\,,
$$
with $v_i,w_j\in V$.
Its expansion in $V((\lambda^{-1}))((\mu^{-1}))$ is
$$
\iota_{\mu,\lambda}A=
\sum_{i=-M}^Nv_i\lambda^{d+i}\mu^{-i}
+\sum_{j=1}^N\sum_{k=0}^\infty\binom{-j}{k}w_j\lambda^{d+j+k}\mu^{-j-k}\,.
$$
Since, by assumption, $\iota_{\mu,\lambda}A\in U((\lambda^{-1}))((\mu^{-1}))$,
we have
$$
\begin{array}{ll}
v_i\in U &\quad\text{ for }\quad -M\leq i\leq -1 \,,\\
\displaystyle{
v_i+\sum_{j=1}^i \binom{-j}{i-j}w_j
\in U 
} &\quad\text{ for }\quad 0\leq i\leq N  \,,\\
\displaystyle{
\sum_{j=1}^N \binom{-j}{i-j}w_j
\in U 
} &\quad\text{ for }\quad i>N \,.
\end{array}
$$
From the first condition above we have that $v_i$ lies in $U$ for $i<0$.
Using the third condition, we want to deduce that $w_j$ lies in $U$ for all $1\leq j\leq N$.
It then follows, from the second condition, that $v_i$ lies in $U$ for $i\geq0$ as well,
proving the claim.

For $i>N$ and $1\leq j\leq N$ we have $\binom{-j}{i-j}=(-1)^{i-j}\binom{i-1}{j-1}$.
Hence, we will be able to deduce that $w_j$ lies in $U$ for every $j$,
once we prove that the following matrix
$$
P=\left(\,(-1)^{i+j}\binom{i-1}{j-1}\,\right)_{\substack{N+1\leq i<\infty \\ 1\leq j\leq N }}\,,
$$
has rank $N$.
First, the sign $(-1)^{i+j}$ does not change the rank of the above matrix.
So it sufficies to prove that the matrices
$$
T_h=\left(\,\binom{i-1}{j-1}\,\right)_{\substack{h+1\leq i\leq h+N \\ 1\leq j\leq N }}\,,
$$
are non-degenerate for every $h\geq0$.
This is clear since the matrix $T_0$ is upper triangular with ones on the diagonal, 
and, by the Tartaglia-Pascal triangle, $T_h$ and $T_{h+1}$ have the same determinant.
\end{proof}

\subsection{Rational pseudodifferential operators}
\label{sec:2.2a}

For the rest of this section, 
let $\mc A$ be a differential algebra, i.e. a unital commutative associative algebra
with a derivation $\partial$, and assume that $\mc A$ is a domain.
For $a\in\mc A$, we denote $a'=\partial(a)$ and $a^{(n)}=\partial^n(a)$, for a non negative integer $n$.
We denote by $\mc K$ the field of fractions of $\mc A$.

Recall that a \emph{pseudodifferential operator} over $\mc A$
is an expression of the form
\begin{equation}\label{20111003:eq1}
A=A(\partial)
=\sum_{n=-\infty}^N a_n \partial^n
\,\,,\,\,\,\, a_n\in\mc A\,.
\end{equation}
If $a_N\neq0$, one says that $A$ has \emph{order} $N$.
Pseudodifferential operators form a unital associative algebra, 
denoted by $\mc A((\partial^{-1}))$,
with product $\circ$ defined by letting
\begin{equation}\label{20111130:eq1}
\partial^n\circ a=\sum_{j\in\mb Z_+}\binom nj a^{(j)}\partial^{n-j}
\,\,,\,\,\,\, n\in\mb Z,\, a\in\mc A\,.
\end{equation}
We will often omit $\circ$ if no confusion may arise.

Clearly, $\mc K((\partial^{-1}))$ is a skewfield,
and it is the skewfield of fractions of $\mc A((\partial^{-1}))$.
If $A\in\mc A((\partial^{-1}))$ is a non-zero pseudodifferential operator
of order $N$ as in \eqref{20111003:eq1},
its inverse $A^{-1}\in\mc K((\partial^{-1}))$ is computed as follows.
We write
$$
A
=a_N\Big(1+\sum_{n=-\infty}^{-1} a_N^{-1}a_{n+N} \partial^n\Big)\partial^N\,,
$$
and expanding by geometric progression, we get
\begin{equation}\label{20111130:eq2}
A^{-1}
=
\partial^{-N}\circ \sum_{k=0}^\infty\Big(-\sum_{n=-\infty}^{-1} a_N^{-1}a_{n+N} \partial^n\Big)^k\circ a_N^{-1}\,,
\end{equation}
which is well defined as a pseudodifferential operator in $\mc K((\partial^{-1}))$,
since, by formula \eqref{20111130:eq1},
the powers of $\partial$ are bounded above by $-N$,
and the coefficient of each power of $\partial$ is a finite sum.

The \emph{symbol} of the pseudodifferential operator $A(\partial)$ in \eqref{20111003:eq1}
is the formal Laurent series
$A(\lambda)=\sum_{n=-\infty}^N a_n \lambda^n\,\in\mc A((\lambda^{-1}))$,
where $\lambda$ is an indeterminate commuting with $\mc A$.
We thus get a bijective map $\mc A((\partial^{-1}))\to\mc A((\lambda^{-1}))$
(which is not an algebra homomorphism).
A closed formula for the associative product in $\mc A((\partial^{-1}))$
in terms of the corresponding symbols is the following:
\begin{equation}\label{20111003:eq2}
(A\circ B)(\lambda)=A(\lambda+\partial)B(\lambda)\,.
\end{equation}
Here and further on, we always expand an expression as $(\lambda+\partial)^{n},\,n\in\mb Z$, 
in non-negative powers of $\partial$:
\begin{equation}\label{20111004:eq1}
(\lambda+\partial)^{n}=\sum_{j=0}^\infty\binom nj \lambda^{n-j}\partial^j\,.
\end{equation}
Therefore, the RHS of \eqref{20111003:eq2} means
$\sum_{m,n=-\infty}^N\sum_{j=0}^\infty \binom{m}{j}a_m b_n^{(j)} \lambda^{m+n-j}$.
%
%\begin{remark}\label{20111003:rem}
%It is clear that, if $A(\partial)\in\mc M((\partial^{-1}))^\times$,
%then its leading coefficient $A_N\in\mc M$ is either invertible or a zero divisor.
%%
%As an example of the latter case,
%let $\mc M=\mb F\oplus\mb F$, with $\partial$ acting as zero.
%Then
%$A(\partial)=(0,1)+(1,0)\partial$ is invertible
%and its inverse is $(0,1)+(1,0)\partial^{-1}$.
%\end{remark}

The algebra $\mc A((\partial^{-1}))$
contains the algebra of \emph{differential operators} $\mc A[\partial]$ as a subalgebra.
\begin{definition}\label{20110926:def}
The field $\mc K(\partial)$ of \emph{rational pseudodifferential operators}
is the smallest subskewfield of $\mc K((\partial^{-1}))$ containing $\mc A[\partial]$.
We denote $\mc A(\partial)=\mc K(\partial)\cap\mc A((\partial^{-1}))$,
the subalgebra of \emph{rational pseudodifferential operators with coefficients in} $\mc A$.
\end{definition}
The following Proposition (see \cite[Prop.3.4]{CDSK12}) describes explicitly 
the skewfield $\mc K(\partial)$ of rational pseudodifferential operators.
\begin{proposition}\label{20111003:thm2}
Let $\mc A$ be a differential domain, and let $\mc K$ be its field of fractions.
\begin{enumerate}[(a)]
\item
%The skewfield $\mc A(\partial)$ of rational pseudodifferential operators over $\mc A$ is
%$$
%\begin{array}{l}
%\vphantom{\Big(}
%\mc A(\partial)
%=\big\{A(\partial)S^{-1}(\partial)\,\big|\, A(\partial),S(\partial)\in\mc A[\partial],\,S(\partial)\neq0\big\}
%\\
%\vphantom{\Big(}
%=\big\{S^{-1}(\partial)A(\partial)\,\big|\, A(\partial),S(\partial)\in\mc A[\partial],\,S(\partial)\neq0\big\}\,.
%\end{array}
%$$
%In other words,
Every rational pseudodifferential operator $L\in\mc K(\partial)$
can be written as a right (resp. left) fraction
$L=AS^{-1}$ (resp. $L=S^{-1}A$)
for some $A,S\in\mc A[\partial]$ with $S\neq0$.
\item
Let $L=AS^{-1}$ (resp. $L=S^{-1}A$), with $A,S\in\mc A[\partial]$, 
be a decomposition of $L\in\mc K(\partial)$
such that $S$ has minimal possible order.
Then any other decomposition $L=A_1S_1^{-1}$
(resp. $L=S_1^{-1}A_1$),
with $A_1,S_1\in\mc A[\partial]$,
we have $A_1=AK$, $S_1=SK$ 
(resp. $A_1=KA$, $A_1=KS$), 
for some $K\in\mc K[\partial]$.
\end{enumerate}
\end{proposition}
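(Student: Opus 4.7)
The plan is to exploit the fact that $\mc K[\partial]$ is a noncommutative (left and right) Euclidean domain via the Leibniz rule \eqref{20111130:eq1}, hence a left and right PID, and in particular a left and right Ore domain. Its skewfield of fractions embeds into $\mc K((\partial^{-1}))$ via the inversion formula \eqref{20111130:eq2}, contains $\mc K[\partial]$, and coincides with $\mc K(\partial)$. So every $L\in\mc K(\partial)$ already admits a decomposition $L=A'S'^{-1}$ with $A',S'\in\mc K[\partial]$; the substantive content of (a) is to promote this to a decomposition with $\mc A[\partial]$-coefficients.

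To do so, I would pick $c\in\mc A\setminus\{0\}$ clearing the denominators of the coefficients of $A'$ and $S'$, so $\tilde A:=cA'$ and $\tilde S:=cS'$ lie in $\mc A[\partial]$ and $L=c^{-1}\tilde A\tilde S^{-1}c$. The main technical obstacle, which I expect to be the only non-formal part, is the following Leibniz-rule lemma: for any $P\in\mc A[\partial]$ and any integer $N>\ord P$, we have $Pc^N=cW$ for some $W\in\mc A[\partial]$. Indeed, the generalized Leibniz rule gives $(c^N)^{(j)}\in c^{N-j}\mc A$ for $0\leq j\leq N$, so $c\mid(c^N)^{(j)}$ whenever $j<N$; hence every coefficient of $Pc^N=\sum_{i,j}\binom{i}{j}p_i(c^N)^{(j)}\partial^{i-j}$ lies in $c\mc A$. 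Taking $N$ exceeding both $\ord\tilde S$ and $\ord\tilde A$, one lemma application gives $\tilde Sc^N=cT$, equivalently $\tilde S^{-1}c=c^NT^{-1}$, and another gives $\tilde Ac^N=cW$; hence $L=c^{-1}\tilde Ac^NT^{-1}=WT^{-1}$ with $W,T\in\mc A[\partial]$. The left-fraction case is symmetric.

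For part (b), I would consider the right ideal $I=\{T\in\mc K[\partial]\mid LT\in\mc K[\partial]\}$ of $\mc K[\partial]$, which by the right PID property equals $S_0\mc K[\partial]$ for some $S_0\in\mc K[\partial]$. Setting $A_0:=LS_0\in\mc K[\partial]$, every $\mc A[\partial]$-denominator of $L$ lies in $I$, so $S=S_0K$ and $S_1=S_0K_1$ for some $K,K_1\in\mc K[\partial]$, with correspondingly $A=A_0K$ and $A_1=A_0K_1$. Applying the $\mc K[\partial]$-version of the Leibniz lemma (with $c$ a sufficiently high power of a common denominator of the coefficients of both $S_0$ and $A_0$) yields $c\in\mc A\setminus\{0\}$ with $S_0c,A_0c\in\mc A[\partial]$, making $S_0c$ an $\mc A[\partial]$-denominator of $L$ of order $\ord S_0$. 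Minimality of $\ord S$ gives $\ord S\leq\ord S_0$, and combined with $\ord S=\ord S_0+\ord K$ (the domain property of $\mc K[\partial]$) this forces $\ord K=0$, so $K\in\mc K\setminus\{0\}$. Setting $K':=K^{-1}K_1\in\mc K[\partial]$, one obtains $S_1=S_0K_1=SK'$ and $A_1=A_0K_1=AK'$, as claimed; the left-fraction case is again symmetric.
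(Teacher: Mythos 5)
The paper does not actually prove this proposition: it is quoted from \cite{CDSK12} (Prop.~3.4 there), so there is no in-paper argument to compare against. Judged on its own, your proof is correct. The reduction of part (a) to the Ore/Euclidean structure of $\mc K[\partial]$ is the standard route, and you correctly identify the only genuinely non-formal step, namely passing from $\mc K[\partial]$-coefficients to $\mc A[\partial]$-coefficients; your lemma that $P\circ c^N\in c\,\mc A[\partial]$ whenever $N>\ord P$ (because $(c^N)^{(j)}\in c^{N-j}\mc A$, so $c$ divides $(c^N)^{(j)}$ for all $j\le\ord P<N$) is right and does exactly the job of commuting a scalar past a differential operator from the correct side. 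The identity $L=c^{-1}\tilde A\,\tilde S^{-1}c=(c^{-1}\tilde Ac^N)(c^{-1}\tilde Sc^N)^{-1}$ then lands in $\mc A[\partial]$ as claimed. Part (b) is also sound: the right ideal $I=\{T\in\mc K[\partial]\mid LT\in\mc K[\partial]\}$ is principal, every $\mc A[\partial]$-denominator lies in it, your Leibniz lemma produces an $\mc A[\partial]$-denominator of order $\ord S_0$, and minimality plus additivity of orders in the domain $\mc K[\partial]$ forces $\ord K=0$, giving the asserted common right factor. Two small polish points: when you say the left-fraction cases are ``symmetric,'' the clean justification is the formal adjoint $P\mapsto P^*$, an anti-automorphism of $\mc K((\partial^{-1}))$ preserving $\mc A[\partial]$ and hence $\mc K(\partial)$, which converts $L=S^{-1}A$ into $L^*=A^*(S^*)^{-1}$ (a direct left-handed analogue of your Leibniz lemma is less transparent); and you should note once that $T\ne0$ in $\tilde Sc^N=cT$ (immediate since $\mc A$ is a domain), so that $T^{-1}$ exists in $\mc K((\partial^{-1}))$.
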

%
%\begin{remark}\label{20111219:rem2}
%If $\mc A$ is not a domain, one can define the algebra $\mc A(\partial)$ of rational %pseudodifferential operators
%over $\mc A$ as the smallest subalgebra of $\text{Fr}\mc A((\partial^{-1}))$,
%where $\text{Fr}\mc A$ is the algebra of fractions of $\mc A$,
%containing $\mc A[\partial]$ and the inverses of all elements $S(\partial)\in\mc A[\partial]$
%which are invertible in $\text{Fr}\mc A((\partial^{-1}))$.
%\end{remark}

\subsection{Rational matrix pseudodifferential operators}
\label{sec:2.3}

\begin{definition}\label{20111013:def}
A matrix pseudodifferential operator $A\in\Mat_{\ell\times\ell}\mc A((\partial^{-1}))$
is called \emph{rational with coefficients in $\mc A$}  
if its entries are rational pseudodifferential operators with coefficients in $\mc A$.
In other words, 
the algebra of rational matrix pseudodifferential operators with coefficients in $\mc A$ 
is $\Mat_{\ell\times\ell}\mc A(\partial)$.
\end{definition}
Let $M=\big(A_{ij}B_{ij}^{-1}\big)_{i,j\in I}$
be a rational matrix pseudodifferential operator with coefficients in $\mc A$, 
with $A_{ij},B_{ij}\in\mc A[\partial]$.
By the Ore condition (see e.g. \cite{CDSK12}), 
we can find a common right multiple $B\in\mc A[\partial]$
of all operators $B_{ij}$,
i.e. for every $i,j$ we can factor $B=B_{ij}C_{ij}$ for some $C_{ij}\in\mc A[\partial]$.
Hence, $A_{ij}B_{ij}^{-1}=\tilde A_{ij}B^{-1}$, where $\tilde A_{ij}=A_{ij}C_{ij}$.
Then, the matrix $M$ can be represented as a ratio of two matrices:
$M=\tilde A (B\id)^{-1}$.
Hence,
$$
\Mat{}_{\ell\times\ell}\mc A(\partial)
=
\left\{A(B\id)^{-1}\,\left|\,
\begin{array}{c}
A\in\Mat{}_{\ell\times\ell}\mc A[\partial],\,B\in\mc A[\partial],\\
A_{ij}B^{-1}\in\mc A((\partial^{-1}))\,\,\forall i,j
\end{array}
\right.\right\}\,.
$$
However, in general this is not a representation of the rational matrix $M$ in ``minimal terms''
(see Definition \ref{def:minimal-fraction} below).

We recall now some linear algebra over the skewfield $\mc K((\partial^{-1}))$
and, in particular, the notion of the Dieudonn\'e determinant
(see \cite{Art57} for an overview over an arbitrary skewfield).

An \emph{elementary row operation} of an $\ell\times\ell$ matrix pseudodifferential operator 
$A$ is either a permutation of two rows of it,
or the operation $\mc T(i,j;P)$, where $1\leq i\neq j\leq m$
and $P\in\mc K((\partial^{-1}))$,
which replaces the $j$-th row by itself minus $i$-th row
multiplied on the left by $P$.
Using the usual Gauss elimination, we can get the (well known) analogues
of standard linear algebra theorems for matrix pseudodifferential operators.
In particular, 
any matrix pseudodifferential operator $A\in\Mat_{m\times\ell}\mc K((\partial^{-1}))$
can be brought by elementary row operations to a row echelon form.

The \emph{Dieudonn\'e determinant} of a $A\in\Mat_{\ell\times\ell}\mc K((\partial^{-1}))$
has the form $\det A=c\xi^d$, where $c\in\mc A$, $\xi$ is an indeterminate, 
and $d\in\mb Z$.
It is defined by the following properties:
$\det A$ changes sign if we permute two rows of $A$,
and it is unchanged under any elementary row operation $\mc T(i,j;P)$ defined above,
for aribtrary $i\neq j$ and a pseudodifferential operator $P\in\mc K((\partial^{-1}))$;
moreover, if $A$ is upper triangular,
with diagonal entries $A_{ii}$ of order $n_i$ and leading coefficoent $a_i$,
then 
$$
\det A=\Big(\prod_i a_i\Big) \xi^{\sum_in_i}\,.
$$
It was proved in \cite{Die43} (for any skewfield) that the Dieudonn\'e determinant is well defined
and $\det(A\circ B)=(\det A)(\det B)$
for every $\ell\times\ell$ matrix pseudodifferential operators 
$A,B\in\Mat_{\ell\times\ell}\mc K((\partial^{-1}))$.

The Dieudonn\'e determinant gives a way to characterize invertible matrix pseudodifferential operators,
thanks to the following well known fact (see e.g. \cite[Prop.4.3]{CDSK12}):
\begin{proposition}\label{20111005:prop2}
Let $\mc D$ be a subskewfield of the skewfield $\mc K\!((\partial^{-1}))$,\
and let $A\in\!\Mat_{\ell\times\ell}\mc D$.
Then $A$ is invertible in $\Mat_{\ell\times\ell}\mc D$ if and only if $\det A\neq0$.
\end{proposition}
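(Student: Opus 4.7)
The plan is to prove both directions using Gauss elimination together with the multiplicativity of the Dieudonn\'e determinant established earlier in the excerpt.

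For the ``only if'' direction the computation is immediate. If $A$ admits an inverse $B\in\Mat_{\ell\times\ell}\mc D$, then
$(\det A)(\det B)=\det(A\circ B)=\det\id=\xi^0=1\ne 0$,
forcing $\det A\ne0$.

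For the ``if'' direction I would construct an explicit inverse by performing Gauss elimination on $A$ while remaining inside $\mc D$. The preliminary observation is that every elementary row operation on a matrix with entries in $\mc D$ is implemented by left multiplication by a matrix that both lies in $\Mat_{\ell\times\ell}\mc D$ and is invertible in $\Mat_{\ell\times\ell}\mc D$: a row swap corresponds to a (self-inverse) permutation matrix, and an operation $\mc T(i,j;P)$ with $P\in\mc D$ corresponds to a shear $\id-P\epsilon_{ji}$, where $\epsilon_{ji}$ is the matrix unit with a $1$ in position $(j,i)$, whose inverse is $\id+P\epsilon_{ji}$. The scalar $P=a_{ji}\circ a_{ii}^{-1}$ needed to clear the entry $a_{ji}$ beneath a non-zero pivot $a_{ii}$ automatically lies in $\mc D$ because $\mc D$ is a skewfield and hence closed under inversion of non-zero elements.

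I would then run the standard column-by-column reduction, producing $EA=U$ with $E$ a product of the above elementary matrices (so $E$ is invertible in $\Mat_{\ell\times\ell}\mc D$) and $U$ a row echelon form with entries in $\mc D$. By the invariance of $\det$ under $\mc T(i,j;P)$ and its sign-change under row swaps, $\det U=\pm\det A\ne0$. The explicit formula for the Dieudonn\'e determinant of an upper triangular matrix then forces every diagonal entry $u_{ii}$ of $U$ to be non-zero; equivalently, at no step of the elimination does the algorithm stall on a zero pivot column (a posteriori justifying that $U$ really is upper triangular).

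Finally, since each $u_{ii}$ is a non-zero element of the skewfield $\mc D$ and therefore invertible in $\mc D$, back-substitution computes $U^{-1}\in\Mat_{\ell\times\ell}\mc D$ (all the arithmetic involved stays inside $\mc D$), whence $A^{-1}:=U^{-1}\circ E$ gives a two-sided inverse of $A$ in $\Mat_{\ell\times\ell}\mc D$. There is no real obstacle beyond the bookkeeping of checking that every pivot, every shear coefficient, and every back-substitution entry remains inside $\mc D$; this is automatic from the assumption that $\mc D$ is a subskewfield. The heart of the argument is the clean interplay between the determinant formula for upper triangular matrices and the fact that row reduction and back-substitution use only skewfield operations.
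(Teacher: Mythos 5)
Your proof is correct. The paper itself gives no argument here --- it states the proposition as a ``well known fact'' and cites \cite[Prop.4.3]{CDSK12} --- but your Gauss-elimination proof is precisely the standard one and uses exactly the machinery the paper assembles in the preceding paragraphs (elementary row operations staying inside $\mc D$, reduction to row echelon form, multiplicativity of the Dieudonn\'e determinant, and the triangular formula forcing non-zero pivots). The only point worth making explicit is the one you already flag: a square row echelon form over a skewfield is either upper triangular with non-zero diagonal or has vanishing determinant, so $\det A\neq 0$ rules out a stalled pivot a posteriori.
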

\begin{corollary}\label{20111005:prop3}
Let $A\in\Mat_{\ell\times\ell}\mc K((\partial^{-1}))$
be a matrix with $\det A\neq0$.
Then $A$ is a rational matrix if and only if $A^{-1}$ is.
\end{corollary}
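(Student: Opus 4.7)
The plan is to invoke Proposition \ref{20111005:prop2} directly, with the subskewfield $\mc D$ taken to be the field $\mc K(\partial)$ of rational pseudodifferential operators. This is legitimate because $\mc K(\partial)$ is a subskewfield of $\mc K((\partial^{-1}))$ by Definition \ref{20110926:def}, and saying that a matrix $A$ is rational is precisely saying that $A\in\Mat_{\ell\times\ell}\mc K(\partial)$.

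For the forward implication, I would argue as follows. Suppose $A\in\Mat_{\ell\times\ell}\mc K(\partial)$ and $\det A\neq 0$. By Proposition \ref{20111005:prop2} applied to the skewfield $\mc D=\mc K(\partial)$, the matrix $A$ is invertible inside $\Mat_{\ell\times\ell}\mc K(\partial)$. Hence there is a matrix $B\in\Mat_{\ell\times\ell}\mc K(\partial)$ with $A\circ B=B\circ A=\id$. By uniqueness of two-sided inverses in the ambient algebra $\Mat_{\ell\times\ell}\mc K((\partial^{-1}))$, we have $B=A^{-1}$, so $A^{-1}$ is rational.

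The converse is obtained by applying the same argument to $A^{-1}$ in place of $A$, once we know that the hypothesis $\det A\neq 0$ transfers to $\det A^{-1}\neq 0$. For this, I would use the multiplicativity of the Dieudonn\'e determinant: the identity matrix is upper triangular with leading coefficient $1$ and orders $n_i=0$, so $\det\id=1$, and the identity
$$
1=\det\id=\det(A\circ A^{-1})=(\det A)(\det A^{-1})
$$
forces $\det A^{-1}\neq 0$. Then, by the forward direction applied to $A^{-1}\in\Mat_{\ell\times\ell}\mc K(\partial)$, its inverse $(A^{-1})^{-1}=A$ is again rational.

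There is essentially no obstacle here: the statement is a straightforward packaging of Proposition \ref{20111005:prop2} together with the multiplicativity of the Dieudonn\'e determinant. The only point that merits a sentence of care is the passage from ``invertible inside $\Mat_{\ell\times\ell}\mc K(\partial)$'' to ``the inverse computed in the larger algebra lies in $\Mat_{\ell\times\ell}\mc K(\partial)$,'' which is immediate from the uniqueness of inverses in an associative algebra.
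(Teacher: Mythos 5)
Your proposal is correct and follows exactly the paper's own route: the paper proves this corollary in one line by applying Proposition \ref{20111005:prop2} with $\mc D=\mc K(\partial)$, and you simply spell out the same application together with the (standard) verification via multiplicativity of the Dieudonn\'e determinant that $\det A^{-1}\neq0$. No further comment is needed.
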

\begin{proof}
It is a special case of Proposition \ref{20111005:prop2} when $\mc D$
is the subskewfield $\mc K(\partial)\subset\mc K((\partial^{-1}))$ of rational pseudodifferential operators.
\end{proof}
\begin{remark}\label{20111216:rem}
It is proved in \cite{CDSK12} that,
if $A\in\Mat_{\ell\times\ell}\mc A((\partial^{-1}))$
then we have $\det A=c\xi^d$, with $c\in\bar{\mc A}$, 
where $\bar{\mc A}$ is the integral closure of $\mc A$.
Furthermore, if $c$ is an invertible element of $\bar{\mc A}$,
then the inverse matrix $A^{-1}$
lies in $\Mat_{\ell\times\ell}\bar{\mc A}((\partial^{-1}))$.
\end{remark}

\begin{definition}[see \cite{CDSK12b}]\label{def:minimal-fraction}
%\label{20120201:rem1}
Let $H\in\Mat_{\ell\times\ell}\mc K(\partial)$ be a rational matrix pseudodifferential 
operator with coefficients in the differential field $\mc K$.
A fractional decomposition $H=A B^{-1}$,
with $A,B\in\Mat_{\ell\times\ell}\mc K[\partial]$, $\det B\neq0$,
is called \emph{minimal} if $\deg_\xi \det B$ is minimal
(recall that it is a non-negative integer).
\end{definition}
\begin{proposition}[\cite{CDSK12b}]\label{prop:minimal-fraction}
\begin{enumerate}[(a)]
\item
A fractional decomposition $H\!\!=\!\!A B^{-1}$
of a rational matrix pseudodifferential operator $H\in\Mat_{\ell\times\ell}\mc K(\partial)$
is minimal if and only if
\begin{equation}\label{20120124:eq3}
\ker A\cap\ker B=0\,,
\end{equation}
in any differential field extension of $\mc K$.
\item
The minimal fractional decomposition of $H$ exists and is
unique up to multiplication of $A$ and $B$ on the right 
by a matrix differential operator $D$
which is invertible in the algebra $\Mat_{\ell\times\ell}\mc K[\partial]$.
\end{enumerate}
\end{proposition}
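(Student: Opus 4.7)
The plan is to reduce both parts to the theory of greatest common right divisors (GCRDs) in the ring $\Mat_{\ell\times\ell}\mc K[\partial]$.

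For part (a), the first observation is that if $A = A'D$ and $B = B'D$ with $D\in\Mat_{\ell\times\ell}\mc K[\partial]$, then $H=A'(B')^{-1}$ and, by multiplicativity of the Dieudonn\'e determinant, $\deg_\xi\det B = \deg_\xi\det B' + \deg_\xi\det D$. Since Gaussian elimination over $\mc K[\partial]$ (as in the discussion preceding Proposition \ref{20111005:prop2}) shows that $D\in\Mat_{\ell\times\ell}\mc K[\partial]$ is invertible in that ring iff $\deg_\xi\det D=0$, this reformulates minimality as the statement that every common right divisor of $A$ and $B$ is invertible. The task thus becomes to relate common right divisors to $\ker A\cap\ker B$.

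To do this I would use that $\mc K[\partial]$ is a non-commutative principal ideal domain, so every finitely generated left submodule of the free module $\mc K[\partial]^{1\times\ell}$ is free of rank $\le\ell$. In particular, the left submodule $\mc K[\partial]^{1\times\ell}A+\mc K[\partial]^{1\times\ell}B$ has the form $\mc K[\partial]^{1\times\ell}D^*$ for some $D^*\in\Mat_{\ell\times\ell}\mc K[\partial]$, which is a GCRD of $A$ and $B$: the rows of $A$ and $B$ lying in the span yield factorizations $A=A'D^*$, $B=B'D^*$, while the rows of $D^*$ lying in the sum yield a Bezout identity $D^*=PA+QB$ for some $P,Q\in\Mat_{\ell\times\ell}\mc K[\partial]$. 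Together these give $\ker D^*=\ker A\cap\ker B$ in every differential extension: the inclusion $\subseteq$ comes from the factorizations, and $\supseteq$ from Bezout. Combined with the Picard--Vessiot-type fact that any $L\in\Mat_{\ell\times\ell}\mc K[\partial]$ with $\deg_\xi\det L>0$ admits a non-zero kernel vector in some differential extension, this yields (a).

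For part (b), existence is immediate by well-ordering of the non-negative integers, applied to the non-empty set of values $\deg_\xi\det B$ over all fractional decompositions (which exist by the discussion preceding Definition \ref{def:minimal-fraction}). For uniqueness, suppose $H=AB^{-1}=A_1B_1^{-1}$ are two minimal decompositions. The Ore condition in $\Mat_{\ell\times\ell}\mc K[\partial]$ (matrix ring over an Ore domain) yields $R,S\in\Mat_{\ell\times\ell}\mc K[\partial]$ with non-zero Dieudonn\'e determinants such that $BR=B_1S$, and this forces $AR=A_1S$ as well. Dividing by a GCRD $T$ of $R$ and $S$ to get $R=R'T$, $S=S'T$ with $R',S'$ right coprime and still satisfying $AR'=A_1S'$, $BR'=B_1S'$, I would then show that $S'$ must be invertible in $\Mat_{\ell\times\ell}\mc K[\partial]$: otherwise part (a) supplies a non-zero $v$ in some extension with $S'v=0$, whence $AR'v=A_1S'v=0$ and $BR'v=B_1S'v=0$, so minimality of $(A,B)$ via part (a) gives $R'v=0$, contradicting right coprimality of $R',S'$. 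A symmetric argument shows that $R'$ is invertible, so $D:=R'(S')^{-1}\in\Mat_{\ell\times\ell}\mc K[\partial]$ is invertible and satisfies $A_1=AD$, $B_1=BD$.

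The main obstacle I anticipate is the Picard--Vessiot step underlying the principle ``$\deg_\xi\det L>0 \Rightarrow \ker L\neq 0$ in some extension''. For scalar operators this is a direct adjunction of a solution, but for matrix operators one typically reduces to a first-order linear system and constructs a Picard--Vessiot extension, or passes through a cyclic vector to reduce to the scalar case. In the setting of this paper the prerequisite should already be available through the companion paper \cite{CDSK12}, which the excerpt invokes for the Dieudonn\'e determinant machinery.
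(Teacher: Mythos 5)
This statement is imported by the paper from \cite{CDSK12b} without proof, so there is no in-paper argument to compare against; judged on its own merits, your proposal is essentially the argument of the cited source: right coprimality via the greatest common right divisor in the (left and right principal ideal) domain $\mc K[\partial]$, degree additivity of the Dieudonn\'e determinant, a Bezout identity giving $\ker D^*=\ker A\cap\ker B$, and passage to a differential field extension in which a matrix differential operator of positive $\xi$-degree acquires a nonzero kernel. All the ingredients you invoke (invertibility in $\Mat_{\ell\times\ell}\mc K[\partial]$ being equivalent to $\deg_\xi\det=0$, the existence of a nonsingular common right multiple of two nonsingular matrices, the kernel-dimension fact in a linearly closed extension) are exactly the ones established in \cite{CDSK12,CDSK12b}, so the plan goes through.

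One point you should close explicitly: in part (a) you assert that the determinant computation ``reformulates minimality as right coprimality,'' but that computation only gives the implication \emph{minimal} $\Rightarrow$ \emph{coprime} (a non-invertible common right divisor lowers $\deg_\xi\det B$). The converse --- that a coprime decomposition already achieves the minimal degree --- is not immediate; it is supplied by the Ore-reduction argument you use in part (b): given a coprime pair $(A,B)$ and any other decomposition $(A_1,B_1)$, the reduction $BR'=B_1S'$, $AR'=A_1S'$ with $R',S'$ right coprime and $S'$ invertible yields $B_1=BD$ with $D\in\Mat_{\ell\times\ell}\mc K[\partial]$, whence $\deg_\xi\det B_1\geq\deg_\xi\det B$. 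So the correct order is: prove (coprime $\Leftrightarrow$ trivial common kernel) first, then run the reduction argument once to obtain simultaneously the remaining direction of (a) and the uniqueness in (b); as written, part (a) quietly borrows from part (b). This is a matter of arrangement rather than a gap in substance, and the Picard--Vessiot input you flag is indeed the only genuinely nontrivial analytic ingredient.
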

\begin{remark}\label{rem:minimal-fraction}
In the case $\ell=1$ the fractional decomposition $H=A B^{-1}\in\mc K(\partial)$,
is minimal if and only if the the differential operators $A,B\in\mc K[\partial]$
have no right common divisor of order greater than 0
(i.e. the right greatest common divisor of $A$ and $B$ is 1).
\end{remark}
\begin{remark}\label{rem:minimal-fraction2}
Let $\mc A$ be a differential domain, and let $\mc K$ be its field of fractions.
A fractional decomposition $H=AB^{-1}$ of $H\in\Mat_{\ell\times\ell}\mc A[\partial]$ over $\mc K$
can be turned into a fractional decomposition over $\mc A$ by clearing the denominators of $A$ and $B$.
Hence, a minimal fracitonal decomposition over $\mc A$ is also minimal over $\mc K$.
\end{remark}

%%%%%%%%%%%%%%%%%%%%%%%%%%%%%%%%%%%%%%%%%%%%%%%%%%%%%%%%%%%%%%%%%%%%%%%%%%%%%%%%%%%%%%%%%%%%%%%%%%%%%%%%%%%%%%
%%%%%%%%%%%%%%% Sect 3 %%%%%%%%%%%%%%%%%%%%%%%%%%%%%%%%%%%%%%%%%%%%%%%%%%%%%%%%%%%%%%%%%%%%%%%%%%%%%%%%%%%%%%%
%%%%%%%%%%%%%%%%%%%%%%%%%%%%%%%%%%%%%%%%%%%%%%%%%%%%%%%%%%%%%%%%%%%%%%%%%%%%%%%%%%%%%%%%%%%%%%%%%%%%%%%%%%%%%%

\section{Non-local Poisson vertex algebras}
\label{sec:3}

\subsection{Non-local $\lambda$-brackets and non-local Lie conformal algebras}
\label{sec:3.1}

Let $R$ be a module over the algebra of polynomials $\mb F[\partial]$.
\begin{definition}\label{20110919:def1}
A \emph{non-local} $\lambda$-\emph{bracket} on $R$ is a linear map
$\{\cdot\,_\lambda\,\cdot\}:\,R\otimes R\to R((\lambda^{-1}))$
satisfying the following \emph{sesquilinearity} conditions ($a,b\in R$):
\begin{equation}\label{20110921:eq1}
\{\partial a_\lambda b\}=-\lambda\{a_\lambda b\}
\,\,,\,\,\,\,
\{a_\lambda\partial b\}=(\lambda+\partial)\{a_\lambda b\}\,.
\end{equation}
The non-local $\lambda$-bracket $\{\cdot\,_\lambda\,\cdot\}$ is said to be \emph{skewsymmetric} 
(respectively \emph{symmetric})
if ($a,b\in R$)
\begin{equation}\label{20110921:eq2}
\{b_\lambda a\}=-\{a_{-\lambda-\partial}b\}
\quad \Big(\text{ resp. } =\{a_{-\lambda-\partial}b\}\Big)
\,.
\end{equation}
\end{definition}
The RHS of the skewsymmetry condition should be interpreted as follows:
if $\{a_\lambda b\}=\sum_{n=-\infty}^Nc_n\lambda^n$, then
$$
\begin{array}{c}
\displaystyle{
\{a_{-\lambda-\partial} b\}
=
\sum_{n=-\infty}^N(-\lambda-\partial)^n c_n
=
\sum_{n=-\infty}^N\sum_{k=0}^\infty\binom{n}{k}(-1)^n(\partial^k c_n)\lambda^{n-k} 
}\\
\displaystyle{
= \sum_{m=-\infty}^N\Big(\sum_{k=0}^{N-m}\binom{m+k}{k}(-1)^{m+k}(\partial^k c_{m+k})\Big)\lambda^m
\,.
}
\end{array}
$$
In other words, we move $-\lambda-\partial$ to the left and
we expand in non negative powers of $\partial$ as in \eqref{20111004:eq1}.

In general we have $\{a_\lambda\{b_\mu c\}\}\in R((\lambda^{-1}))((\mu^{-1}))$
for an arbitrary $\lambda$-bracket $\{\cdot\,_\lambda\,\cdot\}$.
Recall from Section \ref{sec:2.1} that $R_{\lambda,\mu}$ can be considered
as a subspace of $R((\lambda^{-1}))((\mu^{-1}))$ via the embedding $\iota_{\mu,\lambda}$.
\begin{definition}\label{20110919:def2}
The non-local $\lambda$-bracket $\{\cdot\,_\lambda\,\cdot\}$ on $R$
is called \emph{admissible} if
$$
\{a_\lambda\{b_\mu c\}\}\in R_{\lambda,\mu}
\qquad\forall a,b,c\in R\,.
$$
\end{definition}
\begin{remark}\label{20110919:rem}
If $\{\cdot\,_\lambda\,\cdot\}$ is a skewsymmetric admissible $\lambda$-bracket on $R$,
then $\{b_\mu\{a_\lambda c\}\}\in R_{\lambda,\mu}$ and $\{\{a_\lambda b\}_{\lambda+\mu} c\}\in R_{\lambda,\mu}$
for all $a,b,c\in R$.
Indeed, the first claim is obvious since $R_{\lambda,\mu}=R_{\mu,\lambda}$.
For the second claim, by skewsymmetry 
$\{\{a_\lambda b\}_{\lambda+\mu} c\}=-\{c_{-\lambda-\mu-\partial}\{a_\lambda b\}\}$,
and by the admissibility assumption $\{c_\nu\{a_\lambda b\}\}\in R_{\lambda,\nu}$.
To conclude it suffices to note that when replacing $\nu$ by $-\lambda-\mu-\partial$
in an element of $R_{\lambda,\nu}=R[[\lambda^{-1},\nu^{-1},(\lambda+\nu)^{-1}]][\lambda,\nu]$,
we have that $\nu^{-1}$ is expanded in negative powers of $\lambda+\mu$
and $(\lambda+\nu)^{-1}$ is expanded in negative powers of $\mu$.
As a result, we get an element of $R[[\lambda^{-1},\mu^{-1},(\lambda+\mu)^{-1}]][\lambda,\mu]=R_{\lambda,\mu}$.
\end{remark}
\begin{definition}\label{20110921:def1}
A \emph{non-local Lie conformal algebra} is an $\mb F[\partial]$-module $R$
endowed with an admissible skewsymmetric $\lambda$-bracket
$\{\cdot\,_\lambda\,\cdot\}:\,R\otimes R\to R((\lambda^{-1}))$
satisfying the Jacobi identity (in $R_{\lambda,\mu}$):
\begin{equation}\label{20110922:eq3}
\{a_\lambda\{b_\mu c\}\}-\{b_\mu\{a_\lambda c\}\}=\{\{a_\lambda b\}_{\lambda+\mu} c\}
\,\,\,\,\text{ for every } a,b,c\in R\,.
\end{equation}
\end{definition}
\begin{example}\label{20110921:ex1}
Let $R=\big(\mb F[\partial]\otimes V\big)\oplus\mb FC$,
where $V$ is a vector space with a symmetric bilinear form $(\cdot\,|\,\cdot)$.
Define the (non-local) $\lambda$-bracket on $R$ by
letting $C$ be a central element, defining
$$
\{a_\lambda b\}=(a|b)C\lambda^{-1}
\,\,\,\,\text{ for } a,b\in V\,,
$$
and extending it to a $\lambda$-bracket on $R$ by sesquilinearity.
Skewsymmetry for this $\lambda$-bracket holds since, by assumption, $(\cdot\,|\,\cdot)$ is symmetric.
Moreover, since any triple $\lambda$-bracket is zero,
the $\lambda$-bracket is obviously admissible and it satisfies the Jacobi identity.
Hence, we have a non-local Lie conformal algebra.
\end{example}

\subsection{Non-local Poisson vertex algebras}
\label{sec:3.2}

Let $\mc V$ be a differential algebra, i.e. a unital commutative associative algebra
with a derivation $\partial:\,\mc V\to\mc V$.
As before, we assume that $\mc V$ is a domain and denote by $\mc K$ its field of fractions.
\begin{definition}\label{20110921:def2}
\begin{enumerate}[(a)]
\item 
A \emph{non-local} $\lambda$-\emph{bracket} on the differential algebra $\mc V$ is a linear map
$\{\cdot\,_\lambda\,\cdot\}:\,\mc V\otimes \mc V\to \mc V((\lambda^{-1}))$
satisfying the sesquilinearity conditions \eqref{20110921:eq1}
and the following left and right \emph{Leibniz rules}:
\begin{equation}\label{20110921:eq3}
\begin{array}{l}
\{a_\lambda bc\}=b\{a_\lambda c\}+c\{a_\lambda b\}\,, \\
\{ab_\lambda c\}=\{a_{\lambda+\partial}c\}_\to b+\{b_{\lambda+\partial} c\}_\to a\,.
\end{array}
\end{equation}
Here and further an expression $\{a_{\lambda+\partial}b\}_\to c$ is interpreted as follows:
if $\{a_{\lambda}b\}=\sum_{n=-\infty}^Nc_n\lambda^n$, 
then $\{a_{\lambda+\partial}b\}_\to c=\sum_{n=-\infty}^Nc_n(\lambda+\partial)^nc$,
where we expand $(\lambda+\partial)^nc$ in non-negative powers of $\partial$ as in \eqref{20111004:eq1}.
\item
The conditions of (\emph{skew})\emph{symmetry}, \emph{admissibility} and \emph{Jacobi identity} 
for a non-local $\lambda$-bracket $\{\cdot\,_\lambda\,\cdot\}$ on $\mc V$
are the same as in Definitions \ref{20110919:def1}, \ref{20110919:def2} and \ref{20110921:def1} respectively.
\item
A \emph{non-local Poisson vertex algebra} is a differential algebra $\mc V$
endowed with a \emph{non-local Poisson} $\lambda$-\emph{bracket},
i.e. a skewsymetric admissible non-local $\lambda$-bracket,
satisfying the Jacobi identity.
\end{enumerate}
\end{definition}
\begin{example}[cf. Example \ref{20110921:ex1}]\label{20110921:ex2}
Let $\mc V=\mb F[u_i^{(n)}\,|\,i=1,\dots,\ell,n\in\mb Z_+]$ 
be the algebra of diffenertial polynoamials in $\ell$ differential variables $u_1,\dots,u_\ell$.
Let $C=\big(c_{ij}\big)_{i,j=1}^\ell$ be an $\ell\times\ell$ symmetric matrix 
with coefficients in $\mb F$.
The following formula defines a structure of a non-local Poisson vertex algebra on $\mc V$:
$$
\{P_\lambda Q\}
=
\sum_{m,n\in\mb Z_+}\sum_{i,j\in\mb Z_+} c_{ij}
\frac{\partial Q}{\partial u_j^{(n)}} (-1)^m 
(\lambda+\partial)^{m+n-1} 
\frac{\partial P}{\partial u_i^{(m)}}\,.
$$
For example, $\{{u_i}_\lambda {u_j}\}=c_{ij}\lambda^{-1}$ but, for 
$P,Q\in\mb F[u_1,\dots,u_\ell]\subset\mc V$,
we get an infinite formal Laurent series in $\lambda^{-1}$:
$$
\begin{array}{l}
\displaystyle{
\{P_\lambda Q\}
=
\sum_{i,j=1}^\ell c_{ij} \frac{\partial Q}{\partial u_j} 
(\lambda+\partial)^{-1} \frac{\partial P}{\partial u_i}
} \\
\displaystyle{
=\sum_{i,j=1}^\ell \sum_{n=0}^\infty (-1)^n  \frac{\partial Q}{\partial u_j}
\Big(\partial^n \frac{\partial P}{\partial u_i}\Big) \lambda^{-n-1}
\in\mc V((\lambda^{-1}))\,.
}
\end{array}
$$
We will prove that this is indeed a non-local Poisson $\lambda$-bracket 
in the next section,
where we will discuss a general construction of non-local Poisson vertex algebras,
which will include this example as a special case
(see Theorem \ref{20110923:prop}).
\end{example}

\begin{proposition}\label{20111219:prop}
Let $\{\cdot\,_\lambda\,\cdot\}$ be a non-local Poisson vertex algebra structure on the differential domain $\mc V$.
Then there is a unique way to extend it to a non-local Poisson vertex algebra structure
on the differential field of fractions $\mc K$,
and it can be computed using the following formulas ($a,b\in\mc K\backslash\{0\}$):
\begin{equation}\label{20111219:eq1}
\{a_\lambda b^{-1}\}=-b^{-2}\{a_\lambda b\}
\,\,,\,\,\,\,
\{a^{-1}_\lambda b\}=-\{a_{\lambda+\partial} b\}_\to a^{-2}\,.
\end{equation}
\end{proposition}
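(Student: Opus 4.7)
My approach is to first derive both formulas, establishing uniqueness of any extension, then use them as a definition and verify that they give a PVA structure on $\mc K$.

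For uniqueness, the left Leibniz rule applied to $\{a_\lambda 1\cdot 1\}$ forces $\{a_\lambda 1\}=0$, and the right Leibniz rule analogously gives $\{1_\lambda a\}=0$. The first formula then follows immediately from $\{a_\lambda b\cdot b^{-1}\}=0$. The second formula requires a little more work: the right Leibniz rule applied to $\{(a\cdot a^{-1})_\lambda b\}=0$ yields the functional equation $\{a^{-1}{}_{\lambda+\partial}b\}_\to a = -\{a_{\lambda+\partial}b\}_\to a^{-1}$. The key ingredient to invert this relation is the \emph{shifted Leibniz rule} $(\lambda+\partial)^n(xy)=\sum_{j\geq 0}\binom{n}{j}(\partial^j x)(\lambda+\partial)^{n-j}y$, valid for all $n\in\mb Z$; this follows from the binomial expansion of $(\lambda+\partial)^n$ together with the classical Leibniz rule for $\partial$. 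Applied with $x=a^{-2}$, $y=a$ on the right-hand side and matching coefficients of $\lambda^m$, one obtains precisely $\{a^{-1}_\lambda b\}=-\{a_{\lambda+\partial}b\}_\to a^{-2}$. These two formulas, combined with bilinearity and the Leibniz rules, determine $\{x_\lambda y\}$ for all $x,y\in\mc K$, so any extension is unique.

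For existence I would define the extended bracket by writing each element of $\mc K$ as $cd^{-1}$ and iterating the formulas together with the Leibniz rules. The main check is well-definedness: if $c/d=c'/d'$ in $\mc K$, i.e.\ $cd'=c'd$ in $\mc V$, then the two candidate values for $\{a_\lambda(c/d)\}$ must coincide. Multiplying the desired equality through by $d^2d'^2$ and applying $cd'=c'd$ reduces it to the ordinary Leibniz identity $\{a_\lambda cd'\}=\{a_\lambda c'd\}$, which holds in $\mc V$; an analogous clearing-of-denominators argument handles the first slot. That the output lies in $\mc K((\lambda^{-1}))$ follows from the observation that in $\sum_n c_n(\lambda+\partial)^n a^{-2}$ only finitely many $n$ contribute to each fixed power of $\lambda$.

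Finally, the PVA axioms must be verified on the extension. Sesquilinearity and the left/right Leibniz rules hold by construction. Skewsymmetry on mixed pairs such as $\{a^{-1}_\lambda b\}=-\{b_{-\lambda-\partial}a^{-1}\}$ is checked by expanding both sides using the two formulas together with the skewsymmetry on $\mc V$, then matching coefficients via the shifted Leibniz rule once more. The main obstacle is admissibility together with the Jacobi identity: the three terms of Jacobi on $\mc K$-elements a priori live in different ambient spaces, and one must show they all lie in, and agree in, the common space $\mc K_{\lambda,\mu}$. The strategy is a reduction: iterated Leibniz rules express any triple $\mc K$-bracket as a $\mc K$-linear combination (with appropriate $(\lambda+\partial)$- and $(\mu+\partial)$-shifts) of triple $\mc V$-brackets, which by hypothesis lie in $\mc V_{\lambda,\mu}\subset\mc K_{\lambda,\mu}$; the standard PVA argument that Jacobi propagates from a generating set through products and derivations then transports Jacobi from $\mc V$ to its localization $\mc K$.
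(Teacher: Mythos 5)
Your proposal is correct and follows the same route the paper intends: the paper's own proof merely asserts that formulas \eqref{20111219:eq1} (which, as you show, are forced by the Leibniz rules, giving uniqueness) define a non-local PVA structure on $\mc K$, with admissibility derived from Lemma \ref{20111006:lem}, and leaves the verification to the reader. Your write-up supplies exactly those omitted details — the well-definedness under $cd^{-1}=c'd'^{-1}$, the operator-composition argument behind the second formula, and the propagation of skewsymmetry, admissibility and Jacobi from $\mc V$ to its localization — all of which check out.
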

\begin{proof}
It is straightforward to check that formulas \eqref{20111219:eq1} define
a non-local $\lambda$-bracket on the field of fraction $\mc K$,
satysfying all the axioms of non-local Poisson vertex algebra.
In particular, admissibility of the $\lambda$-bracket can be derived from Lemma \ref{20111006:lem}.
The details of the proof are left to the reader.
\end{proof}
%
%\begin{remark}\label{20111219:rem}
%When $\mc V$ is an algebra of differential functions with a non-local Poisson vertex algebra structure
%given by the Matser Formula \eqref{20110922:eq1},
%then Proposition \ref{20111219:prop} follows from Theorem \ref{20110923:prop}
%and the fact that the field of fractions $\mc K$ of $\mc V$ is also an algebra of differential functions
%in the same variables $u_i,\,i=1,\dots,\ell$.
%\end{remark}

Thanks to Proposition \ref{20111219:prop}
we can extend, uniquely, a non-local Poisson vertex algebra $\lambda$-bracket
on $\mc V$ to its field of fractions $\mc K$.
The following results are useful to prove admissibility of a non-local $\lambda$-bracket.
\begin{lemma}\label{20111012:lem}
Let $\mc V$ be a differential algebra, endowed 
with a non-local $\lambda$-bracket $\{\cdot\,_\lambda\,\}$.
Assume that $\mc V$ is a domain, and let $\mc K$ be its field of fractions.
Let $S=\big(S_{ij}\big)_{i,j\in I}\in\Mat_{\ell\times\ell}\big(\mc K((\partial^{-1}))\big)$
be an invertible $\ell\times\ell$ matrix pseudodifferential operator with coefficients in $\mc K$.
Letting $S_{ij}=\sum_{n=-\infty}^N s_{ij;n}\partial^n$,
the following identities hold for every $a\in\mc K$ and $i,j\in I$:
\begin{equation}\label{20111012:eq2a}
\begin{array}{r}
\displaystyle{
\big\{a_\lambda (S^{-1})_{ij}(\mu)\big\}
=
-\sum_{r,t=1}^\ell\sum_{n=-\infty}^N
\iota_{\mu,\lambda}(S^{-1})_{ir}(\lambda+\mu+\partial)
} \\
\displaystyle{
\{a_\lambda s_{rt;n}\} (\mu+\partial)^n (S^{-1})_{tj}(\mu)
\,\in\mc K((\lambda^{-1}))((\mu^{-1}))
\,,
}
\end{array}
\end{equation}
and
\begin{equation}\label{20111012:eq2b}
\begin{array}{l}
\displaystyle{
\big\{(S^{-1})_{ij}(\lambda) _{\lambda+\mu} a\big\}
=
-\sum_{r,t=1}^\ell\sum_{n=-\infty}^N
\{{s_{rt;n}}_{\lambda+\mu+\partial}a\}_\to
} \\
\displaystyle{
\Big((\lambda+\partial)^n (S^{-1})_{tj}(\lambda)\Big)
\iota_{\lambda,\lambda+\mu}({S^*}^{-1})_{ri}(\mu) 
\,\in\mc K(((\lambda+\mu)^{-1}))((\lambda^{-1}))
\,,
}
\end{array}
\end{equation}
where $\iota_{\mu,\lambda}$ and $\iota_{\lambda,\lambda+\mu}$ 
are as in \eqref{20110919:eq1b}.
In equation \eqref{20111012:eq2b} $S^*$ denotes the adjoint 
of the matrix differential operator $S$
(its inverse being $(S^{-1})^*$).
\end{lemma}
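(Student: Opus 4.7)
The strategy is to apply the $\lambda$-bracket to the operator identity $S\circ S^{-1}=\id$, viewed at the level of symbols, and to solve the resulting equation for the bracket involving $S^{-1}$. At the symbol level this identity reads
\[
\sum_{r,n} s_{ir;n}(\mu+\partial)^n (S^{-1})_{rj}(\mu) = \delta_{ij}.
\]
A key preliminary is the termwise extension of sesquilinearity: from \eqref{20110921:eq1} and \eqref{20111004:eq1} one obtains $\{a_\lambda (\mu+\partial)^n g\} = \iota_{\mu,\lambda}(\lambda+\mu+\partial)^n \{a_\lambda g\}$ for any $g\in\mc K((\mu^{-1}))$ and $n\in\mb Z$, and $\{(\lambda+\partial)^n b_{\lambda+\mu} a\} = (-\mu)^n \{b_{\lambda+\mu} a\}$ for the first slot (since $\lambda-(\lambda+\mu)=-\mu$ is unambiguous).

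To prove \eqref{20111012:eq2a}, apply $\{a_\lambda\cdot\}$ to the symbol identity; the RHS, being a constant in $\mc K$, has vanishing bracket. By the left Leibniz rule \eqref{20110921:eq3} the LHS splits into brackets acting on $s_{ir;n}$ and on $(S^{-1})_{rj}(\mu)$; on the latter we apply the extended sesquilinearity. The outcome is
\[
\sum_r S_{ir}(\lambda+\mu+\partial)\{a_\lambda (S^{-1})_{rj}(\mu)\} = -\sum_{r,n}\{a_\lambda s_{ir;n}\}(\mu+\partial)^n(S^{-1})_{rj}(\mu),
\]
interpreted in $\mc K((\lambda^{-1}))((\mu^{-1}))$ via $\iota_{\mu,\lambda}$, well-definedness of each intermediate expression being guaranteed by Lemma~\ref{20111006:lem}. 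Composing on the left with $\iota_{\mu,\lambda}(S^{-1})_{ti}(\lambda+\mu+\partial)$, summing over $i$, and invoking $S^{-1}\circ S=\id$ applied at the formal argument $\lambda+\mu+\partial$ yields \eqref{20111012:eq2a} after relabeling $t\to i$, $i\to r$, $r\to t$.

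For \eqref{20111012:eq2b} one repeats the argument dually: apply $\{\cdot_{\lambda+\mu}a\}$ to $\sum_{r,n} s_{ir;n}(\lambda+\partial)^n(S^{-1})_{rj}(\lambda)=\delta_{ij}$, using the right Leibniz rule together with the first-slot sesquilinearity. Crucially, after the arrow $\xi \mapsto \lambda+\mu+\partial$ acts, the summation in $n$ combines the factors $(-\mu-\partial)^n$ with $s_{ir;n}$ to reproduce precisely the symbol of the formal adjoint:
\[
\sum_n (-\mu-\partial)^n s_{ir;n} = (S_{ir})^*(\mu) = (S^*)_{ri}(\mu),
\]
which explains the appearance of $S^*$ in the final formula. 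The resulting identity
\[
\sum_r\{(S^{-1})_{rj}(\lambda)_{\lambda+\mu+\partial}a\}_\to(S^*)_{ri}(\mu) = -\sum_{r,n}\{s_{ir;n\,\lambda+\mu+\partial}a\}_\to\bigl((\lambda+\partial)^n(S^{-1})_{rj}(\lambda)\bigr)
\]
is then inverted by composing on the right with $\iota_{\lambda,\lambda+\mu}((S^*)^{-1})_{it}(\mu)$ and using $S^*\circ(S^*)^{-1}=\id$; relabeling indices gives \eqref{20111012:eq2b}. The main obstacle throughout is bookkeeping of the various expansions ($\iota_{\mu,\lambda}$ in the first formula, $\iota_{\lambda,\lambda+\mu}$ in the second) and verifying that each intermediate expression lies in the claimed ambient space; Lemma~\ref{20111006:lem} reduces this to a finite verification at each step.
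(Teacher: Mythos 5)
Your proposal is correct and follows essentially the same route as the paper: apply the $\lambda$-bracket to the symbol identity for $S\circ S^{-1}=\id$ (resp.\ $\{\cdot\,_{\lambda+\mu}\,a\}$ to it), split via the left (resp.\ right) Leibniz rule and sesquilinearity, observe that the right Leibniz rule produces the adjoint symbol $\sum_n(-\mu-\partial)^n s_{rt;n}=S^*_{tr}(\mu)$, and then invert by composing with the appropriate expansion $\iota_{\mu,\lambda}(S^{-1})(\lambda+\mu+\partial)$ or $\iota_{\lambda,\lambda+\mu}({S^*}^{-1})(\mu)$. The bookkeeping of expansions and the use of Lemma~\ref{20111006:lem} match the paper's argument.
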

\begin{proof}
The identity $S\circ S^{-1}=1$ becomes, in terms of symbols,
$$
\sum_{t=1}^\ell S_{r,t}(\mu+\partial)(S^{-1})_{tj}(\mu)=\delta_{rj}\,.
$$
Taking $\lambda$-bracket with $a$, we have, by sesquilinearity and the (left) Leibniz rule,
$$
\begin{array}{l}
\displaystyle{
0=\sum_{t=1}^\ell \big\{a_\lambda S_{rt}(\mu+\partial)(S^{-1})_{tj}(\mu)\big\}
} \\
\displaystyle{
=
\sum_{t=1}^\ell\sum_{n=-\infty}^N 
\big\{a_\lambda s_{rt;n} (\mu+\partial)^n (S^{-1})_{tj}(\mu)\big\}
} \\
\displaystyle{
=
\sum_{t=1}^\ell\sum_{n=-\infty}^N \{a_\lambda s_{rt;n}\} (\mu+\partial)^n (S^{-1})_{tj}(\mu)
} \\
\displaystyle{
+\sum_{t=1}^\ell \iota_{\mu,\lambda} S_{rt}(\lambda+\mu+\partial) \big\{a_\lambda (S^{-1})_{tj}(\mu)\big\}
\,.
}
\end{array}
$$
Note that $\iota_{\mu,\lambda} S(\lambda+\mu+\partial)$
is invertible in
$\Mat_{\ell\times\ell}\big(\mc K[\partial]((\lambda^{-1}))((\mu^{-1}))\big)$,
its inverse being $\iota_{\mu,\lambda} S^{-1}(\lambda+\mu+\partial)$.
We then apply $\iota_{\mu,\lambda}(S^{-1})_{ir}(\lambda+\mu+\partial)$ on the left to both sides of the above equation
and we sum over $r=1,\dots,\ell$, to get
$$
\begin{array}{l}
\displaystyle{
\sum_{t=1}^\ell \delta_{it} \big\{a_\lambda (S^{-1})_{tj}(\mu)\big\}
} \\
\displaystyle{
=-
\sum_{r=1}^\ell
\sum_{t=1}^\ell\sum_{n=-\infty}^N \iota_{\mu,\lambda}(S^{-1})_{ir}(\lambda+\mu+\partial) \{a_\lambda s_{rt;n}\} (\mu+\partial)^n (S^{-1})_{tj}(\mu)\,,
}
\end{array}$$
proving equation \eqref{20111012:eq2a}.

Similarly, for the second equation we have, by the right Leibniz rule,
$$
\begin{array}{l}
\displaystyle{
0=\sum_{t=1}^\ell \big\{S_{rt}(\lambda+\partial)(S^{-1})_{tj}(\lambda)\,_{\lambda+\mu}a\big\}
} \\
\displaystyle{
=
\sum_{t=1}^\ell \sum_{n=-\infty}^N \big\{s_{rt;n}(\lambda+\partial)^n(S^{-1})_{tj}(\lambda)_{\lambda+\mu}a\big\}
} \\
\displaystyle{
=
\sum_{t=1}^\ell \sum_{n=-\infty}^N \big\{{s_{rt;n}}_{\lambda+\mu+\partial}a\big\}_\to (\lambda+\partial)^n(S^{-1})_{tj}(\lambda)
} \\
\displaystyle{
+ \sum_{t=1}^\ell \sum_{n=-\infty}^N \big\{(S^{-1})_{tj}(\lambda)_{\lambda+\mu+\partial}a\big\}_\to 
\iota_{\lambda,\lambda+\mu}(\lambda-\lambda-\mu-\partial)^ns_{rt;n}
} \\
\displaystyle{
=
\sum_{t=1}^\ell \sum_{n=-\infty}^N \big\{{s_{rt;n}}_{\lambda+\mu+\partial}a\big\}_\to (\lambda+\partial)^n(S^{-1})_{tj}(\lambda)
} \\
\displaystyle{
+ \sum_{t=1}^\ell \big\{(S^{-1})_{tj}(\lambda)_{\lambda+\mu+\partial}a\big\}_\to \iota_{\lambda,\lambda+\mu} S^*_{tr}(\mu)
\,.
}
\end{array}
$$
We next replace in the above equation $\mu$ (placed at the right) by $\mu+\partial$,
and we apply the resulting differential operator to $\iota_{\lambda,\lambda+\mu}({S^*}^{-1})_{ri}(\mu)$.
As a result we get, after summing over $r=1,\dots,\ell$,
$$
\begin{array}{l}
\displaystyle{
\sum_{t=1}^\ell 
\big\{(S^{-1})_{tj}(\lambda)_{\lambda+\mu+\partial}a\big\}_\to \delta_{ti}
} \\
\displaystyle{
=
-\sum_{r=1}^\ell
\sum_{t=1}^\ell 
\sum_{n=0}^N 
\big\{{s_{rt;n}}_{\lambda+\mu+\partial}a\big\}_\to 
\Big((\lambda+\partial)^n(S^{-1})_{tj}(\lambda)\Big)
\iota_{\lambda,\lambda+\mu}({S^*}^{-1})_{ri}(\mu)
\,,
}
\end{array}
$$
proving equation \eqref{20111012:eq2b}.
\end{proof}
\begin{corollary}\label{20111014:cor}
Let $\mc V$ be a differential algebra, endowed 
with a non-local $\lambda$-bracket $\{\cdot\,_\lambda\,\cdot\}$.
Assume that $\mc V$ is a domain, and let $\mc K$ be its field of fractions.
Let $S=\big(S_{ij}\big)_{i,j\in I}\in\Mat_{\ell\times\ell}\big(\mc K[\partial]\big)$
have non-zero Dieudonn\`e determinant.
Then the following identities hold for every $a\in\mc K$ and $i,j\in I$:
\begin{equation}\label{20111012:eq2c}
\begin{array}{l}
\displaystyle{
\big\{a_\lambda (S^{-1})_{ij}(\mu)\big\}
} \\
\displaystyle{
=
-\sum_{r,t=1}^\ell\sum_{n=0}^N
(S^{-1})_{ir}(\lambda+\mu+\partial) \{a_\lambda s_{rt;n}\} (\mu+\partial)^n (S^{-1})_{tj}(\mu)
\,\in\mc K_{\lambda,\mu}
\,,
}
\end{array}
\end{equation}
and
\begin{equation}\label{20111012:eq2d}
\begin{array}{l}
\displaystyle{
\big\{(S^{-1})_{ij}(\lambda) _{\lambda+\mu} a\big\}
} \\
\displaystyle{
=
-\sum_{r,t=1}^\ell\sum_{n=0}^N
\{{s_{rt;n}}_{\lambda+\mu+\partial}a\}_\to
\Big((\lambda+\partial)^n (S^{-1})_{tj}(\lambda)\Big)
({S^*}^{-1})_{ri}(\mu) 
\,\in\mc K_{\lambda,\mu}
\,,
}
\end{array}
\end{equation}
where $S_{ij}=\sum_{n=0}^N s_{ij;n}\partial^n$.
\end{corollary}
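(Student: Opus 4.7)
The plan is to deduce this corollary from Lemma \ref{20111012:lem} by exploiting the two strengthened hypotheses: $S\in\Mat_{\ell\times\ell}\mc K[\partial]$ is a matrix \emph{differential} operator of some order $N$, and $\det S\neq 0$. First, since $s_{ij;n}=0$ for $n<0$, the infinite sums $\sum_{n=-\infty}^N$ appearing in formulas \eqref{20111012:eq2a} and \eqref{20111012:eq2b} truncate to the finite sums $\sum_{n=0}^N$ written in \eqref{20111012:eq2c} and \eqref{20111012:eq2d}. Second, by Corollary \ref{20111005:prop3}, the hypothesis $\det S\neq 0$ forces $S^{-1}$ to be a rational matrix pseudodifferential operator, so each entry $(S^{-1})_{ij}\in\mc K(\partial)$.

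The substantive step is to verify that the right-hand side of \eqref{20111012:eq2c}, understood \emph{without} the expansion $\iota_{\mu,\lambda}$, defines a genuine element of $\mc K_{\lambda,\mu}$. Once that is established, its $\iota_{\mu,\lambda}$-image coincides termwise with the RHS of \eqref{20111012:eq2a} and therefore, by Lemma \ref{20111012:lem}, equals $\{a_\lambda(S^{-1})_{ij}(\mu)\}$; this simultaneously proves the identity and the membership $\{a_\lambda(S^{-1})_{ij}(\mu)\}\in\mc K_{\lambda,\mu}$. To verify the $\mc K_{\lambda,\mu}$-membership of the RHS, I would apply Lemma \ref{20111006:lem} factor by factor: $\{a_\lambda s_{rt;n}\}\in\mc K((\lambda^{-1}))\subset\mc K_{\lambda,\mu}$; the operator $(\mu+\partial)^n$ is polynomial in $\mu$ for $n\geq 0$; the factor $(S^{-1})_{tj}(\mu)\in\mc K((\mu^{-1}))\subset\mc K_{\lambda,\mu}$; and the final factor $(S^{-1})_{ir}(\lambda+\mu+\partial)$, after choosing a right fractional decomposition $(S^{-1})_{ir}=P_{ir}(\partial)Q(\partial)^{-1}$ via Proposition \ref{20111003:thm2}(a), unfolds as $P_{ir}(\lambda+\mu+\partial)Q(\lambda+\mu+\partial)^{-1}$; its non-polynomial part contributes only negative powers of $\lambda+\mu$ after expansion in $\partial$, and these are precisely among the denominators allowed in $\mc K_{\lambda,\mu}$. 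Lemma \ref{20111006:lem}, with $\partial$ playing the role of the endomorphism shift, then certifies that the whole composite lies in $\mc K_{\lambda,\mu}$. Formula \eqref{20111012:eq2d} is proved by the same argument starting from \eqref{20111012:eq2b}, with the expansion $\iota_{\lambda,\lambda+\mu}$ in place of $\iota_{\mu,\lambda}$.

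The main obstacle I anticipate is the bookkeeping at the factor $(S^{-1})_{ir}(\lambda+\mu+\partial)$: one must check that substituting $\partial\mapsto\lambda+\mu+\partial$ in a rational symbol and then composing with the operators to its right produces an element of $\mc K_{\lambda,\mu}$ (rather than merely of $\mc K((\lambda^{-1}))((\mu^{-1}))$). This is the step where Lemma \ref{20111006:lem} does the real work, and where one could also invoke Lemma \ref{20120131:lem1} to transfer subspace information across the expansions if needed.
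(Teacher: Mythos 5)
Your proposal is correct and follows the same route as the paper, which simply declares the corollary ``immediate'' from equations \eqref{20111012:eq2a} and \eqref{20111012:eq2b}; you have filled in the two details the paper leaves implicit, namely the truncation of the sums to $0\leq n\leq N$ and the verification via Lemma \ref{20111006:lem} that the right-hand sides lie in $\mc K_{\lambda,\mu}$ so that the expansion maps $\iota_{\mu,\lambda}$, $\iota_{\lambda,\lambda+\mu}$ can be dropped. The only superfluous step is the fractional decomposition of $(S^{-1})_{ir}$: since its symbol is already a Laurent series in $\nu^{-1}$ with $\nu=\lambda+\mu$, Lemma \ref{20111006:lem} applies directly to the substitution $\nu\mapsto\lambda+\mu+\partial$.
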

\begin{proof}
It is immediate from equations \eqref{20111012:eq2a} and \eqref{20111012:eq2b}.
\end{proof}
\begin{corollary}\label{20111007:prop}
Let $\mc V$ be a differential algebra, endowed 
with a non-local $\lambda$-bracket $\{\cdot\,_\lambda\,\}$.
Assume that $\mc V$ is a domain, and let $\mc K$ be its field of fractions.
Let $A\in\mc V(\partial)=\mc K(\partial)\cap\mc V((\partial^{-1}))$ 
be a rational pseudodifferential operator with coefficients in $\mc V$.
Then 
$\{a_\lambda A(\mu)\}$
and $\{A(\lambda)_{\lambda+\mu} a\}$
lie in $\mc V_{\lambda,\mu}$
for every $a\in\mc V$.
\end{corollary}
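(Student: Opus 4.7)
The strategy is to pass to a fractional decomposition of $A$ to reduce the statement to Corollary \ref{20111014:cor}, obtaining the conclusion first in $\mc K_{\lambda,\mu}$, and then to upgrade from $\mc K_{\lambda,\mu}$ to $\mc V_{\lambda,\mu}$ using Lemma \ref{20120131:lem1}.

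Concretely, I would apply Proposition \ref{20111003:thm2}(a) to write $A=BS^{-1}$ in $\mc K(\partial)$ with $B,S\in\mc V[\partial]$ and $S\neq0$; by the product rule \eqref{20111003:eq2}, the symbol then factors as $A(\mu)=B(\mu+\partial)(S^{-1})(\mu)$, where $B=\sum_{n=0}^N b_n\partial^n$ with $b_n\in\mc V$. Applying the left Leibniz rule together with sesquilinearity (the latter used iteratively to commute the polynomial $B(\mu+\partial)$ past the symbol $\{a_\lambda\cdot\}$) expresses $\{a_\lambda A(\mu)\}$ as a finite sum of summands involving either $\{a_\lambda b_n\}\in\mc V((\lambda^{-1}))$ or $\{a_\lambda(S^{-1})(\mu)\}$, multiplied by polynomial/differential expressions in $\lambda,\mu,\partial$. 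Corollary \ref{20111014:cor} in the scalar case $\ell=1$ places $\{a_\lambda(S^{-1})(\mu)\}$ in $\mc K_{\lambda,\mu}$, and Lemma \ref{20111006:lem} then ensures that the whole sum lies in $\mc K_{\lambda,\mu}$.

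To promote the conclusion from $\mc K_{\lambda,\mu}$ to $\mc V_{\lambda,\mu}$, I would invoke Lemma \ref{20120131:lem1} with $V=\mc K$ and $U=\mc V$: since $A\in\mc V((\partial^{-1}))$, we have $A(\mu)=\sum_k a_k\mu^k\in\mc V((\mu^{-1}))$, and the direct definition gives $\iota_{\mu,\lambda}\{a_\lambda A(\mu)\}=\sum_k\{a_\lambda a_k\}\mu^k\in\mc V((\lambda^{-1}))((\mu^{-1}))$; the lemma then yields $\{a_\lambda A(\mu)\}\in\mc V_{\lambda,\mu}$. The second identity $\{A(\lambda)_{\lambda+\mu}a\}\in\mc V_{\lambda,\mu}$ is handled symmetrically by applying the right Leibniz rule to the same decomposition and invoking equation \eqref{20111012:eq2d} (in place of \eqref{20111012:eq2c}) to land in $\mc K_{\lambda,\mu}$, and then promoting via the expansion $\iota_{\lambda,\lambda+\mu}\{A(\lambda)_{\lambda+\mu}a\}=\sum_k\{a_k{}_{\lambda+\mu}a\}\lambda^k\in\mc V(((\lambda+\mu)^{-1}))((\lambda^{-1}))$.

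The essentially only difficulty is the Leibniz-plus-sesquilinearity bookkeeping that reduces the brackets to $\{a_\lambda(S^{-1})(\mu)\}$ and $\{(S^{-1})(\lambda)_{\lambda+\mu}a\}$; all the conceptual input is packaged in Corollary \ref{20111014:cor} and in Lemmas \ref{20111006:lem} and \ref{20120131:lem1}.
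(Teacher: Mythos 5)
Your proposal is correct and follows essentially the same route as the paper's proof: both reduce to Corollary \ref{20111014:cor} for the inverse of a differential operator, handle the product via the Leibniz rules together with Lemma \ref{20111006:lem}, and descend from $\mc K_{\lambda,\mu}$ to $\mc V_{\lambda,\mu}$ via Lemma \ref{20120131:lem1} with $V=\mc K$, $U=\mc V$. The only (cosmetic) difference is that you fix a concrete right fractional decomposition $A=BS^{-1}$ from Proposition \ref{20111003:thm2}(a), whereas the paper first proves closure of the relevant property under composition of arbitrary pseudodifferential operators and then appeals to the characterization of $\mc K(\partial)$; both variants are complete.
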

\begin{proof}
First, note that if the pseudodifferential operators $A,B\in\mc K((\partial^{-1}))$ 
satisfy the conditions
$$
\{a_\lambda A(\mu)\}
\,,\{A(\lambda)_{\lambda+\mu} a\}
\{a_\lambda B(\mu)\}
\,,\{B(\lambda)_{\lambda+\mu} a\}
\,\,
\in\mc K_{\lambda,\mu}\,,
$$
for every $a\in\mc K$,
so does $A\circ B$.
Indeed, by the Leibniz rule,
$$
\begin{array}{l}
\displaystyle{
\{a_\lambda (A\circ B)(\mu)\}
=
\{a_\lambda A(\mu+\partial) B(\mu)\}
} \\
\displaystyle{
=
\{a_\lambda A(\mu+\partial)\}_{\to} B(\mu)
+ A(\lambda+\mu+\partial) \{a_\lambda B(\mu)\}\,,
}
\end{array}
$$
and both terms in the RHS lie in $\mc K_{\lambda,\mu}$ by the assumption on $A$ and $B$,
thanks to Lemma \ref{20111006:lem}.
Similarly, by the right Leibniz rule,
$$
\begin{array}{l}
\displaystyle{
\{ (A\circ B)(\lambda) _{\lambda+\mu} a \}
=
\{ A(\lambda+\partial) B(\lambda) _{\lambda+\mu} a \}
} \\
\displaystyle{
=
\{ B(\lambda) _{\lambda+\mu+\partial} a \}_\to 
\iota_{\lambda,\lambda+\mu}A^*(\mu)
+
\{ A(\lambda+\partial) _{\lambda+\mu+\partial} a \}_\to B(\lambda)
\,,
}
\end{array}
$$
and both terms in the RHS lie in $\mc K_{\lambda,\mu}$ 
(rather in the image of $\mc K_{\lambda,\mu}$ in $\mc K(((\lambda+\mu)^{-1}))((\lambda^{-1}))$ 
via $\iota_{\lambda,\lambda+\mu}$) by Lemma \ref{20111006:lem}.
By Corollary \ref{20111014:cor}
we have that, if $S\in\mc V[\partial]$, then 
$\{a_\lambda S^{-1}(\mu)\}$ and $\{S^{-1}(\lambda)_{\lambda+\mu} a\}$
lie in $\mc K_{\lambda,\mu}$ for all $a\in\mc K$.
Hence, 
by Definition \ref{20110926:def} and the above observations,
we get that, if $A\in\mc V(\partial)=\mc K(\partial)\cap\mc V((\partial^{-1}))$,
then 
$\{a_\lambda A(\mu)\}$ and $\{A(\lambda)_{\lambda+\mu} a\}$
lie in $\mc K_{\lambda,\mu}$
for all $a\in\mc K$.
On the other hand, if $a\in\mc V$, we clearly have
$\{a_\lambda A(\mu)\}\in\mc V((\lambda^{-1}))((\mu^{-1}))$ 
and $\{A(\lambda)_{\lambda+\mu} a\}\in\mc V(((\lambda+\mu)^{-1}))((\lambda^{-1}))$.
The claim follows from Lemma \ref{20120131:lem1}
applied to $V=\mc K$ and $U=\mc V$.
\end{proof}
\begin{remark}\label{20111104:rem}
In the case when $S\in\mc V(\partial)$ 
is a rational pseudodifferential operator with coefficients in $\mc V$,
thanks to Corollary \ref{20111007:prop},
we can drop $\iota_{\mu,\lambda}$ and $\iota_{\lambda,\lambda+\mu}$ respectively from
equations \eqref{20111012:eq2a} and \eqref{20111012:eq2b},
which hold in the space $\mc V_{\lambda,\mu}$.
\end{remark}

%%%%%%%%%%%%%%%%%%%%%%%%%%%%%%%%%%%%%%%%%%%%%%%%%%%%%%%%%%%%%%%%%%%%%%%%%%%%%%%%%%%%%%%%%%%%%%%%%%%%%%%%%%%%%%
%%%%%%%%%%%%%%% Sect 4 %%%%%%%%%%%%%%%%%%%%%%%%%%%%%%%%%%%%%%%%%%%%%%%%%%%%%%%%%%%%%%%%%%%%%%%%%%%%%%%%%%%%%%%
%%%%%%%%%%%%%%%%%%%%%%%%%%%%%%%%%%%%%%%%%%%%%%%%%%%%%%%%%%%%%%%%%%%%%%%%%%%%%%%%%%%%%%%%%%%%%%%%%%%%%%%%%%%%%%

\section{Non-local Hamiltonian structures}
\label{sec:4}

\subsection{Algebras of differential functions}
\label{sec:4.1}

Let $R_\ell=\mb F [u_i^{(n)}\, |\, i \in I,n \in \mb Z_+]$ be the algebra of differential polynomials
in the $\ell$ variables $u_i,\,i\in I=\{1,\dots,\ell\}$,
with the derivation $\partial$ defined by $\partial (u_i^{(n)}) = u^{(n+1)}_i$.
The partial derivatives $\frac{\partial}{\partial u_i^{(n)}}$
are commuting derivations of $R_\ell$, and they satisfy the following 
commutation relations with $\partial$:
\begin{equation}\label{eq:0.4}
\left[  \frac{\partial}{\partial u_i^{(n)}}, \partial \right] = \frac{\partial}{\partial u_i^{(n-1)}}
\,\,\,\,
\text{ (the RHS is $0$ if $n=0$) }\,.
\end{equation}

Recall from \cite{BDSK09} that an \emph{algebra of differential functions} 
is a differential algebra extension $\mc V$ of $R_\ell$,
endowed with commuting derivations
$$
\frac{\partial}{\partial u_i^{(n)}}:\,\mc V\to\mc V
\,\,,\,\,\,\,
i \in I ,\,n \in \mb Z_+\,,
$$
extending the usual partial derivatives on $R_\ell$,
such that only a finite number of
$\frac{\partial f}{\partial u_i^{(n)}}$ are non-zero for each $f\in \mc V$, 
and such that the commutation rules \eqref{eq:0.4} hold on $\mc V$.
Examples other than $R_\ell$ itself are any localization of $R_\ell$
by a multiplicative subset or any algebraic extension of $R_\ell$.

For $f\in\mc V$, as usual we denote by $\tint f$
the image of $f$ in the quotient space $\mc V/\partial\mc V$.
Recall that, by \eqref{eq:0.4}, we have a well-defined variational derivative
$\frac{\delta}{\delta u}:\,\mc V/\partial\mc V\to\mc V^{\oplus\ell}$,
given by
$$
\frac{\delta\tint f}{\delta u_i}=\sum_{n\in\mb Z_+}(-\partial)^n\frac{\partial f}{\partial u_i^{(n)}},\,i\in I\,.
$$

Note that if the algebra of differential functions $\mc V$ is a domain,
then its field of fractions $\mc K$ is again an algebra of differential functions
in the same variables $u_1,\dots,u_\ell$,
with the maps $\frac{\partial}{\partial u_i^{(n)}}:\,\mc K\to\mc K$ defined in the obvious way.
When $\mc V=\mc K$ we call it a \emph{field of differential functions}.

We denote by $\mc C=\big\{c\in\mc V\,\big|\,\partial c=0\big\}\subset\mc V$ the subalgebra of \emph{constants},
and by
$$
\mc F=\Big\{f\in\mc V\,\Big|\,\frac{\partial f}{\partial u_i^{(n)}}=0
\,\,\text{ for all }\, i\in I,n\in\mb Z_+\Big\}\subset\mc V
$$
the subalgebra of \emph{quasiconstants}. It is easy to see that $\mc C\subset\mc F$.
Given $f\in\mc V$ which is not a quasiconstant, we say that is has \emph{differential order} $N$ if
$\frac{\partial f}{\partial u_i^{(N)}}\neq0$ for some $i\in I$,
and $\frac{\partial f}{\partial u_j^{(n)}}=0$ for every $j\in I$ and $n>N$.
We also set the differential order of a quasiconstant element equal to $-\infty$.
We let $\mc V_N$ be the subalgebra of elements of differential order at most $N$.
This gives an increasing sequence of subalgebras 
$$
\mc C\subset\mc F=\mc V_{-\infty}\subset\mc V_0\subset\mc V_1\subset\dots\subset\mc V\,,
$$
such that $\partial\mc V_N\subset\mc V_{N+1}$.

\subsection{Construction of non-local Hamiltonian structures}
\label{sec:4.2}

Let $\mc V$ be an algebra of differential functions in the variables $u_1,\dots,u_\ell$.
Assume that $\mc V$ is a domain, and let $\mc K$ be the corresponding
field of differential functions of fractions.
Let $H=\big(H_{ij}\big)_{i,j\in I}\in\Mat_{\ell\times\ell}\mc V((\partial^{-1}))$
be an $\ell\times\ell$ matrix pseudodifferential operator over $\mc V$,
namely
$$
H_{ij}=\sum_{n=-\infty}^NH_{ij;n}\partial^n
\,\,\in\mc V((\partial^{-1}))
\,\,,\,\,\,\,
i,j\in I\,.
$$
We associate to this matrix $H$ a non-local $\lambda$-bracket on $\mc V$
given by the following \emph{Master Formula} (cf. \cite{DSK06})
\begin{equation}\label{20110922:eq1}
\{f_\lambda g\}_H
=
\sum_{\substack{i,j\in I \\ m,n\in\mb Z_+}} 
\frac{\partial g}{\partial u_j^{(n)}}
(\lambda+\partial)^n
H_{ji}(\lambda+\partial)
(-\lambda-\partial)^m
\frac{\partial f}{\partial u_i^{(m)}}
\,\in\mc V((\lambda^{-1}))
\,.
\end{equation}
In particular
\begin{equation}\label{20110923:eq1}
\{{u_i}_\lambda{u_j}\}_H
=
H_{ji}(\lambda)
\,\,,\,\,\,i,j\in I\,.
\end{equation}

The following result gives a way to check if a matrix pseudodifferential operator 
$H\in\Mat_{\ell\times\ell}\mc V((\partial^{-1}))$
defines a structure of non-local Poisson vertex algebra on $\mc V$.
The analogous statement in the local case was proved in \cite{BDSK09}.
\begin{theorem}\label{20110923:prop}
Let $\mc V$ be an algebra of differential functions, which is a domain,
ane let $\mc K$ be its field of fractions.
Let $H\in\Mat_{\ell\times\ell}\mc V((\partial^{-1}))$.
Then:
\begin{enumerate}[(a)]
\item
Formula \eqref{20110922:eq1} gives a well-defined non-local $\lambda$-bracket on $\mc V$.
\item
This non-local $\lambda$-bracket is skewsymmetric if and only if $H$
is a skew-adjoint matrix pseudodifferential operator.
\item
If $H=\big(H_{ij}\big)_{i,j\in I}\in\Mat_{\ell\times\ell}\mc V(\partial)$
is a rational matrix pseudodifferential operator with coefficients in $\mc V$,
then the corresponding non-local $\lambda$-bracket 
$\{\cdot\,_\lambda\,\cdot\}_H:\,\mc V\times\mc V\to\mc V((\lambda^{-1}))$
(given by equation \eqref{20110922:eq1}) is admissible.
\item
Let $H=\big(H_{ij}\big)_{i,j\in I}\in\Mat_{\ell\times\ell}\mc V(\partial)$
be a skewadjoint rational matrix pseudodifferential operator with coefficients in $\mc V$.
Then the non-local $\lambda$-bracket $\{\cdot\,_\lambda\,\cdot\}_H$ defined by \eqref{20110922:eq1}
is a Poisson non-local $\lambda$-bracket, i.e. it satisfies the Jacobi identity \eqref{20110922:eq3},
if and only if the Jacobi identity holds on generators ($i,j,k\in I$):
\begin{equation}\label{20110922:eq4}
\{{u_i}_\lambda\{{u_j}_\mu {u_k}\}_H\}_H-\{{u_j}_\mu\{{u_i}_\lambda {u_k}\}_H\}_H
-\{{\{{u_i}_\lambda {u_j}\}_H}_{\lambda+\mu} {u_k}\}_H=0\,,
%\,\,\,\,\text{ for every } i,j,k\in I\,,
\end{equation}
where the equality holds in the space $\mc V_{\lambda,\mu}$.
\end{enumerate}
\end{theorem}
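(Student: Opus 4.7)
Plan.

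For part (a), the plan is to verify the three properties directly from the Master Formula \eqref{20110922:eq1}. Well-definedness is clear because $\frac{\partial f}{\partial u_i^{(m)}}$ and $\frac{\partial g}{\partial u_j^{(n)}}$ vanish outside a finite set of indices, so the sum is finite in positive powers of $\lambda$ (even though $H_{ji}(\lambda+\partial)$ contributes negative powers). The sesquilinearity \eqref{20110921:eq1} will follow from the commutation rule \eqref{eq:0.4}: after applying $\partial$ to $f$, the derivative $\frac{\partial(\partial f)}{\partial u_i^{(m)}}=\partial\frac{\partial f}{\partial u_i^{(m)}}+\frac{\partial f}{\partial u_i^{(m-1)}}$ produces two terms that recombine via $(-\lambda-\partial)^m$ into $-\lambda$ times the original bracket. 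The two Leibniz rules \eqref{20110921:eq3} will come from the usual product rule for $\frac{\partial}{\partial u_i^{(m)}}$ on $bc$, and (for the right Leibniz rule) from observing that the dependence of $f$ on $u_i^{(m)}$ interacts with $(-\lambda-\partial)^m$ producing exactly the ``$\lambda+\partial$ moved to the right'' pattern.

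For part (b), I will first evaluate skewsymmetry on generators: by \eqref{20110923:eq1} it reads $H_{ij}(\lambda)=-H_{ji}(-\lambda-\partial)$, i.e. $H$ is skewadjoint. Conversely, assume $H$ is skewadjoint; a direct calculation shows that both sides of the skewsymmetry relation \eqref{20110921:eq2} satisfy the same sesquilinearity and Leibniz rules in $f$ and $g$, so the identity propagates from generators to all of $\mc V$. For part (c), I will invoke Corollary \ref{20111007:prop}: when $H$ is rational, $\{a_\lambda H_{ji}(\mu)\}_H$ and $\{H_{ji}(\lambda)\,_{\lambda+\mu}a\}_H$ already lie in $\mc V_{\lambda,\mu}$. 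Starting from the Master Formula, I expand $\{f_\lambda\{g_\mu h\}_H\}_H$ by applying the outer bracket to the definition of the inner bracket, using the Leibniz rules and the sesquilinearity axioms; each resulting term is a sum of products built from $\{a_\lambda H_{ji}(\mu)\}_H$, elements of $\mc V$, and the operators $(\lambda+\partial),(\mu+\partial),(\lambda+\mu+\partial)$ acting in the appropriate variables. Closure of $\mc V_{\lambda,\mu}$ under such operations, which is exactly Lemma \ref{20111006:lem}, then shows that the whole expression lies in $\mc V_{\lambda,\mu}$.

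For part (d), the ``only if'' direction is trivial. For the converse, I define the Jacobiator
\[
J(f,g,h):=\{f_\lambda\{g_\mu h\}_H\}_H-\{g_\mu\{f_\lambda h\}_H\}_H-\{\{f_\lambda g\}_H{}_{\lambda+\mu}h\}_H,
\]
which by part (c) and Remark \ref{20110919:rem} is a well-defined element of $\mc V_{\lambda,\mu}$. The strategy is to show that $J$ is a sesquilinear ``derivation'' in each of its three arguments in the sense appropriate to the Leibniz rules \eqref{20110921:eq3}. Using parts (a) and (b) together with repeated applications of the Leibniz rules and sesquilinearity to each slot of $J$, one obtains that, for example, $J(f,g,h_1h_2)=h_1J(f,g,h_2)+h_2J(f,g,h_1)$ (left Leibniz in the last slot), $J(\partial f,g,h)=-\lambda J(f,g,h)$, and the corresponding right-Leibniz identities in the first two slots. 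These ``derivation'' identities will themselves be identities in $\mc V_{\lambda,\mu}$, which is why admissibility is essential. Once this is established, an induction on the total differential order of $f,g,h$ reduces $J(f,g,h)=0$ for arbitrary $f,g,h\in\mc V$ to $J(u_i,u_j,u_k)=0$ for all $i,j,k\in I$, which is precisely the hypothesis \eqref{20110922:eq4}.

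The main obstacle is the derivation property of $J$ in part (d): each of the three terms of $J$ a priori lives in a different ambient space, and verifying that the various Leibniz corrections cancel across the three terms requires careful bookkeeping of where $\lambda+\partial$, $\mu+\partial$, and $\lambda+\mu+\partial$ act, and relies crucially on Lemma \ref{20111006:lem} and admissibility to rewrite everything as an identity inside the single space $\mc V_{\lambda,\mu}$ before comparing coefficients.
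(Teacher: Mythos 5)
Your treatment of (a) and (c) is essentially the paper's: part (c) is exactly the argument given there (expand $\{a_\lambda\{f_\mu g\}_H\}_H$ by the left Leibniz rule and sesquilinearity, then invoke Corollary \ref{20111007:prop} for the terms $\{a_\lambda H_{ji}(\mu)\}_H$ and Lemma \ref{20111006:lem} for closure of $\mc V_{\lambda,\mu}$ under the relevant products and substitutions), and (a) is the routine verification carried over from the local case.

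There is, however, a gap in your reduction-to-generators step for (b) and (d). You propose to propagate skewsymmetry, and the vanishing of the Jacobiator $J(f,g,h)$, from the generators $u_i$ to all of $\mc V$ by establishing sesquilinearity and Leibniz-type derivation properties in each slot and then inducting on differential order. This works verbatim only when $\mc V$ is generated as a differential algebra by $u_1,\dots,u_\ell$, i.e.\ essentially for $\mc V=R_\ell$. The theorem is stated for an arbitrary algebra of differential functions which is a domain, and such algebras (localizations, algebraic extensions, functions like $e^{u}$) are \emph{not} reached from the $u_i^{(n)}$ by products, sums and $\partial$; your induction therefore never terminates at the generators for such elements. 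The correct argument — the one carried out in the local case in \cite[Thm.1.15]{BDSK09}, to which the paper defers for (a), (b), (d) — is a direct computation: substitute the Master Formula \eqref{20110922:eq1} into $\{g_\lambda f\}_H+\{f_{-\lambda-\partial}g\}_H$ (for (b)) and into all three triple brackets (for (d)), and show that the resulting expressions are ``sandwiches'' of the skewadjointness defect $H_{ij}(\lambda)+H^*_{ij}(\lambda)$, respectively of the Jacobiators $J(u_i,u_j,u_k)(\lambda,\mu)$, between differential operators built from the partial derivatives $\partial f/\partial u_i^{(m)}$, $\partial g/\partial u_j^{(n)}$, $\partial h/\partial u_k^{(p)}$ with $\lambda,\mu$ shifted by $\partial$ appropriately. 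This expresses the general identity directly in terms of the identity on generators without any induction, and it is also where admissibility and Lemma \ref{20111006:lem} are used to make sense of the substitutions inside $\mc V_{\lambda,\mu}$. Your derivation-property identities are true and useful, but they are not by themselves sufficient to conclude; replace the induction by this substitution argument and the proof is complete.
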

\begin{proof}
For the proofs of (a), (b) and (d) one does the same computations 
as in the proof of \cite[Thm.1.15]{BDSK09} for the local case.
So, we only prove part (c).
Let $a,f,g\in\mc V$. By the Master Formula \eqref{20110922:eq1}
and the left Leibniz rule, we have
$$
\begin{array}{l}
\displaystyle{
\{a_\lambda\{f_\mu g\}_H\}_H
=
\sum_{\substack{i,j\in I \\ m,n\in\mb Z_+}} 
\big\{
a_\lambda
\frac{\partial g}{\partial u_j^{(n)}}
(\mu+\partial)^n
H_{ji}(\mu+\partial)
(-\mu-\partial)^m
\frac{\partial f}{\partial u_i^{(m)}}
\big\}_H
} \\
\displaystyle{
=
\sum_{\substack{i,j\in I \\ m,n\in\mb Z_+}} 
\big\{
a_\lambda
\frac{\partial g}{\partial u_j^{(n)}}
\big\}_H
(\mu+\partial)^n
H_{ji}(\mu+\partial)
(-\mu-\partial)^m
\frac{\partial f}{\partial u_i^{(m)}}
} \\
\displaystyle{
+
\sum_{\substack{i,j\in I \\ m,n\in\mb Z_+}} 
\frac{\partial g}{\partial u_j^{(n)}}
(\lambda+\mu+\partial)^n
{\{
a_\lambda
H_{ji}(\mu+\partial)
\}_H}_\to
(-\mu-\partial)^m
\frac{\partial f}{\partial u_i^{(m)}}
} \\
\displaystyle{
+
\sum_{\substack{i,j\in I \\ m,n\in\mb Z_+}} 
\frac{\partial g}{\partial u_j^{(n)}}
(\lambda+\mu+\partial)^n
H_{ji}(\lambda+\mu+\partial)
(-\lambda-\mu-\partial)^m
\big\{
a_\lambda
\frac{\partial f}{\partial u_i^{(m)}}
\big\}_H\,.
}
\end{array}
$$
All sums in the above equations are finite.
Therefore, all three terms in the RHS lie in $\mc V_{\lambda,\mu}$,
thanks to Corollary \ref{20111007:prop} and Lemma \ref{20111006:lem}.
\end{proof}
%
%\begin{remark}\label{20111219:rem1}
%If $\mc V$ is an algebra, not necessarily a field, of differential functions,
%the Master Formula \eqref{20110922:eq1} still defines a non-local $\lambda$-bracket on $\mc V$.
%%
%To prove admissibility of this $\lambda$-bracket, one needs to generalize the notion of rational
%pseudodifferential operators to this case (see Remark \ref{20111219:rem2}).
%\end{remark}
%
\begin{definition}\label{20111007:def}
Let $\mc V$ be an algebra of differential functions, which is a domain.
A \emph{non-local Hamiltonian structure} on $\mc V$
is  a skewadjoint rational matrix pseudodifferential operator with coefficients in $\mc V$,
$H=\big(H_{ij}\big)_{i,j\in I}\in\Mat_{\ell\times\ell}\mc V(\partial)$,
satisfying equation \eqref{20110922:eq4} for every $i,j,k\in I$.
\end{definition}
\begin{remark}\label{20111007:rem2}
It is easy to show that, if $L\in\mc K(\partial)$ is a rational pseudodifferential operator,
then it can be expanded as
\begin{equation}\label{20110922:eq2}
L=\sum_{s=1}^\infty
\sum_{n=0}^N
\sum_{\substack{p_1,\dots,p_s\in\mc V_M \\ (\text{finite sum})}}
p_1\partial^{-1}\circ p_2\partial^{-1}\circ\dots\partial^{-1}\circ p_s\partial^n
\,,
\end{equation}
for some fixed $M,N\in\mb Z_+$.
To see this, write $L=AS^{-1}$, where $A,S\in\mc V[\partial]$
and $S=\sum_{n=0}^Ns_n\partial^n$ has non-zero leading coefficient $s_N$,
and expand $S^{-1}$ using geometric progression:
\begin{equation}\label{20111007:eq1}
S^{-1}=
\partial^{-N}\sum_{i=0}^\infty\Big(-s_N^{-1}s_{N-1}\partial^{-1}-\dots-s_N^{-1}s_0\partial^{-N}\Big)^is_N^{-1}\,.
\end{equation}
On the other hand, it is not hard to see that if $L$ admits an expansion as in \eqref{20110922:eq2},
then $\{a_\lambda L(\mu)\}_H\in K_{\lambda,\mu}$ for every $a\in\mc K$
and every matrix pseudodifferential operator $H$.
As a consequence, if all the entries of a matrix pseudodifferential operator $H$
admit an expansion as in \eqref{20110922:eq2}, 
then the corresponding $\lambda$-bracket $\{\cdot\,_\lambda\,\cdot\}_H$ on $\mc K$
is admissible.
\end{remark}
\begin{remark}\label{20111019:def}
It is claimed in the literature (without a proof) \cite{DN89}
that, in order to show that a skewadjoint operator $H$ defines a (local) Hamiltonian structure,
it suffices to check the Jacobi identity 
for the Lie bracket $\{\cdot\,,\,\cdot\}_H=\{\cdot\,_\lambda\,\cdot\}_H\big|_{\lambda=0}$
in $\mc V/\partial\mc V$ on triples of elements of the form $\tint f u_i$, where $f\in\mc F$ is a quasiconstant.
This is indeed true, provided that the algebra of quasiconstants $\mc F$ is ``big enough'', by the following argument.
By a straightforward computation, using the Master Formula, we get
$$
\begin{array}{l}
\displaystyle{
\{\tint fu_i,\{\tint gu_j,\tint hu_k\}_H\}_H
-\{\tint gu_j,\{\tint fu_i,\tint hu_k\}_H\}_H
} \\
\displaystyle{
-\{\{\tint fu_i,\tint gu_j\}_H,\tint hu_k\}_H
=
\tint h
\Big(
\{{u_i}_\lambda\{{u_j}_\mu{u_k}\}_H\}_H
} \\
\displaystyle{
-\{{u_j}_\mu\{{u_i}_\lambda{u_k}\}_H\}_H
-\{{\{{u_i}_\lambda{u_j}\}_H}_{\lambda+\mu}{u_k}\}_H
\Big)
\big(|_{\lambda=\partial}f\big)
\big(|_{\mu=\partial}g\big)\,.
}
\end{array}
$$
Clearly, this is zero for all $f,g,h\in\mc F$ and all $i,j,k\in I$
if and only if $H$ is a Hamiltonian structure,
provided that the algebra $\mc F$ satisfies the following non-degeneracy conditions:
\begin{enumerate}[(i)]
\item 
if $\tint ha=0$ for some $a\in\mc V$ and all $h\in\mc F$, then $a=0$,
\item
if $P(\partial)f=0$ for some differential operator $P\in\mc V[\partial]$ 
and for all $f\in\mc F$, then $P=0$.
\end{enumerate}
Obviously, $\mc F$ fulfills these conditions if it contains the algebra of polynomials $\mb F[x]$.
Often in the literature this criterion is used also for non-local Hamiltonian structures,
which does not seem to have much sense, since in the non-local case $\mc V/\partial\mc V$
does not have a Lie algebra structure.
\end{remark}

%%%
\subsection{Examples}
\label{sec:4.3}

\begin{example}\label{20111010:ex1}
Let $\mc V$ be any algebra of differential functions in $\ell$ differential variables, 
with subalgebra of quasiconstants $\mc F\subset\mc V$.
Any skewadjoint rational matrix pseudodifferential operator 
with quasiconstant coefficients,
$H=\big(H_{ij}(\partial)\big)_{ij\in I}\in\Mat_{\ell\times\ell}\mc F(\partial)$,
is a Hamiltonian structure.
Indeed, by askewadjointness of $H$ the $\lambda$-bracket $\{\cdot\,_\lambda\,\cdot\}_H$
is skewsymmetric, and by the Master Formula, all triple $\lambda$-brackets are zero.
Note that, if $H\in\Mat_{\ell\times\ell}\mc F((\partial^{-1}))$ is skewadjoint, 
even if it is not a rational matrix,
the corresponding $\lambda$-bracket $\{\cdot\,_\lambda\,\cdot\}_H$ 
is still admissible,
hence it defines a non-local Poisson vertex algebra on $\mc V$.

In the special case when $H_{ij}(\lambda)=c_{ij}\lambda^{-1}$,
and $C=(c_{ij})_{i,j=1}^\ell$ is a symmetric matrix with constant coefficients,
we recover the non-local Poisson vertex algebras from Example \ref{20110921:ex2}.
When $C$ if a symmetrized Cartan matrix or extended Cartan matrix of a simple Lie algebra,
we get the Hamiltonian structure for a Toda lattice (see \cite{Fr98}).
\end{example}
\begin{example}\label{20110922:ex1}
The following three operators form a compatible family of non-local Hamiltonian structures
(i.e. any their linear combination is a non-local Hamiltonian structure)
on the algebra $R_1=\mb F[u,u',u'',\dots]$
of differential polynomials in one variable:
\begin{enumerate}[(i)]
\item
$K_{1}=\partial$ (GFZ Hamiltonina structure),
\item
$K_{-1}=\partial^{-1}$ (Toda non-local Hamiltonian structure),
\item
$H=u'\partial^{-1}\circ u'$ (Sokolov non-local Hamitonian structure),
\end{enumerate}
First, any linear combination over $\mc C$ of $K_1$ and $K_{-1}$
is a non-local Hamiltonian structure, as discussed in Example \ref{20111010:ex1}.
Next, it is easy to show (cf. \cite[Example 3.14]{BDSK09})
that $H^{-1}$ is a symplectic structure on the field of fractions $\mc K_1=\text{Frac} R_1$,
known as the Sokolov symplectic structure, \cite{Sok84}.
Hence, by Theorem \ref{20111012:thm} below,
we deduce that $H$ is a non-local Hamiltonian structure.
To conclude that $K_1,K_{-1},H$ form a compatible family,
it suffices to check that
$$
\{u_\lambda H(\mu)\}_{K_{\pm1}}
-\{u_\mu H(\lambda)\}_{K_{\pm1}}
=\{H(\lambda)_{\lambda+\mu}u\}_{K_{\pm1}}\,,
$$
where $H(\lambda)=u'(\partial+\lambda)^{-1}u'\in\mc V((\lambda^{-1}))$.
This is straightforward.
\end{example}
\begin{example}\label{20110922:ex2}
Dorfman non-local Hamiltonian structure 
on the algebra of differential polynomials
$R_1=\mb F[u,u',u'',\dots]$ is:
$$
H=\partial^{-1}\circ u'\partial^{-1}\circ u'\partial^{-1}\,.
$$
One easily shows (cf. \cite[Example 3.14]{BDSK09})
that $H^{-1}$ is a symplectic structure on the field of fractions $\mc K_1=\text{Frac} R_1$, 
known as Dorfman symplectic structre, \cite{Dor93},
hence $H$ is indeed a non-local Hamiltonian structure.
Furthermore, one can show, by a lengthy calculation,
that Sokolov's and Dorfman's non-local Hamiltonian structures
are compatible.
\end{example}
\begin{example}[cf. \cite{Dor93}]\label{20110922:ex3}
Another triple of compatible non-local Hamiltonian structures
on $R_1=\mb F[u,u',u'',\dots]$ is:
\begin{enumerate}[(i)]
\item
$K_{1}=\partial$ (GFZ Hamiltonian structure),
\item
$K_{-1}=\partial^{-1}$ (Toda non-local Hamiltonian structure),
\item
$H=\partial^{-1}\circ u'+u'\partial^{-1}$ (potential Virasoro-Magri non-local Hamiltonian structure).
\end{enumerate}
\end{example}
\begin{example}[cf. \cite{Mag80}]\label{20110922:ex4}
There is yet another triple of compatible non-local Hamiltonian structures
on $R_1=\mb F[u,u',u'',\dots]$:
\begin{enumerate}[(i)]
\item
$K_1=\partial$ (GFZ Hamiltonian structure),
\item
$K_3=\partial^3$,
\item
$H=\partial\circ u\partial^{-1}\circ u\partial$ 
(modified Virasoro-Magri non-local Hamiltonian structure).
\end{enumerate}
\end{example}
\begin{example}[cf. \cite{Mag78,Mag80}]\label{20110922:ex5}
The following is a triple of compatible non-local Hamiltonian structures
on $R_2=\mb F[u,v,u',v',\dots]$:
\begin{enumerate}[(i)]
\item
$K_1=\partial\id$ (GFZ Hamiltonian structure),
\item
$K=\left(\begin{array}{cc} 0 & -1 \\ 1 & 0 \end{array}\right)$,
\item
$H=\left(\begin{array}{cc} 
v\partial^{-1}\circ v & -v\partial^{-1}\circ u \\
-u\partial^{-1}\circ v & u\partial^{-1}\circ u
\end{array}\right)$ (NLS non-local Hamiltonian structure).
\end{enumerate}
\end{example}

%%%%%%%%%%%%%%%%%%%%%%%%%%%%%%%%%%%%%%%%%%%%%%%%%%%%%%%%%%%%%%%%%%%%%%%%%%%%%%%%%%%%%%%%%%%%%%%%%%%%%%%%%%%%%%
%%%%%%%%%%%%%%% Sect 6 %%%%%%%%%%%%%%%%%%%%%%%%%%%%%%%%%%%%%%%%%%%%%%%%%%%%%%%%%%%%%%%%%%%%%%%%%%%%%%%%%%%%%%%
%%%%%%%%%%%%%%%%%%%%%%%%%%%%%%%%%%%%%%%%%%%%%%%%%%%%%%%%%%%%%%%%%%%%%%%%%%%%%%%%%%%%%%%%%%%%%%%%%%%%%%%%%%%%%%

\section{Constructing families of compatible non-local \\ Hamiltonian structures}
\label{sec:6}

As in the previous sections, let $\mc V$ be an algebra of differential functions
in the variables $u_1,\dots,u_\ell$,
we assume that $\mc V$ is a domain, and we let $\mc K$ be its field of fractions.
As in the local case, two non-local Poisson vertex algebra $\lambda$-brackets on $\mc V$
%$\{\cdot\,_\lambda\,\cdot\}_1,\,\{\cdot\,_\lambda\,\cdot\}_2$ 
(respectively two non-local Hamiltonian structures) 
%$H_1(\partial),\,H_2(\partial)$
are said to be \emph{compatible} if any their linear combination
is again a non-local Poisson vertex algebra structure
(resp. a non-local Hamiltonian structure).
Such a pair is called a \emph{bi-Hamiltonian structure}.
More generally, a collection of non-local 
Hamiltonian structures $\{H^\alpha\}_{\alpha\in\mc A}$ on $\mc V$,
is called \emph{compatible} if any their (finite) linear combination
is a non-local Hamiltonian structure on $\mc V$.

Recalling the Jacobi identity \eqref{20110922:eq4}, we introduce the following notation.
Given rational $\ell\times\ell$-matrix pseudodifferential operators 
$K,H\in\Mat_{\ell\times\ell}\mc V(\partial)$,
we let $J(H,K)=J^1(H,K)-J^2(H,K)-J^3(H,K)$,
where $J^\alpha(H,K)=\big(J^\alpha_{ijk}(H,K)(\lambda,\mu)\big)_{i,j,k\in I}$, for $\alpha=1,2,3$,
are the arrays with the following entries in $\mc V_{\lambda,\mu}$:
\begin{equation}\label{20111118:eq1}
\begin{array}{rcl}
J^1(H,K)_{ijk}(\lambda,\mu)&=&\{{u_i}_\lambda\{{u_j}_\mu{u_k}\}_H\}_K\,,\\
J^2(H,K)_{ijk}(\lambda,\mu)&=&\{{u_j}_\mu\{{u_i}_\lambda{u_k}\}_H\}_K\,,\\
J^3(H,K)_{ijk}(\lambda,\mu)&=&\{{\{{u_i}_\lambda{u_j}\}_H}_{\lambda+\mu}{u_k}\}_K\,.
\end{array}
\end{equation}
Consider a collection $\{H^\alpha\}_{\alpha\in\mc A}$
of skewadjoint rational non-local matrix pseudodifferential operators.
By definition, $H^\alpha$ is a Hamiltonian structure
if and only if 
$J(H^\alpha,H^\alpha)=0$.
It is easy to see that the $H^\alpha$'s form a compatible family of Hamiltonian structures
if and only if
\begin{equation}\label{20111116:eq3}
J(H^\alpha,H^\beta)+J(H^\beta,H^\alpha)=0
\,\,,\,\,\,\,\forall \alpha,\beta\in\mc A\,.
\end{equation}

\begin{theorem}\label{20111021:thm}
Let $H,\,K\in\Mat_{\ell\times\ell}\mc V(\partial)$
be compatible non-local Hamiltonian structures on 
the the algebra of differential functions $\mc V$, which is a domain.
Assume that $K$ is an invertible element of the algebra $\Mat_{\ell\times\ell}\mc V(\partial)$.
Then the following sequence of rational matrix pseudodifferential operators
with coefficients in $\mc V$:
$$
H^{[0]}=K
\,\,,\,\,\,\,
H^{[n]} :=
\big(H\circ K^{-1}\big)^{n-1}\circ H
\,\in\Mat{}_{\ell\times\ell}\mc V(\partial)
\,\,,\,\,\,\,n\geq1\,,
$$
form a compatible family of non-local Hamiltonian structures on $\mc V$.
\end{theorem}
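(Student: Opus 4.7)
The plan is to verify: (a) each $H^{[n]}$ is a rational skewadjoint matrix pseudodifferential operator with coefficients in $\mc V$; and (b) any finite $\mb F$-linear combination $\sum_{n=0}^N c_n H^{[n]}$ is a non-local Hamiltonian structure on $\mc V$. Part (b) subsumes both the Jacobi identity for each $H^{[n]}$ (case $N=0$) and the pairwise compatibility $J(H^{[m]},H^{[n]})+J(H^{[n]},H^{[m]})=0$, since expanding $J(H^{[m]}+H^{[n]},H^{[m]}+H^{[n]})=0$ via the bilinearity of $J$ recovers the compatibility identity once the individual Jacobi identities are known.

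Part (a) is routine: rationality holds because $\Mat_{\ell\times\ell}\mc V(\partial)$ is an algebra containing $K^{-1}$ by Corollary \ref{20111005:prop3}; skewadjointness follows from $K^*=-K$, hence $(K^{-1})^*=-K^{-1}$, and reversing the product in $H^{[n]}=(HK^{-1})^{n-1}H$ under the adjoint yields $(H^{[n]})^*=(-1)^{2n-1}H^{[n]}=-H^{[n]}$.

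For (b), the plan is to reduce to the following key claim: if $A,B$ are non-local Hamiltonian structures on $\mc V$, each compatible with the invertible non-local Hamiltonian structure $K$ and with each other, then $A\circ K^{-1}\circ B$ is a non-local Hamiltonian structure compatible with $A$, $B$, and $K$. Granting this claim, (b) follows from the algebraic identity
$$
\sum_{n=0}^Nc_nH^{[n]}=K\circ P(K^{-1}\circ H),\qquad P(t)=\sum_{n=0}^Nc_nt^n\in\mb F[t],
$$
by factoring $P(t)=c_N\prod_i(t-\alpha_i)$ over $\overline{\mb F}$ and using $K\circ(K^{-1}\circ H-\alpha)=H-\alpha K$ to rewrite the combination as the alternating product
$c_N(H-\alpha_1K)\circ K^{-1}\circ(H-\alpha_2K)\circ K^{-1}\circ\cdots\circ K^{-1}\circ(H-\alpha_NK)$,
each factor being a linear combination of the compatible pair $(K,H)$, hence Hamiltonian and mutually compatible. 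Iterated application of the key claim, combined with descent from $\overline{\mb F}$ to $\mb F$ via a polynomial-identity argument (the Hamiltonian property being equivalent to the polynomial identity \eqref{20110922:eq4} in the coefficients), completes the proof.

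The main obstacle is establishing the key claim about $A\circ K^{-1}\circ B$. A natural approach is to pass through the non-local symplectic picture of Theorem \ref{20111012:thm}: each bi-Hamiltonian pair $(K,A)$, $(K,B)$ corresponds under inversion to a bi-symplectic pair $(K^{-1},K^{-1}\circ A\circ K^{-1})$, $(K^{-1},K^{-1}\circ B\circ K^{-1})$, and $A\circ K^{-1}\circ B$ is recovered by inverting an appropriate symplectic combination. Executing this approach in the non-local PVA framework requires translating the compatibility identities $J(A,K)+J(K,A)=0$ (and the analogues for $B$) through the inverse operator $K^{-1}$ using the explicit formulas of Lemma \ref{20111012:lem} and Corollary \ref{20111014:cor}; and verifying via Corollary \ref{20111007:prop} and Lemma \ref{20120131:lem1} that all intermediate triple $\lambda$-brackets involving the infinite-series expansion of $K^{-1}$ genuinely lie in $\mc V_{\lambda,\mu}$, so that the resulting Jacobi and compatibility identities are meaningful equalities in that ambient space.
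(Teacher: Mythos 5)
Your reduction of the statement to the single claim ``$A\circ K^{-1}\circ B$ is Hamiltonian and compatible with $A,B,K$'', via the factorization $\sum_n c_nH^{[n]}=c_N(H-\alpha_1K)\circ K^{-1}\circ\cdots\circ K^{-1}\circ(H-\alpha_NK)$, is an attractive idea and genuinely different from the paper's route. But the proposal has a gap precisely where all the difficulty lies: the key claim is never proved, only announced, and the ``natural approach'' you sketch for it does not work. Passing through Theorem \ref{20111012:thm} requires inverting the operators involved ($A$, $B$, or $A\circ K^{-1}\circ B$), and none of these is assumed invertible; indeed Remark \ref{20111114:rem} points out that the invertibility of $H$ is exactly the restrictive hypothesis of \cite{FF81} that this theorem is meant to remove, so the symplectic detour reintroduces the assumption you are trying to avoid. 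Moreover the key claim as stated is false: for distinct $A,B$ one has $(A\circ K^{-1}\circ B)^*=-B\circ K^{-1}\circ A$, so $A\circ K^{-1}\circ B$ need not even be skewadjoint. (In your application the factors $H-\alpha_iK$ commute through $K^{-1}$, so this is repairable, but the claim must be restricted to the pencil generated by $H$ and $K$, and then its compatibility assertions are essentially equivalent to the theorem itself.)

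For comparison, the paper proves only the special case $A=B$: Lemma \ref{20111116:lem1} shows by a direct (and lengthy) computation with the Master Formula, the Leibniz rules and Lemma \ref{20111012:lem} that $\tilde H\circ K^{-1}\circ\tilde H$ is Hamiltonian whenever $\tilde H$ and $K$ are compatible and $K$ is invertible. It then runs an induction: applying the lemma to $\tilde H=\sum_{i=0}^nx_iH^{[i]}$ gives that $\sum_pQ_p(x)H^{[p]}$ is Hamiltonian for all $x\in\mb F^{n+1}$, with $Q_p(x)=\sum_{i+j=p}x_ix_j$, and the individual identities $J(H^{[n+1]},H^{[n+1]})=0$ and $J(H^{[m]},H^{[n+1]})+J(H^{[n+1]},H^{[m]})=0$ are extracted from this aggregate identity by the grading $\deg x_i=i$ together with the independence statements of Lemma \ref{20111116:lem2}. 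This disentangling step is the substitute for your unproven compatibility assertions; your proposal has no analogue of it, and without either that argument or a genuine proof of the key claim (including the mixed case $A\neq B$ and the asserted compatibilities), the proof is not complete.
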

\begin{remark}\label{20111114:rem}
It is stated in \cite{FF81} that $H^{[n]},\,n\geq0$, are non-local Hamiltonian structures,
but the prove there is given only under the additional
assupmtion that $H$ is invertible as well.
In this case the proof becomes much easier since $H^{[n]}$ is invertible, 
therefore one needs to prove that $(H^{[n]})^{-1}$ is a symplectic structure.
\end{remark}
Following the idea in \cite{TT11}, we will reduce the proof of Theorem \ref{20111021:thm}
to the following special case of it:
\begin{lemma}\label{20111116:lem1}
Let $\tilde H,\,K\in\Mat_{\ell\times\ell}\mc V(\partial)$
be compatible non-local Hamiltonian structures on $\mc V$,
and assume that $K$ is an invertible element 
of the algebra $\Mat_{\ell\times\ell}\mc V(\partial)$.
Then the rational matrix pseudodifferential operator with coefficients in $\mc V$
$$
\tilde H(\partial)\circ K^{-1}(\partial)\circ\tilde H(\partial)\,\in\Mat{}_{\ell\times\ell}\mc V(\partial)\,,
$$
is a non-local Hamiltonian structure on $\mc V$.
\end{lemma}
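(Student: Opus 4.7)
The plan is to verify the two axioms of a non-local Hamiltonian structure (Definition \ref{20111007:def}) for $L:=\tilde H\circ K^{-1}\circ\tilde H$: skewadjointness, and the Jacobi identity \eqref{20110922:eq4} on generators.

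Skewadjointness is essentially free. Since $K^*=-K$ in $\Mat_{\ell\times\ell}\mc K((\partial^{-1}))$, applying $*$ to the identity $K\circ K^{-1}=\id$ gives $(K^{-1})^*=(K^*)^{-1}=-K^{-1}$. Consequently
$$
L^*=\tilde H^*\circ(K^{-1})^*\circ\tilde H^*=(-\tilde H)\circ(-K^{-1})\circ(-\tilde H)=-L.
$$
By Theorem \ref{20110923:prop}(d), it then suffices to prove $J(L,L)_{ijk}(\lambda,\mu)=0$ in $\mc V_{\lambda,\mu}$ for all $i,j,k\in I$. Using the Master Formula, $\{u_a{}_\nu u_b\}_L=L_{ba}(\nu)$, and the symbol of $L_{ba}(\nu)$ is obtained by iterating \eqref{20111003:eq2} on the composition $\tilde H\circ K^{-1}\circ\tilde H$.

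The heart of the plan is a direct expansion of each of the three terms $J^\alpha(L,L)_{ijk}(\lambda,\mu)$, $\alpha=1,2,3$, by repeatedly applying the left and right Leibniz rules \eqref{20110921:eq3} and the sesquilinearity \eqref{20110921:eq1} to the outer $L$-brackets. Whenever the inner argument is a coefficient of $K^{-1}$, I would invoke Corollary \ref{20111014:cor} to re-express
$$
\{a_\lambda (K^{-1})_{pq}(\mu)\}_{\tilde H\,\text{or}\,K}=-\sum_{r,t,n}(K^{-1})_{pr}(\lambda+\mu+\partial)\{a_\lambda k_{rt;n}\}_{\tilde H\,\text{or}\,K}(\mu+\partial)^{n}(K^{-1})_{tq}(\mu),
$$
and its right-bracket analogue \eqref{20111012:eq2d}, so every bracket that survives at the bottom involves only coefficients of the two \emph{differential} operators $\tilde H$ and $K$, sandwiched on either side by $K^{-1}$'s carrying the appropriate $\partial$-shifts. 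This substitution uses the invertibility of $K$ essentially.

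After this unfolding, the expression $J(L,L)_{ijk}(\lambda,\mu)$ is a sum of terms, each containing exactly one ``innermost'' double bracket of the form $\{{u_a}_\lambda\{{u_b}_\mu{u_c}\}_X\}_Y$ with $X,Y\in\{\tilde H,K\}$; sorting by the pair $(X,Y)$ I expect the terms to reorganize into three blocks equal, respectively, to $J(\tilde H,\tilde H)$, $J(K,K)$, and the symmetric combination $J(\tilde H,K)+J(K,\tilde H)$, each sandwiched between appropriate products of $\tilde H$'s and $K^{-1}$'s acting via $\partial$-shifts in $\lambda$, $\mu$, $\lambda+\mu$. These three blocks vanish by, respectively, the Hamiltonian axiom for $\tilde H$, the Hamiltonian axiom for $K$, and the compatibility condition \eqref{20111116:eq3}, yielding $J(L,L)=0$.

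The main obstacle will be the combinatorial bookkeeping in this reorganization. The Leibniz expansion produces numerous summands with delicate arrangements of the three $\partial$-shifts $\lambda+\partial$, $\mu+\partial$, $\lambda+\mu+\partial$, and proving that, after the Corollary \ref{20111014:cor} substitutions, they really do assemble into the three Jacobi arrays of \eqref{20111118:eq1} with matching $K^{-1}$-sandwiches on both sides requires carefully tracking the sesquilinearity conventions and repeatedly using Lemma \ref{20111006:lem} to keep the resulting expressions inside $\mc V_{\lambda,\mu}$. Once this is done the conclusion is immediate from the hypotheses $J(\tilde H,\tilde H)=0$, $J(K,K)=0$, and $J(\tilde H,K)+J(K,\tilde H)=0$.
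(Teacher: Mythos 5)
Your overall strategy coincides with the paper's: check skewadjointness directly (your computation $L^*=-L$ is correct), invoke Theorem \ref{20110923:prop}(d) to reduce to the Jacobi identity on generators, expand the three arrays $J^\alpha(L,L)$ via the Leibniz rules together with the inversion formulas \eqref{20111012:eq2a}--\eqref{20111012:eq2d} for $K^{-1}$, and regroup. The problem is that the regrouping, which you correctly identify as the heart of the matter, is only asserted, and the outcome you predict is not the one that actually occurs. A direct expansion of $J^\alpha(L,L)$ produces innermost double brackets of exactly two types: $\{{u_s}_x\{{u_t}_y{u_r}\}_H\}_H$ (from differentiating the coefficients of the two copies of $\tilde H$) and $\{{u_s}_x\{{u_t}_y{u_r}\}_K\}_H$ (from differentiating the coefficients of $K$ inside $K^{-1}$), always with \emph{outer} bracket $\{\cdot\,_\lambda\,\cdot\}_H$. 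No block of type $\{\cdot\{\cdot\,\cdot\}_K\}_K$ ever appears, so there is no $J(K,K)$ block to cancel; indeed the paper's Remark \ref{20111122:rem} points out that the Hamiltonian property of $K$ is never used in this lemma.

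The mechanism that actually makes the terms assemble is more specific than a sorting by the pair $(X,Y)$: one must first apply the Jacobi identity for $\tilde H$ to suitable pairs of the $(H,H)$-type terms (as in \eqref{20111121:eq4a}--\eqref{20111121:eq4c}) to convert them into single brackets of the form $\{{\{{u_s}_x{u_t}\}_H}_{x+y}{u_r}\}_H$ or $\{{u}_\cdot\{{u}_\cdot\,{u_r}\}_H\}_H$, and only then use the identities $\{{u_i}_\lambda f\}_H=\sum_s{\{{u_s}_{\lambda+\partial}f\}_K}_\to R^*_{si}(\lambda)$ and $\{f_{\lambda+\mu}u_k\}_H=\sum_r R_{kr}(\lambda+\mu+\partial)\{f_{\lambda+\mu}u_r\}_K$ to trade the outer $H$-bracket for an outer $K$-bracket at the cost of one more factor of $R$ or $R^*$. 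This is what produces the $(H,K)$-type terms with sandwiching factors that match those of the $(K,H)$-type terms coming from the $K^{-1}$ differentiation, so that everything collapses to $R_{kr}(\lambda+\mu+\partial)$ applied to the mixed array $J(\tilde H,K)+J(K,\tilde H)$ evaluated with $R^*(\lambda)$ and $R^*(\mu)$ substituted, which vanishes by compatibility. Without this intermediate use of $J(\tilde H,\tilde H)=0$ the shift factors on the two sides of your putative blocks do not match, so the cancellation you expect cannot be read off directly. Your plan is salvageable, but the decisive computation is missing and your description of how it resolves would have to be corrected along the above lines.
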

\begin{proof}
To simplify notation, in this proof we denote $\tilde H$ by $H$,
and we let $R=H\circ K^{-1}$
so that $R^*=K^{-1}\circ H$.
Let $H^{[2]}=H\circ K^{-1}\circ H
\Big(=R\circ H=H\circ R^*\Big)$,
and let $\{\cdot\,_\lambda\,\cdot\}_2=\{\cdot\,_\lambda\,\cdot\}_{H^{[2]}}$
be the non-local $\lambda$-bracket on $\mc V$ associated
to $H^{[2]}\in\Mat_{\ell\times\ell}\mc V(\partial)$
via \eqref{20110922:eq1}.
We need to prove the Jacobi identity, i.e. using the notation in \eqref{20111118:eq1},
that $J(H^{[2]},H^{[2]})=0$.

We need to compute all three terms $J^\alpha=J^\alpha(H^{[2]},H^{[2]})_{ijk}(\lambda,\mu)$, for $\alpha=1,2,3$,
of the Jacobi identity.
First, if $f\in\mc V$ and $i\in I$, we have, in $\mc V((\lambda^{-1}))$,
\begin{eqnarray}
\label{20111103:eq1a}
\{{u_i}_\lambda f\}_2
&=&
\displaystyle{
\sum_{s\in I}{\{{u_s}_{\lambda+\partial}f\}_H}_\to R^*_{si}(\lambda)\,,
} \\
\label{20111103:eq1b}
\{{u_j}_\mu f\}_2
&=&
\displaystyle{
\sum_{t\in I}{\{{u_t}_{\mu+\partial}f\}_H}_\to R^*_{tj}(\mu)\,,
} \\
\label{20111103:eq2}
\{f_{\lambda+\mu}{u_k}\}_2
&=&
\displaystyle{
\sum_{r\in I}R_{kr}(\lambda+\mu+\partial)\, \{f_{\lambda+\mu}u_r\}_H\,.
}
\end{eqnarray}
Both the above equations follow immediately from the Master formula \eqref{20110922:eq1}
and the definition of $H^{[2]}$.
The following identities are proved in a similar way, using that $K\circ K^{-1}=\id$,
\begin{eqnarray}
\label{20111121:eq1a}
\{{u_i}_\lambda f\}_H
&=&
\displaystyle{
\sum_{s\in I}{\{{u_s}_{\lambda+\partial}f\}_K}_\to R^*_{si}(\lambda)\,,
} \\
\label{20111121:eq1b}
\{{u_j}_\mu f\}_H
&=&
\displaystyle{
\sum_{t\in I}{\{{u_t}_{\mu+\partial}f\}_K}_\to R^*_{tj}(\mu)\,,
} \\
\label{20111121:eq2}
\{f_{\lambda+\mu}{u_k}\}_H
&=&
\displaystyle{
\sum_{r\in I}R_{kr}(\lambda+\mu+\partial)\, \{f_{\lambda+\mu}u_r\}_K\,.
}
\end{eqnarray}
Next, it is not hard to chek, using the left and right Leibniz rules
and Lemma \ref{20111012:lem}, that,
given an admissible non-local $\lambda$-bracket $\{\cdot\,_\lambda\,\cdot\}$ on $\mc V$,
the following identities hold in $\mc V_{\lambda,\mu}$, for every $i,j,k\in I$:
\begin{eqnarray}
\label{20111103:eq3a}
&&\{{u_i}_\lambda H^{[2]}_{kj}(\mu)\} 
=
\displaystyle{
\sum_{t\in I}
\{{u_i}_\lambda \{{u_t}_y{u_k}\}_H\} \Big(\Big|_{y=\mu+\partial}R^*_{tj}(\mu)\Big)
} \\
&&\displaystyle{
- \sum_{r,t\in I}
R_{kr}(\lambda\!+\!\mu\!+\!\partial)\,
\{{u_i}_\lambda \{{u_t}_y{u_r}\}_K\} \Big(\Big|_{y=\mu+\partial}\!\!\!R^*_{tj}(\mu)\Big)
}\nonumber \\
&&\displaystyle{
+ \sum_{r\in I}
R_{kr}(\lambda+\mu+\partial)\,
\{{u_i}_\lambda \{{u_j}_\mu{u_r}\}_H\}
\,,
}\nonumber \\
\label{20111103:eq3b}
&&\{{u_j}_\mu H^{[2]}_{ki}(\lambda)\} 
=
\displaystyle{
\sum_{s\in I}
\{{u_j}_\mu \{{u_s}_x{u_k}\}_H\} \Big(\Big|_{x=\lambda+\partial}R^*_{si}(\lambda)\Big)
} \\
&&\displaystyle{
- \sum_{r,s\in I}
R_{kr}(\lambda\!+\!\mu\!+\!\partial)\,
\{{u_j}_\mu \{{u_s}_x{u_r}\}_K\} \Big(\Big|_{x=\lambda+\partial}\!\!\!R^*_{si}(\lambda)\Big)
}\nonumber \\
&&\displaystyle{
+ \sum_{r\in I}
R_{kr}(\lambda+\mu+\partial)\,
\{{u_j}_\mu \{{u_i}_\lambda{u_r}\}_H\}
\,,
}\nonumber \\
\label{20111103:eq3c}
&&\{H^{[2]}_{ji}(\lambda)_{\lambda+\mu}{u_k}\} 
=
\displaystyle{
\sum_{s\in I}
\{{\{{u_s}_x{u_j}\}_H}_{\lambda+\mu+\partial}{u_k}\}_\to 
\Big(\Big|_{x=\lambda+\partial}\!\!\!R^*_{si}(\lambda)\Big)
} \\
&&\displaystyle{
- \sum_{s,t\in I}
\{{\{{u_s}_x{u_t}\}_K}_{\lambda+\mu+\partial}{u_k}\}_\to 
\Big(\Big|_{x=\lambda+\partial}\!\!\!\!R^*_{si}(\lambda)\!\Big)
\Big(\Big|_{y=\mu+\partial}\!\!\!\!R^*_{tj}(\mu)\!\Big)
}\nonumber \\
&&\displaystyle{
+ \sum_{t\in I}
\{{\{{u_i}_\lambda{u_t}\}_H}_{\lambda+\mu+\partial}{u_k}\}_\to 
R^*_{tj}(\mu)
\,.
}\nonumber 
\end{eqnarray}
Here and further we use the following notation:
given an element 
$$
P(\lambda,\mu)=\sum_{m,n,p=-\infty}^N p_{m,n,p}\lambda^m\mu^n(\lambda+\mu)^p
\in\mc V_{\lambda,\mu}\,,
$$ 
and $f,g\in\mc V$, we let
\begin{equation}\label{20111018:eq5}
\begin{array}{l}
\displaystyle{
P(x,y)\Big(\Big|_{x=\lambda+\partial}f\Big)
\Big(\Big|_{y=\mu+\partial}g\Big)
} \\
\displaystyle{
=
\sum_{m,n,p=-\infty}^N
p_{m,n,p}(\lambda+\mu+\partial)^p
\big((\lambda+\partial)^mf\big)\big((\mu+\partial)^ng\big)
\,\,\in\mc V_{\lambda,\mu}
\,.
}
\end{array}
\end{equation}
In equation \eqref{20111103:eq3c} we used the assumption that $H$ and $K$ are skewadjoint.
Combining equations \eqref{20111103:eq1a} and \eqref{20111103:eq3a},
equations \eqref{20111103:eq1b} and \eqref{20111103:eq3b},
and equations \eqref{20111103:eq2} and \eqref{20111103:eq3c},
we get, respectively,
\begin{eqnarray}
\label{20111121:eq3a}
&&\displaystyle{
J^1 = \{{u_i}_\lambda \{{u_j}_\mu{u_k}\}_2\}_2
} \\
&&\displaystyle{
=\sum_{s,t\in I}
\{{u_s}_x \{{u_t}_y{u_k}\}_H\}_H 
\Big(\Big|_{x=\lambda+\partial}R^*_{si}(\lambda)\Big)
\Big(\Big|_{y=\mu+\partial}R^*_{tj}(\mu)\Big)
}\nonumber \\
&&\displaystyle{
- \sum_{r,s,t\in I}
R_{kr}(\lambda\!+\!\mu\!+\!\partial)\,
\{{u_s}_x \{{u_t}_y{u_r}\}_K\}_H 
%}\nonumber \\
%&&\displaystyle{
%\,\,\,\,\,\,\,\,\,\,\,\,\,\,\,\,\,\,\,\,\,\,\,\,\,\,\,\,\,\,\,\,\,\,\,\,\,\,\,\,\,\,\,\,\,\,\,\,\,\,\,\,\,\,
%\,\,\,\,\,\,
\Big(\Big|_{x=\lambda+\partial}R^*_{si}(\lambda)\Big)
\Big(\Big|_{y=\mu+\partial}R^*_{tj}(\mu)\Big)
}\nonumber \\
&&\displaystyle{
+ \sum_{r,s\in I}
R_{kr}(\lambda+\mu+\partial)\,
\{{u_s}_x \{{u_j}_\mu{u_r}\}_H\}_H
\Big(\Big|_{x=\lambda+\partial}R^*_{si}(\lambda)\Big)
\,,
}\nonumber \\
\label{20111121:eq3b}
&&\displaystyle{
J^2 = \{{u_j}_\mu \{{u_i}_\lambda{u_k}\}_2\}_2
} \\
&&\displaystyle{
=\sum_{s,t\in I}
\{{u_j}_y \{{u_s}_x{u_k}\}_H\}_H 
\Big(\Big|_{x=\lambda+\partial}R^*_{si}(\lambda)\Big)
\Big(\Big|_{y=\mu+\partial}R^*_{tj}(\mu)\Big)
}\nonumber \\
&&\displaystyle{
- \sum_{r,s,t\in I}
R_{kr}(\lambda\!+\!\mu\!+\!\partial)\,
\{{u_t}_y \{{u_s}_x{u_r}\}_K\}_H 
%}\nonumber \\
%&&\displaystyle{
%\,\,\,\,\,\,\,\,\,\,\,\,\,\,\,\,\,\,\,\,\,\,\,\,\,\,\,\,\,\,\,\,\,\,\,\,\,\,\,\,\,\,\,\,\,\,\,\,\,\,\,\,\,\,
%\,\,\,\,\,\,
\Big(\Big|_{x=\lambda+\partial}R^*_{si}(\lambda)\Big)
\Big(\Big|_{y=\mu+\partial}R^*_{tj}(\mu)\Big)
}\nonumber \\
&&\displaystyle{
+ \sum_{r,t\in I}
R_{kr}(\lambda+\mu+\partial)\,
\{{u_t}_y \{{u_i}_\lambda{u_r}\}_H\}_H
\Big(\Big|_{y=\mu+\partial}R^*_{tj}(\mu)\Big)
\,,
}\nonumber \\
\label{20111121:eq3c}
&&\displaystyle{
J^3 = \{{\{{u_i}_\lambda{u_j}\}_2}_{\lambda+\mu}{u_k}\}_2
} \\
&&\displaystyle{
=\sum_{r,s\in I}
R_{kr}(\lambda+\mu+\partial)\,
{\{{\{{u_s}_x{u_j}\}_H}_{\lambda+\mu+\partial}{u_r}\}_H}_\to 
\Big(\Big|_{x=\lambda+\partial}\!\!\!R^*_{si}(\lambda)\Big)
}\nonumber \\
&&\displaystyle{
- \sum_{r,s,t\in I}
R_{kr}(\lambda+\mu+\partial)\,
{\{{\{{u_s}_x{u_t}\}_K}_{\lambda+\mu+\partial}{u_r}\}_H}_\to 
%}\nonumber \\
%&&\displaystyle{
%\,\,\,\,\,\,\,\,\,\,\,\,\,\,\,\,\,\,\,\,\,\,\,\,\,\,\,\,\,\,\,\,\,\,\,\,\,\,\,\,\,\,\,\,\,\,\,\,\,\,\,\,\,\,
%\,\,\,\,\,\,\,\,\,\,\,\,\,\,\,\,\,\,\,\,\,
\Big(\Big|_{x=\lambda+\partial}R^*_{si}(\lambda)\!\Big)
R^*_{tj}(\mu)
}\nonumber \\
&&\displaystyle{
+ \sum_{r,t\in I}
R_{kr}(\lambda+\mu+\partial)\,
{\{{\{{u_i}_\lambda{u_t}\}_H}_{\lambda+\mu+\partial}{u_r}\}_H}_\to 
R^*_{tj}(\mu)
\,.
}\nonumber 
\end{eqnarray}
We need to prove that $J^1-J^2-J^3=0$.
The first term of the RHS of \eqref{20111121:eq3a} combined with the first term of the RHS of \eqref{20111121:eq3b}
gives, by the Jacobi identity for $H$ and by equation \eqref{20111121:eq2},
\begin{equation}\label{20111121:eq4a}
\begin{array}{l}
\displaystyle{
\sum_{s,t\in I}
\!\!
\Big(\{{u_s}_x \{{u_t}_y{u_k}\}_H\}_H - \{{u_j}_y \{{u_s}_x{u_k}\}_H\}_H\Big)\!\!
\Big(\Big|_{x=\lambda+\partial}R^*_{si}(\lambda)\Big)\!\!
\Big(\Big|_{y=\mu+\partial}R^*_{tj}(\mu)\Big)
} \\
\displaystyle{
=
\sum_{s,t\in I}
\{{\{{u_s}_x{u_t}\}_H}_{x+y}{u_k}\}_H\}
\Big(\Big|_{x=\lambda+\partial}R^*_{si}(\lambda)\Big)
\Big(\Big|_{y=\mu+\partial}R^*_{tj}(\mu)\Big)
} \\
\displaystyle{
=
\sum_{r,s,t\in I}
\!\!
R_{kr}(\lambda+\mu+\partial)
\{{\{{u_s}_x{u_t}\}_H}_{x+y}{u_r}\}_K
\Big(\Big|_{x=\lambda+\partial}R^*_{si}(\lambda)\Big)
\!\!\Big(\Big|_{y=\mu+\partial}R^*_{tj}(\mu)\Big).
}
\end{array}
\end{equation}
Similarly,
the third term of the RHS of \eqref{20111121:eq3a} combined with the first term of the RHS of \eqref{20111121:eq3c}
gives, by the Jacobi identity for $H$ and by equation \eqref{20111121:eq1b},
\begin{equation}\label{20111121:eq4b}
\begin{array}{l}
\displaystyle{
\sum_{r,s\in I}
R_{kr}(\lambda+\mu+\partial)\,
\Big(\{{u_s}_x \{{u_j}_\mu{u_r}\}_H\}_H
-{\{{\{{u_s}_x{u_j}\}_H}_{\lambda+\mu+\partial}{u_r}\}_H}_\to\Big)
} \\
\displaystyle{
\Big(\Big|_{x=\lambda+\partial}R^*_{si}(\lambda)\Big)
=
\sum_{r,s\in I}
R_{kr}(\lambda+\mu+\partial)\,
\{{u_j}_\mu \{{u_s}_x{u_r}\}_H\}_H
\Big(\Big|_{x=\lambda+\partial}R^*_{si}(\lambda)\Big)
} \\
\displaystyle{
=
\sum_{r,s,t\in I}
R_{kr}(\lambda+\mu+\partial)\,
\{{u_t}_y \{{u_s}_x{u_r}\}_H\}_K
\Big(\Big|_{x=\lambda+\partial}R^*_{si}(\lambda)\Big)
\Big(\Big|_{y=\mu+\partial}R^*_{tj}(\mu)\Big)\,.
}
\end{array}
\end{equation}
In the same way,
the third term of the RHS of \eqref{20111121:eq3b} combined with the third term of the RHS of \eqref{20111121:eq3c}
gives, by the Jacobi identity for $H$ and by equation \eqref{20111121:eq1a},
\begin{equation}\label{20111121:eq4c}
\begin{array}{c}
\displaystyle{
-\sum_{r,t\in I}
R_{kr}(\lambda+\mu+\partial)\,
\Big(
\{{u_t}_y \{{u_i}_\lambda{u_r}\}_H\}_H
+{\{{\{{u_i}_\lambda{u_t}\}_H}_{\lambda+y}{u_r}\}_H} 
\Big)
} \\
\displaystyle{
\Big(\Big|_{y=\mu+\partial}R^*_{tj}(\mu)\Big)
=
-\sum_{r,t\in I}
R_{kr}(\lambda+\mu+\partial)\,
\{{u_i}_\lambda \{{u_t}_y{u_r}\}_H\}_H
} \\
\displaystyle{
\Big(\Big|_{y=\mu+\partial}R^*_{tj}(\mu)\Big)
=
-\sum_{r,s,t\in I}
R_{kr}(\lambda+\mu+\partial)\,
\{{u_s}_\lambda \{{u_t}_y{u_r}\}_H\}_K
} \\
\displaystyle{
\Big(\Big|_{x=\lambda+\partial}R^*_{si}(\lambda)\Big)
\Big(\Big|_{y=\mu+\partial}R^*_{tj}(\mu)\Big)\,.
}
\end{array}
\end{equation}
Finally, combining the second term in the RHS of \eqref{20111121:eq3a}, \eqref{20111121:eq3b} and \eqref{20111121:eq3c},
together with the RHS of equations \eqref{20111121:eq4a}, \eqref{20111121:eq4b} and \eqref{20111121:eq4c},
we get
$$
\begin{array}{c}
\displaystyle{
J^1-J^2-J^3=
\sum_{r,s,t\in I}
R_{kr}(\lambda+\mu+\partial)\,
\Big(
-\{{u_s}_x \{{u_t}_y{u_r}\}_K\}_H 
} \\
\displaystyle{
+\{{u_t}_y \{{u_s}_x{u_r}\}_K\}_H 
+{\{{\{{u_s}_x{u_t}\}_K}_{x+y}{u_r}\}_H}_\to 
+\{{\{{u_s}_x{u_t}\}_H}_{x+y}{u_r}\}_K
} \\
\displaystyle{
+\{{u_t}_y \{{u_s}_x{u_r}\}_H\}_K
-\{{u_s}_\lambda \{{u_t}_y{u_r}\}_H\}_K
\Big)
\Big(\Big|_{x=\lambda+\partial}R^*_{si}(\lambda)\Big)
\Big(\Big|_{y=\mu+\partial}R^*_{tj}(\mu)\Big)\,,
}
\end{array}
$$
which is zero since, by assumption, $H$ and $K$ are compatible.
\end{proof}
\begin{remark}\label{20111122:rem}
The proof of Lemma \ref{20111116:lem1} does not use the assumption that $K$
is a Hamiltonian structure.
\end{remark}
\begin{proof}[Proof of Theorem \ref{20111021:thm}]
We prove, by induction on $n\geq1$, that the rational matrix pseudodifferential operators
$$
H^{[0]}=K,\,H^{[1]}=H,\dots,H^{[n]}
\,\in\Mat{}_{\ell\times\ell}\mc V(\partial)\,,
$$
form a compatible family of non-local Hamiltonian structures on $\mc V$.
For $n=1$, this holds by assumption.
Assuming by induction that the statement holds for $n\geq1$,
we will prove that it holds for $n+1$.
Namely, 
%\begin{enumerate}[(i)]
%\item
%$H^{[n+1]}(\partial)$ is a Hamiltonian structure,
%\item
%for every $m=0,\dots,n$, the Hamiltonian structures $H^{[m]}(\partial)$ and $H^{[n+1]}(\partial)$ are compatible.
%\end{enumerate}
%%
%Equivalently, 
thanks to the observations at the beginning of the section,
we need to prove that
\begin{enumerate}[(i)]
\item
$J(H^{[n+1]},H^{[n+1]})=0$,
\item
$J(H^{[m]},H^{[n+1]})+J(H^{[n+1]},H^{[m]})=0$
for every $m=0,\dots,n$.
\end{enumerate}
By the inductive assumption, $\tilde H=\sum_{i=0}^nx_iH^{[i]}$
is a Hamiltonian structure for every $x_0,\dots,x_n\in\mb F$.
Hence, by Lemma \ref{20111116:lem1},
we get the following Hamiltonian structure for every point $(x_0,\dots,x_n)\in\mb F^{n+1}$:
$$
\tilde H\circ K^{-1}\circ\tilde H
=
\sum_{i,j=0}^nx_ix_j H^{[i]}\circ K^{-1}\circ H^{[j]}
=
\sum_{p=0}^{2n}Q_p(x_0,\dots,x_n) H^{[p]}\,,
$$
where, for $p=0,\dots,2n$,
%$Q_p$ denotes the following homogeneous polynomial
%of degree 2 in the variables $x_0,\dots,x_n$:
\begin{equation}\label{20111116:eq4}
Q_p(x_0,\dots,x_n)=
\sum_{\substack{i,j=0 \\ (i+j=p)}}^n x_ix_j\,.
\end{equation}
We thus get
$$
\begin{array}{l}
\displaystyle{
0=J(\tilde H K^{-1}\tilde H,\tilde H K^{-1}\tilde H)
=
\sum_{p=0}^{2n}Q_p^2(x_0,\dots,x_n) J(H^{[p]},H^{[p]})
} \\
\displaystyle{
+\sum_{\substack{p,q=0 \\ (p<q)}}^{2n}Q_p(x_0,\dots,x_n) Q_q(x_0,\dots,x_n) 
\big(J(H^{[p]},H^{[q]})+J(H^{[q]},H^{[p]})\big)\,,
}
\end{array}
$$
for every $(x_0,\dots,x_n)\in\mb F^{n+1}$.
Note that, by the inductive assumption, $J(H^{[p]},H^{[p]})=0$
for every $0\leq p\leq n$
and $J(H^{[p]},H^{[q]})+J(H^{[q]},H^{[p]})=0$ for every $0\leq p<q\leq n$.
Hence the above equation gives
\begin{equation}\label{20111116:eq5}
\begin{array}{l}
\displaystyle{
\sum_{p=n+1}^{2n} Q_p^2(x_0,\dots,x_n) J(H^{[p]},H^{[p]})
} \\
\displaystyle{
+\sum_{p=0}^n\sum_{q=n+1}^{2n} Q_p(x_0,\dots,x_n) Q_q(x_0,\dots,x_n) 
\big(J(H^{[p]},H^{[q]})+J(H^{[q]},H^{[p]})\big)
} \\
\displaystyle{
+\!\sum_{\substack{p,q=n+1 \\ (p<q)}}^{2n} Q_p(x_0,\dots,x_n) Q_q(x_0,\dots,x_n) 
\big(J(H^{[p]},H^{[q]})+J(H^{[q]},H^{[p]})\big)=0
}
\end{array}
\end{equation}
for every $(x_0,\dots,x_n)\in\mb F^{n+1}$.
Next, we introduce a grading in the algebra of polynomials in $x_0,\dots,x_n$,
letting $\deg(x_i)=i$.
Then $Q_p(x_0,\dots,x_n)$ is homogeneous of degree $p$.
By looking at the terms of degree $d=2n+2$ in equation \eqref{20111116:eq5},
we get
\begin{equation}\label{20111116:eq6}
\begin{array}{l}
\displaystyle{
Q_{n+1}^2(x_0,\dots,x_n) J(H^{[n+1]},H^{[n+1]})
+\sum_{p=2}^n
Q_p(x_0,\dots,x_n) 
} \\
\displaystyle{
Q_{2n+2-p}(x_0,\dots,x_n) 
\big(J(H^{[p]},H^{[2n+2-p]})+J(H^{[2n+2-p]},H^{[p]})\big)=0\,,
}
\end{array}
\end{equation}
while, 
by looking at the terms of degree $d=m+n+1$ with $m\in\{0,\dots,n\}$
in equation \eqref{20111116:eq5}, we get
\begin{equation}\label{20111116:eq7}
\begin{array}{l}
\displaystyle{
\sum_{p=0}^m
Q_p(x_0,\dots,x_n) Q_{m+n+1-p}(x_0,\dots,x_n) 
} \\
\displaystyle{
\big(J(H^{[p]},H^{[m+n+1-p]})+J(H^{[m+n+1-p]},H^{[p]})\big)=0\,,
}
\end{array}
\end{equation}
for every $(x_0,\dots,x_n)\in\mb F^{n+1}$.
To conclude the proof, we only need to show that equations \eqref{20111116:eq6} and \eqref{20111116:eq7}
imply respectively relations (i) and (ii) above.
This is a consequence of the following lemma.
\begin{lemma}\label{20111116:lem2}
\begin{enumerate}[(a)]
\item
For every $n\geq1$,
\begin{equation}\label{20111117:eq1}
Q_{n+1}^2(x_0,\dots,x_n)\not\in
\Span{}_{\mb F}\Big\{Q_p(x_0,\dots,x_n) Q_{2n+2-p}(x_0,\dots,x_n)\Big\}_{2\leq p\leq n}\,.
\end{equation}
\item
For every $n\geq1$ and $m\in\{0,\dots,n\}$,
\begin{equation}\label{20111117:eq2}
Q_m Q_{n+1}
\not\in\Span{}_{\mb F}\Big\{Q_p Q_{m+n+1-p}\Big\}_{0\leq p\leq m-1}\,.
\end{equation}
\end{enumerate}
\end{lemma}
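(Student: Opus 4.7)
The plan is to prove both parts by a direct analysis of monomial coefficients in $\mb F[x_0, \ldots, x_n]$. For a 4-element multiset $M = \{a, b, c, d\} \subset [0, n]$ with $a+b+c+d = N$, the coefficient of $x_a x_b x_c x_d$ in $Q_p Q_{N-p}$ is determined by the pair-partitions of $M$ whose pair-sums equal $(p, N-p)$. Hence, to show that a given product lies outside a span, it suffices to locate monomials whose coefficients in that product cannot be matched by any linear combination from the span.

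For (a), I would split by the parity of $n$. When $n$ is odd, the multiset $\{(n+1)/2, (n+1)/2, (n+1)/2, (n+1)/2\}$ admits a unique pair-partition, with both pair-sums equal to $n+1$. Thus $x_{(n+1)/2}^4$ appears with nonzero coefficient only in $Q_{n+1}^2$ and in none of the products $Q_p Q_{2n+2-p}$ for $p\in[2,n]$; this gives the conclusion at once. When $n$ is even, no single monomial is exclusive to $Q_{n+1}^2$, so I would combine several. The coefficients of $x_1^2 x_n^2$ (whose pair-splits yield sums $(2, 2n)$ or $(n+1, n+1)$) and $x_0 x_2 x_n^2$ (pair-splits giving sums $(2, 2n)$ or $(n, n+2)$) in a hypothetical relation $Q_{n+1}^2 = \sum_{p=2}^n c_p Q_p Q_{2n+2-p}$ force $c_2 = 4$ and $c_n = -2$. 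For $n\geq 4$, the monomial $x_{n/2}^3 x_{n/2+2}$ then produces a contradiction: its multiset has a unique pair-partition $\{n/2,n/2\}\cup\{n/2, n/2+2\}$ with pair-sums $(n, n+2)$, so it contributes only to $Q_n Q_{n+2}$ in the span, and matching coefficients gives $2c_n = 0$, contradicting $c_n = -2$. The case $n=2$ is handled directly because $x_0 x_2 x_n^2 = x_0 x_2^3$ forces $c_2 = 0$, incompatible with $c_2 = 4$ obtained from $x_1^2 x_2^2$.

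For (b), I would use the family of monomials $M_k = x_0 x_m x_k x_{n+1-k}$ for integer $k\in[\max(1,m),\, n+1-m]$. Its three pair-partitions have pair-sums $(m, n+1)$, $(k, m+n+1-k)$, and $(n+1-k, m+k)$, and the range of $k$ ensures the latter two fall outside the forbidden set $[0, m-1]$, so $M_k$ contributes only to $Q_m Q_{n+1}$ among the relevant products, and matching coefficients immediately yields the claim. This handles $m\leq (n+1)/2$. In the complementary regime, I would use an $x_n$-filtration: extracting the coefficient of the top power of $x_n$ in the hypothetical relation reduces the problem to the lower-degree linear independence $x_1 Q_m \not\in \Span\{x_{m+1-p} Q_p\}_{p=0}^{m-1}$, which succumbs to the same type of coefficient-comparison argument (for instance, the coefficient of $x_1^2 x_{m-1}$ forces $c_2 = 2$, and then the coefficient of $x_0 x_2 x_{m-1}$ forces $c_{m-1} = -c_2$, producing an overdetermined system with no simultaneous solution).

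The main obstacle is the case analysis at the boundaries where no single monomial is exclusively contributed by the target product, namely $n$ even in (a) and $m$ large compared to $n$ in (b). In those cases several monomials must be combined and the pair-partition combinatorics carefully tracked to keep all indices within $[0, n]$; the argument is systematic but delicate.
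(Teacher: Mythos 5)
Your approach is genuinely different from the paper's. The paper kills most of the $Q_p$'s at once by substituting $x_0=\dots=x_{k-1}=0$ (with $k$ chosen by parity), after which each claim reduces to comparing two or three completely explicit polynomials; you instead keep all variables and extract coefficients of selected monomials via pair-partitions. Your part (a) is correct as sketched: the monomial $x_{(n+1)/2}^4$ for odd $n$, and the chain $x_1^2x_n^2\Rightarrow c_2=4$, $x_0x_2x_n^2\Rightarrow c_n=-2$, $x_{n/2}^3x_{n/2+2}\Rightarrow c_n=0$ for even $n\geq 4$ (with $n=2$ done by hand), all check out. The first regime of part (b), $m\le(n+1)/2$, is also correct: for $k\in[\max(1,m),n+1-m]$ the three pair-partitions of $\{0,m,k,n+1-k\}$ indeed all have both pair-sums $\geq m$, so $x_0x_mx_kx_{n+1-k}$ occurs only in $Q_mQ_{n+1}$.

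The genuine gap is in the complementary regime $m>(n+1)/2$ of part (b). First, the two constraints you exhibit for the reduced statement, $c_2=2$ and $c_{m-1}=-c_2$, are mutually consistent, so no contradiction has actually been reached; one must continue the chain (e.g.\ the monomials $x_1x_jx_{m-j}$ give $c_{m+1-j}+c_{j+1}=1$ and the monomials $x_0x_jx_{m+1-j}$ give $c_j+c_{m+1-j}=0$, whence $c_j=j=-(m+1-j)$, which is impossible), and this requires tracking index ranges and treating small $m$ separately. Second, the reduction itself via the top power of $x_n$ is not quite as stated: the products $Q_pQ_{m+n+1-p}$ can contain $x_n^2$ (via $Q_{2n}=x_n^2$ when $p=m+1-n$), and for $m=n$ the factor $Q_m$ itself contains $2x_0x_n$, so the coefficient of $x_n^1$ on the left is $2x_0\tilde Q_{n+1}+2x_1\tilde Q_n$ rather than $2x_1Q_n$; neither issue is fatal, but both must be handled. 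It is worth noting that the paper's substitution $x_0=\dots=x_{k-1}=0$ for $m=2k+1$ disposes of exactly this troublesome regime in two lines, since it annihilates every $Q_p$ with $p\le m-2$ and makes $Q_{m-1}Q_{n+2}$ divisible by $x_k^2$ while $Q_mQ_{n+1}$ is not.
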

\begin{proof}
Note that,
$$
\begin{array}{l}
Q_p(0,\dots,0,x_k,\dots,x_n) \\
\displaystyle{
=
\sum_{\substack{i,j=k \\ (i+j=p)}}^nx_ix_j=
\left\{\begin{array}{ll}
0 & \text{ if } p<2k\,,\\
x_k^2 & \text{ if } p=2k\,,\\
\displaystyle{
2x_kx_{p-k}+\dots } & \text{ if } p>2k\,.
\end{array}\right.
}
\end{array}
$$

We prove part (a) separately in the cases when $n$ is even and odd.
If $n=2k-1$ is odd, letting $x_0=\dots=x_{k-1}=0$ we have
$Q_{n+1}=x_k^2\neq0$, and $Q_p=0$ for all $p=2,\dots,n=2k-1$.
This implies \eqref{20111117:eq1} for odd $n$.
If $n=2$, we have $Q_2=2x_0x_2+x_1^2,\,Q_3=2x_1x_2,\,Q_4=x_2^2$,
hence $Q_3^2\not\in\mb FQ_2Q_4$.
If $n=2k$ with $k\geq2$, letting $x_0=\dots=x_{k-1}=0$ we have
$Q_p=0$ for all $p=2,\dots,n-1$,
$Q_n=x_k^2$,
$Q_{n+1}=2x_kx_{k+1}$, 
$Q_{n+2}=2x_kx_{k+2}+x_{k+1}^2$.
Since $Q_{n+1}^2=4x_k^2x_{k+1}^2$ is not a multiple of $Q_nQ_{n+2}=2x_k^3x_{k+2}+x_k^2x_{k+1}^2$,
\eqref{20111117:eq1} holds for even $n$.

Similarly, we prove part (b) separately in the cases when $m$ is even and odd.
If $m=2k$ is even, letting $x_0=\dots=x_{k-1}=0$ we have
$Q_{m}Q_{n+1}=x_k^2Q_{n+1}\neq0$, and $Q_p=0$ for all $p=2,\dots,m-1$.
Hence \eqref{20111117:eq2} holds for even $m$.
For $m=1\leq n$, we have $Q_0=x_0^2,\,Q_1=2x_0x_1,\,Q_{n+1}=2x_1x_n+\dots$.
Therefore, $Q_0Q_{n+2}$ is divisible by $x_0^2$, while $Q_1Q_{n+1}=2x_0x_1(2x_1x_n+\dots)$ is not.
%This proves \eqref{20111117:eq2}.
%
Finally, if $m=2k+1$ is odd, with $k\geq1$, letting $x_0=\dots=x_{k-1}=0$ we have
$Q_p=0$ for all $p=2,\dots,m-2$,
$Q_{m-1}=x_k^2$,
$Q_{m}=2x_kx_{k+1}$, 
$Q_{n+1}=2x_kx_{n+1-k}+2x_{k+1}x_{n-k}+\dots$.
Hence, $Q_{m-1}Q_{n+2}=x_k^2Q_{n+2}$ is divisible by $x_k^2$,
while $Q_mQ_{n+1}=4x_k^2x_{k+1}x_{n+1-k}+4x_kx_{k+1}^2x_{n-k}+\dots$ is not,
proving \eqref{20111117:eq2} for odd $m$.
\end{proof}
\end{proof}
\begin{example}\label{20120224:vic}
Let $K=\partial^3,\,H=\partial^2\circ\frac1u\partial\circ\frac1u\partial^2$. These are compatible Hamiltonian
structures (see \cite{DSKW10}).
Hence, by Theorem \ref{20111021:thm},
$$
H^{[n]}=(H\circ K^{-1})^{n-1}\circ H
=\partial^2\circ (\frac1u\circ \partial)^{2n}\circ\partial
\,\,,\,\,\,\,n\in\mb Z_+\,,
$$
are compatible Hamiltonian structures.
This was proved in \cite{DSKW10} by direct verification,
and deduced from Theorem \ref{20111021:thm} in \cite{TT11}.
\end{example}

%%%%%%%%%%%%%%%%%%%%%%%%%%%%%%%%%%%%%%%%%%%%%%%%%%%%%%%%%%%%%%%%%%%%%%%%%%%%%%%%%%%%%%%%%%%%%%%%%%%%%%%%%%%%%%
%%%%%%%%%%%%%%% Sect 5 %%%%%%%%%%%%%%%%%%%%%%%%%%%%%%%%%%%%%%%%%%%%%%%%%%%%%%%%%%%%%%%%%%%%%%%%%%%%%%%%%%%%%%%
%%%%%%%%%%%%%%%%%%%%%%%%%%%%%%%%%%%%%%%%%%%%%%%%%%%%%%%%%%%%%%%%%%%%%%%%%%%%%%%%%%%%%%%%%%%%%%%%%%%%%%%%%%%%%%

\section{Symplectic structures and Dirac structures in terms of non-local Hamiltonian structures}
\label{sec:5}

\subsection{Simplectic structure as inverse of a non-local Hamiltonian structure}
\label{sec:5.1}

As in the previous sections,
let $\mc V$ be an algebra of differential functions in the variables $u_1,\dots,u_\ell$,
which is a domain, and let $\mc K$ be its field of fractions.

Recall that (see e.g. \cite{BDSK09}) a \emph{symplectic structure} on $\mc V$ 
is an $\ell\times\ell$ matrix differential operator
$S=\big(S_{ij}(\partial)\big)_{i,j\in I}\in\Mat_{\ell\times\ell}\mc V[\partial]$
which is skewadjoint and satisfies the
following \emph{symplectic identity}:
\begin{equation}\label{20111012:eq1}
\sum_{n\in\mb Z_+}\Big(
\frac{\partial S_{ki}(\mu)}{\partial u_j^{(n)}} \lambda^n
-\frac{\partial S_{kj}(\lambda)}{\partial u_i^{(n)}} \mu^n
+(-\lambda-\mu-\partial)^n \frac{\partial S_{ij}(\lambda)}{\partial u_k^{(n)}}
\Big)=0\,.
\end{equation}

We can write the symplectic identity \eqref{20111012:eq1} in terms 
of the \emph{Beltrami} $\lambda$-\emph{bracket}
$\langle\cdot\,_\lambda\,\cdot\rangle:\,\mc V\times\mc V\to\mc V[\lambda]$,
introduced in \cite{BDSK09}.
It is defined as the symmetric $\lambda$-bracket such that $\langle {u_i}_\lambda u_j\rangle=\delta_{ij}$,
and extended by the Master Formula \eqref{20110922:eq1}:
$$
\langle f_\lambda g\rangle
=
\sum_{\substack{ i\in I \\ m,n\in\mb Z_+}}(-1)^m
\frac{\partial g}{\partial u_i^{(n)}}(\lambda+\partial)^{m+n}\frac{\partial f}{u_i^{(m)}}\,.
$$
Then, the symplectic identity \eqref{20111012:eq1} becomes
\begin{equation}\label{20111017:eq1}
\langle {u_j}_\lambda\{{u_i}_\mu{u_k}\}_S\rangle
-\langle {u_i}_\mu\{{u_j}_\lambda{u_k}\}_S\rangle
+\langle {\{{u_j}_\lambda {u_i}\}_S}_{\lambda+\mu}{u_k}\rangle=0
\,,
\end{equation}
where, recalling \eqref{20110922:eq1}, we let $\{{u_j}_\lambda{u_i}\}_S=S_{ij}(\lambda)$.

Note that, 
if $S\in\Mat_{\ell\times\ell}\mc V(\partial)$ 
is a rational matrix pseudodifferential operator with coefficients in $\mc V$,
then, by Corollary \ref{20111007:prop}, all three terms in the LHS of equation \eqref{20111017:eq1}
lie in $\mc V_{\lambda,\mu}$. Hence, equation \eqref{20111017:eq1} still makes sense
(as an equation in $\mc V_{\lambda,\mu}$).
\begin{definition}\label{20111017:def}
A \emph{non-local symplectic structure} on $\mc V$
is a skewadjoint rational matrix pseudodifferential operator 
$S=\big(S_{ij}(\partial)\big)_{i,j\in I}\in\Mat_{\ell\times\ell}\mc V(\partial)$
with coefficients in $\mc V$,
satisftying equation \eqref{20111017:eq1} in $\mc V_{\lambda,\mu}$ for all $i,j,k\in I$.
\end{definition}
\begin{theorem}\label{20111012:thm}
Let $S\in\Mat_{\ell\times\ell}\mc V(\partial)$ be a skewadjoint
rational matrix pseudodifferential operator with coefficients in 
the algebra  of differential functions $\mc V$.
Assume that $S$ is an invertible element of the algebra $\Mat_{\ell\times\ell}\mc V(\partial)$.
Then, $S$ is a non-local symplectic structure on $\mc V$ if and only if 
$S^{-1}$ is a non-local Hamiltonian structure on $\mc V$.
\end{theorem}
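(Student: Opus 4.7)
The plan is to reduce both sides of the ``iff'' to identities on generators and exhibit the Jacobi identity for $H := S^{-1}$ as an invertible ``conjugate'' of the symplectic identity for $S$, so that the two identities are equivalent.

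Setup. I would first observe that $H := S^{-1}$ is skewadjoint: from $S^* = -S$ one gets $(S^{-1})^* = (S^*)^{-1} = -S^{-1}$. By Corollary \ref{20111005:prop3}, $H$ is again a rational matrix pseudodifferential operator (possibly after adjoining to $\mc V$ the coefficients of the inverse, cf.\ Remark \ref{20111216:rem}), so Theorem \ref{20110923:prop}(a)--(c) gives that $\{\cdot\,_\lambda\,\cdot\}_H$ is a well-defined, skewsymmetric, admissible non-local $\lambda$-bracket on $\mc V$. By Theorem \ref{20110923:prop}(d), $H$ is a non-local Hamiltonian structure if and only if the Jacobi identity \eqref{20110922:eq4} holds on all triples of generators $u_i,u_j,u_k$. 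Meanwhile, by definition, $S$ is a non-local symplectic structure if and only if \eqref{20111017:eq1} holds on all triples of generators.

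The heart of the argument is to establish the identity, in $\mc V_{\lambda,\mu}$ for all $i,j,k\in I$,
\begin{equation*}
J(H,H)_{ijk}(\lambda,\mu) \;=\; -\sum_{r,s,t\in I} H_{kr}(\lambda+\mu+\partial)\,\Phi_{rst}(\lambda,\mu)\,\bigl(\text{sandwiched with }H_{sj}(\mu)\text{ and }H_{ti}(\lambda)\bigr),
\end{equation*}
where $\Phi_{rst}(\lambda,\mu)$ denotes the LHS of the symplectic identity \eqref{20111017:eq1} for $S$ on the triple $r,s,t$. To prove this identity I would:
\emph{(i)} expand each of the three terms $J^1,J^2,J^3(H,H)_{ijk}$ by writing $\{u_{\bullet\,\bullet}u_k\}_H=H_{k\bullet}(\bullet)=(S^{-1})_{k\bullet}(\bullet)$ and applying Lemma \ref{20111012:lem} (together with Remark \ref{20111104:rem}) to pull the outer $\{u_\lambda\cdot\}_H$ or $\{\cdot\,_{\lambda+\mu}u_k\}_H$ through the inverse; this converts the three inner brackets into expressions of the form $H_{kr}(\lambda+\mu+\partial)\cdot\{u_{\bullet}{}_{\bullet}s_{rt;n}\}_H\cdot (\mu+\partial)^n H_{tj}(\mu)$ (and analogues), the outer factor $H_{kr}$ in the third term arising from $(S^*)^{-1}=-S^{-1}$;
\emph{(ii)} expand the remaining inner brackets $\{u_{\bullet}{}_{\bullet}s_{rt;n}\}_H$ with the Master Formula \eqref{20110922:eq1}, which extracts a further factor of $H$ and leaves behind only the partial derivatives $\frac{\partial s_{rt;n}}{\partial u_p^{(m)}}$, i.e.\ the Beltrami brackets $\langle u_\bullet{}_\bullet s_{rt;n}\rangle$;
\emph{(iii)} relabel and collect: the combinations coming from $J^1$, $J^2$, and $J^3$ match term-by-term the three summands of the symplectic identity \eqref{20111017:eq1} on indices $r,s,t$ (the $(-\lambda-\mu-\partial)^n$ shift of the third summand of \eqref{20111017:eq1} comes precisely from the shift $\lambda+\mu+\partial$ in the right-Leibniz expansion of $J^3$).

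Finally, since $H$ is invertible in $\Mat_{\ell\times\ell}\mc V(\partial)$, left composition by $H_{kr}(\lambda+\mu+\partial)$ and right composition by $H_{sj}(\mu)$ and $H_{ti}(\lambda)$ are injective operations on $\mc V_{\lambda,\mu}$. Hence the displayed identity shows that $J(H,H)_{ijk}(\lambda,\mu)=0$ for every $i,j,k$ if and only if $\Phi_{rst}(\lambda,\mu)=0$ for every $r,s,t$, which is exactly the symplectic identity for $S$. This proves the biconditional. The principal obstacle is the bookkeeping in steps (i)--(iii): one must carefully track the ordering of the sandwiched shifts $(\lambda+\partial)^n, (\mu+\partial)^n, (\lambda+\mu+\partial)^n$, the signs produced by skewadjointness of $S$ and by skewsymmetry of $\{\cdot\,_\lambda\,\cdot\}_H$ (especially in the right-Leibniz expansion used for $J^3$), and to recognize the resulting combination of Beltrami brackets as the symplectic identity embedded between three copies of $H$. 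Admissibility from Corollary \ref{20111007:prop} ensures throughout that all intermediate expressions genuinely live in $\mc V_{\lambda,\mu}$, so that these manipulations are algebraically valid.
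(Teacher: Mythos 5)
Your proposal is correct and follows essentially the same route as the paper: it expands the three Jacobi terms for $H=S^{-1}$ via Lemma \ref{20111012:lem} (with Remark \ref{20111104:rem}) and the Master Formula, arrives at the sandwich identity expressing $J(H,H)_{ijk}$ as $S^{-1}$ composed with the symplectic expression for $S$ (the paper's equation \eqref{20111018:eq4}), and inverts. The only cosmetic difference is that where you invoke injectivity of composition with the invertible $H$, the paper makes this explicit by conjugating back with $S_{\gamma k}(\lambda+\mu+\partial)$, $S_{i\alpha}(\lambda)$, $S_{j\beta}(\mu)$ to recover the symplectic identity verbatim.
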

\begin{proof}
Clearly, $S$ is skewadjoint if and only if $S^{-1}$ is skewadjoint.
Hence, recalling the Definition \ref{20111007:def} of non-local Hamiltonian structure,
we only need to show that equation \eqref{20111012:eq1} in $\mc V_{\lambda,\mu}$ 
is equivalent to the Jacobi identity \eqref{20110922:eq4}, again in $\mc V_{\lambda,\mu}$, 
for $H=S^{-1}$.
By equation \eqref{20111012:eq2a}, Remark \ref{20111104:rem},
and the Master Formula \eqref{20110922:eq1}, we have,
letting $S_{ij}(\partial)=\sum_{p=-\infty}^Ns_{ij;p}\partial^p$,
\begin{equation}\label{20111018:eq1}
\begin{array}{l}
\displaystyle{
\{{u_i}_\lambda\{{u_j}_\mu{u_k}\}_H\}_H
=
\big\{{u_i}_\lambda (S^{-1})_{kj}(\mu)\big\}_{S^{-1}}
=
}\\
\displaystyle{
-\sum_{r,t=1}^\ell\sum_{p=-\infty}^N
(S^{-1})_{kr}(\lambda+\mu+\partial) 
\{{u_i}_\lambda s_{rt;p}\}_{S^{-1}} (\mu+\partial)^p (S^{-1})_{tj}(\mu)
}\\
\displaystyle{
=
-\sum_{r,s,t\in I,\,n\in\mb Z_+}
(S^{-1})_{kr}(\lambda+\mu+\partial) 
\Big(
(\lambda+\partial)^n
(S^{-1})_{si}(\lambda)
\Big)
}\\
\displaystyle{
\,\,\,\,\,\,\,\,\,\,\,\,\,\,\,\,\,\,\,\,\,\,\,\,\,\,\,\,\,\,\,\,\,\,\,\,\,\,\,\,\,\,\,\,\,\,\,\,\,\,\,\,\,\,
\,\,\,\,\,\,\,\,\,\,\,\,\,\,\,\,\,\,\,\,\,\,\,\,\,\,\,\,\,\,\,\,\,\,\,\,\,\,\,\,\,\,\,\,\,\,\,\,\,\,\,\,\,\,
\Big(
\frac{\partial S_{rt}(\mu+\partial)}{\partial u_s^{(n)}}
(S^{-1})_{tj}(\mu)
\Big)
\,.
}
\end{array}
\end{equation}
Exchanging $i$ with $j$ and $\lambda$ with $\mu$, we get
\begin{equation}\label{20111018:eq2}
\begin{array}{c}
\displaystyle{
\{{u_j}_\mu\{{u_i}_\lambda{u_k}\}_H\}_H
=
-\sum_{r,s,t\in I,\,n\in\mb Z_+}
(S^{-1})_{kr}(\lambda+\mu+\partial) 
}\\
\displaystyle{
\Big(
(\mu+\partial)^n
(S^{-1})_{tj}(\mu)
\Big)
\Big(
\frac{\partial S_{rs}(\lambda+\partial)}{\partial u_t^{(n)}}
(S^{-1})_{si}(\lambda)
\Big)
\,.
}
\end{array}
\end{equation}
Similarly, by equation \eqref{20111012:eq2b} and Remark \ref{20111104:rem}, we have,
using the assumption that $S$ is skewadjoint,
\begin{equation}\label{20111018:eq3}
\begin{array}{l}
\displaystyle{
\{{\{{u_i}_\lambda{u_j}\}_H}_{\lambda+\mu}{u_k}\}_H
=
\big\{(S^{-1})_{ji}(\lambda)_{\lambda+\mu}{u_k}\big\}_{S^{-1}}
}\\
\displaystyle{
=
\sum_{s,t=1}^\ell\sum_{p=-\infty}^N
{\{{s_{ts;p}}_{\lambda+\mu+\partial}{u_k}\}_{S^{-1}}}_\to
(S^{-1})_{tj}(\mu) (\lambda+\partial)^p (S^{-1})_{si}(\lambda)
}\\
\displaystyle{
=
\sum_{r,s,t\in I,\,m\in\mb Z_+}
(S^{-1})_{kr}(\lambda+\mu+\partial)
(-\lambda-\mu-\partial)^m
}\\
\displaystyle{
\,\,\,\,\,\,\,\,\,\,\,\,\,\,\,\,\,\,\,\,\,\,\,\,\,\,\,\,\,\,\,\,\,\,\,\,\,\,\,\,\,\,\,\,\,\,\,\,\,\,\,\,\,\,
\,\,\,\,\,\,\,\,\,\,\,\,\,
\Big(
(S^{-1})_{tj}(\mu)
\frac{\partial S_{ts}(\lambda+\partial)}{\partial u_r^{(m)}} (S^{-1})_{si}(\lambda)
\Big)
\,.
}
\end{array}
\end{equation}
Combining equations \eqref{20111018:eq1}, \eqref{20111018:eq2} and \eqref{20111018:eq3},
we get that the LHS of the Jacobi identity \eqref{20110922:eq4} is
\begin{equation}\label{20111018:eq4}
\begin{array}{l}
\displaystyle{
\{{u_i}_\lambda\{{u_j}_\mu {u_k}\}_H\}_H-\{{u_j}_\mu\{{u_i}_\lambda {u_k}\}_H\}_H
-\{{\{{u_i}_\lambda {u_j}\}_H}_{\lambda+\mu} {u_k}\}_H
} \\
\displaystyle{
=\sum_{r,s,t\in I,\,n\in\mb Z_+}
(S^{-1})_{kr}(\lambda+\mu+\partial) 
\Bigg(
-\frac{\partial S_{rt}(y)}{\partial u_s^{(n)}}x^n
+\frac{\partial S_{rs}(x)}{\partial u_t^{(n)}}y^n
} \\
\displaystyle{
-(-x-y-\partial)^n
\frac{\partial S_{ts}(x)}{\partial u_r^{(n)}}
\Bigg)
\Big(\Big|_{x=\lambda+\partial}(S^{-1})_{si}(\lambda)\Big)
\Big(\Big|_{y=\mu+\partial}(S^{-1})_{tj}(\mu)\Big)
\,,
}
\end{array}
\end{equation}
where we used the notation introduced in \eqref{20111018:eq5}.
Clearly, the RHS of \eqref{20111018:eq4} is zero, provided that the symplectic identity \eqref{20111012:eq1} holds.
For the opposite implication,
we have, by \eqref{20111018:eq4},
$$
\begin{array}{l}
\displaystyle{
\sum_{i,j,k\in I}
S_{\gamma k}(x+y+\partial)
\Bigg(\{{u_i}_x\{{u_j}_y {u_k}\}_H\}_H-\{{u_j}_y\{{u_i}_x {u_k}\}_H\}_H
} \\
\displaystyle{
-\{{\{{u_i}_x {u_j}\}_H}_{x+y} {u_k}\}_H
\Bigg)
\Big(\Big|_{x=\lambda+\partial} S_{i\alpha}(\lambda)\Big)
\Big(\Big|_{y=\mu+\partial}S_{j\beta}(\mu)\Big)
} \\
\displaystyle{
=
\sum_{n\in\mb Z_+}
\Bigg(
-\frac{\partial S_{\gamma\beta}(\mu)}{\partial u_\alpha^{(n)}}\lambda^n
+\frac{\partial S_{\gamma\alpha}(\lambda)}{\partial u_\beta^{(n)}}\mu^n
-(-\lambda-\mu-\partial)^n
\frac{\partial S_{\beta\alpha}(\lambda)}{\partial u_\gamma^{(n)}}
\Bigg)
\,.
}
\end{array}
$$
Hence, equation \eqref{20110922:eq4} implies equation \eqref{20111012:eq1}.
\end{proof}

\subsection{Dirac structure in terms of non-local Hamiltonian structure}
\label{sec:5.2}

Let $\mc V$ be an algebra of differential functions, which is a domain,
and let $\mc K$ be its field of fractions.
Given a set $J$ and an element $X\in\mc V^J$,
we define the \emph{Frechet derivative} of $X$ as the differential operator
$D_X(\partial):\,\mc V^\ell\to\mc V^J$ given by
\begin{equation}\label{20111020:eq1}
\big(D_X(\partial)P\big)_i
=\sum_{n\in\mb Z_+}\sum_{j\in I}\frac{\partial X_i}{\partial u_j^{(n)}} \partial^n P_j\,.
\end{equation}
Its adjoint operator is the map $D_X^*(\partial):\,\mc V^{\oplus J}\to\mc V^{\oplus\ell}$ given by:
\begin{equation}\label{20111020:eq2}
\big(D_X^*(\partial)Y\big)_i
=\sum_{n\in\mb Z_+}\sum_{j\in I}(-\partial)^n\Big(\frac{\partial X_j}{\partial u_i^{(n)}} Y_j\Big)\,.
\end{equation}
Here and further, for a possibly infinite set $J$, $\mc V^{\oplus J}$
denotes the space of column vectors in $\mc V^J$ with only finitely many non-zero entries.
(Though in this paper we do not consider any example with infinite $\ell$,
we still distinguish $\mc V^\ell$ and $\mc V^{\oplus\ell}$ as a book-keeping device).

The following identity can be checked directly and it will be useful later:
\begin{equation}\label{20120405:eq1}
\tint X\cdot D_Y(\partial)P+\tint Y\cdot D_X(\partial)P
=\tint P\cdot \frac{\delta}{\delta u}(X\cdot Y)\,,
\end{equation}
for all $X\in\mc V^J,\,Y\in\mc V^{\oplus J},\,P\in\mc V^\ell$.

We have the usual pairing $\mc V^{\oplus\ell}\times\mc V^\ell\to\mc V/\partial\mc V$
given by $(F|P)=\tint F\cdot P$.
This pairing is non-degenerate (see e.g. \cite[Prop.1.3(a)]{BDSK09}).
We extend it to a non-degenerate  symmetric bilinear form
\begin{equation}\label{20111020:eq3}
\langle\cdot\,|\,\cdot\rangle\,:\,\,
\big(\mc V^{\oplus\ell}\oplus\mc V^\ell\big)\times\big(\mc V^{\oplus\ell}\oplus\mc V^\ell\big)\to\mc V/\partial\mc V\,,
\end{equation}
given by
$\langle F\oplus P|G\oplus Q\rangle=\tint (F\cdot Q+G\cdot P)$.

%The space $\mc V^\ell$ has the following Lie algebra bracket:
%\begin{equation}\label{20111020:eq4}
%[P,Q]=D_Q(\partial)P-D_P(\partial)Q\,.
%\end{equation}

The \emph{Courant-Dorfman product} is the following product 
on the space $\mc V^{\oplus\ell}\oplus\mc V^{\ell}$:
\begin{equation}\label{20111020:eq5}
(F\oplus P)\circ(G\oplus Q)
=
\big(
D_G(\partial)P+D_P^*(\partial)G-D_F(\partial)Q+D_F^*(\partial)Q
\big)
\oplus
[P,Q]\,,
\end{equation}
where, for $P,Q\in\mc V^\ell$, we let
\begin{equation}\label{20120126:eq1}
[P,Q]
=
D_Q(\partial)P-D_P(\partial)Q\,.
\end{equation}
The above formula takes a simpler form when $F$ and $G$ are variational derivatives
(see \cite[Rem.4.6]{BDSK09}):
\begin{equation}\label{20120127:eq1}
\Big(\frac{\delta \tint f}{\delta u}\oplus P\Big)\circ\Big(\frac{\delta \tint g}{\delta u}\oplus Q\Big)
=
\frac{\delta}{\delta u}\Big(\int P\cdot\frac{\delta g}{\delta u}\Big)\oplus[P,Q]
\,.
\end{equation}

\begin{remark}\label{20111020:rem1}
All the above notions have a natural interpretation from the point of view of variational calculus.
Indeed, the space $\mc V^\ell$ is naturally identified with the Lie algebra of evolutionary vector fields $\mf g^\partial$,
and the space $\mc V^{\oplus\ell}$ is naturally identified with the space 
of variational 1-forms $\Omega^1$.
Then the contraction of variational 1-forms by evolutionary vector fields gives the inner product \eqref{20111020:eq3};
the Courant-Dorfman product corresponds to the derived bracket
$[\cdot\,,\,\cdot]_d$, where $[\cdot\,,\,\cdot]$ is the Lie superalgebra bracket 
on the space of endomorphisms of the space of all de Rham forms over $\mc V$,
and $d=\ad(\delta)$, where $\delta$ is the de Rham differential, \cite[Prop.4.2]{BDSK09}.
\end{remark}

\begin{definition}[\cite{Dor93,BDSK09}]\label{20111020:def}
A \emph{Dirac structure} is a subspace $\mc L\subset\mc V^{\oplus\ell}\oplus\mc V^{\ell}$,
which is maximal isotropic with respect to the inner product \eqref{20111020:eq3},
and which is closed under the Courant-Dorfman product \eqref{20111020:eq5}.
\end{definition}
Given two $\ell\times\ell$ 
matrix differential operators $A,B\in\Mat_{\ell\times\ell}\mc V[\partial]$
consider the following subspace of $\mc V^{\oplus\ell}\oplus\mc V^{\ell}$:
\begin{equation}\label{20120109:eq1}
\mc L_{A,B}=\big\{B(\partial)X\oplus A(\partial)X\,\big|\,X\in\mc V^{\oplus\ell}\big\} \,.
%\subset\mc V^{\oplus\ell}\oplus\mc V^{\ell}
\end{equation}
\begin{proposition}\label{20120103:propa}
The subspace $\mc L_{A,B}\subset\mc V^{\oplus\ell}\oplus\mc V^{\ell}$ is isotropic
with respect to the inner product \eqref{20111020:eq3}
if and only if 
\begin{equation}\label{20120107:eq1}
A^*\circ B+B^*\circ A=0\,.
\end{equation}
If, moreover, $\det B\neq0$ (i.e. $B$ is invertible 
in the algebra $\Mat_{\ell\times\ell} \mc K((\partial^{-1}))$),
then \eqref{20120107:eq1} holds if and only if
$A\circ B^{-1}\in\Mat_{\ell\times\ell} \mc K((\partial^{-1}))$ 
is skewadjiont,
while if $\det A\neq0$, 
then \eqref{20120107:eq1} holds if and only if
$B\circ A^{-1}\in\Mat_{\ell\times\ell} \mc K((\partial^{-1}))$
is skewadjiont.
\end{proposition}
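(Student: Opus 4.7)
The plan is to reduce the whole statement to a single integration-by-parts computation together with the non-degeneracy of the pairing $\mc V^{\oplus\ell}\times\mc V^\ell\to\mc V/\partial\mc V$. The three equivalences of the proposition all pivot on the identity $A^*\circ B+B^*\circ A=0$; once this is shown to encode isotropy of $\mc L_{A,B}$, the invertibility hypotheses on $B$ or $A$ let us factor it and read off skewadjointness of $A\circ B^{-1}$ or $B\circ A^{-1}$ respectively.

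First I would unwind the definition of isotropy: $\mc L_{A,B}$ is isotropic with respect to \eqref{20111020:eq3} if and only if
\begin{equation*}
\langle B(\partial)X\oplus A(\partial)X\,|\,B(\partial)Y\oplus A(\partial)Y\rangle
=\tint\bigl(B(\partial)X\cdot A(\partial)Y+B(\partial)Y\cdot A(\partial)X\bigr)=0
\end{equation*}
for all $X,Y\in\mc V^{\oplus\ell}$. Using integration by parts (which is the defining property of the adjoint differential operator modulo $\partial\mc V$), each summand rewrites as
\begin{equation*}
\tint B(\partial)X\cdot A(\partial)Y=\tint X\cdot(B^*\!\circ A)(\partial)Y,
\qquad
\tint B(\partial)Y\cdot A(\partial)X=\tint X\cdot(A^*\!\circ B)(\partial)Y,
\end{equation*}
so the isotropy condition collapses to $\tint X\cdot\bigl(A^*\!\circ B+B^*\!\circ A\bigr)(\partial)Y=0$ for all $X,Y\in\mc V^{\oplus\ell}$. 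Applying non-degeneracy of the pairing $\mc V^{\oplus\ell}\times\mc V^\ell\to\mc V/\partial\mc V$ in the first slot gives $(A^*\!\circ B+B^*\!\circ A)(\partial)Y=0$ for every $Y$, and a standard argument (testing on $Y=e_j f$ for suitable $f\in\mc V$) forces the differential operator itself to vanish. This proves the first equivalence of the proposition.

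For the second claim, when $\det B\ne0$ set $H=A\circ B^{-1}\in\Mat_{\ell\times\ell}\mc K((\partial^{-1}))$, so that $A=H\circ B$ and $A^*=B^*\circ H^*$. Substituting,
\begin{equation*}
A^*\!\circ B+B^*\!\circ A=B^*\circ(H^*+H)\circ B,
\end{equation*}
and since both $B$ and $B^*$ are invertible in $\Mat_{\ell\times\ell}\mc K((\partial^{-1}))$, this vanishes if and only if $H^*+H=0$, i.e. $H$ is skewadjoint. The third claim is symmetric: with $\det A\ne 0$, put $S=B\circ A^{-1}$, factor $A^*\!\circ B+B^*\!\circ A=A^*\circ(S+S^*)\circ A$, and again invertibility of $A$ lets us conclude that \eqref{20120107:eq1} is equivalent to $S$ being skewadjoint.

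The only subtle point is the appeal to non-degeneracy to pass from $\tint X\cdot M(\partial)Y=0$ (for all $X,Y\in\mc V^{\oplus\ell}$) to $M=0$ as a differential operator; this is routine given that $\mc V$ is an algebra of differential functions (so it contains $R_\ell$ and thus enough test elements), but it is the one spot where the set-up of the paper is genuinely used. Everything else is bookkeeping with adjoints and the fact that the skewfield structure of $\Mat_{\ell\times\ell}\mc K((\partial^{-1}))$ (Proposition \ref{20111005:prop2}) makes the factorization arguments legitimate.
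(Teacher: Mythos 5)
Your proof is correct and follows essentially the same route as the paper: compute the inner product of two elements of $\mc L_{A,B}$, integrate by parts to express it as $\tint X\cdot\big(A^*(\partial)B(\partial)+B^*(\partial)A(\partial)\big)Y$, and invoke non-degeneracy of the pairing; the factorizations $B^*\circ(H+H^*)\circ B$ and $A^*\circ(S+S^*)\circ A$ are exactly the "straightforward" remaining statements the paper leaves to the reader. Your explicit remark about passing from $M(\partial)Y=0$ for all $Y$ to $M=0$ as an operator is a welcome bit of care that the paper elides.
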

\begin{proof}
For $X,Y\in\mc V^{\oplus\ell}$ we have
$$
\langle B(\partial)X\oplus A(\partial)X\,|\,B(\partial)Y\oplus A(\partial)Y\rangle
=\tint Y\cdot\big(A^*(\partial)B(\partial)+B^*(\partial)A(\partial)\big)X\,.
$$
Hence, due to non-degeeracy of the pairing $(F|P)=\tint F\cdot P$,
the space $\mc L_{A,B}$ is isotropic if and only if \eqref{20120107:eq1} holds.
The remaining statements are straightforward.
\end{proof}
\begin{example}\label{20120124:ex0}
Letting $A\in\Mat_{\ell\times\ell}\mc V$ and $B=\id_\ell\,\partial$,
condition \eqref{20120107:eq1} holds
if and only if $A$ is a symmetric matrix with entries in $\mc C\subset\mc V$ 
(the subring of constant functions).
In this case $AB^{-1}$ is a skewadjont matrix pseudodifferential operator
and 
$\mc L_{A,B}=\big\{AX\oplus\partial X\,\big|\,X\in\mc V^{\oplus\ell}\big\}$ 
is an isotropic subspace of $\mc V^{\oplus\ell}\oplus\mc V^\ell$.
It is not hard to show directly that
$\mc L_{A,B}$ is maximal isotropic if and only if 
the matrix $A$ is non-degenerate.
When $\mc V=\mc K$ is a differential field, this is a corollary of the following general result:
\end{example}
\begin{proposition}[\cite{CDSK12b}]\label{prop:max-isotrop}
Let $\mc K$ be a differential field, 
and let $H=AB^{-1}$ be a minimal fractional decomposition
of the skewadjoint rational matrix pseudodifferential operator $H\in\Mat_{\ell\times\ell}\mc K(\partial)$.
Then the subspace $\mc L_{A,B}\subset\mc K^{\oplus\ell}\oplus\mc K^\ell$
is maximal isotropic with respect to the inner product \eqref{20111020:eq3}.
\end{proposition}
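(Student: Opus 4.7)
The plan is to prove the two inclusions $\mc L_{A,B}\subset\mc L_{A,B}^\perp$ (isotropy) and $\mc L_{A,B}^\perp\subset\mc L_{A,B}$ (maximality) separately. Isotropy is immediate: since $B^{-1}$ must exist we have $\det B\neq0$, and then the skewadjointness of $H=AB^{-1}$ is equivalent, by Proposition~\ref{20120103:propa}, to the operator identity $A^*\circ B+B^*\circ A=0$, which is exactly the condition for $\mc L_{A,B}$ to be isotropic with respect to \eqref{20111020:eq3}.

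For the reverse inclusion I would first unravel the orthogonality condition. For $F\oplus P\in\mc K^{\oplus\ell}\oplus\mc K^\ell$, integration by parts gives
\[
\langle F\oplus P\,|\,B(\partial)X\oplus A(\partial)X\rangle
=\tint F\cdot A(\partial)X+B(\partial)X\cdot P
=\tint X\cdot\bigl(A^*(\partial)F+B^*(\partial)P\bigr),
\]
so by non-degeneracy of the standard pairing $F\oplus P\in\mc L_{A,B}^\perp$ iff $A^*(\partial)F+B^*(\partial)P=0$. The task thus reduces to the following: for any such pair $(F,P)$ construct an element $X\in\mc K^\ell$ with $B(\partial)X=F$ and $A(\partial)X=P$ simultaneously.

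To produce $X$ I would pass to a differential field extension $\tilde{\mc K}\supset\mc K$ large enough that the inhomogeneous linear system $B(\partial)X=F$ admits a solution $\tilde X\in\tilde{\mc K}^\ell$ (possible since $\det B\neq0$). Combining the hypothesis $A^*F+B^*P=0$ with the skewadjointness identity $A^*B=-B^*A$ gives
\[
B^*(A\tilde X-P)=B^*A\tilde X-B^*P=-A^*B\tilde X-B^*P=-A^*F-B^*P=0,
\]
so the defect $A\tilde X-P$ lies in $\ker B^*\subset\tilde{\mc K}^\ell$. The minimality condition $\ker A\cap\ker B=0$ from Proposition~\ref{prop:minimal-fraction}(a), applied in $\tilde{\mc K}$, implies that the restriction $A|_{\ker B}\colon\ker B\to\ker B^*$ (well-defined, since $B^*AY=-A^*BY=0$ for $Y\in\ker B$) is injective. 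A dimension count over the constants of $\tilde{\mc K}$ — both $\ker B$ and $\ker B^*$ have dimension equal to $\deg_\xi\det B$ — then forces it to be an isomorphism. Hence some $Y\in\ker B$ satisfies $AY=P-A\tilde X$, and $X:=\tilde X+Y$ solves both equations in $\tilde{\mc K}^\ell$; minimality also gives its uniqueness.

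The remaining step, which I expect to be the main obstacle, is descending this uniquely determined $X$ back from $\tilde{\mc K}^\ell$ to $\mc K^\ell$. The natural route is differential Galois theory: choose $\tilde{\mc K}$ to be a Picard-Vessiot extension so that the fixed field of the group $G$ of differential $\mc K$-automorphisms of $\tilde{\mc K}$ is precisely $\mc K$; then for every $\sigma\in G$ the element $\sigma X\in\tilde{\mc K}^\ell$ also satisfies $B(\sigma X)=F$ and $A(\sigma X)=P$, so uniqueness forces $\sigma X=X$ and hence $X\in\mc K^\ell$. A more constructive alternative would be to exhibit a matrix Bezout-type identity for minimal fractional decompositions in $\Mat_{\ell\times\ell}\mc K[\partial]$, expressing $X$ as an explicit differential-polynomial combination of $F$ and $P$ and bypassing the extension entirely; I expect this is the route actually taken in \cite{CDSK12b}, where the arithmetic content of Proposition~\ref{prop:minimal-fraction} is put to its most substantive use.
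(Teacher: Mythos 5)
The paper itself does not prove this proposition---it is imported verbatim from \cite{CDSK12b}---so there is no in-paper proof to compare against; I can only assess your argument on its own terms. Its skeleton is sound and, I believe, essentially the intended one: isotropy is equivalent to $A^*\circ B+B^*\circ A=0$ (Proposition \ref{20120103:propa}); the orthogonal complement is cut out by $A^*(\partial)F+B^*(\partial)P=0$; and maximality reduces to solving the overdetermined system $B(\partial)X=F$, $A(\partial)X=P$ over $\mc K$. Your existence argument over an extension $\tilde{\mc K}$ is correct: the defect $A\tilde X-P$ lands in $\ker B^*$ by skewadjointness, $A|_{\ker B}\colon\ker B\to\ker B^*$ is injective by minimality (Proposition \ref{prop:minimal-fraction}(a)), and the equality $\dim\ker B=\deg_\xi\det B=\dim\ker B^*$ over the constants of a linearly closed extension forces it to be onto---provided you choose $\tilde{\mc K}$ large enough to contain simultaneously a particular solution of $BX=F$ and the \emph{full} solution spaces of both $B$ and $B^*$, which you should say explicitly.

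The genuine soft spot is the descent, which you yourself flag as the main obstacle. As written, it invokes the fundamental theorem of Picard--Vessiot theory ($\tilde{\mc K}^G=\mc K$), which requires the constants of $\mc K$ to be algebraically closed; for a general differential field one needs a further descent over the constants that you do not address. Your own closing suggestion repairs this and renders the Galois machinery unnecessary: over the field $\mc K$, minimality of $H=AB^{-1}$ is equivalent to $A$ and $B$ having only invertible common right divisors, and since $\mc K[\partial]$ is a left and right principal ideal domain this yields a Bezout identity $U(\partial)A(\partial)+V(\partial)B(\partial)=\id$ with $U,V\in\Mat_{\ell\times\ell}\mc K[\partial]$. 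Once the solution $X\in\tilde{\mc K}^\ell$ is known to exist, $X=U(\partial)A(\partial)X+V(\partial)B(\partial)X=U(\partial)P+V(\partial)F$ manifestly lies in $\mc K^\ell$, since $U,V,F,P$ are all defined over $\mc K$. Note that the extension step is still needed for existence---the three identities $A^*B+B^*A=0$, $A^*F+B^*P=0$, $UA+VB=\id$ do not formally imply that $UP+VF$ solves the system---but the descent collapses to this one line. With that substitution your proof is complete.
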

\begin{proposition}\label{20120103:propb}
Suppose that $A,B\in\Mat_{\ell\times\ell}\mc V[\partial]$
satisfy equation \eqref{20120107:eq1}.
Then the following conditions are equivalent:
\begin{enumerate}[(i)]
\item
$\langle X\circ Y,Z\rangle=0$ for all $X,Y,Z\in\mc L_{A,B}$.
\item 
for every $F,G\in\mc V^{\ell}$ one has:
\begin{equation}\label{20120103:eq1}
\begin{array}{l}
A^*(\partial)D_{B(\partial)G}(\partial)A(\partial)F
+A^*(\partial)D_{A(\partial)F}^*(\partial)B(\partial)G \\
-A^*(\partial)D_{B(\partial)F}(\partial)A(\partial)G
+A^*(\partial)D_{B(\partial)F}^*(\partial)A(\partial)G \\
+B^*(\partial)D_{A(\partial)G}(\partial)A(\partial)F
-B^*(\partial)D_{A(\partial)F}(\partial)A(\partial)G\,=\,0\,.
\end{array}
\end{equation}
\item
for every $i,j,k\in I$, one has in the space $\mc V[\lambda,\mu]$:
\end{enumerate}
\begin{equation}\label{20120103:eq2}
\begin{array}{l}
\displaystyle{
\sum_{\substack{s,t\in I \\ n\in\mb Z_+}}
\!\Bigg(\!
A^*_{ks}(\lambda+\mu+\partial) \bigg(
\frac{\partial B_{sj}(\mu)}{\partial u_t^{(n)}} (\lambda+\partial)^nA_{ti}(\lambda)
-
\frac{\partial B_{si}(\lambda)}{\partial u_t^{(n)}} (\mu+\partial)^nA_{tj}(\mu)
\!\bigg)
} \\
\displaystyle{
+B^*_{ks}(\lambda+\mu+\partial) \bigg(
\frac{\partial A_{sj}(\mu)}{\partial u_t^{(n)}} 
(\lambda+\partial)^nA_{ti}(\lambda)
-
\frac{\partial A_{si}(\lambda)}{\partial u_t^{(n)}} 
(\mu+\partial)^nA_{tj}(\mu)
\bigg)
} \\
\displaystyle{
+A^*_{ks}(\lambda+\mu+\partial) 
(-\lambda-\mu-\partial)^n
\bigg(
\!
\frac{\partial A_{ti}(\lambda)}{\partial u_s^{(n)}} 
B_{tj}(\mu)
+
\frac{\partial B_{ti}(\lambda)}{\partial u_s^{(n)}} 
A_{tj}(\mu)
\bigg)
\!\Bigg)
=\,0\,.
}
\end{array}
\end{equation}
\end{proposition}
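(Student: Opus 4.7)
The plan is to establish the chain (i) $\Leftrightarrow$ (ii) $\Leftrightarrow$ (iii) via two distinct mechanisms: the first equivalence will come from non-degeneracy of the pairing $(F|P) = \tint F \cdot P$, and the second from a formal generating-function substitution analogous to the proof of Theorem \ref{20110923:prop}(d).

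For (i) $\Leftrightarrow$ (ii), I would pick arbitrary $F, G, H \in \mc V^\ell$ and set $X = B(\partial)F \oplus A(\partial)F$, $Y = B(\partial)G \oplus A(\partial)G$, $Z = B(\partial)H \oplus A(\partial)H$ in $\mc L_{A,B}$. A direct application of the Courant-Dorfman product \eqref{20111020:eq5} and of \eqref{20120126:eq1} gives
$$
X \circ Y = \bigl(D_{B(\partial)G}A(\partial)F + D_{A(\partial)F}^*B(\partial)G - D_{B(\partial)F}A(\partial)G + D_{B(\partial)F}^*A(\partial)G\bigr) \oplus \bigl(D_{A(\partial)G}A(\partial)F - D_{A(\partial)F}A(\partial)G\bigr).
$$
Plugging this into the pairing \eqref{20111020:eq3} with $Z$ and using the adjunction $\tint (L\phi) \cdot \psi = \tint \phi \cdot (L^*\psi)$ to move $A(\partial)$ and $B(\partial)$ off $H$, I obtain $\langle X \circ Y \,|\, Z\rangle = \tint H \cdot \Phi(F, G)$, where $\Phi(F, G) \in \mc V^\ell$ is precisely the LHS of \eqref{20120103:eq1}. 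Non-degeneracy of the pairing then gives that (i) holds for all $X, Y, Z \in \mc L_{A, B}$ if and only if $\Phi(F, G) = 0$ for all $F, G$, which is (ii).

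For (ii) $\Leftrightarrow$ (iii), I would use the formal exponential substitution $F_r = \delta_{ri}e^{\lambda x}$, $G_p = \delta_{pj}e^{\mu x}$, understood as a device to extract the coefficient of $\lambda^m\mu^n$ from the finite-sum polynomial identity \eqref{20120103:eq1}, which is multilinear in the formal variables $F_i^{(m)}, G_j^{(n)}$. The key fact is $M(\partial)(f e^{\alpha x}) = \bigl(M(\alpha + \partial) f\bigr) e^{\alpha x}$ for $M \in \mc V[\partial]$, which yields $(A(\partial)F)_t = A_{ti}(\lambda)e^{\lambda x}$, $(B(\partial)G)_t = B_{tj}(\mu)e^{\mu x}$, and analogous formulas for $B(\partial)F$ and $A(\partial)G$. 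A term-by-term expansion then shows that, after dividing by $e^{(\lambda+\mu)x}$, the $k$-th component of $\Phi(F, G)$ equals the LHS of \eqref{20120103:eq2}: the four terms with ordinary Frechet derivatives yield the four $\frac{\partial B}{\partial u_t^{(n)}}$ and $\frac{\partial A}{\partial u_t^{(n)}}$ summands with factors $(\lambda+\partial)^n$ or $(\mu+\partial)^n$, while the two adjoint terms $D_{A(\partial)F}^*B(\partial)G$ and $D_{B(\partial)F}^*A(\partial)G$ produce the two $(-\lambda-\mu-\partial)^n$ summands via the definition \eqref{20111020:eq2}.

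The main obstacle is bookkeeping: tracking the signs, the adjoint conventions, and the correct swapping of $(i, \lambda) \leftrightarrow (j, \mu)$ across the six summands so that they match. A secondary subtlety is the formal nature of the exponential substitution, since $e^{\lambda x}$ does not literally lie in $\mc V$; it should be interpreted as coefficient extraction in the formal variables $\partial^m F_i, \partial^n G_j$, just as in the analogous step for Theorem \ref{20110923:prop}(d) and in \cite{BDSK09}. Once these points are handled, the rest reduces to straightforward applications of the Leibniz rule.
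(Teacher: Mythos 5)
Your first equivalence, (i) $\Leftrightarrow$ (ii), is correct and is exactly the paper's argument: pair $X\circ Y$ with a general $Z=B(\partial)E\oplus A(\partial)E$, integrate by parts to move $A(\partial),B(\partial)$ off $E$, and invoke non-degeneracy of $(F|P)=\tint F\cdot P$.

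The second equivalence has a genuine gap. You assert that the LHS of \eqref{20120103:eq1} is ``multilinear in the formal variables $F_i^{(m)},G_j^{(n)}$,'' but it is not: each Frechet derivative $D_{B(\partial)G}(\partial)$, $D_{A(\partial)F}(\partial)$, $D^*_{A(\partial)F}(\partial)$, etc.\ involves $\frac{\partial}{\partial u_t^{(n)}}\big(B(\partial)G\big)_s$, which splits into a piece $\sum_{j',n'}\frac{\partial b_{sj';n'}}{\partial u_t^{(n)}}\partial^{n'}G_{j'}$ (linear in the jets of $G$) \emph{plus} a piece $\sum_{j',n'}b_{sj';n'}\frac{\partial G_{j'}^{(n')}}{\partial u_t^{(n)}}$, which depends on how $G$ itself depends on the $u$'s and is not determined by the jets of $G$. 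Your substitution $F_r=\delta_{ri}e^{\lambda x}$, $G_p=\delta_{pj}e^{\mu x}$ silently annihilates all such pieces (since these test vectors have vanishing partials with respect to the $u_j^{(n)}$), so it can only prove (ii) $\Rightarrow$ (iii). For the converse (iii) $\Rightarrow$ (ii) you must show that the discarded pieces sum to zero for \emph{arbitrary} $F,G\in\mc V^\ell$; this is where the hypothesis $A^*\circ B+B^*\circ A=0$ enters essentially. The paper expands all six terms of \eqref{20120103:eq1} explicitly, exhibits each as a sum of a ``coefficient-derivative'' piece and a ``vector-derivative'' piece, and shows the vector-derivative pieces cancel in pairs (term $1$ with $5$, term $2$ with $4$, term $3$ with $6$) precisely by \eqref{20120107:eq1}; only after this cancellation is the remaining expression bilinear in the jets of $F$ and $G$, at which point the substitution $\partial\mapsto\lambda$ on $F_i$ and $\partial\mapsto\mu$ on $G_j$ yields \eqref{20120103:eq2}. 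Your proof needs this cancellation step added; without it the equivalence is not established.
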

\begin{proof}
Letting 
$X=B(\partial)F\oplus A(\partial)F,\,Y=B(\partial)G\oplus A(\partial)G,\,Z=B(\partial)E\oplus A(\partial)E$,
condition (i) reads
$$
\begin{array}{l}
\tint
\big(A(\partial)E\big)
\cdot \big(D_{B(\partial)G}(\partial)A(\partial)F
+D_{A(\partial)F}^*(\partial)B(\partial)G
-D_{B(\partial)F}(\partial) A(\partial)G 
\\
+D_{B(\partial)F}^*(\partial)A(\partial)G\big)
\!+\!
\big(B(\partial)E\big)
\!\!\cdot\!\!
\big(D_{A(\partial)G}(\partial)A(\partial)F
\!-\!
D_{A(\partial)F}(\partial)A(\partial)G\big)
\!=\!0
.
\end{array}
$$
Since the above equation holds for every $E\in\mc V^{\oplus\ell}$
it reduces, integrating by parts, to equation \eqref{20120103:eq1}.
For this we use the non-degeneracy of the pairing \eqref{20111020:eq3}.
This proves that conditions (i) and (ii) are equivalent.

We next prove that conditions (ii) and (iii) are equivalent,
provided that \eqref{20120107:eq1} holds.
For $\alpha=1,\dots,6$, let \eqref{20120103:eq1}$_\alpha$ 
be the $k$-entry of the $\alpha$-th term of the LHS of \eqref{20120103:eq1}:
for example
\eqref{20120103:eq1}$_1=\big(A^*(\partial)D_{B(\partial)G}(\partial)A(\partial)F\big)_k$.
We have, by the definition of the Frechet derivative and some algebraic manipulations
(similar to those used in the proof of \cite[Prop.1.16]{BDSK09}),
$$
\begin{array}{rcl}
\eqref{20120103:eq1}_1&=&
\displaystyle{
%A^*(\partial)D_{B(\partial)G}(\partial)A(\partial)F
\sum_{i,j,s,t\in I}\sum_{n\in\mb Z_+}
\bigg(
A^*_{ks} (\partial)\Big(\frac{\partial B_{sj}(\partial)}{\partial u_t^{(n)}} G_j\Big) \partial^n A_{ti}(\partial)F_i
}\\&&
\displaystyle{
+
A^*_{ks}(\partial)B_{sj}(\partial) \frac{\partial G_j}{\partial u_t^{(n)}} \partial^n A_{ti}(\partial)F_i
\bigg)
\,,} \\
\eqref{20120103:eq1}_2&=&
\displaystyle{
%A^*(\partial)D_{A(\partial)F}^*(\partial)B(\partial)G 
\sum_{i,j,s,t\in I}\sum_{n\in\mb Z_+}
\bigg(
A^*_{ks}(\partial) (-\partial)^n 
\Big(\frac{\partial A_{ti}(\partial)}{\partial u_s^{(n)}} F_i\Big)  B_{tj}(\partial)G_j
}\\&&
\displaystyle{
+
A^*_{ks}(\partial) (-\partial)^n
\frac{\partial F_i}{\partial u_s^{(n)}} A^*_{it}(\partial) B_{tj}(\partial)G_j
\bigg)
\,,} \\
\eqref{20120103:eq1}_3&=&
\displaystyle{
%-A^*(\partial)D_{B(\partial)F}(\partial)A(\partial)G
-\sum_{i,j,s,t\in I}\sum_{n\in\mb Z_+}
\bigg(
A^*_{ks} (\partial)\Big(\frac{\partial B_{si}(\partial)}{\partial u_t^{(n)}} F_i\Big) \partial^n A_{tj}(\partial)G_j
}\\&&
\displaystyle{
+
A^*_{ks}(\partial)B_{si}(\partial) \frac{\partial F_i}{\partial u_t^{(n)}} \partial^n A_{tj}(\partial)G_j
\bigg)
\,,} %\\
\end{array}
$$
%%%
$$
\begin{array}{rcl}
\eqref{20120103:eq1}_4&=&
\displaystyle{
%A^*(\partial)D_{B(\partial)F}^*(\partial)A(\partial)G 
\sum_{i,j,s,t\in I}\sum_{n\in\mb Z_+}
\bigg(
A^*_{ks}(\partial) (-\partial)^n 
\Big(\frac{\partial B_{ti}(\partial)}{\partial u_s^{(n)}} F_i\Big)  A_{tj}(\partial)G_j
}\\&&
\displaystyle{
+
A^*_{ks}(\partial) (-\partial)^n
\frac{\partial F_i}{\partial u_s^{(n)}} B^*_{it}(\partial) A_{tj}(\partial)G_j
\bigg)
\,,} \\
\eqref{20120103:eq1}_5&=&
\displaystyle{
%B^*(\partial)D_{A(\partial)G}(\partial)A(\partial)F
\sum_{i,j,s,t\in I}\sum_{n\in\mb Z_+}
\bigg(
B^*_{ks} (\partial)\Big(\frac{\partial A_{sj}(\partial)}{\partial u_t^{(n)}} G_j\Big) \partial^n A_{ti}(\partial)F_i
}\\&&
\displaystyle{
+
B^*_{ks}(\partial)A_{sj}(\partial) \frac{\partial G_j}{\partial u_t^{(n)}} \partial^n A_{ti}(\partial)F_i
\bigg)
\,,}
\end{array}
$$
$$
\begin{array}{rcl}
\eqref{20120103:eq1}_6&=&
\displaystyle{
%-B^*(\partial)D_{A(\partial)F}(\partial)A(\partial)G
-\sum_{i,j,s,t\in I}\sum_{n\in\mb Z_+}
\bigg(
B^*_{ks} (\partial)\Big(\frac{\partial A_{si}(\partial)}{\partial u_t^{(n)}} F_i\Big) \partial^n A_{tj}(\partial)G_j
}\\&&
\displaystyle{
+
B^*_{ks}(\partial)A_{si}(\partial) \frac{\partial F_i}{\partial u_t^{(n)}} \partial^n A_{tj}(\partial)G_j
\bigg)
\,.}
\end{array}
$$
Combining the second terms in the RHS 
of \eqref{20120103:eq1}$_1$ and \eqref{20120103:eq1}$_5$
we get zero, thanks to equation \eqref{20120107:eq1}.
Similarly, we get zero if we combine the second terms in the RHS 
of \eqref{20120103:eq1}$_2$ and \eqref{20120103:eq1}$_4$,
and if we combine the second terms in the RHS 
of \eqref{20120103:eq1}$_3$ and \eqref{20120103:eq1}$_6$.
Equation \eqref{20120103:eq2} thus follows from equation \eqref{20120103:eq1}
once we replace $\partial$ acting on $F_i$ by $\lambda$,
and $\partial$ acting on $G_j$ by $\mu$.
\end{proof}
\begin{remark}
It follows by Definition \ref{20111020:def} and Proposition \ref{20120103:propb}
that a Dirac structure is a maximal isotropic 
subspace $\mc L$ of $\mc V^{\oplus\ell}\oplus\mc V^\ell$
satisfying one of the equivalent conditions (i)--(iii) above.
\end{remark}
\begin{proposition}\label{20120103:propc}
Suppose that $A,B\in\Mat_{\ell\times\ell}\mc V[\partial]$
satisfy equation \eqref{20120107:eq1}
and assume that $B$ has non-zero Dieudonn\`e determinant.
Suppose, moreover, that
the (skewadjoint) rational matrix pseudodifferential operator 
$H=A\circ B^{-1}$ has coefficients in $\mc V$, i.e. $H\in\Mat_{\ell\times\ell}\mc V(\partial)$.
Consider the corresponding non-local $\lambda$-bracket $\{\cdot\,_\lambda\,\cdot\}_H$
given by the Master Formula \eqref{20110922:eq1}.
Then the Jacobi identity \eqref{20110922:eq4} on $\{\cdot\,_\lambda\,\cdot\}_H$
is equivalent to equation \eqref{20120103:eq2} on the entries of matrices 
$A$ and $B$.
\end{proposition}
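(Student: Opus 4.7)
By Theorem~\ref{20110923:prop}(d), it suffices to verify the Jacobi identity on triples of generators $u_i,u_j,u_k$. My plan is to compute, directly in $\mc V_{\lambda,\mu}$, the expression
\[
J_{ijk}(\lambda,\mu) := \{{u_i}_\lambda\{{u_j}_\mu u_k\}_H\}_H - \{{u_j}_\mu\{{u_i}_\lambda u_k\}_H\}_H - \{\{{u_i}_\lambda u_j\}_H{}_{\lambda+\mu} u_k\}_H,
\]
using the decomposition $H = A\circ B^{-1}$, and then show that $J_{ijk}=0$ for all $i,j,k\in I$ is equivalent to~\eqref{20120103:eq2}. This is the direct analogue of Theorem~\ref{20111012:thm}, which corresponds to the special case $A=\id$, $B=S$.

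Concretely, since $H_{kj}(\mu) = \sum_t A_{kt}(\mu+\partial)(B^{-1})_{tj}(\mu)$, I apply the left Leibniz rule to split each of the first two brackets of $J_{ijk}$ across the product $A_{kt}\cdot(B^{-1})_{tj}$. The brackets $\{{u_i}_\lambda(B^{-1})_{tj}(\mu)\}_H$ that arise are rewritten via Corollary~\ref{20111014:cor}, equation~\eqref{20111012:eq2c}, as finite sums involving $B^{-1}$ and brackets $\{{u_i}_\lambda b_{rs;n}\}_H$ of generators against coefficients of $B$; the latter, and the analogous brackets $\{{u_i}_\lambda a_{kt;m}\}_H$ involving coefficients of $A$, are computed directly from the Master Formula~\eqref{20110922:eq1} (again using $H=AB^{-1}$). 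The third term of $J_{ijk}$, involving a right bracket, is treated symmetrically via~\eqref{20111012:eq2d}, with the skewadjointness hypothesis~\eqref{20120107:eq1} used to align the $A^*,B^*$ factors that appear with their counterparts from the first two terms.

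After assembling the resulting sum I expect $J_{ijk}(\lambda,\mu)$ to take the form of a contraction
\[
J_{ijk}(\lambda,\mu) \,=\, \sum_{\alpha,\beta,\gamma}(B^{-*})_{\gamma k}(\lambda+\mu+\partial)\,E_{\alpha\beta\gamma}(\lambda,\mu)\,\bigl|_{x=\lambda+\partial}(B^{-1})_{\alpha i}(\lambda)\bigr|\,\bigl|_{y=\mu+\partial}(B^{-1})_{\beta j}(\mu)\bigr|,
\]
in the sense of convention~\eqref{20111018:eq5}, where $E_{\alpha\beta\gamma}(\lambda,\mu)$ is precisely the polynomial expression appearing on the left-hand side of~\eqref{20120103:eq2} (with indices $(i,j,k)$ replaced by $(\alpha,\beta,\gamma)$). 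Combined with the invertibility of $B$ in $\Mat_{\ell\times\ell}\mc K((\partial^{-1}))$ (Proposition~\ref{20111005:prop2}) and the non-degeneracy of elements of $\mc V_{\lambda,\mu}$ written in the normal form of Lemma~\ref{20110919:lem1}, this factorization gives the equivalence in both directions.

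The main obstacle is the algebraic bookkeeping: one must keep track of compositional versus symbolic occurrences of $\lambda+\partial$ and $\mu+\partial$, of the two-variable substitution convention~\eqref{20111018:eq5}, and of the precise pattern of cancellations forced by~\eqref{20120107:eq1}. These cancellations play the same organizing role as in the proof of Proposition~\ref{20120103:propb}, where terms arising from $A^*\!B$ and $B^*\!A$ paired up to zero; here the analogous mechanism is what ensures that the $B^{-1}$-factors flanking $E_{\alpha\beta\gamma}$ survive intact, thereby making the clean factorization above possible and pinning down the statement as an identity solely among the entries of $A$ and $B$.
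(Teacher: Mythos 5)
Your proposal is correct and follows essentially the same route as the paper: expand each term of the Jacobi identity via the Leibniz rules, equations \eqref{20111012:eq2c}--\eqref{20111012:eq2d} and the Master Formula, use the skewadjointness relation \eqref{20120107:eq1} to organize the cancellations, and then invoke the invertibility of $B$ (non-zero Dieudonn\'e determinant) to pass between the Jacobi identity and \eqref{20120103:eq2}. The only cosmetic difference is that the paper clears the $B^{-1}$-factors by applying $B^*_{k'k}(\lambda+\mu+\partial)$ on the left and substituting $B_{ii'}(\lambda)$, $B_{jj'}(\mu)$ into the arguments, rather than exhibiting your equivalent factorization of $J_{ijk}$ with $B^{-1}$-factors left in place.
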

\begin{proof}
Letting $A_{st}(\partial)=\sum_{m=0}^Ma_{st;m}\partial^m$
and $B_{st}(\partial)=\sum_{n=0}^M b_{ij;n}\partial^n$.
By formula \eqref{20110923:eq1} and the left Leibniz rule \eqref{20110921:eq3}
we have,
$$
\begin{array}{c}
\displaystyle{
\{{u_i}_\lambda\{{u_j}_\mu{u_k}\}_H\}_H
=
\sum_{r\in I}
\{{u_i}_\lambda A_{kr}(\mu+\partial)B^{-1}_{rj}(\mu)\}_H
} \\
\displaystyle{
=
\sum_{r\in I}\sum_{m=0}^M
\{{u_i}_\lambda a_{kr;m}\}_H (\mu+\partial)^m B^{-1}_{rj}(\mu)
} \\
\displaystyle{
+\sum_{r\in I}
A_{kr}(\lambda+\mu+\partial) \{{u_i}_\lambda B^{-1}_{rj}(\mu)\}_H 
\,.} 
\end{array}
$$
By equation \eqref{20111012:eq2c} we have
$$
\begin{array}{l}
\displaystyle{
\{{u_i}_\lambda B^{-1}_{rj}(\mu)\}_H
} \\
\displaystyle{
=
-\sum_{s,t\in I} \sum_{m=0}^M
(B^{-1})_{rs}(\lambda+\mu+\partial) \{{u_i}_\lambda b_{st;m}\}_H (\mu+\partial)^m (B^{-1})_{tj}(\mu)\,.
}
\end{array}
$$
Combining the above two equations we then get, using the Master Formula \eqref{20110922:eq1},
$$
\begin{array}{l}
\displaystyle{
\{{u_i}_\lambda\{{u_j}_\mu{u_k}\}_H\}_H
%=
%\sum_{r\in I}\sum_{m=0}^M
%\{{u_i}_\lambda a_{kr;m}\}_H (\mu+\partial)^m B^{-1}_{rj}(\mu)
%} \\
%\displaystyle{
%-\sum_{r,s,t\in I} \sum_{m=0}^M
%A_{kr}(\lambda+\mu+\partial) 
%(B^{-1})_{rs}(\lambda+\mu+\partial) \{{u_i}_\lambda b_{st;m}\}_H (\mu+\partial)^m (B^{-1})_{tj}(\mu)
} \\
\displaystyle{
=
\sum_{r,s,t\in I}\sum_{m=0}^M\sum_{n\in\mb Z_+}
\frac{\partial a_{kr;m}}{\partial u_s^{(n)}} 
\Big(
(\mu+\partial)^m B^{-1}_{rj}(\mu)
\Big)
\Big(
(\lambda+\partial)^n
A_{st}(\lambda+\partial)
B^{-1}_{ti}(\lambda)
\Big)
} \\
\displaystyle{
-\sum_{p,q,r,s,t\in I} \sum_{m=0}^M \sum_{n\in\mb Z_+}
A_{kr}(\lambda+\mu+\partial) 
(B^{-1})_{rs}(\lambda+\mu+\partial) 
} \\
\displaystyle{
\,\,\,\,\,\,\,\,\,\,\,\,\,\,\,\,\,\,\,\,\,\,\,\,\,\,\,\,\,\,\,\,\,
\frac{\partial b_{st;m}}{\partial u_p^{(n)}}
\Big((\mu+\partial)^m (B^{-1})_{tj}(\mu)\Big)
\Big((\lambda+\partial)^n
A_{pq}(\lambda+\partial) B^{-1}_{qi}(\lambda)\Big)
} \\
\displaystyle{
=
\sum_{r,s,t\in I}\sum_{n\in\mb Z_+}
\Big(
\frac{\partial A_{kr}(\mu+\partial)}{\partial u_s^{(n)}} 
B^{-1}_{rj}(\mu)
\Big)
\Big(
(\lambda+\partial)^n
A_{st}(\lambda+\partial)
B^{-1}_{ti}(\lambda)
\Big)
} \\
\displaystyle{
-\sum_{p,q,r,s,t\in I} \sum_{n\in\mb Z_+}
A_{kr}(\lambda+\mu+\partial) 
(B^{-1})_{rs}(\lambda+\mu+\partial) 
} \\
\displaystyle{
\,\,\,\,\,\,\,\,\,\,\,\,\,\,\,\,\,\,\,\,\,\,\,\,\,\,\,\,\,\,\,\,\,
\Big(
\frac{\partial B_{st}(\mu+\partial)}{\partial u_p^{(n)}}
(B^{-1})_{tj}(\mu)\Big)
\Big((\lambda+\partial)^n
A_{pq}(\lambda+\partial) B^{-1}_{qi}(\lambda)\Big)
\,.}
\end{array}
$$
Next, we apply $B^*_{k'k}(\lambda+\mu+\partial)$ to both sides of the above equation (on the left),
replace $\lambda$ by $\lambda+\partial$ acting on $B_{ii'}(\lambda)$,
replace $\mu$ by $\mu+\partial$ acting on $B_{jj'}(\mu)$,
and sum over $i,j,k\in I$.
As a result we get, using the assumption \eqref{20120107:eq1}
(see notation \eqref{20111018:eq5}),
\begin{equation}\label{20120109:eq2}
\begin{array}{c}
\displaystyle{
\sum_{i,j,k\in I}B^*_{k'k}(\lambda+\mu+\partial)
\{{u_i}_x\{{u_j}_y{u_k}\}_H\}_H
\Big(\Big|_{x=\lambda+\partial}B_{ii'}(\lambda)\Big)
\Big(\Big|_{y=\mu+\partial}B_{jj'}(\mu)\Big)
} \\
\displaystyle{
=
\sum_{k,i\in I}\sum_{n\in\mb Z_+}
\bigg(
B^*_{k'k}(\lambda+\mu+\partial)
\frac{\partial A_{kj'}(\mu)}{\partial u_i^{(n)}} 
(\lambda+\partial)^n
A_{ii'}(\lambda)
} \\
\displaystyle{
+
A^*_{k'k}(\lambda+\mu+\partial)
\frac{\partial B_{kj'}(\mu)}{\partial u_i^{(n)}}
(\lambda+\partial)^n
A_{ii'}(\lambda) 
\bigg)
\,.}
\end{array}
\end{equation}
Exchanging $i'$ with $j'$ and $\lambda$ with $\mu$ in \eqref{20120109:eq2}, we get
\begin{equation}\label{20120109:eq3}
\begin{array}{c}
\displaystyle{
\sum_{i,j,k\in I}B^*_{k'k}(\lambda+\mu+\partial)
\{{u_j}_y\{{u_i}_x{u_k}\}_H\}_H
\Big(\Big|_{x=\lambda+\partial}B_{ii'}(\lambda)\Big)
\Big(\Big|_{y=\mu+\partial}B_{jj'}(\mu)\Big)
} \\
\displaystyle{
=
\!\sum_{i,j,k\in I}B^*_{k'k}(\lambda+\mu+\partial)
\{{u_i}_x\{{u_j}_y{u_k}\}_H\}_H
\Big(\Big|_{x=\mu+\partial}B_{ij'}(\mu)\Big)
\Big(\Big|_{y=\lambda+\partial}B_{ji'}(\lambda)\Big)
} \\
\displaystyle{
=
\sum_{k,j\in I}\sum_{n\in\mb Z_+}
\bigg(
B^*_{k'k}(\lambda+\mu+\partial)
\frac{\partial A_{ki'}(\lambda)}{\partial u_j^{(n)}} 
(\mu+\partial)^n
A_{jj'}(\mu)
} \\
\displaystyle{
+
A^*_{k'k}(\lambda+\mu+\partial)
\frac{\partial B_{ki'}(\lambda)}{\partial u_j^{(n)}}
(\mu+\partial)^n
A_{jj'}(\mu) 
\bigg)
\,.}
\end{array}
\end{equation}
We are left to study the third term in the Jacobi identity.
By the right Leibniz rule \eqref{20110921:eq3} we get,
$$
\begin{array}{l}
\displaystyle{
\{{\{{u_i}_\lambda{u_j}\}_H}_{\lambda+\mu}{u_k}\}_H
=
\sum_{r\in I}
\{A_{jr}(\lambda+\partial)B^{-1}_{ri}(\lambda)_{\lambda+\mu}{u_k}\}_H
} \\
\displaystyle{
=
\sum_{r\in I}\sum_{m=0}^M
{\{{a_{jr;m}}_{\lambda+\mu+\partial}{u_k}\}_H}_\to
(\lambda+\partial)^m B^{-1}_{ri}(\lambda)
} \\
\displaystyle{
+
\sum_{r\in I}\sum_{m=0}^M
{\{B^{-1}_{ri}(\lambda)_{\lambda+\mu+\partial}{u_k}\}_H}_\to
(-\mu-\partial)^m a_{jr;m}
\,.} 
\end{array}
$$
By equation \eqref{20111012:eq2d} we have
$$
\begin{array}{l}
\displaystyle{
{\{B^{-1}_{ri}(\lambda)_{\lambda+\mu+\partial}{u_k}\}_H}_\to
} \\
\displaystyle{
=
-\sum_{s,t=1}^\ell\sum_{m=0}^M
\{{b_{st;m}}_{\lambda+\mu+\partial}{u_k}\}_\to
\circ
\Big((\lambda+\partial)^m (B^{-1})_{ti}(\lambda)\Big)
({B^*}^{-1})_{sr}(\mu+\partial) 
\,.}
\end{array}
$$
Combining the above two equations and using the Master Formula \eqref{20110922:eq1} we then get
$$
\begin{array}{l}
\displaystyle{
\{{\{{u_i}_\lambda{u_j}\}_H}_{\lambda+\mu}{u_k}\}_H
} \\
\displaystyle{
=
\sum_{\substack{r,s,t\in I \\ n\in\mb Z_+}}
A_{kt}(\lambda+\mu+\partial)B^{-1}_{ts}(\lambda+\mu+\partial)
(-\lambda-\mu-\partial)^n
\frac{\partial A_{jr}(\lambda+\partial)}{\partial u_s^{(n)}} B^{-1}_{ri}(\lambda)
} \\
\displaystyle{
-
\sum_{\substack{r,p,q,s,t\in I \\ n\in\mb Z_+}}
A_{kq}(\lambda+\mu+\partial)B^{-1}_{qp}(\lambda+\mu+\partial)
(-\lambda-\mu-\partial)^n
} \\
\displaystyle{
\,\,\,\,\,\,\,\,\,\,\,\,\,\,\,\,\,\,\,\,\,\,\,\,\,\,\,\,\,\,\,\,\,\,\,\,
\Big(
\frac{\partial B_{st}(\lambda+\partial)}{\partial u_p^{(n)}}
(B^{-1})_{ti}(\lambda)\Big)
\Big(({B^*}^{-1})_{sr}(\mu+\partial) A^*_{rj}(\mu)\Big)
\,.}
\end{array}
$$
Hence, if we apply, as before, $B^*_{k'k}(\lambda+\mu+\partial)$ on the left,
replace $\lambda$ by $\lambda+\partial$ acting on $B_{ii'}(\lambda)$,
replace $\mu$ by $\mu+\partial$ acting on $B_{jj'}(\mu)$,
and sum over $i,j,k\in I$, we get, using \eqref{20120107:eq1},
\begin{equation}\label{20120109:eq4}
\begin{array}{l}
\displaystyle{
\sum_{i,j,k\in I}B^*_{k'k}(\lambda+\mu+\partial)
\{{\{{u_i}_x{u_j}\}_H}_{x+y}{u_k}\}_H
\Big(\Big|_{x=\lambda+\partial}B_{ii'}(\lambda)\Big)
} \\
\displaystyle{
\Big(\Big|_{y=\mu+\partial}B_{jj'}(\mu)\Big)
=
-\sum_{j,k\in I}\sum_{n\in\mb Z_+}
A^*_{k'k}(\lambda+\mu+\partial)
(-\lambda-\mu-\partial)^n
} \\
\displaystyle{
\,\,\,\,\,\,\,\,\,\,\,\,\,\,\,\,\,\,\,\,\,\,\,\,\,\,\,\,\,\,\,\,\,\,\,\,
\,\,\,\,\,\,\,\,\,\,\,\,\,\,\,\,\,\,
\bigg(
\frac{\partial A_{ji'}(\lambda)}{\partial u_k^{(n)}} B_{jj'}(\mu)
+
\frac{\partial B_{ji'}(\lambda)}{\partial u_k^{(n)}}
A_{jj'}(\mu)
\bigg)
\,.}
\end{array}
\end{equation}
Combining \eqref{20120109:eq2}, \eqref{20120109:eq3} and \eqref{20120109:eq4},
we get that the expression
$$
\begin{array}{l}
\displaystyle{
\sum_{i,j,k\in I}B^*_{k'k}(\lambda+\mu+\partial)
\Big(
\{{u_i}_x\{{u_j}_y{u_k}\}_H\}_H
-\{{u_j}_y\{{u_i}_x{u_k}\}_H\}_H
} \\
\displaystyle{
-\{{\{{u_i}_x{u_j}\}_H}_{x+y}{u_k}\}_H
\Big)
\Big(\Big|_{x=\lambda+\partial}B_{ii'}(\lambda)\Big)
\Big(\Big|_{y=\mu+\partial}B_{jj'}(\mu)\Big)
}
\end{array}
$$
is the same as the LHS of \eqref{20120103:eq2}.
The claim follows from the assumption that 
the matrix $B\in\Mat_{\ell\times\ell}\mc V[\partial]$
has non-zero Dieudonn\`e determinant.
\end{proof}
\begin{theorem}\label{20111020:thm}
Let $\mc V$ be an algebra of differential functions in the variables $u_1,\dots,u_\ell$,
which is a domain, and let $\mc K$ be its field of fractions.
Let $H=A B^{-1}$, with $A,B\in\Mat_{\ell\times\ell}\mc V[\partial]$, $\det B\neq0$,
be a minimal fractional decomposition (cf. Definition \ref{def:minimal-fraction} 
and Remark \ref{rem:minimal-fraction2}) of the rational matrix pseudodifferential 
operator $H\in\Mat_{\ell\times\ell}\mc V(\partial)$.
Then the subspace 
\begin{equation}\label{eq:dirac}
\mc L_{A,B}(\mc K)=\big\{B(\partial)X\oplus A(\partial)A\,\big|\,X\in\mc K^{\oplus\ell}\big\}
\,\subset\mc K^{\oplus\ell}\oplus\mc K^\ell\,,
\end{equation}
is a Dirac structure if and only if
$H$ is a non-local Hamiltonian structure on $\mc V$.
\end{theorem}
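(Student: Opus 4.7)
The plan is to recognize that this theorem is essentially a bookkeeping assembly of the four propositions established immediately above (Propositions \ref{prop:max-isotrop}, \ref{20120103:propa}, \ref{20120103:propb}, and \ref{20120103:propc}), which together decompose the statement ``$\mc L_{A,B}(\mc K)$ is a Dirac structure'' into exactly the two conditions ``$H$ is skewadjoint'' and ``$H$ satisfies the Jacobi identity on generators,'' which together are the definition of a non-local Hamiltonian structure (Definition \ref{20111007:def}, together with Theorem \ref{20110923:prop}(d) which reduces Jacobi to the generators $u_i,u_j,u_k$).

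First I would dispatch the skewsymmetry/maximal isotropy side. By Definition \ref{20111020:def}, $\mc L_{A,B}(\mc K)$ being a Dirac structure requires it to be maximal isotropic. Proposition \ref{20120103:propa} shows that isotropy of $\mc L_{A,B}$ is equivalent to the operator identity $A^*\circ B + B^*\circ A=0$, which (since $\det B\neq 0$) is in turn equivalent to skewadjointness of $H=AB^{-1}$. On the other hand, by Proposition \ref{prop:max-isotrop}, once $H$ is skewadjoint and $H=AB^{-1}$ is the given minimal fractional decomposition, maximal isotropy is automatic. Therefore, under the minimality hypothesis, maximal isotropy of $\mc L_{A,B}(\mc K)$ holds if and only if $H$ is skewadjoint.

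Next I would handle the Courant--Dorfman closure side. Assuming that $A^*B+B^*A=0$ is already in force from the previous step, Proposition \ref{20120103:propb} says that closure of $\mc L_{A,B}(\mc K)$ under $\circ$ is equivalent to the explicit polynomial identity \eqref{20120103:eq2} on the symbols of $A$ and $B$. Then Proposition \ref{20120103:propc}, whose hypotheses ($A^*B+B^*A=0$ and $\det B\neq 0$, with $H\in\Mat_{\ell\times\ell}\mc V(\partial)$) are exactly what we have, identifies \eqref{20120103:eq2} with the Jacobi identity \eqref{20110922:eq4} for the $\lambda$-bracket $\{\cdot\,_\lambda\,\cdot\}_H$ on the generators $u_i,u_j,u_k$. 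By Theorem \ref{20110923:prop}(d), this generator-level Jacobi identity, combined with skewadjointness, is equivalent to $H$ being a non-local Hamiltonian structure on $\mc V$ (and by Proposition \ref{20111219:prop}, one may equivalently work over $\mc K$, which matches the ambient field of the Dirac structure).

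Stringing the two equivalences together in both directions yields the theorem. I do not expect any genuine obstacle, since each logical step is a direct quotation of a previously proved proposition; the only thing to be careful about is checking the hypothesis ``$\det B\neq 0$'' needed to invoke Propositions \ref{20120103:propa}, \ref{20120103:propb}, and \ref{20120103:propc} (given in the statement), and to note that $\mc L_{A,B}(\mc K)$ is defined over $\mc K$ whereas Proposition \ref{20120103:propb} is phrased over $\mc V$, which is harmless because the identity \eqref{20120103:eq1} involves only $A,B$ and their derivatives and so it holds over $\mc V$ iff it holds over its field of fractions $\mc K$.
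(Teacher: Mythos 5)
Your proposal is correct and follows essentially the same route as the paper, whose proof is a one-line citation of Remark \ref{rem:minimal-fraction2} together with Propositions \ref{20120103:propa}, \ref{prop:max-isotrop}, \ref{20120103:propb} and \ref{20120103:propc}. Your write-up simply makes explicit the same assembly, including the correct observation that minimality over $\mc V$ passes to $\mc K$ so that Proposition \ref{prop:max-isotrop} applies.
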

\begin{proof}
It immediately follows from Remark \ref{rem:minimal-fraction2} 
and Propositions \ref{20120103:propa}, \ref{prop:max-isotrop}, 
\ref{20120103:propb} and \ref{20120103:propc}.
\end{proof}
\begin{remark}
We may define a ``generalized'' Dirac structure
as a subspace $\mc L$ of $\mc V^{\oplus\ell}\oplus\mc V^\ell$,
such that $\mc L\subset\mc L^\perp$ (i.e. $\mc L$ is isotropic),
and $\mc L\circ\mc L\subset\mc L^\perp$ (i.e. condition (i) in Proposition \ref{20120103:propb} holds),
where $\mc L^\perp$ is the orthogonal complement to $\mc L$ 
with respect to the inner product \eqref{20111020:eq3}.
Note that a Dirac structure is a special case of this when $\mc L$ is maximal isotropic.
If $A,B\in\Mat_{\ell\times\ell}\mc V[\partial]$, $\det B\neq0$,
then $\mc L_{A,B}$ is a generalized Dirac structure
if and only if $H=AB^{-1}$ is a non-local Hamiltonian structure on $\mc V$
(not necessarily in its minimal fractional decomposition).
Note also that any subspace of a generalized Dirac structure is a generalized Dirac structure.
\end{remark}

%%%
\subsection{Compatible pairs of Dirac structures}
\label{sec:6.3}

The notion of compatibility of Dirac structures was introduced by Gelfand and Dorfman 
\cite{GD80}, \cite{Dor93} (see also \cite{BDSK09}).
In this paper we introduce a weaker, but more natural, 
notion of compatibility, which still can be used to implement successfully 
the Lenard-Magri scheme of integrability, and which is more closely related 
to the notion of compatibility of the corresponding non-local Hamiltonian structures.

Given two Dirac structures $\mc L$ and $\mc L^\prime\subset\mc V^{\oplus\ell}\oplus\mc V^\ell$,
we define the relations 
\begin{equation}\label{eq:20090322_1}
\begin{array}{l}
\mc N_{\mc L,\mc L^\prime}
=
\big\{P\oplus P^\prime
%\in\mc V^\ell\oplus\mc V^\ell
\,\big|\,
F\oplus P\in\mc L,\,F\oplus P^\prime\in\mc L^\prime\,\text{ for some } F\in\mc V^{\oplus\ell}\big\}
\subset\mc V^\ell\oplus\mc V^\ell\,,
\\
\mc N_{\mc L,\mc L^\prime}\wcheck
=
\big\{F\oplus F^\prime
%\in\mc V^{\oplus\ell}\oplus\mc V^{\oplus\ell}
\,\big|\,
F\oplus P\in\mc L,\,F^\prime\oplus P\in\mc L^\prime\,\text{ for some } P\in\mc V^{\ell}\big\}
\subset\mc V^{\oplus\ell}\!\oplus\!\mc V^{\oplus\ell}.
\end{array}
\end{equation}
\begin{definition}\label{2006_NRel}
Two Dirac structures $\mc{L},\,\mc{L}^\prime\,\subset \mc V^{\oplus\ell}\oplus\mc V^\ell$ 
are said to be \emph{compatible} if 
for all $P,P^\prime,Q,Q^\prime\in\mc V^\ell,\,
F,F^\prime,F^{\prime\prime}\in\mc V^{\oplus\ell}$
such that
$$
P\oplus P^\prime,\,Q\oplus Q^\prime\,\in\mc N_{\mc L,\mc L^\prime}
\,\,\,\,\text{ and }\,\,\,\,
F\oplus F^\prime,\,F^\prime\oplus F^{\prime\prime}
\in \mc N_{\mc L,\mc L^\prime}\wcheck\,,
$$ 
we have
\begin{equation}\label{eq:20090320_1}
(F|[P,Q])-(F^\prime|[P,Q^\prime])-(F^\prime|[P^\prime,Q])
+(F^{\prime\prime}|[P^\prime,Q^\prime])\,=\,0\,,
\end{equation}
where, as before, $(F|P)=\tint F\cdot P$,
and, for $P,Q\in\mc V^\ell$, $[P,Q]$ is given by \eqref{20120126:eq1}.
\end{definition}
\begin{remark}\label{oldcompatibility}
The original notion of compatibility, introduced by Dorfman in \cite{Dor93},
is similar, except that $\mc N_{\mc L,\mc L^\prime}\wcheck$
is replaced by the ``dual'' relation
$$
\mc N^*_{\mc L,\mc L^\prime}
\,=\,
\big\{F\oplus F^\prime\in\mc V^{\oplus\ell}\oplus\mc V^{\oplus\ell}
\,\big|\,
\tint F\cdot P=\tint F^\prime\cdot P^\prime\,\text{ for all } P\oplus P^\prime
\in\mc N_{\mc L,\mc L^\prime}\big\}\,.
$$
Since $\mc L$ and $\mc L^\prime$ are isotropic, we have, for 
$F\oplus F^\prime\in\mc N_{\mc L,\mc L^\prime}\wcheck$,
and for $Q\oplus Q^\prime\in\mc N_{\mc L,\mc L^\prime}$,
$\tint F\cdot Q=-\tint G\cdot P=\tint F^\prime\cdot Q^\prime$,
where $P\in\mc V^\ell$ and $G\in\mc V^{\oplus\ell}$ are such that 
$F\oplus P,G\oplus Q\in\mc L,\,F^\prime\oplus P,G\oplus Q^\prime\in\mc L^\prime$.
Hence, $\mc N_{\mc L,\mc L^\prime}\wcheck\subset\mc N^*_{\mc L,\mc L^\prime}$.
\end{remark}

Even with the weaker notion of compatibility,
the following important theorem still holds (cf. \cite[Thm.4.13]{BDSK09}).
\begin{theorem}\label{mtst}
Let $(\mc L,\mc L^\prime)$ be a pair of compatible Dirac structures.
Let $F_0,F_1,F_2\in\mc V^{\oplus\ell}$ be such that:
\begin{enumerate}[(i)]
\item
$D_{F_n}^*(\partial)=D_{F_n}(\partial)$, for $n=0,1$;
\item
$F_0\oplus F_1,\,F_1\oplus F_2\,\in\mc N_{\mc L,\mc L^\prime}\wcheck$.
\end{enumerate}
Then, for all $P\oplus P^\prime,Q\oplus Q^\prime\in\mc N_{\mc L,\mc L^\prime}$, we have
\begin{equation}\label{20120405:eq2}
\tint Q^\prime\cdot \big(D_{F_2}(\partial)-D_{F_2}^*(\partial)\big)P^\prime
=0\,.
\end{equation}
\end{theorem}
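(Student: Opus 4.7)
The plan is to apply the compatibility of $(\mc L,\mc L^\prime)$ directly to the data at hand and then use the self-adjointness of $F_0,F_1$ to collapse everything to the desired identity. Hypothesis (ii) supplies $P_0,P_1\in\mc V^\ell$ with $F_0\oplus P_0,\,F_1\oplus P_1\in\mc L$ and $F_1\oplus P_0,\,F_2\oplus P_1\in\mc L^\prime$, witnessing that $F_0\oplus F_1$ and $F_1\oplus F_2$ lie in $\mc N_{\mc L,\mc L^\prime}\wcheck$. Hence, for arbitrary $P\oplus P^\prime,\,Q\oplus Q^\prime\in\mc N_{\mc L,\mc L^\prime}$, the compatibility identity \eqref{eq:20090320_1} specializes to
\[
(F_0|[P,Q])-(F_1|[P,Q^\prime])-(F_1|[P^\prime,Q])+(F_2|[P^\prime,Q^\prime])=0. \qquad (\star)
\]

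Next I would derive a Cartan-type formula: for any $F\in\mc V^{\oplus\ell}$ and $P,Q\in\mc V^\ell$, two applications of the identity \eqref{20120405:eq1} (first with $(X,Y)=(F,Q)$ and then with $(F,P)$, the third slot being $P$ and $Q$ respectively) together with integration by parts $\tint P\cdot D_F(\partial)Q=\tint Q\cdot D_F^*(\partial)P$ yield
\[
(F|[P,Q])=-\tint Q\cdot\bigl(D_F(\partial)-D_F^*(\partial)\bigr)P+\tint P\cdot\frac{\delta(F\cdot Q)}{\delta u}-\tint Q\cdot\frac{\delta(F\cdot P)}{\delta u}.
\]
Under hypothesis (i) the first term on the right vanishes for $F=F_0$ and $F=F_1$, whereas for $F=F_2$ it is exactly $-\tint Q^\prime\cdot(D_{F_2}(\partial)-D_{F_2}^*(\partial))P^\prime$, the quantity we aim to kill.

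The last ingredient is a pair of scalar shift identities valid for every $R\oplus R^\prime\in\mc N_{\mc L,\mc L^\prime}$, namely
\[
\tint F_0\cdot R=\tint F_1\cdot R^\prime\,,\qquad \tint F_1\cdot R=\tint F_2\cdot R^\prime\,.
\]
To prove the first, pick $F_R\in\mc V^{\oplus\ell}$ with $F_R\oplus R\in\mc L$ and $F_R\oplus R^\prime\in\mc L^\prime$; the isotropy of $\mc L$ applied to the pair $F_0\oplus P_0,\,F_R\oplus R$ and the isotropy of $\mc L^\prime$ applied to $F_1\oplus P_0,\,F_R\oplus R^\prime$ both produce $-\tint F_R\cdot P_0$. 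The second identity is obtained analogously using $P_1$. Consequently $F_0\cdot R-F_1\cdot R^\prime$ and $F_1\cdot R-F_2\cdot R^\prime$ both lie in $\partial\mc V$, and taking $\delta/\delta u$ yields
\[
\frac{\delta(F_0\cdot R)}{\delta u}=\frac{\delta(F_1\cdot R^\prime)}{\delta u}\,,\qquad \frac{\delta(F_1\cdot R)}{\delta u}=\frac{\delta(F_2\cdot R^\prime)}{\delta u}\,.
\]

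To conclude, I would substitute the Cartan-type formula into each of the four terms of $(\star)$ and then apply the four variational-derivative identities obtained by specializing Step 3 to $(R,R^\prime)=(P,P^\prime)$ and $(R,R^\prime)=(Q,Q^\prime)$: all eight ``boundary'' terms cancel pairwise, leaving only $-\tint Q^\prime\cdot(D_{F_2}(\partial)-D_{F_2}^*(\partial))P^\prime=0$, which is \eqref{20120405:eq2}. The step requiring genuine care is the derivation and sign bookkeeping of the Cartan-type formula; once that is in hand, the final cancellations are purely mechanical and do not even invoke the closure of $\mc L,\mc L^\prime$ under the Courant--Dorfman product, only their isotropy and the compatibility axiom.
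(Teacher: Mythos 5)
Your proof is correct and takes essentially the same route as the paper's: expand the compatibility identity \eqref{eq:20090320_1} term by term via \eqref{20120405:eq1}, kill the four variational-derivative terms using the isotropy consequence (your ``scalar shift identities'' are exactly the content of Remark \ref{oldcompatibility}), and kill the remaining Frechet terms by the self-adjointness of $D_{F_0}$ and $D_{F_1}$, leaving $-\tint Q^\prime\cdot\big(D_{F_2}(\partial)-D_{F_2}^*(\partial)\big)P^\prime$. Your per-term ``Cartan-type formula'' is only a repackaging of the same computation, and your closing observation that only isotropy and the compatibility axiom are used applies equally to the paper's argument.
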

\begin{proof}
By the assumption \eqref{eq:20090320_1}, we have
$$
\begin{array}{l}
\displaystyle{
0=(F_0|[P,Q])-(F_1|[P,Q^\prime])-(F_1|[P^\prime,Q])
+(F_2|[P^\prime,Q^\prime])
} \\
\displaystyle{
=\int \Big(
F_0\cdot D_Q(\partial)P -F_0\cdot D_P(\partial)Q
-F_1\cdot D_{Q^\prime}(\partial)P +F_1\cdot D_P(\partial)Q^\prime
} \\
\displaystyle{
-F_1\cdot D_Q(\partial)P^\prime +F_1\cdot D_{P^\prime}(\partial)Q
+F_2\cdot D_{Q^\prime}(\partial)P^\prime -F_2\cdot D_{P^\prime}(\partial)Q^\prime
\Big)
} \\
\displaystyle{
=\int P\cdot \frac{\delta}{\delta u} \big((F_0|Q)-(F_1|Q^\prime)\big)
-\int Q\cdot\frac{\delta}{\delta u} \big((F_0|P)-(F_1|P^\prime)\big)
} \\
\displaystyle{
-\int P^\prime\cdot \frac{\delta}{\delta u}\big((F_1|Q)- (F_2|Q^\prime)\big)
+\int Q^\prime\cdot \frac{\delta}{\delta u} \big((F_1|P)-(F_2|P^\prime)\big)
} \\
\displaystyle{
-\int Q\cdot D_{F_0}(\partial)P 
+\int P\cdot D_{F_0}(\partial)Q
+\int {Q^\prime}\cdot D_{F_1}(\partial)P 
-\int P\cdot D_{F_1}(\partial)Q^\prime
} \\
\displaystyle{
+\int Q\cdot D_{F_1}(\partial)P^\prime 
-\int {P^\prime}\cdot D_{F_1}(\partial)Q
-\int {Q^\prime}\cdot D_{F_2}(\partial)P^\prime 
+\int {P^\prime}\cdot D_{F_2}(\partial)Q^\prime
\,.
}\end{array}
$$
In the second identity we used the definition \eqref{20120126:eq1} of the Lie bracket on $\mc V^{\ell}$,
and in the last identity we used equation \eqref{20120405:eq1}.
Since, by assumption, $F_0\oplus F_1\in\mc N_{\mc L,\mc L^\prime}\wcheck$ 
and $Q\oplus Q^\prime\in\mc N_{\mc L,\mc L^\prime}$, we have (by Remark \ref{oldcompatibility})
that $(F_0|Q)=(F_1|Q^\prime)$. 
Hence the first term in the RHS above is zero,
and, by the same argument, the first four terms are zero.
The following six terms are also zero since, by assumption, $D_{F_0}(\partial)$
and $D_{F_1}(\partial)$ are selfadjoint.
In conclusion, equation \eqref{20120405:eq2} holds.
\end{proof}

%%%
\subsection{Compatible non-local Hamiltonian structures 
and corresponding compatible pairs of Dirac structures}
\label{sec:6.4}

In Theorem \ref{20111020:thm} we proved that to a non-local Hamiltonian 
structure $H\in\Mat_{\ell\times\ell}\mc V(\partial)$
in its minimal fractional decomposition $H=AB^{-1}$, 
with $A,B\in\Mat_{\ell\times\ell}\mc V[\partial]$,
there corresponds a Dirac structure $\mc L_{A,B}(\mc K)$ 
on the field of frations $\mc K$.
In this section we prove that to a compatible pair 
of non-local Hamiltonian structures $H=AB^{-1},\,K=CD^{-1}$,
in their minimal fractional decompositions,
there corresponds a compatible pair 
of Dirac structures $\mc L_{A,B}(\mc K),\,\mc L_{C,D}(\mc K)$ on $\mc K$.
This is stated in the following:
\begin{theorem}\label{20120126:prop2}
Let $\mc V$ be an algebra of differential functions in $u_1,\dots,u_\ell$,
which is a domain, and let $\mc K$ be its field of fractions.
Let $H,\,K\in\Mat_{\ell\times\ell}\mc V(\partial)$ be compatible 
non-local Hamiltonian structures on $\mc V$.
Let $H=AB^{-1},\,K=CD^{-1}$ be their minimal fractional decompositions
(cf. Definition \ref{def:minimal-fraction}).
Then $\mc L_{A,B}(\mc K)$ and $\mc L_{C,D}(\mc K)$
are compatible Dirac structures on $\mc K$.
\end{theorem}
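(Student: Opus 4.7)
Plan: The approach is to reduce the Dirac-compatibility condition \eqref{eq:20090320_1} for $\mc L_{A,B}(\mc K)$ and $\mc L_{C,D}(\mc K)$ to the mixed Jacobi identity $J(H,K)+J(K,H)=0$ (equation \eqref{20111116:eq3}) that encodes the compatibility of $H$ and $K$ as non-local Hamiltonian structures. This mirrors the strategy behind Theorem \ref{20111020:thm}, where the single Jacobi identity $J(H,H)=0$ was shown equivalent, via Propositions \ref{20120103:propb}--\ref{20120103:propc}, to the Frechet-derivative identity \eqref{20120103:eq1} on $A, B$, and thence to the closedness of $\mc L_{A,B}$ under the Courant-Dorfman product.

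First I would unpack the hypotheses over $\mc K$. The relation $P \oplus P' \in \mc N_{\mc L_{A,B},\mc L_{C,D}}$ produces $X, Y \in \mc K^{\oplus\ell}$ with $P = A(\partial) X$, $P' = C(\partial) Y$ and $B(\partial) X = D(\partial) Y$; similarly $Q \oplus Q'$ yields $X', Y'$ with the same structure. The relations $F \oplus F', F' \oplus F'' \in \mc N_{\mc L_{A,B},\mc L_{C,D}}\wcheck$ produce $Z_0, W_0, Z_1, W_1 \in \mc K^{\oplus\ell}$ such that $F = B Z_0$, $F' = D W_0 = B Z_1$, $F'' = D W_1$, $A Z_0 = C W_0$ and $A Z_1 = C W_1$. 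Substituting these into the left-hand side of \eqref{eq:20090320_1}, expanding each Lie bracket via $[R, S] = D_S(\partial) R - D_R(\partial) S$, and integrating by parts (using identity \eqref{20120405:eq1}) produces a sum of terms of the form $\tint Z \cdot \Phi \cdot X$, with $\Phi$ a Frechet-type expression bilinear in the pairs $(A, B)$ and $(C, D)$.

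Next I would use the skewadjointness relations $A^* B + B^* A = 0$ and $C^* D + D^* C = 0$ from Proposition \ref{20120103:propa} to cancel the ``diagonal'' pieces, exactly as in the cancellations carried out in the proof of Proposition \ref{20120103:propb}. The surviving expression is the bilinear mixed analogue of the identity \eqref{20120103:eq1}: for each monomial in \eqref{20120103:eq1} one replaces one occurrence of the pair $(A, B)$ by $(C, D)$ and sums over all such replacements. The claim of the theorem then reduces to the statement that this mixed Frechet identity is equivalent to $J(H, K) + J(K, H) = 0$. I would establish this equivalence by running the computation of Proposition \ref{20120103:propc} twice with one factor $H$ replaced by $K$ at each step --- that is, by computing each of the six mixed double-brackets $\{{u_i}_\lambda\{{u_j}_\mu u_k\}_H\}_K$, $\{{u_j}_\mu\{{u_i}_\lambda u_k\}_H\}_K$, $\{\{{u_i}_\lambda u_j\}_H{}_{\lambda+\mu} u_k\}_K$, and the three symmetric ones, via Lemma \ref{20111012:lem} and the Master Formula \eqref{20110922:eq1}, producing precisely the mixed Frechet identity after applying $B^*(\lambda+\mu+\partial)$ and $D^*(\lambda+\mu+\partial)$ on the left and substituting $\lambda \rightsquigarrow \lambda+\partial$ on $B(\lambda)$ and $D(\lambda)$. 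Minimality of both decompositions $H = AB^{-1}$ and $K = CD^{-1}$ is used, as at the end of the proof of Proposition \ref{20120103:propc}, to conclude that the resulting equation is genuinely equivalent to, not merely a consequence of, the mixed identity on the entries of $A, B, C, D$.

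The main obstacle is purely bookkeeping: the mixed Jacobi identity contributes six bracket terms rather than the three of \eqref{20110922:eq4}, and the four terms in \eqref{eq:20090320_1} likewise each split into many summands once the $P, Q, F$ etc. are expanded. Every cancellation and rearrangement needed, however, is the direct mixed analogue of one already performed in Propositions \ref{20120103:propb}--\ref{20120103:propc}, so no new algebraic ingredient beyond the compatibility hypothesis $J(H,K)+J(K,H)=0$ and the skewadjointness identities \eqref{20120107:eq1} for the two pairs is required.
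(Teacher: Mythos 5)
Your proposal follows essentially the same route as the paper's proof: unpack the relations $\mc N_{\mc L_{A,B},\mc L_{C,D}}$ and $\mc N_{\mc L_{A,B},\mc L_{C,D}}\wcheck$ via \eqref{20120405:eq5}--\eqref{20120405:eq6}, expand \eqref{eq:20090320_1} using \eqref{20120405:eq1}, cancel the diagonal pieces by the skewadjointness relations \eqref{20120405:skew}, and match the survivors against the six mixed double brackets computed via Lemma \ref{20111012:lem} and the Master Formula (the paper's Lemma \ref{20120405:lem1}), using $\det B,\det D\neq0$ at the end. One inaccuracy in your accounting: after the skewadjointness cancellations the surviving expression (the paper's \eqref{20120405:eq12}) is \emph{not} yet the mixed analogue of \eqref{20120103:eq1} --- it still contains ``pure'' $(A,B)$-terms in the auxiliary variable $Z$ (with $AZ=CZ^\prime$ but no relation tying $BZ$ to anything) and pure $(C,D)$-terms in $W^\prime$; eliminating these in favour of $Z^\prime$ and $W$ requires invoking the \emph{individual} Jacobi identities $J(H,H)=0$ and $J(K,K)=0$ in the form of \eqref{20120103:eq1} for each pair (the paper's \eqref{20120405:eq14}--\eqref{20120405:eq15}). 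So your closing claim that only the mixed identity and skewadjointness are needed is not quite right; since both individual Hamiltonian conditions are part of the hypotheses, this does not break the argument, but it is a step your outline omits.
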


By Theorem \ref{20110923:prop}, 
the Hamiltonian structures $H$ and $K$ on $\mc V$ are compatible
if and only if we have the following ``mixed'' Jacobi identity on generators
($i,j,k\in I$):
\begin{equation}\label{20120405:eq3}
\begin{array}{l}
\{{u_i}_\lambda\{{u_j}_\mu {u_k}\}_H\}_K-\{{u_j}_\mu\{{u_i}_\lambda {u_k}\}_H\}_K
-\{{\{{u_i}_\lambda {u_j}\}_H}_{\lambda+\mu} {u_k}\}_K \\
+\{{u_i}_\lambda\{{u_j}_\mu {u_k}\}_K\}_H-\{{u_j}_\mu\{{u_i}_\lambda {u_k}\}_K\}_H
-\{{\{{u_i}_\lambda {u_j}\}_K}_{\lambda+\mu} {u_k}\}_H
=0\,,
\end{array}\end{equation}

In order to relate the above condition to the compatibility 
of the corresponding Dirac structures $\mc L_{A,B}$ and $\mc L_{C,D}$,
we need to compute explicitly each term of the above equation.
This is done in the following:
\begin{lemma}\label{20120405:lem1}
Suppose that the pairs $(A,B)$ and $(C,D)$, with $A,B,C,D\in\Mat_{\ell\times\ell}\mc V[\partial]$,
satisfy equation \eqref{20120107:eq1}:
\begin{equation}\label{20120405:skew}
A^*\circ B+B^*\circ A=0
\,\,,\,\,\,\,
C^*\circ D+D^*\circ C=0\,.
\end{equation}
Assume that $B$ and $D$ have non-zero Dieudonn\`e determinant,
and that the (skewadjoint) rational matrix pseudodifferential operators 
$H=A B^{-1}$ and $K=CD^{-1}$
have coefficients in $\mc V$, i.e. $H,K\in\Mat_{\ell\times\ell}\mc V(\partial)$.
Consider the corresponding non-local $\lambda$-brackets 
$\{\cdot\,_\lambda\,\cdot\}_H$ and $\{\cdot\,_\lambda\,\cdot\}_K$
given by the Master Formula \eqref{20110922:eq1}.
Then, in terms of notation \eqref{20111018:eq5}, 
we have the following identities for every $i',j',k'\in I$:
\begin{equation}\label{20120405:A1}
\begin{array}{c}
%A1
\displaystyle{
\sum_{i,j,k\in I}
B^*_{k'k}(\lambda+\mu+\partial)
\{{u_i}_x\{{u_j}_y {u_k}\}_H\}_K
\big(\big|_{x=\lambda+\partial}D_{ii'}(\lambda)\big)
\big(\big|_{y=\mu+\partial}B_{jj'}(\mu)\big)
} \\
\displaystyle{
=
\sum_{i,k\in I}\sum_{n\in\mb Z_+}
B^*_{k'k}(\lambda+\mu+\partial)
\frac{\partial A_{kj'}(\mu)}{\partial u_i^{(n)}}
(\lambda+\partial)^n C_{ii'}(\lambda)
} \\
\displaystyle{
+\sum_{i,k\in I}\sum_{n\in\mb Z_+}
A^*_{k'k}(\lambda+\mu+\partial)
\frac{\partial B_{kj'}(\mu)}{\partial u_i^{(n)}}
(\lambda+\partial)^n C_{ii'}(\lambda)
\,,} 
\end{array}
\end{equation}
\begin{equation}\label{20120405:A2}
\begin{array}{c}
%A2
\displaystyle{
\sum_{i,j,k\in I}
D^*_{k'k}(\lambda+\mu+\partial)
\{{u_i}_x\{{u_j}_y {u_k}\}_K\}_H
\big(\big|_{x=\lambda+\partial}B_{ii'}(\lambda)\big)
\big(\big|_{y=\mu+\partial}D_{jj'}(\mu)\big)
} \\
\displaystyle{
=
\sum_{i,k\in I}\sum_{n\in\mb Z_+}
D^*_{k'k}(\lambda+\mu+\partial)
\frac{\partial C_{kj'}(\mu)}{\partial u_i^{(n)}}
(\lambda+\partial)^n A_{ii'}(\lambda)
} \\
\displaystyle{
+\sum_{i,k\in I}\sum_{n\in\mb Z_+}
C^*_{k'k}(\lambda+\mu+\partial)
\frac{\partial D_{kj'}(\mu)}{\partial u_i^{(n)}}
(\lambda+\partial)^n A_{ii'}(\lambda)
\,,} 
\end{array}
\end{equation}
\begin{equation}\label{20120405:B1}
\begin{array}{c}
%B1
\displaystyle{
\sum_{i,j,k\in I}
B^*_{k'k}(\lambda+\mu+\partial)
\{{u_j}_y\{{u_i}_x {u_k}\}_H\}_K
\big(\big|_{x=\lambda+\partial}B_{ii'}(\lambda)\big)
\big(\big|_{y=\mu+\partial}D_{jj'}(\mu)\big)
} \\
\displaystyle{
=
\sum_{j,k\in I}\sum_{n\in\mb Z_+}
B^*_{k'k}(\lambda+\mu+\partial)
\frac{\partial A_{ki'}(\lambda)}{\partial u_j^{(n)}}
(\mu+\partial)^n C_{jj'}(\mu)
} \\
\displaystyle{
+\sum_{j,k\in I}\sum_{n\in\mb Z_+}
A^*_{k'k}(\lambda+\mu+\partial)
\frac{\partial B_{ki'}(\lambda)}{\partial u_j^{(n)}}
(\mu+\partial)^n C_{jj'}(\mu)
\,,} 
\end{array}
\end{equation}
\begin{equation}\label{20120405:B2}
\begin{array}{c}
%B2
\displaystyle{
\sum_{i,j,k\in I}
D^*_{k'k}(\lambda+\mu+\partial)
\{{u_j}_y\{{u_i}_x {u_k}\}_K\}_H
\big(\big|_{x=\lambda+\partial}D_{ii'}(\lambda)\big)
\big(\big|_{y=\mu+\partial}B_{jj'}(\mu)\big)
} \\
\displaystyle{
=
\sum_{j,k\in I}\sum_{n\in\mb Z_+}
D^*_{k'k}(\lambda+\mu+\partial)
\frac{\partial C_{ki'}(\lambda)}{\partial u_j^{(n)}}
(\mu+\partial)^n A_{jj'}(\mu)
} \\
\displaystyle{
+\sum_{j,k\in I}\sum_{n\in\mb Z_+}
C^*_{k'k}(\lambda+\mu+\partial)
\frac{\partial D_{ki'}(\lambda)}{\partial u_j^{(n)}}
(\mu+\partial)^n A_{jj'}(\mu)
\,,} 
\end{array}
\end{equation}
\begin{equation}\label{20120405:C1}
\begin{array}{c}
%C1
\displaystyle{
\sum_{i,j,k\in I}
D^*_{k'k}(\lambda+\mu+\partial)
\{{\{{u_i}_x{u_j}\}_H}_{x+y}{u_k}\}_K
\big(\big|_{x=\lambda+\partial}B_{ii'}(\lambda)\big)
\big(\big|_{y=\mu+\partial}B_{jj'}(\mu)\big)
} \\
\displaystyle{
=
-\sum_{j,k\in I}\sum_{n\in\mb Z_+}
C^*_{k'k}(\lambda+\mu+\partial)
(-\lambda-\mu-\partial)^n
\frac{\partial A_{ji'}(\lambda)}{\partial u_k^{(n)}}
B_{jj'}(\mu)
} \\
\displaystyle{
-\sum_{j,k\in I}\sum_{n\in\mb Z_+}
C^*_{k'k}(\lambda+\mu+\partial)
(-\lambda-\mu-\partial)^n
\frac{\partial B_{ji'}(\lambda)}{\partial u_k^{(n)}}
A_{jj'}(\mu)
\,,} 
\end{array}
\end{equation}
\begin{equation}\label{20120405:C2}
\begin{array}{c}
%C2
\displaystyle{
\sum_{i,j,k\in I}
B^*_{k'k}(\lambda+\mu+\partial)
\{{\{{u_i}_x{u_j}\}_K}_{x+y}{u_k}\}_H
\big(\big|_{x=\lambda+\partial}D_{ii'}(\lambda)\big)
\big(\big|_{y=\mu+\partial}D_{jj'}(\mu)\big)
} \\
\displaystyle{
=
-\sum_{j,k\in I}\sum_{n\in\mb Z_+}
A^*_{k'k}(\lambda+\mu+\partial)
(-\lambda-\mu-\partial)^n
\frac{\partial C_{ji'}(\lambda)}{\partial u_k^{(n)}}
D_{jj'}(\mu)
} \\
\displaystyle{
-\sum_{j,k\in I}\sum_{n\in\mb Z_+}
A^*_{k'k}(\lambda+\mu+\partial)
(-\lambda-\mu-\partial)^n
\frac{\partial D_{ji'}(\lambda)}{\partial u_k^{(n)}}
C_{jj'}(\mu)
\,.} 
\end{array}
\end{equation}
\end{lemma}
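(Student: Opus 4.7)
Each of the six identities is established by the same direct computation used in the proof of Proposition \ref{20120103:propc} to derive equations \eqref{20120109:eq2}, \eqref{20120109:eq3} and \eqref{20120109:eq4}; the only differences lie in which of the two non-local Hamiltonian structures provides the outer bracket and in which sandwich matrices $B,D$ appear on either side.

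To see how \eqref{20120405:A1} follows, write $\{{u_j}_\mu u_k\}_H=\sum_rA_{kr}(\mu+\partial)B^{-1}_{rj}(\mu)$ and apply the left Leibniz rule \eqref{20110921:eq3} to $\{{u_i}_\lambda\,\cdot\,\}_K$. This yields two contributions: one in which the outer $K$-bracket falls on the coefficients $a_{kr;m}$ of $A$, and one in which it falls on $B^{-1}_{rj}(\mu)$. The latter is rewritten using formula \eqref{20111012:eq2c} of Corollary \ref{20111014:cor}, producing a triple product of $B^{-1}$'s sandwiching a $K$-bracket of the coefficients $b_{st;m}$ of $B$. The Master Formula \eqref{20110922:eq1} applied to $K=CD^{-1}$ then replaces each occurrence of $\{u_{i\,\lambda}(\cdot)\}_K$ by $\sum_{q,n}\frac{\partial(\cdot)}{\partial u_q^{(n)}}(\lambda+\partial)^nK_{qi}(\lambda)$. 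One next multiplies the whole expression on the left by $B^*_{k'k}(\lambda+\mu+\partial)$, specializes $x=\lambda+\partial$ acting on $D_{ii'}(\lambda)$ and $y=\mu+\partial$ acting on $B_{jj'}(\mu)$, and sums over $i,j,k\in I$. Three simplifications then collapse the expression to the claimed form: (i) the decomposition $K_{qi}(\lambda)=\sum_{r'}C_{qr'}(\lambda+\partial)D^{-1}_{r'i}(\lambda)$, combined with the substitution $\lambda\mapsto\lambda+\partial$ on $D_{ii'}(\lambda)$, produces via $D^{-1}\circ D=\id$ an overall factor $C_{qi'}(\lambda)$; (ii) analogously, the composition of $B^{-1}_{rj}(\mu)$ with $B_{jj'}(\mu)$ (after the substitution $\mu\mapsto\mu+\partial$) yields $\delta_{rj'}$, producing the desired $A_{kj'}(\mu)$ and $B_{kj'}(\mu)$ in the numerators; (iii) the residual term coming from the second piece of the Leibniz expansion is absorbed into the first by the skewadjointness relation $B^*_{k'k}(\lambda+\mu+\partial)A_{kr}(\lambda+\mu+\partial)=-A^*_{k'k}(\lambda+\mu+\partial)B_{kr}(\lambda+\mu+\partial)$ coming from \eqref{20120405:skew}, exactly as in the passage leading to \eqref{20120109:eq2}.

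Identity \eqref{20120405:A2} is the symmetric statement under the simultaneous swap $H\leftrightarrow K$, $(A,B)\leftrightarrow(C,D)$, and is proved by the same argument. Identities \eqref{20120405:B1} and \eqref{20120405:B2} follow from \eqref{20120405:A1} and \eqref{20120405:A2} by exchanging the roles of the two outer variables, i.e.\ by applying the same computation after the swap $i\leftrightarrow j$, $\lambda\leftrightarrow\mu$. Identities \eqref{20120405:C1} and \eqref{20120405:C2} are the $J^3$-type analogues: here one writes $\{{u_i}_\lambda u_j\}_H=A_{jr}(\lambda+\partial)B^{-1}_{ri}(\lambda)$, uses the right Leibniz rule in \eqref{20110921:eq3} together with formula \eqref{20111012:eq2d} of Corollary \ref{20111014:cor} in place of \eqref{20111012:eq2c}, and runs the same cancellation, paralleling the derivation of \eqref{20120109:eq4}.

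The only real difficulty is bookkeeping: one must carefully track how each $\partial$ is propagated through the compositions $(\lambda+\partial)^n$, $A(\mu+\partial)$, $B^{-1}(\mu+\partial)$ after the substitutions $x=\lambda+\partial$ and $y=\mu+\partial$, and verify that the products $B^{-1}(\mu+\partial)B(\mu)$ and $D^{-1}(\lambda+\partial)D(\lambda)$ collapse to the appropriate $\delta$'s once summed over the contracting index, while the cross-terms cancel via \eqref{20120405:skew}. No new conceptual ingredient beyond those already used in the proof of Proposition \ref{20120103:propc} is required.
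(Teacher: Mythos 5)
Your proposal is correct and follows essentially the same route as the paper's proof: left Leibniz rule plus Corollary \ref{20111014:cor} (equation \eqref{20111012:eq2c}), then the Master Formula, then application of $B^*_{k'k}(\lambda+\mu+\partial)$ and the substitutions $x=\lambda+\partial$, $y=\mu+\partial$ with the collapses $K(\lambda+\partial)\circ D=C$, $B^{-1}(\mu+\partial)\circ B=\id$ and the skewadjointness relation $B^*\circ A=-A^*\circ B$; the remaining five identities are obtained by the same symmetry swaps (and, for the $J^3$-type identities, the right Leibniz rule with \eqref{20111012:eq2d}) that the paper uses. The only cosmetic imprecision is the phrase ``absorbed into the first'': the second Leibniz contribution is not absorbed but becomes the second summand of the right-hand side after $B^*AB^{-1}=-A^*$ is applied.
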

\begin{proof}
For equation \eqref{20120405:A1}, 
we can use the Leibniz rule and equation \eqref{20111012:eq2c} to get
$$
\begin{array}{l}
\displaystyle{
\{{u_i}_x\{{u_j}_y {u_k}\}_H\}_K
=\sum_{r\in I}\{{u_i}_x A_{kr}(y+\partial)(B^{-1})_{rj}(y)\}_K
} \\
\displaystyle{
=\sum_{r\in I}\sum_{m\in\mb Z_+}\{{u_i}_x a_{kr;m}\}_K
(y+\partial)^m(B^{-1})_{rj}(y)
} \\
\displaystyle{
-\sum_{\substack{r,p,q\in I \\ m\in\mb Z_+}}\!\!\!
A_{kr}(x+y+\partial)
(B^{-1})_{rp}(x+y+\partial)
\{{u_i}_\lambda b_{pq;m}\}_K
(y+\partial)^m(B^{-1})_{qj}(y)\,.
}
\end{array}
$$
We can then use the Master Formula \eqref{20110922:eq1} to get
$$
\begin{array}{l}
\displaystyle{
\{{u_i}_x\{{u_j}_y {u_k}\}_H\}_K
%} \\
%\displaystyle{
=\sum_{r,s\in I}\sum_{n\in\mb Z_+}
\Big(\frac{\partial A_{kr}(y+\partial)}{\partial u_s^{(n)}} (B^{-1})_{rj}(y)\Big)
(x+\partial)^n K_{si}(x)
} \\
\displaystyle{
-\sum_{\substack{r,p,q,s\in I \\ n\in\mb Z_+}}\!\!\!
A_{kr}(x+y+\partial)
(B^{-1})_{rp}(x+y+\partial)
\Big(\frac{\partial B_{pq}(y+\partial)}{\partial u_s^{(n)}} (B^{-1})_{qj}(y)\Big)
%} \\
%\displaystyle{
%\,\,\,\,\,\,\,\,\,\,\,\,\,\,\,\,\,\,\,\,\,\,\,\,\,\,\,\,\,\,\,\,\,\,\,\,\,\,\,\,\,\,\,\,\,\,\,\,\,\,\,\,\,\,\,\,\,\,\,\,\,\,\,\,\,\,\,\,\,\,\,\,\,\,\,\,\,\,\,\,\,
%\,\,\,\,\,\,\,\,\,\,\,\,\,\,\,\,\,\,\,\,\,\,\,\,\,\,\,\,\,\,\,\,\,\,\,\,\,
%\times
(x+\partial)^n K_{si}(x)
\,.
}
\end{array}
$$
If we now replace $x$ with $\lambda+\partial$ acting on $D_{ii'}(\lambda)$
and $y$ by $\mu+\partial$ acting on $B_{jj'}(\mu)$,
and we apply $B^*_{k'k}(\lambda+\mu+\partial)$, acting from the left,
to both sides of the above equation, 
we get, after using the assuption \eqref{20120405:skew},
that equation \eqref{20120405:A1} holds.
Equation \eqref{20120405:A2} is obtained from \eqref{20120405:A1}
by exchanging the roles of $H$ and $K$.
Equation \eqref{20120405:B1} is obtained from \eqref{20120405:A1}
by exchanging $\lambda$ with $\mu$ and $i$ and $i'$ with $j$ and $j'$ respectively,
and equation \eqref{20120405:B2} is obtained
from \eqref{20120405:B1} by exchaing the roles of $H$ and $K$.
Finally, equations \eqref{20120405:C1} and \eqref{20120405:C2} 
can be derived with a similar computation,
which involves the right Leibniz rule (instead of the left)
and equation \eqref{20111012:eq2d} (instead of \eqref{20111012:eq2c}).
\end{proof}

Let us next describe the relations \eqref{eq:20090322_1}
associated to Dirac structures $\mc L_{A,B}(\mc K)$ and $\mc L_{C,D}(\mc K)$
defined in \eqref{eq:dirac}.
We have
\begin{equation}\label{20120405:eq5}
\begin{array}{l}
\mc N_{\mc L_{A,B}(\mc K),\mc L_{C,D}(\mc K)}
=
\big\{
A(\partial)X\oplus C(\partial)X^\prime
\,\big|\,
X,X^\prime\in\mc K^{\oplus\ell}\,,\,\,B(\partial)X=D(\partial)X^\prime
\big\}
\,,
\\
\mc N_{\mc L_{A,B}(\mc K),\mc L_{C,D}(\mc K)}\wwcheck
=
\big\{
B(\partial)Z\oplus D(\partial)Z^\prime
\,\big|\,
Z,Z^\prime\in\mc K^{\oplus\ell}\,,\,\,A(\partial)Z=C(\partial)Z^\prime
\big\}
\,.
\end{array}
\end{equation}
Hence, by Definition \ref{2006_NRel},
the Dirac structures $\mc L_{A,B}$ and $\mc L_{C,D}$
are compatible if and only if,
for every $X,X^\prime,Y,Y^\prime,Z,Z^\prime,W,W^\prime\in\mc V^{\oplus\ell}$ 
such that
\begin{equation}\label{20120405:eq6}
\begin{array}{c}
\displaystyle{
\vphantom{\Big(}
B(\partial)X=D(\partial)X^\prime
\,\,,\,\,\,\,
B(\partial)Y=D(\partial)Y^\prime
\,\,,\,\,\,\,
B(\partial)W=D(\partial)Z^\prime
\,,}\\
\displaystyle{
\vphantom{\Big(}
A(\partial)Z=C(\partial)Z^\prime
\,\,,\,\,\,\,
A(\partial)W=C(\partial)W^\prime
\,,}
\end{array}
\end{equation}
we have the following identity:
\begin{equation}\label{20120405:eq7}
\begin{array}{l}
\displaystyle{
\vphantom{\Big(}
\big(B(\partial)Z\big|[A(\partial)X,A(\partial)Y]\big)
-\big(D(\partial)Z^\prime\big|[A(\partial)X,C(\partial)Y^\prime]\big)
}\\
\displaystyle{
\vphantom{\Big(}
-\big(B(\partial)W\big|[C(\partial)X^\prime,A(\partial)Y]\big)
+\big(D(\partial)W^\prime\big|[C(\partial)X^\prime,C(\partial)Y^\prime]\big)
=0
\,.}
\end{array}\end{equation}

\begin{lemma}\label{20120405:lem2}
Suppose that $H=AB^{-1}$ and $K=CD^{-1}$ are non-local Hamiltonian structures,
and that conditions \eqref{20120405:eq6} hold. 
Then equation \eqref{20120405:eq7} is equivalent to the following equation:
\begin{equation}\label{20120405:eq16}
\begin{array}{l}
\displaystyle{
\vphantom{\Big(}
-\tint (A(\partial)Y)\cdot D_{B(\partial)X}(\partial)C(\partial)Z^\prime
+\tint (A(\partial)Y)\cdot D^*_{B(\partial)X}(\partial) C(\partial)Z^\prime
} \\
\displaystyle{
\vphantom{\Big(}
+\tint (B(\partial)Y)\cdot D_{C(\partial)Z^\prime}(\partial)A(\partial)X
-\tint (B(\partial)Y)\cdot D_{A(\partial)X}(\partial) C(\partial)Z^\prime
} \\
\displaystyle{
\vphantom{\Big(}
+\tint (C(\partial)Y^\prime)\cdot D_{D(\partial)Z^\prime}(\partial)A(\partial)X
+\tint (C(\partial)Y^\prime)\cdot D^*_{A(\partial)X}(\partial) D(\partial)Z^\prime
} \\
\displaystyle{
\vphantom{\Big(}
+\tint (A(\partial)Y)\cdot D_{B(\partial)W}(\partial)C(\partial)X^\prime
+\tint (A(\partial)Y)\cdot D^*_{C(\partial)X^\prime}(\partial) B(\partial)W
} \\
\displaystyle{
\vphantom{\Big(}
-\tint (C(\partial)Y^\prime)\cdot D_{D(\partial)X^\prime}(\partial)A(\partial)W
+\tint (C(\partial)Y^\prime)\cdot D^*_{D(\partial)X^\prime}(\partial) A(\partial)W
} \\
\displaystyle{
\vphantom{\Big(}
+\tint (D(\partial)Y^\prime)\cdot D_{A(\partial)W}(\partial)C(\partial)X^\prime
-\tint (D(\partial)Y^\prime)\cdot D_{C(\partial)X^\prime}(\partial) A(\partial)W
=0\,.
}\end{array}
\end{equation}
\end{lemma}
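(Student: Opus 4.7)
The plan is to prove the equivalence by direct computation, generalizing the single-structure argument used in the proof of Proposition \ref{20120103:propb}. First, expand each of the four inner products appearing in \eqref{20120405:eq7} using the definition \eqref{20120126:eq1} of the Lie bracket on $\mc V^\ell$; this produces 8 terms, each of the form $\pm\tint U \cdot D_V(\partial) W$, in which $U,V,W$ range over the eight vectors $A(\partial)X$, $C(\partial)X^\prime$, $A(\partial)Y$, $C(\partial)Y^\prime$, $B(\partial)Z$, $D(\partial)Z^\prime$, $B(\partial)W$, $D(\partial)W^\prime$.

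For each of these 8 terms, apply identity \eqref{20120405:eq1} to obtain
$$\tint U \cdot D_V(\partial) W \,=\, -\tint V \cdot D_U(\partial) W \,+\, \tint W \cdot \frac{\delta}{\delta u}(U \cdot V)\,.$$
The variational derivative $\frac{\delta}{\delta u}(U \cdot V)$ that appears is always a product of a vector of type $B(\partial)(\cdot)$ or $D(\partial)(\cdot)$ with one of type $A(\partial)(\cdot)$ or $C(\partial)(\cdot)$. The key computational device is that, using the skewadjointness \eqref{20120405:skew} (which forces $B(\partial)Z \cdot A(\partial)Y + B(\partial)Y \cdot A(\partial)Z$ to lie in $\partial\mc V$, and similarly for the pair $(C,D)$) together with the constraints \eqref{20120405:eq6}, each such variational derivative can be transported from the ``$A,B$-side'' to the ``$C,D$-side'' via a chain of the form
$$\frac{\delta}{\delta u}\big(B(\partial)Z \cdot A(\partial)Y\big) = -\frac{\delta}{\delta u}\big(A(\partial)Z \cdot B(\partial)Y\big) = -\frac{\delta}{\delta u}\big(C(\partial)Z^\prime \cdot D(\partial)Y^\prime\big) = \frac{\delta}{\delta u}\big(D(\partial)Z^\prime \cdot C(\partial)Y^\prime\big)\,.$$
Applying \eqref{20120405:eq1} in reverse to the rewritten variational derivative then produces two new inner products in which the outer vectors are mixed between the two Dirac structures; some of these match the $D$-type terms of \eqref{20120405:eq16} directly, and integration by parts $\tint F \cdot D_G(\partial) P = \tint (D^*_G(\partial) F)\cdot P$ on remaining terms produces the $D^*$-type terms.

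Carrying out this procedure systematically on all 8 original terms should yield, after the appropriate cancellations, exactly the 12 terms on the left-hand side of \eqref{20120405:eq16}. The main obstacle will be organizational: tracking signs throughout, selecting the correct combination of a constraint from \eqref{20120405:eq6} and an isotropy relation from \eqref{20120405:skew} at each step, and verifying that auxiliary terms involving Frechet derivatives with respect to vectors not appearing in \eqref{20120405:eq16} (such as $D_{B(\partial)Z}$ or $D_{D(\partial)W^\prime}$) either cancel pairwise or transform, via further applications of \eqref{20120405:eq1} combined with integration by parts, into admissible terms of \eqref{20120405:eq16}.
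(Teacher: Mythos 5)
Your proposal correctly reproduces the first half of the paper's argument: expanding the four brackets in \eqref{20120405:eq7} via \eqref{20120126:eq1}, applying \eqref{20120405:eq1}, and cancelling the resulting variational-derivative terms in pairs using \eqref{20120405:skew} and \eqref{20120405:eq6} (this is the chain $(B(\partial)Z|A(\partial)Y)=-(A(\partial)Z|B(\partial)Y)=-(C(\partial)Z^\prime|D(\partial)Y^\prime)=(D(\partial)Z^\prime|C(\partial)Y^\prime)$ you describe). But there is a genuine gap after that point. What this stage yields is an expression (equation \eqref{20120405:eq12} in the paper) that still contains the terms
$-\tint (A(\partial)Y)\cdot D_{B(\partial)Z}(\partial)A(\partial)X$, $-\tint (A(\partial)Y)\cdot D^*_{A(\partial)X}(\partial)B(\partial)Z$, and their mirror images with $D(\partial)W^\prime$ and $C(\partial)X^\prime$. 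The vectors $Z$ and $W^\prime$ do not occur in \eqref{20120405:eq16} at all, and the constraints \eqref{20120405:eq6} give you no handle on $B(\partial)Z$ or $D(\partial)W^\prime$ directly --- only on $A(\partial)Z=C(\partial)Z^\prime$ and $C(\partial)W^\prime=A(\partial)W$. Your claim that these leftover terms ``either cancel pairwise or transform, via further applications of \eqref{20120405:eq1} combined with integration by parts, into admissible terms'' is exactly where the argument breaks: identity \eqref{20120405:eq1} only exchanges the contracted vector with the subscript of the Frechet derivative while keeping the \emph{argument} fixed, so no iteration of it can move $Z$ out of the subscript position $D_{B(\partial)Z}$ and into the argument position as $A(\partial)Z$ (where the constraint $A(\partial)Z=C(\partial)Z^\prime$ could then be used).

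The missing ingredient is the hypothesis --- stated in the lemma but never invoked in your proposal --- that $H=AB^{-1}$ and $K=CD^{-1}$ are individually non-local Hamiltonian structures. By Propositions \ref{20120103:propc} and \ref{20120103:propb}, this is equivalent to identity \eqref{20120103:eq1}, which is precisely the exchange rule
$$
-\tint (A(\partial)Y)\cdot D_{B(\partial)Z}(\partial)A(\partial)X
-\tint (A(\partial)Y)\cdot D^*_{A(\partial)X}(\partial) B(\partial)Z
=-\tint (A(\partial)Y)\cdot D_{B(\partial)X}(\partial)A(\partial)Z+\cdots
$$
swapping $X$ and $Z$ across the subscript/argument divide (and likewise for $K$ with $X^\prime$ and $W^\prime$). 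Only after this swap can one substitute $A(\partial)Z=C(\partial)Z^\prime$ and $C(\partial)W^\prime=A(\partial)W$ to arrive at \eqref{20120405:eq16}. Without appealing to the Jacobi identity of each structure separately, the equivalence cannot be established by the elementary manipulations you list.
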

\begin{proof}
By \eqref{20120126:eq1} and \eqref{20120405:eq1}, we have 
\begin{equation}\label{20120405:eq8}
\begin{array}{l}
\displaystyle{
\vphantom{\Big(}
\big(B(\partial)Z\big|[A(\partial)X,A(\partial)Y]\big)
=\tint (B(\partial)Z)\cdot D_{A(\partial)Y}(\partial)A(\partial)X
} \\
\displaystyle{
\vphantom{\Big(}
-\tint (B(\partial)Z)\cdot D_{A(\partial)X}(\partial)A(\partial)Y
=\tint (A(\partial)X)\cdot\frac{\delta}{\delta u}(B(\partial)Z|A(\partial)Y)
} \\
\displaystyle{
\vphantom{\Big(}
-\tint (A(\partial)Y)\cdot D_{B(\partial)Z}(\partial)A(\partial)X
-\tint (A(\partial)Y)\cdot D^*_{A(\partial)X}(\partial) B(\partial)Z
\,.
}\end{array}
\end{equation}
Similarly, we have
\begin{equation}\label{20120405:eq9}
\begin{array}{l}
\displaystyle{
\vphantom{\Big(}
\big(D(\partial)Z^\prime\big|[A(\partial)X,C(\partial)Y^\prime]\big)
=\tint (A(\partial)X)\cdot\frac{\delta}{\delta u}(D(\partial)Z^\prime|C(\partial)Y^\prime)
} \\
\displaystyle{
\vphantom{\Big(}
-\tint (C(\partial)Y^\prime)\cdot D_{D(\partial)Z^\prime}(\partial)A(\partial)X
-\tint (C(\partial)Y^\prime)\cdot D^*_{A(\partial)X}(\partial) D(\partial)Z^\prime
\,,
}\end{array}
\end{equation}
\begin{equation}\label{20120405:eq10}
\begin{array}{l}
\displaystyle{
\vphantom{\Big(}
\big(B(\partial)W\big|[C(\partial)X^\prime,A(\partial)Y]\big)
=\tint (C(\partial)X^\prime)\cdot\frac{\delta}{\delta u}(B(\partial)W|A(\partial)Y)
} \\
\displaystyle{
\vphantom{\Big(}
-\tint (A(\partial)Y)\cdot D_{B(\partial)W}(\partial)C(\partial)X^\prime
-\tint (A(\partial)Y)\cdot D^*_{C(\partial)X^\prime}(\partial) B(\partial)W
\,,
}\end{array}
\end{equation}
and
\begin{equation}\label{20120405:eq11}
\begin{array}{l}
\displaystyle{
\vphantom{\Big(}
\big(D(\partial)W^\prime\big|[C(\partial)X^\prime,C(\partial)Y^\prime]\big)
=\tint (C(\partial)X^\prime)\cdot\frac{\delta}{\delta u}(D(\partial)W^\prime|C(\partial)Y^\prime)
} \\
\displaystyle{
\vphantom{\Big(}
-\tint (C(\partial)Y^\prime)\cdot D_{D(\partial)W^\prime}(\partial)C(\partial)X^\prime
-\tint (C(\partial)Y^\prime)\cdot D^*_{C(\partial)X^\prime}(\partial) D(\partial)W^\prime
\,.
}\end{array}
\end{equation}
By the skewadnointness of $H$ and $K$, which translates to \eqref{20120405:skew}, 
and by conditions \eqref{20120405:eq6}, we have
$$
\begin{array}{l}
(B(\partial)Z|A(\partial)Y)
=-(A(\partial)Z|B(\partial)Y) \\
=-(C(\partial)Z^\prime|D(\partial)Y^\prime)
=(D(\partial)Z^\prime|C(\partial)Y^\prime)
\,,
\end{array}
$$
hence the first terms in the RHS of \eqref{20120405:eq8} and \eqref{20120405:eq9}
cancel.
Similarly for the first terms in the RHS of \eqref{20120405:eq10} and \eqref{20120405:eq11}.
Therefore, combining equations \eqref{20120405:eq8}--\eqref{20120405:eq11}, we get that
equation \eqref{20120405:eq7} is equivalent to
\begin{equation}\label{20120405:eq12}
\begin{array}{l}
\displaystyle{
\vphantom{\Big(}
-\tint (A(\partial)Y)\cdot D_{B(\partial)Z}(\partial)A(\partial)X
-\tint (A(\partial)Y)\cdot D^*_{A(\partial)X}(\partial) B(\partial)Z
} \\
\displaystyle{
\vphantom{\Big(}
+\tint (C(\partial)Y^\prime)\cdot D_{D(\partial)Z^\prime}(\partial)A(\partial)X
+\tint (C(\partial)Y^\prime)\cdot D^*_{A(\partial)X}(\partial) D(\partial)Z^\prime
} \\
\displaystyle{
\vphantom{\Big(}
+\tint (A(\partial)Y)\cdot D_{B(\partial)W}(\partial)C(\partial)X^\prime
+\tint (A(\partial)Y)\cdot D^*_{C(\partial)X^\prime}(\partial) B(\partial)W
} \\
\displaystyle{
\vphantom{\Big(}
-\tint (C(\partial)Y^\prime)\cdot D_{D(\partial)W^\prime}(\partial)C(\partial)X^\prime
-\tint (C(\partial)Y^\prime)\cdot D^*_{C(\partial)X^\prime}(\partial) D(\partial)W^\prime
=0\,.
}\end{array}
\end{equation}
Next, since by assumption $H=AB^{-1}$ is a non-local Hamiltonian structure,
it follows by Propositions \ref{20120103:propc} and \ref{20120103:propb}
that equation \eqref{20120103:eq1} holds.
In particular,
\begin{equation}\label{20120405:eq14}
\begin{array}{l}
\displaystyle{
\vphantom{\Big(}
-\tint (A(\partial)Y)\cdot D_{B(\partial)Z}(\partial)A(\partial)X
-\tint (A(\partial)Y)\cdot D^*_{A(\partial)X}(\partial) B(\partial)Z
} \\
\displaystyle{
\vphantom{\Big(}
=-\tint (A(\partial)Y)\cdot D_{B(\partial)X}(\partial)A(\partial)Z
+\tint (A(\partial)Y)\cdot D^*_{B(\partial)X}(\partial) A(\partial)Z
} \\
\displaystyle{
\vphantom{\Big(}
+\tint (B(\partial)Y)\cdot D_{A(\partial)Z}(\partial)A(\partial)X
-\tint (B(\partial)Y)\cdot D_{A(\partial)X}(\partial) A(\partial)Z
\,.
}\end{array}
\end{equation}
Similarly, using the assumption that $K=CD^{-1}$ is a non-local Hamiltonian structure,
we get
\begin{equation}\label{20120405:eq15}
\begin{array}{l}
\displaystyle{
\vphantom{\Big(}
-\tint (C(\partial)Y^\prime)\cdot D_{D(\partial)W^\prime}(\partial)C(\partial)X^\prime
-\tint (C(\partial)Y^\prime)\cdot D^*_{C(\partial)X^\prime}(\partial) D(\partial)W^\prime
} \\
\displaystyle{
\vphantom{\Big(}
=-\tint (C(\partial)Y^\prime)\cdot D_{D(\partial)X^\prime}(\partial)C(\partial)W^\prime
+\tint (C(\partial)Y^\prime)\cdot D^*_{D(\partial)X^\prime}(\partial) C(\partial)W^\prime
} \\
\displaystyle{
\vphantom{\Big(}
+\tint (D(\partial)Y^\prime)\cdot D_{C(\partial)W^\prime}(\partial)C(\partial)X^\prime
-\tint (D(\partial)Y^\prime)\cdot D_{C(\partial)X^\prime}(\partial) C(\partial)W^\prime
\,.
}\end{array}
\end{equation}
Combining equations \eqref{20120405:eq12}, \eqref{20120405:eq14} and \eqref{20120405:eq15},
we get \eqref{20120405:eq16}.
\end{proof}

\begin{proof}[Proof of Theorem \ref{20120126:prop2}]
By Lemma \ref{20120405:lem2},
we only need to prove that, if condition \eqref{20120405:eq3} holds,
then equation \eqref{20120405:eq16} holds
for every $X,X^\prime,Y,Y^\prime,W,Z^\prime$
satisfying the first three identities in \eqref{20120405:eq6}.
It follows by some straightforward computation that
we can rewrite each term in the LHS of \eqref{20120405:eq16} as follows
%X1
\begin{equation}\label{20120405:X1}
\begin{array}{l}
\displaystyle{
\vphantom{\Big(}
-\tint (A(\partial)Y)\cdot D_{B(\partial)X}(\partial)C(\partial)Z^\prime
=-\tint (A(\partial)Y)\cdot B(\partial)D_X(\partial)C(\partial)Z^\prime
} \\
\displaystyle{
\vphantom{\Big(}
-\int \!\!\!\!\sum_{\substack{i',j',k'\in I \\ j,k,\in I,\, n\in\mb Z_+}}\!\!\!\!
Y_{k'}A^*_{k'k}(\lambda\!+\!\mu\!+\!\partial)
\frac{\partial B_{ki'}(\lambda)}{\partial u_j^{(n)}} (\mu+\partial)^n C_{jj'}(\mu)
\big(\big|_{\lambda=\partial}X_{i'}\big)\big(\big|_{\mu=\partial}Z^\prime_{j'}\big)
\,,
}\end{array}
\end{equation}
%X2
\begin{equation}\label{20120405:X2}
\begin{array}{l}
\displaystyle{
\vphantom{\Big(}
\tint (A(\partial)Y)\cdot D^*_{D(\partial)X^\prime}(\partial) C(\partial)Z^\prime
=\tint (A(\partial)Y)\cdot D^*_{X^\prime}(\partial)D^*(\partial)C(\partial)Z^\prime
} \\
\displaystyle{
\vphantom{\Big(}
+\int \!\!\!\!\!\!\sum_{\substack{i',j',k'\in I \\ j,k,\in I, n\in\mb Z_+}}\!\!\!\!\!\!
Y_{k'}A^*_{k'k}\!(\lambda\!+\!\mu\!+\!\partial)
(\!-\!\lambda\!-\!\mu\!-\!\partial)^n
\frac{\partial D_{ji'}(\lambda)}{\partial u_k^{(n)}} C_{jj'}(\mu)
\big(\big|_{\lambda=\partial}\!X^\prime_{i'}\big)\big(\big|_{\mu=\partial}\!Z^\prime_{j'}\big)
\,,
}\end{array}
\end{equation}
%X3
\begin{equation}\label{20120405:X3}
\begin{array}{l}
\displaystyle{
\vphantom{\Big(}
\tint (D(\partial)Y^\prime)\cdot D_{C(\partial)Z^\prime}(\partial)A(\partial)X
=\tint (D(\partial)Y^\prime)\cdot C(\partial)D_{Z^\prime}(\partial)A(\partial)X
} \\
\displaystyle{
\vphantom{\Big(}
+\int \!\!\!\sum_{\substack{i',j',k'\in I \\ i,k,\in I, n\in\mb Z_+}}\!\!\!
Y^\prime_{k'}D^*_{k'k}(\lambda\!+\!\mu\!+\!\partial)
\frac{\partial C_{kj'}(\mu)}{\partial u_i^{(n)}} (\lambda+\partial)^n A_{ii'}(\lambda)
\big(\big|_{\lambda=\partial}X_{i'}\big)\big(\big|_{\mu=\partial}Z^\prime_{j'}\big)
\,,
}\end{array}
\end{equation}
%X4
\begin{equation}\label{20120405:X4}
\begin{array}{l}
\displaystyle{
\vphantom{\Big(}
-\tint (B(\partial)Y)\cdot D_{A(\partial)X}(\partial) C(\partial)Z^\prime
=-\tint (B(\partial)Y)\cdot A(\partial)D_X(\partial)C(\partial)Z^\prime
} \\
\displaystyle{
\vphantom{\Big(}
-\int \!\!\!\sum_{\substack{i',j',k'\in I \\ j,k,\in I, n\in\mb Z_+}}\!\!\!
Y_{k'}B^*_{k'k}(\lambda\!+\!\mu\!+\!\partial)
\frac{\partial A_{ki'}(\lambda)}{\partial u_j^{(n)}} (\mu+\partial)^n C_{jj'}(\mu)
\big(\big|_{\lambda=\partial}X_{i'}\big)\big(\big|_{\mu=\partial}Z^\prime_{j'}\big)
\,,
}\end{array}
\end{equation}
%X5
\begin{equation}\label{20120405:X5}
\begin{array}{l}
\displaystyle{
\vphantom{\Big(}
\tint (C(\partial)Y^\prime)\cdot D_{D(\partial)Z^\prime}(\partial)A(\partial)X
=\tint (C(\partial)Y^\prime)\cdot D(\partial)D_{Z^\prime}(\partial)A(\partial)X
} \\
\displaystyle{
\vphantom{\Big(}
+\int \!\!\!\sum_{\substack{i',j',k'\in I \\ i,k,\in I, n\in\mb Z_+}}\!\!\!
Y^\prime_{k'}C^*_{k'k}(\lambda\!+\!\mu\!+\!\partial)
\frac{\partial D_{kj'}(\mu)}{\partial u_i^{(n)}} (\lambda+\partial)^n A_{ii'}(\lambda)
\big(\big|_{\lambda=\partial}X_{i'}\big)\big(\big|_{\mu=\partial}Z^\prime_{j'}\big)
\,,
}\end{array}
\end{equation}
%X6
\begin{equation}\label{20120405:X6}
\begin{array}{l}
\displaystyle{
\vphantom{\Big(}
\tint (C(\partial)Y^\prime)\cdot D^*_{A(\partial)X}(\partial) B(\partial)W
=\tint (C(\partial)Y^\prime)\cdot D^*_{X}(\partial)A^*(\partial)B(\partial)W
} \\
\displaystyle{
\vphantom{\Big(}
+\int \!\!\!\!\!\!\sum_{\substack{i',j',k'\in I \\ j,k,\in I, n\in\mb Z_+}}\!\!\!\!\!\!\!
Y^\prime_{k'}C^*_{k'k}\!(\lambda\!+\!\mu\!+\!\partial)
(\!-\!\lambda\!-\!\mu\!-\!\partial)^n
\frac{\partial A_{ji'}(\lambda)}{\partial u_k^{(n)}} B_{jj'}(\mu)
\big(\big|_{\lambda=\partial}\!X_{i'}\big)\big(\big|_{\mu=\partial}\!W_{j'}\big)
,
}\end{array}
\end{equation}
%X7
\begin{equation}\label{20120405:X7}
\begin{array}{l}
\displaystyle{
\vphantom{\Big(}
\tint (A(\partial)Y)\cdot D_{B(\partial)W}(\partial)C(\partial)X^\prime
=\tint (A(\partial)Y)\cdot B(\partial)D_W(\partial)C(\partial)X^\prime
} \\
\displaystyle{
\vphantom{\Big(}
+\int \!\!\!\sum_{\substack{i',j',k'\in I \\ i,k,\in I, n\in\mb Z_+}}\!\!\!
Y_{k'}A^*_{k'k}(\lambda\!+\!\mu\!+\!\partial)
\frac{\partial B_{kj'}(\mu)}{\partial u_i^{(n)}} (\lambda+\partial)^n C_{ii'}(\lambda)
\big(\big|_{\lambda=\partial}X^\prime_{i'}\big)\big(\big|_{\mu=\partial}W_{j'}\big)
\,,
}\end{array}
\end{equation}
%X8
\begin{equation}\label{20120405:X8}
\begin{array}{l}
\displaystyle{
\vphantom{\Big(}
\tint (A(\partial)Y)\cdot D^*_{C(\partial)X^\prime}(\partial) D(\partial)Z^\prime
=\tint (A(\partial)Y)\cdot D^*_{X^\prime}(\partial)C^*(\partial)D(\partial)Z^\prime
} \\
\displaystyle{
\vphantom{\Big(}
+\int \!\!\!\!\!\!\sum_{\substack{i',j',k'\in I \\ j,k,\in I, n\in\mb Z_+}}\!\!\!\!\!\!
Y_{k'}A^*_{k'k}\!(\lambda\!+\!\mu\!+\!\partial)
(\!-\!\lambda\!-\!\mu\!-\!\partial)^n
\frac{\partial C_{ji'}(\lambda)}{\partial u_k^{(n)}} D_{jj'}(\mu)
\big(\big|_{\lambda=\partial}\!X^\prime_{i'}\big)\big(\big|_{\mu=\partial}\!Z^\prime_{j'}\big)
\,,
}\end{array}
\end{equation}
%X9
\begin{equation}\label{20120405:X9}
\begin{array}{l}
\displaystyle{
\vphantom{\Big(}
-\tint (C(\partial)Y^\prime)\cdot D_{D(\partial)X^\prime}(\partial)A(\partial)W
=-\tint (C(\partial)Y^\prime)\cdot D(\partial)D_{X^\prime}(\partial)A(\partial)W
} \\
\displaystyle{
\vphantom{\Big(}
-\int \!\!\!\!\sum_{\substack{i',j',k'\in I \\ j,k,\in I, n\in\mb Z_+}}\!\!\!\!
Y^\prime_{k'}C^*_{k'k}(\lambda\!+\!\mu\!+\!\partial)
\frac{\partial D_{ki'}(\lambda)}{\partial u_j^{(n)}} (\mu+\partial)^n A_{jj'}(\mu)
\big(\big|_{\lambda=\partial}X^\prime_{i'}\big)\big(\big|_{\mu=\partial}W_{j'}\big)
\,,
}\end{array}
\end{equation}
%X10
\begin{equation}\label{20120405:X10}
\begin{array}{l}
\displaystyle{
\vphantom{\Big(}
\tint (C(\partial)Y^\prime)\cdot D^*_{B(\partial)X}(\partial) A(\partial)W
=\tint (C(\partial)Y^\prime)\cdot D^*_{X}(\partial)B^*(\partial)A(\partial)W
} \\
\displaystyle{
\vphantom{\Big(}
+\int \!\!\!\!\!\!\sum_{\substack{i',j',k'\in I \\ j,k,\in I, n\in\mb Z_+}}\!\!\!\!\!\!
Y^\prime_{k'}C^*_{k'k}\!(\lambda\!+\!\mu\!+\!\partial)
(\!-\!\lambda\!-\!\mu\!-\!\partial)^n
\frac{\partial B_{ji'}(\lambda)}{\partial u_k^{(n)}} A_{jj'}(\mu)
\big(\big|_{\lambda=\partial}\!X_{i'}\big)\!\big(\big|_{\mu=\partial}\!W_{j'}\big)
,
}\end{array}
\end{equation}
%X11
\begin{equation}\label{20120405:X11}
\begin{array}{l}
\displaystyle{
\vphantom{\Big(}
\tint (B(\partial)Y)\cdot D_{A(\partial)W}(\partial)C(\partial)X^\prime
=\tint (B(\partial)Y)\cdot A(\partial)D_W(\partial)C(\partial)X^\prime
} \\
\displaystyle{
\vphantom{\Big(}
+\int \!\!\!\!\sum_{\substack{i',j',k'\in I \\ i,k,\in I, n\in\mb Z_+}}\!\!\!\!
Y_{k'}B^*_{k'k}(\lambda\!+\!\mu\!+\!\partial)
\frac{\partial A_{kj'}(\mu)}{\partial u_i^{(n)}} (\lambda+\partial)^n C_{ii'}(\lambda)
\big(\big|_{\lambda=\partial}X^\prime_{i'}\big)\big(\big|_{\mu=\partial}W_{j'}\big)
\,,
}\end{array}
\end{equation}
%X12
\begin{equation}\label{20120405:X12}
\begin{array}{l}
\displaystyle{
\vphantom{\Big(}
-\tint (D(\partial)Y^\prime)\cdot D_{C(\partial)X^\prime}(\partial) A(\partial)W
=-\tint (D(\partial)Y^\prime)\cdot C(\partial)D_{X^\prime}(\partial)A(\partial)W
} \\
\displaystyle{
\vphantom{\Big(}
-\int \!\!\!\!\sum_{\substack{i',j',k'\in I \\ j,k,\in I, n\in\mb Z_+}}\!\!\!\!
Y^\prime_{k'}D^*_{k'k}(\lambda\!+\!\mu\!+\!\partial)
\frac{\partial C_{ki'}(\lambda)}{\partial u_j^{(n)}} (\mu+\partial)^n A_{jj'}(\mu)
\big(\big|_{\lambda=\partial}X^\prime_{i'}\big)\big(\big|_{\mu=\partial}W_{j'}\big)
\,.
}\end{array}
\end{equation}
It follows from the skewadjointness conditions \eqref{20120405:skew}
that the first term in the RHS of \eqref{20120405:X1} 
cancels with the first term in the RHS of \eqref{20120405:X4},
the first term in the RHS of \eqref{20120405:X2} 
cancels with the first term in the RHS of \eqref{20120405:X8},
the first term in the RHS of \eqref{20120405:X3} 
cancels with the first term in the RHS of \eqref{20120405:X5},
the first term in the RHS of \eqref{20120405:X6} 
cancels with the first term in the RHS of \eqref{20120405:X10},
the first term in the RHS of \eqref{20120405:X7} 
cancels with the first term in the RHS of \eqref{20120405:X11},
and the first term in the RHS of \eqref{20120405:X9} 
cancels with the first term in the RHS of \eqref{20120405:X12}.
Furthermore, 
combining the second terms of the RHS's of \eqref{20120405:X7} and \eqref{20120405:X11},
we get, thanks to \eqref{20120405:A1},
$$
\begin{array}{l}
\displaystyle{
\vphantom{\Big(}
\int 
\sum_{\substack{i',j',k'\in I \\ i,j,k\in I}}
\big(B_{kk'}(\partial)Y_{k'}\big)
\{{u_i}_\lambda\{{u_j}_\mu {u_k}\}_H\}_K
\big(\big|_{\lambda=\partial}D_{ii'}(\partial)X^\prime_{i'}\big)
\big(\big|_{\mu=\partial}B_{jj'}(\partial)W_{j'}\big)
\,.
}\end{array}
$$
Combining the second terms of the RHS's of \eqref{20120405:X3} and \eqref{20120405:X5},
we get, thanks to \eqref{20120405:A2},
$$
\begin{array}{l}
\displaystyle{
\vphantom{\Big(}
\int \sum_{\substack{i',j',k'\in I \\ i,j,k,\in I}}
\big(D_{kk'}(\partial)Y^\prime_{k'}\big)
\{{u_i}_\lambda\{{u_j}_\mu {u_k}\}_K\}_H
\big(\big|_{\lambda=\partial}B_{ii'}(\partial)X_{i'}\big)
\big(\big|_{\mu=\partial}D_{jj'}(\partial)Z^\prime_{j'}\big)
\,.
}\end{array}
$$
Combining the second terms of the RHS's of \eqref{20120405:X1} and \eqref{20120405:X4},
we get, thanks to \eqref{20120405:B1},
$$
\begin{array}{l}
\displaystyle{
\vphantom{\Big(}
-\int \!\!\sum_{\substack{i',j',k'\in I \\ i,j,k,\in I}}\!\!
\big(B_{kk'}(\partial)Y_{k'}\big)
\{{u_j}_\mu\{{u_i}_\lambda {u_k}\}_H\}_K
\big(\big|_{\lambda=\partial}B_{ii'}(\partial)X_{i'}\big)
\big(\big|_{\mu=\partial}D_{jj'}(\partial)Z^\prime_{j'}\big)
\,.
}\end{array}
$$
Combining the second terms of the RHS's of \eqref{20120405:X9} and \eqref{20120405:X12},
we get, thanks to \eqref{20120405:B2},
$$
\begin{array}{l}
\displaystyle{
\vphantom{\Big(}
-\int \!\!\sum_{\substack{i',j',k'\in I \\ i,j,k,\in I}}\!\!
\big(D_{kk'}(\partial)Y^\prime_{k'}\big)
\{{u_j}_\mu\{{u_i}_\lambda {u_k}\}_K\}_H
\big(\big|_{\lambda=\partial}D_{ii'}(\partial)X^\prime_{i'}\big)
\big(\big|_{\mu=\partial}B_{jj'}(\partial)W_{j'}\big)
\,.
}\end{array}
$$
Combining the second terms of the RHS's of \eqref{20120405:X6} and \eqref{20120405:X10},
we get, thanks to \eqref{20120405:C1},
$$
\begin{array}{l}
\displaystyle{
\vphantom{\Big(}
-\int \!\!\!\sum_{\substack{i',j',k'\in I \\ i,j,k,\in I}}\!\!\!
\big(D_{kk'}(\partial)Y^\prime_{k'}\big)
\{{\{{u_i}_\lambda{u_j}\}_H}\!_{\lambda+\mu}\!{u_k}\}_K
\big(\big|_{\lambda=\partial}B_{ii'}(\partial)X_{i'}\big)
\!\big(\big|_{\mu=\partial}B_{jj'}(\partial)W_{j'}\big)
.
}\end{array}
$$
Finally, 
combining the second terms of the RHS's of \eqref{20120405:X2} and \eqref{20120405:X8},
we get, thanks to \eqref{20120405:C2},
$$
\begin{array}{l}
\displaystyle{
\vphantom{\Big(}
-\int \!\!\!\sum_{\substack{i',j',k'\in I \\ i,j,k,\in I}}\!\!\!
\big(B_{kk'}(\partial)Y_{k'}\big)
\{{\{{u_i}_\lambda{u_j}\}_K}_{\lambda+\mu}\!{u_k}\}_H
\big(\big|_{\lambda=\partial}D_{ii'}(\partial)X^\prime_{i'}\big)
\!\big(\big|_{\mu=\partial}D_{jj'}(\partial)Z^\prime_{j'}\big)
.
}\end{array}
$$
Putting together all the above results, we conclude that the LHS of \eqref{20120405:eq16}
is equal to
$$
\begin{array}{l}
\displaystyle{
\vphantom{\Big(}
\int  \sum_{i,j,k\in I}
\big(B(\partial)Y\big)_k
\Big(
\{{u_i}_\lambda\{{u_j}_\mu {u_k}\}_H\}_K
+\{{u_i}_\lambda\{{u_j}_\mu {u_k}\}_K\}_H
} \\
\displaystyle{
\vphantom{\Big(}
-\{{u_j}_\mu\{{u_i}_\lambda {u_k}\}_H\}_K
-\{{u_j}_\mu\{{u_i}_\lambda {u_k}\}_K\}_H
-\{{\{{u_i}_\lambda{u_j}\}_H}_{\lambda+\mu}{u_k}\}_K
} \\
\displaystyle{
\vphantom{\Big(}
-\{{\{{u_i}_\lambda{u_j}\}_K}_{\lambda+\mu}\!{u_k}\}_H
\Big)
\big(\big|_{\lambda=\partial}B(\partial)X\big)_i
\big(\big|_{\mu=\partial}B(\partial)W\big)_j
\,,
}\end{array}
$$
which is zero by \eqref{20120405:eq3}.
\end{proof}

In view of Theorem \ref{20120126:prop2},
we can translate Theorem \ref{mtst} in terms of compatible non-local Hamiltonian structures.
\begin{theorem}\label{mtst-nonloc}
Let $\mc V$ be an algebra of differential functions in $u_1,\dots,u_\ell$, which is a domain.
Let $H,K\in\Mat_{\ell\times\ell}\mc V(\partial)$
be compatible non-local Hamiltonian structures on $\mc V$.
Let $H=AB^{-1},$ and $K=CD^{-1}$ be their minimal fractional decompositions
(cf. Definition \ref{def:minimal-fraction}),
with $A,B,C,D\in\Mat_{\ell\times\ell}\mc V[\partial]$, $\det B\neq0$, $\det D\neq0$.
Let $F_0=B(\partial)Z,\,F_1=D(\partial)Z^\prime=B(\partial)W,\,F_2=D(\partial)W^\prime$,
with $Z,Z^\prime,W,W^\prime\in\mc V^{\oplus\ell}$, be such that
$$
D(\partial)Z^\prime=B(\partial)W
\,\,,\,\,\,\,
A(\partial)Z=C(\partial)Z^\prime
\,\,,\,\,\,\,
A(\partial)W=C(\partial)W^\prime
\,,
$$
and
$$
D_{F_0}^*(\partial)=D_{F_0}(\partial)
\,\,,\,\,\,\,
D_{F_1}^*(\partial)=D_{F_1}(\partial)
\,.
$$
Then:
\begin{enumerate}[(a)]
\item
For all $X,\,Y\in\mc V^{\oplus\ell}$, such that
$D(\partial)X,D(\partial)Y\in\im(B)$, we have
\begin{equation}\label{20120405:eq2b}
\tint Y\cdot C^*(\partial)\big(D_{F_2}(\partial)-D_{F_2}^*(\partial)\big)C(\partial)X
=0\,.
\end{equation}
\item
If we also assume that $\det K\neq0$, then $D_{F_2}^*(\partial)=D_{F_2}(\partial)$.
\item
If, moreover, we assume that $\mc V$ is a normal algebra of differential functions,
then $F_2$ is exact: $F_2=\frac{\delta f_2}{\delta u}$ for some $\tint f_2\in\mc V/\partial\mc V$.
\end{enumerate}
\end{theorem}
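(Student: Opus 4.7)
My plan is to derive each statement from the corresponding Dirac-structure version, using the correspondence furnished by Theorem \ref{20120126:prop2}.

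For part (a), I would apply Theorem \ref{mtst} to $\mc L=\mc L_{A,B}$ and $\mc L^\prime=\mc L_{C,D}$, regarded as generalized Dirac structures on $\mc V$ in the sense of the remark following Theorem \ref{20111020:thm}. Their compatibility follows by restriction: the relations $\mc N$ and $\mc N\wcheck$ over $\mc V$ sit inside their counterparts over $\mc K$, where compatibility holds by Theorem \ref{20120126:prop2}; and the proof of Theorem \ref{mtst} uses only isotropy together with the compatibility identity \eqref{eq:20090320_1}, not maximal isotropy, so it carries over to the generalized setting. The hypotheses $D_{F_n}^*=D_{F_n}$ for $n=0,1$ are given, and the relations $A(\partial)Z=C(\partial)Z^\prime,\,D(\partial)Z^\prime=B(\partial)W,\,A(\partial)W=C(\partial)W^\prime$ place $F_0\oplus F_1$ and $F_1\oplus F_2$ in $\mc N_{\mc L,\mc L^\prime}\wcheck$ by the explicit description \eqref{20120405:eq5}. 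Pairs in $\mc N_{\mc L,\mc L^\prime}$ take the form $A(\partial)X_1\oplus C(\partial)X$ with $B(\partial)X_1=D(\partial)X$, i.e.\ with $D(\partial)X\in\im B$; substituting $P^\prime=C(\partial)X,\,Q^\prime=C(\partial)Y$ into the conclusion of Theorem \ref{mtst} and moving $C$ across via adjunction yields \eqref{20120405:eq2b}.

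For part (b), I would note that $\det K\neq0$ together with $\det D\neq0$ forces $\det C\neq0$, so by Proposition \ref{20111005:prop2} both $B$ and $C$ are invertible in $\Mat_{\ell\times\ell}\mc K((\partial^{-1}))$. Re-running the argument for (a) over the field $\mc K$, where $\mc L_{A,B}(\mc K)$ and $\mc L_{C,D}(\mc K)$ are genuine Dirac structures, the constraint $D(\partial)X\in\im B$ becomes vacuous because $B(\mc K^\ell)=\mc K^\ell$; hence $X,Y$ range over all of $\mc K^{\oplus\ell}$ and, by invertibility of $C$, the products $\tilde X=C(\partial)X$ and $\tilde Y=C(\partial)Y$ range over all of $\mc K^\ell$. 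Writing $R=D_{F_2}(\partial)-D_{F_2}^*(\partial)$, the identity $\tint \tilde Y\cdot R\tilde X=0$ then holds for all $\tilde X,\tilde Y\in\mc K^\ell$, and non-degeneracy of the pairing $\mc K^{\oplus\ell}\times\mc K^\ell\to\mc K/\partial\mc K$ forces $R=0$. Since $R$ is a matrix differential operator with coefficients in $\mc V$, this equality descends to $\mc V$.

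For part (c), I would invoke the Volterra-type criterion valid in any normal algebra of differential functions: a vector $F\in\mc V^{\oplus\ell}$ has $D_F^*=D_F$ if and only if $F$ is exact (\cite{BDSK09}). Applied to $F_2$, whose Frechet derivative is selfadjoint by (b), this produces the required $\tint f_2\in\mc V/\partial\mc V$ with $F_2=\delta\tint f_2/\delta u$. The principal obstacle is the $\mc V$-versus-$\mc K$ bookkeeping in part (a): one must verify that the proof of Theorem \ref{mtst} genuinely needs only isotropy (not maximality), so that $\mc L_{A,B}(\mc V)$ and $\mc L_{C,D}(\mc V)$ can serve as generalized Dirac structures there, and one must confirm that restricting from $\mc K$ to $\mc V$ preserves both the compatibility of the pair and the membership of $F_0\oplus F_1,\,F_1\oplus F_2$ in $\mc N\wcheck$.
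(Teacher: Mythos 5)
Parts (a) and (c) of your argument are essentially the paper's: (a) reduces to Theorem \ref{mtst} via the correspondence of Theorem \ref{20120126:prop2} (the paper simply applies Theorem \ref{mtst} to the genuine Dirac structures $\mc L_{A,B}(\mc K),\mc L_{C,D}(\mc K)$ over $\mc K$ and restricts the conclusion to $\mc V$, rather than working with generalized Dirac structures over $\mc V$; your observation that the proof of Theorem \ref{mtst} uses only isotropy and \eqref{eq:20090320_1}, not maximality, is correct and makes the two packagings equivalent), and (c) is exactly the exactness of the variational complex for normal $\mc V$.

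Part (b), however, contains a genuine gap. You claim that since $\det B\neq0$ and $\det C\neq0$, the operators $B$ and $C$ are invertible in $\Mat_{\ell\times\ell}\mc K((\partial^{-1}))$ and therefore $B(\mc K^{\oplus\ell})=\mc K^{\oplus\ell}$ and $C(\mc K^{\oplus\ell})=\mc K^{\ell}$, so that the constraint $D(\partial)X\in\im B$ is vacuous and $\tilde X=C(\partial)X$ sweeps all of $\mc K^\ell$. Invertibility as a matrix \emph{pseudodifferential} operator does not imply surjectivity on $\mc K^{\oplus\ell}$: the inverse is a formal pseudodifferential operator and does not act on $\mc K^{\oplus\ell}$. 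Already for $\ell=1$ and $B=\partial$ one has $\det B=\xi\neq0$ but $\partial\mc K\subsetneq\mc K$ (e.g.\ $u\notin\partial\mc K$, since $\frac{\delta}{\delta u}$ annihilates $\partial\mc K$ while $\frac{\delta u}{\delta u}=1$). So neither the removal of the constraint $D(\partial)X\in\im B$ nor the conclusion that $\tilde X,\tilde Y$ range over all of $\mc K^\ell$ is justified, and the non-degeneracy argument cannot be run as stated.

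The correct route (the paper's) is to use the Ore property of $\Mat_{\ell\times\ell}\mc K[\partial]$: choose a common right multiple $B(\partial)D_1(\partial)=D(\partial)B_1(\partial)$ with $B_1,D_1$ of non-zero Dieudonn\'e determinant and, after clearing denominators, with coefficients in $\mc V$. Then for $X=B_1(\partial)F$, $Y=B_1(\partial)G$ with $F,G\in\mc V^{\oplus\ell}$ one has $D(\partial)X=B(\partial)D_1(\partial)F\in\im B$, so part (a) applies and yields
\begin{equation*}
\tint G\cdot B_1^*(\partial)C^*(\partial)\big(D_{F_2}(\partial)-D_{F_2}^*(\partial)\big)C(\partial)B_1(\partial)F=0
\quad\text{for all }F,G\in\mc V^{\oplus\ell}\,.
\end{equation*}
Non-degeneracy of the pairing then forces the matrix differential operator $B_1^*C^*\big(D_{F_2}-D_{F_2}^*\big)CB_1$ to vanish, and since $C$ and $B_1$ have non-zero Dieudonn\'e determinant one may cancel them (in $\Mat_{\ell\times\ell}\mc K((\partial^{-1}))$) to conclude $D_{F_2}^*=D_{F_2}$. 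Your derivation of $\det C\neq0$ from $\det K\neq0$ is fine and is exactly what this cancellation requires.
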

\begin{proof}
By Theorem \ref{20120126:prop2}, $\mc L_{A,B}(\mc K)$ and $\mc L_{C,D}(\mc K)$
are compatible Dirac structures over $\mc K$, the field of fractions of $\mc V$.
Recalling the expressions \eqref{20120405:eq5} of $\mc N_{\mc L,\mc L^\prime}\wcheck$ 
and $\mc N_{\mc L,\mc L^\prime}$ for these Dirac structures,
we get by Theorem \ref{mtst} that equation \eqref{20120405:eq2b} holds over $\mc K$,
hence over $\mc V$, proving (a).
Let us prove part (b).
It is proved in \cite{CDSK12b} that any two matrices $B(\partial)$ and $D(\partial)$
with non zero determinant have a right common multiple 
$B(\partial)D_1(\partial)=D(\partial)B_1(\partial)$,
where $B_1(\partial),D_1(\partial)\in\Mat_{\ell\times\ell}\mc K[\partial]$
have non-zero determinant.
By clearing denominators, we can assume that $B_1(\partial)$ and $D_1(\partial)$
have coefficients in $\mc V$.
Hence, if $X,Y\in\im(B_1)$, we have $D(\partial)X,D(\partial)Y\in\im(B)$.
Therefore, by part (a) we have
$$
\int G\cdot B_1^*(\partial)C^*(\partial)\big(D_{F_2}(\partial)-D_{F_2}^*(\partial)\big)C(\partial)B_1(\partial)F
=0\,,
$$
for all $F,G\in\mc V^{\oplus\ell}$.
Since, by assumption, $C$ and $B_1$ have non-zero determinants,
it follows that $D_{F_2}^*(\partial)=D_{F_2}(\partial)$, as we wanted.
Finally, part (c) follows by the fact that, under the assumption that $\mc V$ is normal,
the variational complex is exact (see \cite[Thm.3.2]{BDSK09}).
\end{proof}

%%%%%%%%%%%%%%%%%%%%%%%%%%%%%%%%%%%%%%%%%%%%%%%%%%%%%%%%%%%%%%%%%%%%%%%%%%%%%%%%%%%%%%%%%%%%%%%%%%%%%%%%%%%%%%
%%%%%%%%%%%%%%% Sect 7 %%%%%%%%%%%%%%%%%%%%%%%%%%%%%%%%%%%%%%%%%%%%%%%%%%%%%%%%%%%%%%%%%%%%%%%%%%%%%%%%%%%%%%%
%%%%%%%%%%%%%%%%%%%%%%%%%%%%%%%%%%%%%%%%%%%%%%%%%%%%%%%%%%%%%%%%%%%%%%%%%%%%%%%%%%%%%%%%%%%%%%%%%%%%%%%%%%%%%%

\section{Hamiltonian equations associated to a non-local Hamiltonian structure}
\label{sec:7}

\subsection{Local functionals, Hamiltonian equations and integrability}
\label{sec:7.1}

Let $\mc V$ be an algebra of differential functions in the variables $u_i,\,i\in I$,
assume that it is a domain, and let $\mc K$ be it field of fractions.
Let $H\in\Mat_{\ell\times\ell}\mc V(\partial)$ be a non-local Hamiltonian structure on $\mc V$.
Recall that, since $H$ has rational entries, it admits a minimal fractional decomposition
$H=AB^{-1}$, where $A,B\in\Mat_{\ell\times\ell}\mc V[\partial]$ and $\det B\neq0$
(cf. Definition \ref{def:minimal-fraction}).
Throughout this section we fix such a minimal fractional decomposition for $H$.
Recall that $\mc L_{A,B}$ is a Dirac structure on $\mc V$ by Theorem \ref{20111020:thm}.
\begin{definition}\label{20120124:def}
A \emph{Hamiltonian functional} (for $H=AB^{-1}$)
is an element $\tint h\in\mc V/\partial\mc V$
such that $\frac{\delta h}{\delta u}=B(\partial)F$
for some $F\in\mc V^{\oplus\ell}$.
Then, $P=A(\partial)F\in\mc V^\ell$ is called a \emph{Hamiltonian vector field}
associated to $\tint h$.
We denote by $\mc F(H)\subset\mc V/\partial\mc V$ the subspace of all Hamiltonian functionals,
and by $\mc H(H)\subset\mc V^\ell$ the subspace of all Hamiltonian vector fields:
$$
\mc F(H)=\Big(\frac{\delta}{\delta u}\Big)^{-1}(\im B)\,\subset\mc V/\partial\mc V
\quad,\qquad
\mc H(H)=A\Big(B^{-1}\Big(\im\frac{\delta}{\delta u}\Big)\Big)\,\subset\mc V^\ell\,.
$$
We say that $\tint f\in\mc F(H)$ and $P\in\mc H(H)$ are \emph{associated} 
(with respect to the Hamiltonian structure $H=AB^{-1}$)
if there exists $F\in\mc V^\ell$ such that 
$\frac{\delta f}{\delta u}=B(\partial)F,\,P=A(\partial)F$.
In this case we write $\tint f\ass{H} P$.
\begin{remark}\label{20120201:rem3}
Note that, the spaces $\mc F(H)\subset\mc V/\partial\mc V$ 
and $\mc H(H)\subset\mc V^\ell$,
as well as the relation $\tint h\ass{H}P$, for $\tint h\in\mc F(H)$ and $P\in\mc H(H)$,
may depend not only on the non-local Hamiltonian structure $H$,
but also on its fractional decomposition $H=AB^{-1}$.
However, if $A$ and $B$ are multiplied on the right by a matrix differential operator $D$
which is invertible in the algebra $\Mat_{\ell\times\ell}\mc V[\partial]$,
the spaces $\mc F(H)$ and $\mc H(H)$,
as well as the relation $\tint h\ass{H}P$, are unchanged.
In particular, in the special case when $\mc V=\mc K$ is a field,
by Proposition \ref{prop:minimal-fraction},
the spaces $\mc F(H), \mc H(H)$, and the relation $\tint h\ass{H}P$,
are independent of the minimal fractional decomposition of $H$.
\end{remark}

In the local case, when $H\in\Mat_{\ell\times\ell}\mc V[\partial]$ 
is a (local) Hamiltonian structure on $\mc V$,
then
$\tint f\in\mc F(H)=\mc V/\partial\mc V$ 
and $P\in\mc H(H)=H(\partial)\Big(\im\frac{\delta}{\delta u}\Big)\,\subset\mc V^\ell$ 
are associated if and only if
$P=H(\partial)\frac{\delta\tint f}{\delta u}$.
\end{definition}
\begin{lemma}\label{20120124:lem}
\begin{enumerate}[(a)]
\item
The space $\mc V^\ell$ is a Lie algebra with bracket 
\eqref{20120126:eq1},
%$$
%[P,Q]
%=D_Q(\partial)P-D_P(\partial)Q\,,
%$$
%where $D_P(\partial)$ denotes the Frechet derivative of $P$, 
%defined in \eqref{20111020:eq1},
and $\mc H(H)\subset\mc V^\ell$ is its subalgebra.
\item
We have a representation $\phi$ of the Lie algebra $\mc V^\ell$ 
on the space $\mc V/\partial\mc V$ given by
$$
\phi(P)\big(\tint h\big)=\int P\cdot\frac{\delta h}{\delta u}\,,
$$
and the subspace $\mc F(H)\subset\mc V/\partial\mc V$
is preserved by the action of the Lie subalgebra $\mc H(H)\subset\mc V^\ell$.
\item
If $\tint h\ass{H}P$ and $\tint h\ass{H}Q$ for some $\tint h\in\mc F(H)$,
then the action of $P,Q\in\mc H(H)$on $\mc F(H)$ is the same:
$$
\int P\cdot\frac{\delta g}{\delta u}
=\int Q\cdot\frac{\delta g}{\delta u}
\,\,\text{ for all } \tint g\in\mc F(H)\,.
$$
\end{enumerate}
\end{lemma}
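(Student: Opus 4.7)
The plan is to unify the non-trivial parts of (a) and (b) around one central computation using the Courant--Dorfman formula \eqref{20120127:eq1} inside the Dirac structure of Theorem \ref{20111020:thm}, treat the Lie algebra and representation structure by standard evolutionary vector field arguments, and dispatch (c) with a direct skewadjointness manipulation.

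For the first statement of (a) (that $\mc V^\ell$ is a Lie algebra) and the representation property in (b), I would identify $P\in\mc V^\ell$ with the evolutionary vector field $X_P = \sum_{i\in I,\,n\in\mb Z_+}(\partial^n P_i)\,\partial/\partial u_i^{(n)}$, a derivation of $\mc V$ commuting with $\partial$, which therefore descends to an endomorphism of $\mc V/\partial\mc V$; integration by parts shows $X_P(\tint h) = \tint P\cdot\frac{\delta h}{\delta u} = \phi(P)(\tint h)$. A short direct computation verifies $[X_P,X_Q] = X_{[P,Q]}$ with $[P,Q] = D_Q(\partial)P - D_P(\partial)Q$, so that $\mc V^\ell$ inherits Jacobi from the endomorphism algebra of $\mc V$, and $\phi$ becomes a Lie algebra homomorphism.

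The heart of (a) (closure of $\mc H(H)$ under $[\cdot,\cdot]$) and of (b) (invariance of $\mc F(H)$ under $\phi(\mc H(H))$) come out simultaneously. Given $\tint f\ass{H}P$ via some $F$ and $\tint g\ass{H}Q$ via some $G$, formula \eqref{20120127:eq1} reads
$$\Bigl(\frac{\delta f}{\delta u}\oplus P\Bigr)\circ\Bigl(\frac{\delta g}{\delta u}\oplus Q\Bigr) = \frac{\delta}{\delta u}\Bigl(\tint P\cdot\frac{\delta g}{\delta u}\Bigr)\oplus[P,Q].$$
Both factors on the left lie in $\mc L_{A,B}(\mc K)$, which is a Dirac structure by Theorem \ref{20111020:thm} and hence closed under the Courant--Dorfman product. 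Consequently some $Y$ satisfies $B(\partial)Y = \frac{\delta}{\delta u}(\tint P\cdot\frac{\delta g}{\delta u})$ and $A(\partial)Y = [P,Q]$, simultaneously exhibiting $\tint P\cdot\frac{\delta g}{\delta u}$ as a Hamiltonian functional and $[P,Q]$ as an associated Hamiltonian vector field.

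Part (c) is then a short direct computation: writing $P = A(\partial)F$ and $Q = A(\partial)F'$ with $B(\partial)F = B(\partial)F' = \delta h/\delta u$, one has $F - F'\in\ker B$ and $P - Q = A(\partial)(F - F')$; for any $\tint g\in\mc F(H)$ with $\delta g/\delta u = B(\partial)G$, integration by parts and the skewadjointness relation $A^*\circ B + B^*\circ A = 0$ (Proposition \ref{20120103:propa}) give $\tint(P - Q)\cdot\frac{\delta g}{\delta u} = \tint(F - F')\cdot A^*(\partial)B(\partial)G = -\tint B(\partial)(F - F')\cdot A(\partial)G = 0$. The main obstacle I anticipate is the $\mc K$-versus-$\mc V$ distinction in the central Courant--Dorfman argument: Theorem \ref{20111020:thm} furnishes the witness $Y$ only in $\mc K^{\oplus\ell}$, while Definition \ref{20120124:def} demands $Y\in\mc V^{\oplus\ell}$. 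I would expect to handle this either by interpreting the statement over $\mc K$ (as in Remark \ref{20120201:rem3}, where $\mc F(H)$, $\mc H(H)$, and $\ass{H}$ are canonical), or by using minimality $\ker A\cap\ker B = 0$ together with the established $\mc V$-membership of $\delta/\delta u(\tint P\cdot\delta g/\delta u)$ and $[P,Q]$ to pin $Y$ down in $\mc V^{\oplus\ell}$.
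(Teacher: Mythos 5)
Your argument is essentially a correct, self-contained reconstruction of what the paper delegates entirely to a citation: its proof of this lemma is the single line ``It follows immediately from [BDSK09, Lem.\,4.7--8]'', and those lemmas are proved there by exactly the two devices you use -- the identification $P\mapsto X_P$ with evolutionary vector fields (giving the Lie algebra structure on $\mc V^\ell$, the representation $\phi$, and $[X_P,X_Q]=X_{[P,Q]}$) and closure of a Dirac structure under the Courant--Dorfman product in the form \eqref{20120127:eq1} (giving simultaneously that $\mc H(H)$ is a subalgebra and that $\mc F(H)$ is $\phi(\mc H(H))$-invariant). Your part (c) via $F-F'\in\ker B$ and $A^*\circ B+B^*\circ A=0$ is also exactly the intended computation. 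So the substance is right, and your version has the merit of making explicit what ``immediately'' hides.

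One caution on the obstacle you flag at the end, which is genuine and which the paper itself glosses over. Your first resolution is the correct one: Theorem \ref{20111020:thm} produces a Dirac structure only over the field $\mc K$ (over $\mc V$ the subspace $\mc L_{A,B}(\mc V)$ is isotropic and satisfies $\mc L\circ\mc L\subset\mc L^{\perp}$, but need not be \emph{maximal} isotropic, and closure under the Courant--Dorfman product is exactly what maximality buys), so the clean statement is over $\mc K$, where by Remark \ref{20120201:rem3} the spaces $\mc F(H),\mc H(H)$ and the relation $\ass{H}$ are canonical. Your second proposed fix, however, does not work: minimality, i.e.\ $\ker A\cap\ker B=0$, gives \emph{uniqueness} of the witness $Y$ with $B(\partial)Y=\frac{\delta}{\delta u}\big(\tint P\cdot\frac{\delta g}{\delta u}\big)$ and $A(\partial)Y=[P,Q]$, not \emph{integrality} $Y\in\mc V^{\oplus\ell}$. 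The paper's own NLS example in Section \ref{sec:7.3} shows this failure explicitly: there $\tint h_{-1}, \tint h_0$ and $P_0$ all lie in $R_2$ (resp.\ $R_2^2$), yet the witness $F_{-1}=\left(\begin{smallmatrix}0\\ u^{-2}\end{smallmatrix}\right)$ does not, so the association fails within $R_2$ and holds only in the localization. You should therefore either state and prove the lemma over $\mc K$, or add the hypothesis that the relevant witnesses lie in $\mc V^{\oplus\ell}$; do not rely on minimality to descend from $\mc K$ to $\mc V$.
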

\begin{proof}
It follows immediately from \cite[Lem.4.7-8]{BDSK09}.
\end{proof}

Thanks to Lemma \ref{20120124:lem},
we have a well-defined map 
$\{\cdot\,,\,\cdot\}_H:\,\mc F(H)\times\mc F(H)\to\mc F(H)$
given by
\begin{equation}\label{20120124:eq4}
\{\tint f,\tint g\}_H
=
\int P\cdot\frac{\delta g}{\delta u}
\quad
\Big(
=\int \frac{\delta g}{\delta u}\cdot A(\partial) B^{-1}(\partial) \frac{\delta f}{\delta u}
\,\,\Big)\,,
\end{equation}
where $P\in\mc H(H)$ is such that $\tint f\ass{H}P$.
\begin{proposition}\label{20120124:prop}
\begin{enumerate}[(a)]
\item
The bracket \eqref{20120124:eq4} is a Lie algebra bracket on the space 
of Hamiltonian functionals $\mc F(H)$.
\item
The Lie algebra action of $\mc H(H)$ on $\mc F(H)$ is by derivations
of the Lie bracket \eqref{20120124:eq4}.
\item
The subspace 
$$
\mc A(H)=\Big\{(\tint f,P)\in\mc F(H)\times\mc H(H)\,\Big|\,\tint f\ass{H}P\Big\}\,
$$
is a subalgebra of the direct product of Lie algebras $\mc F(H)\times\mc H(H)$.
\end{enumerate}
\end{proposition}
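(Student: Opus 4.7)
The plan is to establish part (c) first using the Dirac structure interpretation from Theorem~\ref{20111020:thm}, and then deduce (a) and (b) as consequences of (c) together with Lemma~\ref{20120124:lem}. The main tool will be the simplified form \eqref{20120127:eq1} of the Courant--Dorfman product on pairs of exact $1$-forms in $\mc L_{A,B}$.

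For (c), I would suppose $(\tint f, P), (\tint g, Q) \in \mc A(H)$, so that $\tint f \ass{H} P$ and $\tint g \ass{H} Q$ are witnessed by some $F, G \in \mc V^{\oplus\ell}$. Then $\frac{\delta f}{\delta u} \oplus P = B(\partial)F \oplus A(\partial)F$ and $\frac{\delta g}{\delta u} \oplus Q$ both lie in $\mc L_{A,B}$, and formula \eqref{20120127:eq1} gives
\begin{equation*}
\Big(\tfrac{\delta f}{\delta u} \oplus P\Big) \circ \Big(\tfrac{\delta g}{\delta u} \oplus Q\Big) = \tfrac{\delta \{\tint f, \tint g\}_H}{\delta u} \oplus [P, Q].
\end{equation*}
Since by Theorem~\ref{20111020:thm} the subspace $\mc L_{A,B}(\mc K)$ is a Dirac structure on $\mc K$, it is closed under the Courant--Dorfman product, so there exists $S \in \mc K^{\oplus\ell}$ with $B(\partial)S = \frac{\delta \{\tint f, \tint g\}_H}{\delta u}$ and $A(\partial)S = [P, Q]$, which lifts to $S \in \mc V^{\oplus\ell}$ using minimality of the decomposition (uniqueness from $\ker A \cap \ker B = 0$). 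This yields simultaneously $\{\tint f, \tint g\}_H \in \mc F(H)$, $[P, Q] \in \mc H(H)$, and $\{\tint f, \tint g\}_H \ass{H} [P, Q]$.

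For (a), well-definedness of $\{\cdot\,,\,\cdot\}_H$ as a map into $\mc V/\partial\mc V$ is Lemma~\ref{20120124:lem}(c), and (c) upgrades the codomain to $\mc F(H)$. Skewsymmetry follows from isotropy of $\mc L_{A,B}$ relative to the pairing \eqref{20111020:eq3}, since $0 = \langle \tfrac{\delta f}{\delta u} \oplus P \mid \tfrac{\delta g}{\delta u} \oplus Q\rangle = \{\tint f, \tint g\}_H + \{\tint g, \tint f\}_H$. For the Jacobi identity, I would choose $\tint f \ass{H} P$, $\tint g \ass{H} Q$, $\tint h \ass{H} R$; by (c), $\{\tint f, \tint g\}_H \ass{H} [P, Q]$, so $\{\{\tint f, \tint g\}_H, \tint h\}_H = \phi([P, Q])\tint h$, while $\{\tint f, \{\tint g, \tint h\}_H\}_H = \phi(P)\phi(Q)\tint h$ and symmetrically $\{\tint g, \{\tint f, \tint h\}_H\}_H = \phi(Q)\phi(P)\tint h$. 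Since $\phi$ is a Lie algebra representation by Lemma~\ref{20120124:lem}(b), one has $\phi([P, Q]) = \phi(P)\phi(Q) - \phi(Q)\phi(P)$, and Jacobi follows.

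Finally, (b) is a formal consequence of (a): for any $R \in \mc H(H)$ pick $\tint h \in \mc F(H)$ with $\tint h \ass{H} R$; then $\phi(R)$ acts on $\mc F(H)$ as $\{\tint h, \cdot\}_H$, and the derivation property is exactly Jacobi applied to $\tint h, \tint f, \tint g$. The main obstacle I foresee lies in the technical passage in (c) from the Dirac structure $\mc L_{A,B}(\mc K)$ over the field of fractions back to $\mc V$-coefficients; this will be controlled by the minimality of the decomposition $H = AB^{-1}$ through Proposition~\ref{prop:minimal-fraction} and Remark~\ref{20120201:rem3}, which ensure that the witness $S$ is uniquely determined and can be chosen inside $\mc V^{\oplus\ell}$.
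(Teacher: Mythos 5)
Your overall route --- reduce everything to the Dirac structure $\mc L_{A,B}$, use isotropy for skewsymmetry, the Courant--Dorfman formula \eqref{20120127:eq1} for part (c), and Lemma \ref{20120124:lem} for well-definedness and the representation property --- is exactly the content of the results of \cite{BDSK09} that the paper cites as its entire proof, so the approach is the intended one, and parts (a) and (b) are fine as you argue them.

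The gap is in part (c), at the sentence ``which lifts to $S\in\mc V^{\oplus\ell}$ using minimality of the decomposition (uniqueness from $\ker A\cap\ker B=0$)''. Closure of $\mc L_{A,B}(\mc K)$ under the Courant--Dorfman product only produces a witness $S\in\mc K^{\oplus\ell}$ with $B(\partial)S=\frac{\delta}{\delta u}\{\tint f,\tint g\}_H$ and $A(\partial)S=[P,Q]$. The condition $\ker A\cap\ker B=0$ (indeed already $\det B\neq 0$) makes such an $S$ \emph{unique}, but uniqueness says nothing about whether $S$ has coefficients in $\mc V$: over $\mc V$ the subspace $\mc L_{A,B}(\mc V)$ is in general only a ``generalized'' Dirac structure, not closed under the Courant--Dorfman product, and the paper itself stresses in the NLS example that an association can hold over $\mc V_u$ or $\mc K$ yet fail over $R_2$ precisely because the witness escapes the smaller ring. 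The step can be repaired with a tool you already have on the table: Lemma \ref{20120124:lem}(b) gives $\{\tint f,\tint g\}_H=\phi(P)\tint g\in\mc F(H)$, i.e.\ $\frac{\delta}{\delta u}\{\tint f,\tint g\}_H=B(\partial)S_1$ for some $S_1\in\mc V^{\oplus\ell}$; since $\det B\neq0$ makes $B(\partial)$ injective on $\mc K^{\oplus\ell}$, this $S_1$ must coincide with the unique witness $S$, and therefore $A(\partial)S_1=[P,Q]$, which is the association $\{\tint f,\tint g\}_H\ass{H}[P,Q]$ over $\mc V$. With that substitution part (c), and hence the rest of your argument, goes through.
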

\begin{proof}
It follows immediately from \cite[Prop.4.9, Rem.4.6]{BDSK09}.
\end{proof}

A \emph{Hamiltonian equation} associated to the Hamiltonian structure $H$
and to the Hamiltonian functional $\tint h\in\mc F(H)$,
with an associated Hamiltonian vector field $P\in\mc H(H)$,
is, by definition, the following evolution equation on the variables $u=\big(u_i\big)_{i\in I}$:
\begin{equation}\label{20120124:eq5}
\frac{du}{dt}
=P\,.
\end{equation}
By the chain rule, any element $f\in\mc V$ evolves according to the equation
$$
\frac{df}{dt}=\sum_{i\in I}\sum_{n\in\mb Z_+}(\partial^nP_i)\frac{\partial f}{\partial u_i^{(n)}}\,,
$$
and, integrating by parts,
a local functional $\tint f\in\mc V/\partial\mc V$
evolves according to
$$
\frac{d\tint f}{dt}=\int P\cdot\frac{\delta f}{\delta u}\,.
$$
An \emph{integral of motion} for the Hamiltonian equation \eqref{20120124:eq5}
is a Hamiltonian functional $\tint f\in\mc F(H)$ such that
$$
\frac{d\tint f}{dt}=\{\tint h,\tint f\}_H=0\,,
$$
i.e. $\tint f$ lies in the centralizer of $\tint h$ in the Lie algebra $\mc F(H)$.
The Hamiltonian equation \eqref{20120124:eq5} is said to be \emph{integrable}
if there is an infinite sequence of pairs $(\tint h_n,P_n)\in\mc F(H)\times\mc H(H),\,n\geq0$,
such that $\tint h_0=\tint h$, $P_0=P$,
we have $\tint h_n\ass{H}P_n$ for every $n\in\mb Z_+$,
the sequence $\{\tint h_n\}_{n\in\mb Z_+}$ spans 
an infinite-dimensional abelian subalgebra of the Lie algebra $\mc F(H)$,
and the sequence $\{P_n\}_{n\in\mb Z_+}$ spans an infinite-dimensional 
abelian subalgebra of the Lie algebra $\mc H(H)$.
(Equivalently, if there exists an abelian subalgebra $\mf h$
of the Lie algebra $\mc A(H)$ defined in Proposition \ref{20120124:prop}(c),
such that both its canonical projections $\pi_1(\mf h)$ and $\pi_2(\mf h)$ 
in $\mc F(H)$ and $\mc H(H)$ are infinite dimensional, 
and $\tint h\in\pi_1(\mf h)$.)
In this case, we have an \emph{integrable hierarchy} of Hamiltonian equations
$$
\frac{du}{dt_n} = P_n\,,\,\,n\in\mb Z_+\,.
$$

%%%
\subsection{The Lenard-Magri scheme of integrability}
\label{sec:7.2}

Throughout the section, we
let $H,K\in\Mat_{\ell\times\ell}\mc V(\partial)$ be a pair of compatible Hamiltonian structures
on the algebra of differential functions $\mc V$.
We also let $H=AB^{-1}$ and $K=CD^{-1}$,
with $A,B,C,D\in\Mat_{\ell\times\ell}\mc V[\partial]$
and $\det B\neq0,\det D\neq0$,
be their minimal fractional decomposition,
and let $\mc L_{A,B},\mc L_{C,D}\subset\mc V^{\oplus\ell}\oplus\mc V^\ell$
be the corresponding compatible Dirac structures
(see Theorems \ref{20111020:thm} and \ref{20120126:prop2}).

The \emph{Lenard-Magri scheme of integrability} 
consists in finding  sequences 
$$
\begin{array}{l}
\displaystyle{
\vphantom{\Big(}
0=\tint h_{-1},\,\tint h_0,\,\tint h_1,\,\tint h_2,\dots\in\mc F(H)\cap\mc F(K)\,,
} \\
\displaystyle{
\vphantom{\Big(}
P_0,\,P_1,\,P_2,\dots\in\mc H(H)\cap\mc H(K)\,,
}
\end{array}
$$
such that 
%%%%%% DIAGRAM %%%%%%%%%%%%%%%%%%%%%%%%%%
\begin{equation}\label{maxi}
\UseTips
\xymatrix{
& P_{n-1} \ar@{<->}[dl]_{H} \ar@{<->}[dr]^{K} & & 
P_n \ar@{<->}[dl]_{H} \ar@{<->}[dr]^{K} & & 
P_{n+1} \ar@{<->}[dl]_{H} \ar@{<->}[dr]^{K} & \\
\dots & & \tint h_{n-1} & & \tint h_n & & \dots 
}
\end{equation}
%%%%%%%%%%%%%%%%%%%%%%%%%%%%%%%%%%%%%%
%
Explicitly, diagram \eqref{maxi} holds if and only if there exists a sequence 
$\{F_n\}_{n\geq-1}$ in $\mc V^{\oplus\ell}$ such that the following equations hold:
\begin{equation}\label{20120315:eq1}
B(\partial)F_{-1}=0
\,\,,\,\,\,
C(\partial)F_{2n}=A(\partial)F_{2n-1}=P_n
\,\,,\,\,\,
B(\partial)F_{2n+1}=D(\partial)F_{2n}=\frac{\delta h_n}{\delta u}
\end{equation}
for all $n\geq 0$.
\begin{proposition}\label{20120126:prop3}
\begin{enumerate}[(a)]
\item
If the sequence of pairs $\big(\tint h_n,P_n\big)$, $0\leq n\leq N+1$,
where $\tint h_n\in\mc F(H)\cap\mc F(K)$,
$P_n\in\mc H(H)\cap\mc H(K)$,
satisfies 
$\tint h_{n-1}\ass{H}P_{n}\ass{K}\tint h_n$
for $0\leq n\leq N$,
then
\begin{equation}\label{20120126:eq3}
\{\tint h_m,\tint h_n\}_H=\{\tint h_m,\tint h_n\}_K=0
\,\,,\,\,\,\,
0\leq m,n\leq N\,.
\end{equation}
\item
%Provided that $\ker B^*\cap\ker D^*=0$,
We have 
\begin{equation}\label{20120126:eq4}
[P_m,P_n]
\,\in\,\ker B^*\cap\ker D^*
\,\,\,\,
\text{ for all } m,n\leq N\,.
\end{equation}
\end{enumerate}
\end{proposition}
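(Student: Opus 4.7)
My plan has two steps, one for each claim. For (a), I would establish a Lenard-type \emph{shift identity}
\begin{equation*}
\{\tint h_m,\tint h_n\}_H \;=\; \{\tint h_{m+1},\tint h_{n-1}\}_H,
\end{equation*}
valid for $-1\le m\le N-1$ and $0\le n\le N$, together with its analogue for $\{\cdot,\cdot\}_K$, and then iterate it to reduce every $\{\tint h_m,\tint h_n\}_H$ with $0\le m,n\le N$ to zero via skewsymmetry. For (b), I would combine part (a) with Proposition \ref{20120124:prop}(c) to see that $[P_m,P_n]$ is a Hamiltonian vector field of $0$, and then use the skewadjointness relations \eqref{20120405:skew} to push it into $\ker B^*\cap\ker D^*$.

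To derive the shift I use the chain relations \eqref{20120315:eq1}. Since $\tint h_m\ass{H}P_{m+1}$, I have $\{\tint h_m,\tint h_n\}_H = \int P_{m+1}\cdot\frac{\delta h_n}{\delta u}$; substituting $P_{m+1}=C(\partial)F_{2m+2}$ and $\frac{\delta h_n}{\delta u}=D(\partial)F_{2n}$, integrating by parts, and using $C^*\!\circ\!D=-D^*\!\circ\!C$ from \eqref{20120405:skew}, this equals $-\int\frac{\delta h_{m+1}}{\delta u}\cdot P_n$. Then using $P_n\ass{H}\tint h_{n-1}$ and skewsymmetry of the $H$-bracket produces $\{\tint h_m,\tint h_n\}_H = \{\tint h_{m+1},\tint h_{n-1}\}_H$. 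The shift for $\{\cdot,\cdot\}_K$ is derived identically, interchanging $(A,B)\leftrightarrow(C,D)$ and using $A^*\!\circ\!B=-B^*\!\circ\!A$ in place of the $K$-skewadjointness relation.

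Given $0\le m,n\le N$, assume WLOG $m\le n$; the case $m=n$ is zero by skewsymmetry, so suppose $m<n$ and hence $m\le N-1$. If $m+n$ is even, iterating the shift $(n-m)/2$ times reaches $\{\tint h_{(m+n)/2},\tint h_{(m+n)/2}\}_H=0$ by skewsymmetry. If $m+n$ is odd, iterating $(n-m-1)/2$ times reaches $\{\tint h_k,\tint h_{k+1}\}_H$ with $k=(m+n-1)/2$; one further application of the shift together with skewsymmetry gives $\{\tint h_k,\tint h_{k+1}\}_H = -\{\tint h_k,\tint h_{k+1}\}_H$, forcing the value to be $0$. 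All intermediate $m$-indices lie in $\{-1,\dots,N-1\}$ because $(m+n)/2\le N$ in the even case and $k\le N-1$ in the odd case (the latter using $m+n\le 2N-1$, which holds because $m<n\le N$ in the odd-parity case).

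For (b), the pairs $(\tint h_{m-1},P_m)$ and $(\tint h_{n-1},P_n)$ lie in $\mc A(H)\subset\mc F(H)\times\mc H(H)$, so by Proposition \ref{20120124:prop}(c) their Lie bracket $\bigl(\{\tint h_{m-1},\tint h_{n-1}\}_H,[P_m,P_n]\bigr)$ also lies in $\mc A(H)$. Part (a) gives $\{\tint h_{m-1},\tint h_{n-1}\}_H=0$, so $0\ass{H}[P_m,P_n]$, producing $F\in\mc V^{\oplus\ell}$ with $B(\partial)F=0$ and $A(\partial)F=[P_m,P_n]$. Then $B^*(\partial)[P_m,P_n]=B^*(\partial)A(\partial)F=-A^*(\partial)B(\partial)F=0$ by \eqref{20120405:skew}. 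The same argument inside $\mc A(K)$ applied to $(\tint h_m,P_m)$ and $(\tint h_n,P_n)$ yields $D^*(\partial)[P_m,P_n]=0$. The main bookkeeping obstacle I anticipate is verifying that every shift in part (a) respects the constraint that its starting $m$-index is $\le N-1$; the parity-based reduction handles this cleanly without ever leaving the allowed range.
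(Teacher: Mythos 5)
Your proof is correct and follows essentially the same route as the paper's: your double shift $\{\tint h_m,\tint h_n\}_H=\{\tint h_{m+1},\tint h_{n-1}\}_H$ is just the composition of the paper's two single shifts $\{\tint h_m,\tint h_n\}_H=\{\tint h_{m+1},\tint h_n\}_K$ and $\{\tint h_m,\tint h_n\}_K=\{\tint h_m,\tint h_{n-1}\}_H$ (the paper inducts on $n-m$, which handles both parities and both brackets at once), and in part (b) you invoke Proposition \ref{20120124:prop}(c) where the paper uses the equivalent closure of $\mc L_{A,B}$ and $\mc L_{C,D}$ under the Courant--Dorfman product, both arguments ending with $A(\ker B)\subset\ker B^*$ and $C(\ker D)\subset\ker D^*$. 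The only point worth noting is that in (b) the case $m=0$ requires $\{\tint h_{-1},\tint h_{n-1}\}_H=0$, one index outside the stated range of (a), but this follows at once from your shift identity at $m=-1$ (the paper's proof has the same implicit step).
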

\begin{proof}
For $m=n$ equation \eqref{20120126:eq3} holds trivially,
by the skew symmetry of the Lie brackets $\{\cdot\,,\,\cdot\}_{H}$ and $\{\cdot\,,\,\cdot\}_{K}$.
Assuming $m<n$, 
we prove \eqref{20120126:eq3} by induction on $n-m$.
We have, by the assumption on the $\tint h_n$'s and $P_n$'s
and the definition \eqref{20120124:eq4} of the Lie bracket on $\mc F(H)$,
$$
\big\{\tint h_m,\tint h_n\big\}_H=\int P_{m+1}\cdot\frac{\delta h_n}{\delta u}
=\big\{\tint h_{m+1},\tint h_n\big\}_K=0\,,
$$
by the inductive assumption.
Similarly,
$$
\begin{array}{l}
\vphantom{\Big(}
\big\{\tint h_m,\tint h_n\big\}_K
=-\big\{\tint h_n,\tint h_m\big\}_K
=-\int P_n\cdot\frac{\delta h_m}{\delta u}
\\
\vphantom{\Big(}
=-\big\{\tint h_{n-1},\tint h_m\big\}_H
=\big\{\tint h_{m},\tint h_{n-1}\big\}_H
=0\,.
\end{array}
$$
By Theorem \ref{20111020:thm}, $\mc L_{A,B}$ and $\mc L_{C,D}$
are Dirac structures, hence they are closed under 
the Courant-Dorfman product \eqref{20111020:eq5}.
By formula \eqref{eq:dirac} and 
the assumption on the $\tint h_n$'s and $P_n$'s,
we have that
$\frac{\delta\tint h_{n-1}}{\delta u}\oplus P_n\in\mc L_{A,B}$,
and $\frac{\delta\tint h_n}{\delta u}\oplus P_n\in\mc L_{C,D}$, for every $0\leq n\leq N$.
It then follows, by equation \eqref{20120127:eq1} and formulas 
\eqref{20120126:eq1} and \eqref{20120124:eq4}, that
$$
\begin{array}{l}
\displaystyle{
\Big(\frac{\delta\tint h_{m-1}}{\delta u}\oplus P_m\Big)
\circ
\Big(\frac{\delta\tint h_{n-1}}{\delta u}\oplus P_n\Big)
=
\frac{\delta}{\delta u}\{\tint h_{m-1},\tint h_{n-1}\}_H\oplus[P_m,P_n]
} \\
\displaystyle{
\Big(\frac{\delta\tint h_{m}}{\delta u}\oplus P_m\Big)
\circ
\Big(\frac{\delta\tint h_{n}}{\delta u}\oplus P_n\Big)
=
\frac{\delta}{\delta u}\{\tint h_{m},\tint h_{n}\}_K\oplus[P_m,P_n]
}
\end{array}
$$
Hence, by \eqref{20120126:eq3}, we get that
$0\oplus[P_m,P_n]\in\mc L_{A,B}\cap\mc L_{C,D}$.
Namely, there exist $F\in\ker B\subset\mc V^\ell$ and $G\in\ker D\subset\mc V^\ell$
such that $[P_m,P_n]=AF=CG\in A(\ker B)\cap C(\ker D)$.
To conclude, we finally observe that, by skewadjointness of $H$ and $K$,
we have $B^*A=-A^*B$ and $D^*C=-C^*D$, which immediately implies
$A(\ker B)\subset\ker B^*$ and $C(\ker D)\subset\ker D^*$.
\end{proof}
\begin{corollary}\label{20120127:cor}
Suppose that the sequences $\{\tint h_n\}_{n\in\mb Z_+},\,\{P_n\}_{n\in\mb Z_+}$,
span infinite dimensional subspaces of $\mc V/\partial\mc V$ and $\mc V^\ell$ respectively,
and satisfy conditions \eqref{maxi} (where $\tint h_{-1}=0$).
Suppose, moreover, that $\ker B^*\cap\ker D^*=0$.
Then,
the hierarchy of bi-Hamiltonian equations
$$
\frac{du}{dt_n}=P_n\,\,,\,\,\,\,n\in\mb Z_+\,.
$$
is integrable.
\end{corollary}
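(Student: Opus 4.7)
The plan is to simply unpack the definition of integrability given at the end of Section~7.1 and verify each required property by appealing to Proposition~\ref{20120126:prop3}, using the extra hypothesis $\ker B^* \cap \ker D^* = 0$ only to upgrade the conclusion \eqref{20120126:eq4} to genuine commutativity in $\mc V^\ell$.

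First I would observe that the chain \eqref{maxi} already places each $\tint h_n$ in $\mc F(H) \cap \mc F(K)$ and each $P_n$ in $\mc H(H) \cap \mc H(K)$ (so the hierarchy is genuinely bi-Hamiltonian), and gives the associations $\tint h_n \ass{H} P_{n+1}$ required to view $\frac{du}{dt_n} = P_n$ as a Hamiltonian equation with Hamiltonian functional $\tint h_{n-1}$ relative to $H$ (or $\tint h_n$ relative to $K$). Thus the only nontrivial content is the abelianness of the two sequences.

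Next I would apply Proposition~\ref{20120126:prop3}(a) directly: under exactly the hypotheses \eqref{maxi}, it yields
\[
\{\tint h_m, \tint h_n\}_H \;=\; \{\tint h_m, \tint h_n\}_K \;=\; 0 \qquad \text{for all } m,n \in \mb Z_+,
\]
so the span of $\{\tint h_n\}_{n\in\mb Z_+}$ is an abelian Lie subalgebra of both $\mc F(H)$ and $\mc F(K)$; by assumption it is infinite dimensional. For the Hamiltonian vector fields, Proposition~\ref{20120126:prop3}(b) gives $[P_m, P_n] \in \ker B^* \cap \ker D^*$ for all $m,n$, and the extra hypothesis $\ker B^* \cap \ker D^* = 0$ then forces $[P_m, P_n] = 0$ in $\mc V^\ell$ for all $m,n$. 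Since $\{P_n\}$ also spans an infinite dimensional subspace by hypothesis, it generates an infinite dimensional abelian Lie subalgebra of both $\mc H(H)$ and $\mc H(K)$.

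Combining these observations matches the definition of an integrable (bi-)Hamiltonian hierarchy verbatim: for each fixed structure (say $H$), the pairs $(\tint h_n, P_n)_{n \geq 0}$ satisfy $\tint h_{n-1} \ass{H} P_n$, the $\tint h_{n-1}$'s commute in $\mc F(H)$, and the $P_n$'s commute in $\mc H(H)$ (and similarly for $K$). No obstacle is expected: both ingredients — the Lenard--Magri identities and the bracket computation — have already been established in Proposition~\ref{20120126:prop3}, and the kernel hypothesis is used purely to pass from the inclusion \eqref{20120126:eq4} to a genuine equality $[P_m,P_n]=0$.
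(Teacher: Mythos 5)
Your proposal is correct and follows exactly the route the paper intends: the paper's own proof simply states that the corollary is immediate from Proposition \ref{20120126:prop3} and the definition of integrability, and your unpacking — part (a) for the commutativity of the $\tint h_n$'s, part (b) plus the hypothesis $\ker B^*\cap\ker D^*=0$ to force $[P_m,P_n]=0$, and the index shift $\tint h_{n-1}\ass{H}P_n$ versus $\tint h_n\ass{K}P_n$ — is precisely the content being invoked. No gaps.
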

\begin{proof}
It is immediate from Proposition \ref{20120126:prop3}
and the definition of integrability.
\end{proof}

Next, we want to discuss in which situations one can apply successfully the Lenard-Magri scheme.
In order to do so, recall that an algebra of differential functions $\mc V$ 
is called \emph{normal} \cite{BDSK09}
if $\frac{\partial}{\partial u_i^{(m)}}(\mc V_{m,i})=\mc V_{m,i}$ for every $m\in\mb Z_+,i\in I$, 
where
$$
\mc V_{m,i}
=\Big\{f\in\mc V\,\Big|\,\frac{\partial f}{\partial u_j^{(n)}}=0
\text{ for } (n,j)>(m,i) \quad (\text{in lexicographic order})
\Big\}\,.
$$
For example, the algebra of differential polynomials in $\ell$ variables is normal,
and any algebra of differential functions $\mc V$ can be included in a normal one.
\begin{theorem}\label{20120127:thm}
Let $\mc V$ be a normal algebra of differential functions.
Let $H,K\!\in\!\Mat_{\ell\times\ell}\!\mc V(\partial)$ be a pair of compatible
non-local Hamiltonian structures on $\mc V$
and assume that $\det K\neq0$.
Let $H=AB^{-1}$, $K=CD^{-1}$, be their minimal fractional decompositions,
with $A,B,C,D\in\Mat_{\ell\times\ell}\mc V[\partial]$.
Let  $N\geq0$,
and let $0=\tint h_{-1},\tint h_0,\tint h_1,\dots,\tint h_N\in\mc F(H)\cap\mc F(K)$
and $P_0,P_1,\dots,P_N\in\mc H(H)\cap\mc H(K)$
be such that
$\tint h_{n-1}\ass{H}P_{n}\ass{K}\tint h_n$ for $0\leq n\leq N$.
Suppose, moreover, that the following orthogonallity conditions hold:
$$
\big(\Span{}_{\mc C}\big\{\frac{\delta h_n}{\delta u}\big\}_{n=0}^N\big)^\perp\subset\im C
\,\,,\,\,\,\,
\big(\Span{}_{\mc C}\big\{P_n\big\}_{n=0}^N\big)^\perp \subset\im B\,,
$$
where the orthogonal complement is with respect to the pairing
$\mc V^{\oplus\ell}\times\mc V^\ell\to\mc V/\partial\mc V$,
defined by $(F,P)\mapsto\tint F\cdot P$.
Then, there exist infinite sequences
$\{\tint h_n\}_{n\in\mb Z_+}\subset\mc F(H)\cap\mc F(K)$
and $\{P_n\}_{n\in\mb Z_+}\subset\mc H(H)\cap\mc H(K)$,
extending the given finite sequences,
satisfying conditions \eqref{maxi}.
In other words, the Lenard-Magri scheme of integrability can be applied.
\end{theorem}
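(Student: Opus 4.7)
My plan is to induct on $N$, showing that any finite Lenard-Magri chain of length $N+1$ satisfying the hypotheses can be extended by one more link, and then to iterate. Before entering the induction, I note that enlarging the spanning sets only shrinks their orthogonal complements, so once the inclusions $\bigl(\Span_{\mc C}\{\delta h_n/\delta u\}\bigr)^\perp \subset \im C$ and $\bigl(\Span_{\mc C}\{P_n\}\bigr)^\perp \subset \im B$ hold for $0\leq n\leq N$, they automatically persist for any larger range; that is why the hypotheses need only be imposed at the initial stage.

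For the inductive step, I first construct $P_{N+1}$. Since $\tint h_N \in \mc F(H)$, pick $W \in \mc V^{\oplus\ell}$ with $B(\partial)W = \delta h_N/\delta u$ and set $P_{N+1} := A(\partial)W$, so that $\tint h_N \ass{H} P_{N+1}$ and in particular $P_{N+1} \in \mc H(H)$. To see that $P_{N+1}$ also lies in $\im C$, I apply the first orthogonality hypothesis: by the standard Lenard-Magri zig-zag together with Proposition \ref{20120126:prop3},
$$
\tint P_{N+1} \cdot \frac{\delta h_n}{\delta u} \;=\; \{\tint h_N, \tint h_n\}_H \;=\; 0 \qquad \text{for } 0 \leq n \leq N,
$$
hence $P_{N+1} \in \bigl(\Span_{\mc C}\{\delta h_n/\delta u\}_{n=0}^{N}\bigr)^\perp \subset \im C$, so I may choose $W' \in \mc V^{\oplus\ell}$ with $P_{N+1} = C(\partial)W'$.

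Next I construct $\tint h_{N+1}$ by applying Theorem \ref{mtst-nonloc} with $F_0 = \delta h_{N-1}/\delta u$, $F_1 = \delta h_N/\delta u$, and $F_2 = D(\partial)W'$. The required compatibility relations $DZ'=BW$, $AZ=CZ'$, $AW=CW'$ are precisely the chain identities \eqref{20120315:eq1} at levels $N$ and $N+1$, and the operators $D_{F_0}$, $D_{F_1}$ are self-adjoint because $F_0,F_1$ are variational derivatives. Since $\det K \neq 0$ and $\mc V$ is normal, part (c) of Theorem \ref{mtst-nonloc} produces $\tint h_{N+1}\in\mc V/\partial\mc V$ with $\delta h_{N+1}/\delta u = D(\partial)W'$; thus $\tint h_{N+1} \in \mc F(K)$ by construction, $\tint h_{N+1} \ass{K} P_{N+1}$, and therefore $P_{N+1} \in \mc H(K)$ as well.

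Finally I use the second orthogonality hypothesis to show $\tint h_{N+1} \in \mc F(H)$. Skew-symmetry of $\{\cdot,\cdot\}_K$ together with $\tint h_{N+1}\ass{K} P_{N+1}$ and $\tint h_N \ass{H} P_{N+1}$ gives
$$
\tint P_n \cdot \frac{\delta h_{N+1}}{\delta u} \;=\; \{\tint h_n,\tint h_{N+1}\}_K \;=\; -\tint P_{N+1}\cdot\frac{\delta h_n}{\delta u} \;=\; -\{\tint h_N,\tint h_n\}_H \;=\; 0
$$
for $0 \leq n \leq N$, so $\delta h_{N+1}/\delta u \in \bigl(\Span_{\mc C}\{P_n\}_{n=0}^{N}\bigr)^\perp \subset \im B$, closing the induction. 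The main obstacle is the third paragraph: invoking Theorem \ref{mtst-nonloc}(c) is where the compatibility of $H$ and $K$ (through Theorem \ref{20120126:prop2} on the associated compatible Dirac structures), the invertibility of $K$, and the normality of $\mc V$ all enter essentially to upgrade the self-adjointness of $D_{F_2}$ into exactness of $F_2$. The remaining steps are bookkeeping, stitched together by the two orthogonality inclusions and the standard Lenard-Magri telescoping.
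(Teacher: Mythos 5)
Your proof is correct and follows essentially the same route as the paper: use Proposition \ref{20120126:prop3} plus the first orthogonality condition to place $P_{N+1}$ in $\im C$, invoke Theorem \ref{mtst-nonloc}(c) with $F_0=\delta h_{N-1}/\delta u$, $F_1=\delta h_N/\delta u$, $F_2=D(\partial)W'$ to get exactness, and then the second orthogonality condition to put $\delta h_{N+1}/\delta u$ in $\im B$. The only (immaterial) difference is that you verify $\tint h_{N+1}\in\mc F(H)$ at the end of each step rather than re-deriving $\delta h_N/\delta u\in\im B$ at the start of the next, and you make explicit the observation that the orthogonality inclusions persist as the spans grow, which the paper leaves implicit in ``the claim follows by induction.''
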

\begin{proof}
By Proposition \ref{20120126:prop3}, we have
$$
\tint\frac{\delta h_N}{\delta u}\cdot P_n=\{\tint h_N,\tint h_n\}_K=0\,,
$$
for all $n=0,\dots,N$. 
Hence, by the second orthogonality condition we get that
$\frac{\delta h_N}{\delta u}=B(\partial)X$ for some $X\in\mc V^{\oplus\ell}$.
Let $P_{N+1}=A(\partial)X\in\mc V^\ell$.
Again by Proposition \ref{20120126:prop3}, we have
$$
\tint\frac{\delta h_n}{\delta u}\cdot P_{N+1}=\{\tint h_n,\tint h_N\}_H=0\,,
$$
for all $n=0,\dots,N$. 
Hence, by the first orthogonality condition we get that
$P_{N+1}=C(\partial)Y$ for some $Y\in\mc V^{\oplus\ell}$.
Let $F_{N+1}=D(\partial)Y\in\mc V^{\oplus\ell}$.
It follows by Theorem \ref{mtst-nonloc}(c) 
(with $\frac{\delta h_{N-1}}{\delta u}$ in place of $F_0$,
$\frac{\delta h_{N}}{\delta u}$ in place of $F_1$,
and $F_{N+1}$ in place of $F_2$)
that $F_{N+1}$ is exact,
i.e. $F_{N+1}=\frac{\delta h_{N+1}}{\delta u}$
for some $\tint h_{N+1}\in\mc V/\partial\mc V$.
By construction, 
$\tint h_{N}\ass{H}P_{N+1}\ass{K}\tint h_{N+1}$.
Hence, we extended the sequences  $\{\tint h_n\}_{n=-1}^N$ and $\{P_n\}_{n=0}^N$
by one step.
The claim follows by induction.
\end{proof}
%
%\begin{remark}\label{20120405:rem}
%The non-degeneracy condition in Theorem \ref{20120127:thm} holds provided that 
%$C(\partial)$ (or, equivalently, $K(\partial)$) has non-zero Dieudonn\`e determinant,
%and the space 
%$$
%D^{-1}(\im B)
%=\big\{X^\prime\in\mc V^{\oplus\ell}\,|\,D(\partial)X^\prime=B(\partial)X
%\text{ for some }X\in\mc V^{\oplus\ell}\big\}\,,
%$$
%is ``big enough'', namely
%\begin{enumerate}[1.]
%\item
%if $\tint X^\prime\cdot F=0$ for all $X^\prime\in D^{-1}(\im B)$, then $F=0$;
%\item
%if $M(\partial)X^\prime=0$ for all $X^\prime\in D^{-1}(\im B)$, then $M(\partial)=0$.
%\end{enumerate}
%\end{remark}
%
\begin{remark}\label{20120201:rem2}
Let $\mc V$ be an algebra of differential function in $u_i,\,i\in I$.
Assume that $\mc V$ is a domain, and let $\mc K$ be its field of fractions.
Let $H=AB^{-1},\,K=CD^{-1}\in\Mat_{\ell\times\ell}\mc V(\partial)$
be skewadjoint rational matrix differential operators with coefficients in $\mc V$,
with $A,B,C,D\in\Mat_{\ell\times\ell}\mc V[\partial],\,\det B,\det D\neq0$.
Let $\xi_{-1}=0,\xi_0,\dots \xi_N\in\mc V^{\oplus\ell}$
and $P_0,\dots,P_N\in\mc V^\ell$ be such that
\begin{equation}\label{20120201:eq1}
\xi_{n-1}=B(\partial)F_n
\,\,,\,\,\,\,
P_n=A(\partial)F_n
\,\,,\,\,\,\,
\xi_n=D(\partial)G_n
\,\,,\,\,\,\,
P_n=C(\partial)G_n
\,,
\end{equation}
for $0\leq n\leq N$ and some $F_0,\dots,F_N,G_0,\dots,G_N\in\mc V^\ell$, and
$$
\big(\Span{}_{\mc C}\{\xi_n\}_{n=0}^N\big)^\perp
\subset\im C
\,\,,\,\,\,\,
\big(\Span{}_{\mc C}\{P_n\}_{n=0}^N\big)^\perp\subset\im B\,.
$$
By the same arguments as in the proof of Theorem \ref{20120127:thm} we can extend the 
given seguences to infinite sequences 
$\{\xi_n\}_{n\in\mb Z_+}\subset\mc V^{\oplus\ell}$ and $\{P_n\}_{n\in\mb Z_+}\subset\mc V^\ell$
satisying conditions \eqref{20120201:eq1} for every $n\in\mb Z_+$.
Next, suppose that $H=AB^{-1}$ and $K=CD^{-1}$
are minimal fractional decompositions (cf. Definition \ref{def:minimal-fraction})
and that $H$ and $K$ are compatible Hamiltonian structures on $\mc V$.
Suppose, moreover, that $\det K\neq0$ and $\xi_0=\frac{\delta h_0}{\delta u}$
for some $\tint h_0\in\mc V/\partial\mc V$.
Then it follows by Theorem \ref{mtst-nonloc} that all $\xi_n$'s are closed in $\mc K^{\oplus\ell}$,
i.e. $D_{\xi_n}(\partial)$ is self-adjoint for every $n\in\mb Z_+$.
Hence, all the $\xi_n$'s are closed in $\mc V^{\oplus\ell}$.
Finally, taking, if necesary, a normal extension $\tilde{\mc V}$ of $\mc V$,
one proves that all the $\xi_n$'s are exact,
i.e. there exist elements $\tint h_n\in\tilde{\mc V}/\partial\tilde{\mc V}$
such that $\xi_n=\frac{\delta\tint h_n}{\delta u},\,n\in\mb Z_+$,
\cite[Thm.3.2]{BDSK09}.
Hence, the Lenard-Magri scheme can be applied.
\end{remark}

\begin{remark}\label{20121004:rem}
Recall from Theorem \ref{20111021:thm} that, 
given two compatible non-local Hamiltonian structures $H$ and $K$, 
with $\det(K)\neq0$ ,
we get an infinite family of compatible non-local Hamiltonian structures
given by $H^{[0]}=K$, and $H^{[s]}=(H\circ K^{-1})^{s-1}\circ H$, for $s\geq1$.
Suppose that the infinite sequences
$\{\tint h_n\}_{n\in\mb Z_+}\subset\mc F(H)\cap\mc F(K)$,
$\{P_n\}_{n\in\mb Z_+}\subset\mc H(H)\cap\mc H(K)$
satisfy the Lenard-Magri recursive relations
$\tint h_{n-1}\ass{H} P_n\ass{K}\tint h_n$ for every $n\geq0$ 
(we let $\tint h_{-1}=\tint 0$).
Then, for every $n,s\in\mb Z_+$ we have the relations
$\tint h_n\ass{H^{[s]}} P_{n+s}$.
This will be proved in \cite{DSK12}.
Consequently, all the evolutionary equations $\frac{du}{dt_n}=P_n$
are Hamiltonian with respect to all the non-local Hamiltonian structures $H^{[s]},\,s\in\mb Z_+$.
\end{remark}

%%%
\subsection{Example: NLS integrable hierarchy}
\label{sec:7.3}

Consider the algebra of differential polynoamials in two variables $R_2=\mb F[u,v,u',v',\dots]$,
and its extension $\mc V_u=\mb F[u^{\pm1},v,u',v',\dots]$.
Let also $\mc K$ be the field of fractions of $R_2$ (which is the same as that of $\mc V_u$).
Consider the following pair of compatible non-local Hamiltonian structures
with coefficients in $R_2$ from Example \ref{20110922:ex5}:
$$
H=\left(\begin{array}{cc} 
v\partial^{-1}\circ v & -v\partial^{-1}\circ u \\
-u\partial^{-1}\circ v & u\partial^{-1}\circ u
\end{array}\right)+c\partial\id
\,\,,\,\,\,\,
K=\left(\begin{array}{cc} 0 & -1 \\ 1 & 0 \end{array}\right)\,,
$$
where $c\in\mb F$ is a fixed constant.
The operators $H$ and $K$ admit the following fractional decompositions:
$H=AB^{-1},\,K=CD^{-1}$, where $C=K$, $D=\id$, and
$$
A=\left(\begin{array}{cc} 
c\partial\circ u & -u^2v \\
c\partial\circ v & u^3+c\partial\circ(u\partial+2u') 
\end{array}\right)
\,\,,\,\,\,\,
B=\left(\begin{array}{cc} 
u & 0 \\
v & u\partial+2u'
\end{array}\right)
%\,\in\Mat{}_{\ell\times\ell} R_2[\partial]
\,.
$$
This fractional decomposition of $H$ is minimal over $\mc K$
by Proposition \ref{prop:minimal-fraction}
since  $\ker\bar B\subset\bar{\mc K}^2$ is spanned by 
$\left(\begin{array}{c} 0 \\ u^{-2} \end{array}\right)$,
which does not lie in $\ker\bar A$.

Let $\tint h_{-1}=0,\,\tint h_0=\frac12(u^2+v^2)\in\mc V_u/\partial\mc V_u$
and $P_0=\left(\begin{array}{c} -v \\ u \end{array}\right)\in\mc V_u^2$.
We have $0=\tint h_{-1}\ass{H}P_0\ass{K}\tint h_0$. 
Indeed,
letting $F_{-1}=\left(\begin{array}{c} 0 \\ u^{-2} \end{array}\right)$
and $F_0=\left(\begin{array}{c} u \\ v \end{array}\right)$,
we have
$$
\begin{array}{l}
\displaystyle{
B(\partial)F_{-1}
=0=\frac{\delta\tint h_{-1}}{\delta u}
\,\,,\,\,\,\,
A(\partial)F_{-1}
=\left(\begin{array}{c} -v \\ u \end{array}\right)=P_0
} \\
\displaystyle{
\id F_0
=\frac{\delta\tint h_0}{\delta u}
\,\,,\,\,\,\,
KF_0
=\left(\begin{array}{c} -v \\ u \end{array}\right)=P_0
\,.
}
\end{array}
$$
Note that, even though $0=\tint h_1,\tint h_0\in R_2/\partial R_2$ and $P_0\in R_2^2$,
they are NOT associated within the algebra $R_2$,
since $F_{-1}$ does not lie in $R_2^2$.

Next, we prove the orthogonality conditions:
$\big(\frac{\delta\tint h_0}{\delta u}\big)^\perp\subset\im\id$,
$\big(P_0\big)^\perp\subset\im(B:\,\mc V_u^2\to\mc V_u^2)$.
The first condition is obvious.
For the second, $F=\left(\begin{array}{c} f \\ g \end{array}\right)\in\mc V_u^2$
is orthogonal to $P_0$ if and only if
$$
-vf+ug=\partial h\in\partial\mc V_u\,.
$$
In this case, 
$F=B(\partial)\left(\begin{array}{c} f/u \\ h/u^2 \end{array}\right)\in\im B$.

By the observations in Remark \ref{20120201:rem2},
there exist sequences $\{\xi_n\}_{n\in\mb Z_+}$ and $\{P_n\}_{n\in\mb Z_+}$ in $\mc V_u^2$
satisfying condition \eqref{20120201:eq1} for all $n\in\mb Z_+$,
and all the $\xi_n$'s are closed, in the sense that 
their Frechet derivative $D_{\xi_n}(\partial)$ is self-adjoint for every $n$.

We argue that, in fact, all these elements $\xi_n,P_n,\,n\in\mb Z_+$,
lie in $R_2^2$.
Indeed, it is not hard to show, using the explicit form of the matrices $A$ and $B$, 
that, if $\xi_{n-1}$ lies in $R_2^2$, then necessarily 
both $P_n$ and $\xi_n$ lie in $R_2^2$.
Alternatively, we can exchange the roles of $u$ and $v$
to prove that all $\xi_n$'s and $P_n$'s lie in $\mc V_v^2$, hence in $R_2^2$.
First, we observe that,
since $\ker B$ is one dimensional over $\mb F$ and $C=K$ is invertible,
the flag of subspaces
\begin{equation}\label{20120202:eq1}
\begin{array}{l}
U_0=\Span_{\mb F}\{\xi_0\}\,,
U_1=\Span_{\mb F}\{\xi_0,\xi_1\}\,,\dots
\,,\\
V_0=\Span_{\mb F}\{P_0\}\,,
V_1=\Span_{\mb F}\{P_0,P_1\}\,,\dots
\subset \mc{V}_u^2\,,
\end{array}
\end{equation}
associated to the sequences
$\{\xi_n\}_{n\in\mb Z_+}$ and $\{P_n\}_{n\in\mb Z_+}$
satisfying condition \eqref{20120201:eq1} for every $n\in\mb Z_+$,
is uniquely defined.
Consider now
the new fractional decomposition $H=A_1B_1^{-1}$,
where
$$
A_1=\left(\begin{array}{cc} 
v^3+c\partial\circ(v\partial+2v') & c\partial\circ u \\
-uv^2 & c\partial\circ v
\end{array}\right)
\,\,,\,\,\,\,
B_1=\left(\begin{array}{cc} 
v\partial+2v' & u \\
0 & v
\end{array}\right)
%\,\in\Mat{}_{\ell\times\ell} R_2[\partial]
\,.
$$
By the same argument as above,
we get that there exist sequences
$\{\xi^1_n\}_{n\in\mb Z_+}$ and $\{P^1_n\}_{n\in\mb Z_+}$ in $\mc V_v^2$,
satysfying condition \eqref{20120201:eq1}
with respect to the new fractional decomposition, i.e.
$$
\xi^1_{n-1}=B_1(\partial)F^1_n
\,\,,\,\,\,\,
P^1_n=A_1(\partial)F^1_n
\,\,,\,\,\,\,
\xi^1_n=D(\partial)G^1_n
\,\,,\,\,\,\,
P^1_n=C(\partial)G^1_n
\,,
$$
for some elements $F^1_n,\,G^1_n\in\mc V_v^2$.
Note that the two fractional decompositions $H=AB^{-1}$ and $H=A_1B_1^{-1}$
are equivalent,
in the sense that $A_1$ and $B_1$ are obtained from $A$ and $B$
by multiplication on the right by the matrix
$$
D=B^{-1}B_1=
\left(\begin{array}{cc} 
u^{-1}v^{-1}\partial\circ v^2 & 1 \\
-u^{-2}v^2 & 0
\end{array}\right)
\,,
$$
which is invertible in $\Mat{}_{\ell\times\ell} \mc K[\partial]$.
Hence, by the arguments in Remark \ref{20120201:rem2}
we get that both the sequences
$\{\xi_n\}_{n\in\mb Z_+},\,\{P_n\}_{n\in\mb Z_+}\subset\mc V_u^2$,
and the sequences
$\{\xi^1_n\}_{n\in\mb Z_+},\,\{P^1_n\}_{n\in\mb Z_+}\subset\mc V_v^2$,
considered as sequences in $\mc K^2$,
satisfy condition \eqref{20120201:eq1} for every $n\in\mb Z_+$.
Therefore, by the uniqueness of the flag \eqref{20120202:eq1},
we get that
the subspaces
$U_n=\Span_{\mb F}\{\xi_0,\dots,\xi_n\}\subset\mc V_u^2$
and $U^1_n=\Span_{\mb F}\{\xi^1_0,\dots,\xi^1_n\}\subset\mc V_v^2$
must be the same subspace of $\mc K^2$,
and similarly the subspaces
$V_n=\Span_{\mb F}\{P_0,\dots,P_n\}\subset\mc V_u^2$
and $V^1_n=\Span_{\mb F}\{P^1_0,\dots,P^1_n\}\subset\mc V_v^2$
must be the same subspace of $\mc K^2$.
In conclusion, all $\xi_n$'s and $P_n$'s lie in $\mc V_u^2\cap\mc V_v^2=R_2^2$.

Since $R_2$ is a normal algebra of differential functions,
it follows from \cite[Thm.4.13]{BDSK09} that all $\xi_n$'s are exact,
i.e. there exists elements $\tint h_n\in\mc V/\partial\mc V,\,n\in\mb Z_+$,
such that $\xi_n=\frac{\delta\tint h_n}{\delta u}$.
Hence, the Lenard scheme can be applied.

It is not hard to show, by induction on $n$, that both entries of $\xi_n\in R_2^2$
have differential order $n$.
Therefore, the vectors $\xi_n$ are linearly independent.
It follows that the elements $\tint h_n\in\mc V/\partial\mc V,\,n\in\mb Z_+$
and the elements $P_n=K\xi_n\in R_2^2,\,n\in\mb Z_+$,
are linearly independent as well.
Since $D=\id$, we can apply Proposiion \ref{20120126:prop3} and Corollary \ref{20120127:cor}
to deduce that 
we have an infinite hierarchy of compatible integrable equations
$\frac{du}{dt_n}=P_n,\,n\in\mb Z_+$,
for which $\tint h_n\in\mc F(H)\cap\mc F(K)$ are integrals of motion.

We can compute the first few equations of this integrable hierarchy, 
called the non-linear Schroedinger (NLS) hierarchy:
$$
\begin{array}{l}
\displaystyle{
\frac{d}{dt_0}
\left(\begin{array}{c}u \\ v \end{array}\right)
=
\left(\begin{array}{c}-v \\ u \end{array}\right)\,,
} \\
\displaystyle{
\frac{d}{dt_1}
\left(\begin{array}{c}u \\ v \end{array}\right)
=
c\left(\begin{array}{c}u' \\ v' \end{array}\right)\,,
} \\
\displaystyle{
\frac{d}{dt_2}
\left(\begin{array}{c}u \\ v \end{array}\right)
=
\left(\begin{array}{c} c^2v''+\frac{c}{2} v(u^2+v^2) \\ -c^2u''-\frac{c}{2} u(u^2+v^2) \end{array}\right)\,,
} \\
\displaystyle{
\frac{d}{dt_3}
\left(\begin{array}{c}u \\ v \end{array}\right)
=
\left(\begin{array}{c}
-c^3u'''-\frac32 c^2 (u^2+v^2)u' \\
-c^3 v'''-\frac32 c^2 (u^2+v^2)v'
\end{array}\right)\,.
} \\
\dots
\end{array}
$$
The third equation is called the (coupled) NLS equation.
The first four integrals of motion are
$$
\begin{array}{l}
\displaystyle{
\tint h_0=\int \frac12 (u^2+v^2)\,,\,\,
\tint h_1=c\int uv'\,,\,\,
} \\
\displaystyle{
\tint h_2=\int\Big(\frac{c^2}{2}(u'^2+v'^2)-\frac{c}{8}(u^2+v^2)^2\Big)\,,
} \\
\displaystyle{
\tint h_3=\int\Big(-c^3uv'''-\frac{c^2}{2}(u^3v'-v^3u')\Big)
\,.
}
\end{array}
$$
Of course this is a very well known hierarchy, studied by many different methods,
see e.g. \cite{TF86,Dor93,BDSK09}.
The approach of the last two papers, based on a Dirac structure, is close to ours 
(even though in \cite{Dor93} the proposed Dirac structure is not quite a Dirac structure).

%%%%%%%%%%%%%%%%%%%%%%%%%%%%%%%%%%%%%%%%%%%%%%%%%%%%%%%%%%%%%%%%%%%%%%%%%%%%%%%%%%%%%%%%%%%%%%%%%%%%%%%%%%%%%%
%%%%%%%%%%%%%%% Bibliography %%%%%%%%%%%%%%%%%%%%%%%%%%%%%%%%%%%%%%%%%%%%%%%%%%%%%%%%%%%%%%%%%%%%%%%%%%%%%%%%%
%%%%%%%%%%%%%%%%%%%%%%%%%%%%%%%%%%%%%%%%%%%%%%%%%%%%%%%%%%%%%%%%%%%%%%%%%%%%%%%%%%%%%%%%%%%%%%%%%%%%%%%%%%%%%%

% Non-BibTeX users please use

\end{document}